\documentclass[11pt]{article}

\usepackage[letterpaper, margin=1in]{geometry}

\usepackage[utf8]{inputenc}
\usepackage[T1]{fontenc}
\usepackage[colorlinks=true, linkcolor=red, urlcolor=blue, citecolor=gray]{hyperref}
\usepackage{url}
\usepackage{booktabs}
\usepackage{amsfonts}
\usepackage{nicefrac}
\usepackage{microtype}
\usepackage{xcolor}

\usepackage{amsmath,amsthm,amssymb,url}
\usepackage{thmtools, thm-restate}
\usepackage{algorithm,algpseudocode}
\usepackage{mathtools}
\usepackage{graphicx}
\usepackage{caption}
\usepackage{subcaption}
\usepackage{tikz}
\usetikzlibrary{calc,arrows.meta,positioning}
\usetikzlibrary{decorations.pathreplacing}
\usetikzlibrary{arrows}

\newtheorem{theorem}{Theorem}
\newtheorem{property}{Property}

\newtheorem{lemma}{Lemma}
\newtheorem{implemma}{Imported Lemma}
\newtheorem{imptheo}{Imported Theorem}
\newtheorem{corollary}{Corollary}
\newtheorem{definition}{Definition}[section]
\newtheorem{rthm}{Theorem}
\newtheorem{rcrlry}{Corollary}

\newenvironment{reptheorem}[1]
 {\renewcommand\therthm{\ref*{#1}}\rthm}
 {\endrthm}
\newenvironment{repcorollary}[1]
 {\renewcommand\thercrlry{\ref*{#1}}\rcrlry}
 {\endrcrlry}

\hypersetup{
 colorlinks=true,
 linkcolor=blue,
 citecolor=red,
 urlcolor=blue,
 linktocpage
}
\definecolor{ForestGreen}{RGB}{34,139,34}

\let\origref\ref
\renewcommand{\ref}[1]{(\origref{#1})}

\title{Sublinear Time Eigenvector Approximation via Column Sampling}

\author{
Rajarshi Bhattacharjee\footnote{Manning College of Information and Computer Sciences, University of Massachusetts, Amherst, \texttt{\{rbhattacharj, cmusco\}@cs.umass.edu}}
\and
Cameron Musco\footnotemark[1]
\and
Dominic Rutkowski\footnote{Databricks, \texttt{rutkowski.dominic@gmail.com}}
}
\date{}

  \usepackage{nth}
  \usepackage{intcalc}

\definecolor{darkgreen}{HTML}{1b7837}

\newcommand{\bv}[1]{\mathbf{#1}}

\DeclareUnicodeCharacter{2212}{-}

\newcommand{\R}{\mathbb{R}}

\newcommand{\E}{\mathbb{E}}
\newcommand{\poly}{\mathop\mathrm{poly}}

\newcommand{\Sbb}{\bv{\bar S}}

\newcommand{\Sb}{\mathbf{S}}
\newcommand{\Ab}{\mathbf{A}}

\newcommand{\xb}{\mathbf{x}}

\DeclareMathOperator{\tr}{tr}
\newcommand{\norm}[1]{\|#1\|}

\begin{document}

\begin{titlepage}
\maketitle
 \thispagestyle{empty}

\begin{abstract}
We study sublinear time sampling methods for approximating the outlying eigenvectors of large matrices. Our main result is an algorithm that uniformly samples just  $\Tilde{O}\left(\frac{\log n}{\epsilon^4} \right)$ columns of a symmetric matrix $\bv A \in \R^{n \times n}$ scaled to have entries bounded in magnitude by $1$, and, for any eigenvalue $\lambda$ of $\bv A$ with $|\lambda| \ge \epsilon n$, outputs an approximate eigenvector $\bv v$ satisfying $\|\bv A \bv v-\lambda \bv v \|_2 
  \leq \epsilon n$. For approximating only the eigenvector corresponding to the largest magnitude eigenvalue, our algorithm samples just $\Tilde{O}\left(\frac{\log n}{\epsilon^2} \right)$ columns. Given the ability to sample rows and columns of $\bv A$ according to their squared norms, we give a similar result, but with an improved error bound of $\epsilon \norm{\bv A}_F$. For top eigenvector approximation, we show that our bound is tight up to logarithmic terms, even for potentially adaptive algorithms. All our algorithms are non-adaptive.

  A key feature of our algorithms is that they output approximate eigenvectors that are spanned by just a small number of $\bv A$'s columns, and whose individual entries can be computed rapidly, in just $\poly(\log n,1/\epsilon)$ time per entry. This makes them applicable in the ``quantum-inspired algorithms'' framework of Tang \cite{tang2019quantum}, where we give the first sublinear time classical algorithms for eigenvector approximation with additive error $\epsilon \norm{\bv A}_F$. 
  
  Finally, we present an alternative approach, based on a truncated Nystr\"{o}m method, that, while not allowing $\poly(\log n,1/\epsilon)$ time entrywise computation of the approximate eigenvectors, achieves near optimal sample complexity for general symmetric matrices, and improved bounds for positive semidefinite matrices.
  %
  %, while our lower bounds hold against even adaptive algorithms. %Finally, assuming the ability to sample columns with probabilities proportional to their squared Euclidean norms, we give an improved error bound of $\epsilon \norm{\bv A}_F$.

  Technically, our bounds build on recent work on approximating the outlying \emph{eigenvalues} of symmetric matrices via random sampling~\cite{Bhattacharjee:2021wl,swartworth2025tight}. We demonstrate for the first time that these general approaches extend to the important problem of eigenvector estimation.
\end{abstract}
\end{titlepage}

\tableofcontents
\thispagestyle{empty}
\clearpage

\section{Introduction}

Computing the eigenvectors of large matrices is a key primitive in numerical linear algebra,
underpinning e.g., principal component analysis, spectral clustering,
graph-based learning, recommendation systems, and vibration analysis in engineering. For an $n \times n$ matrix,
all eigenvalues and eigenvectors can be computed to
high accuracy using direct eigendecomposition in $O(n^{\omega})$ time,
where $\omega \approx 2.37$ is the exponent of
matrix multiplication
\cite{demmel2007fast,alman2021refined}.
For computing just a few eigenvectors corresponding to the
largest magnitude eigenvalues to good accuracy,
 iterative methods such as the power method and
other Krylov subspace methods can be
used~\cite{saad2011numerical, golub2013matrix}.
These methods repeatedly multiply the input matrix by
query vectors, requiring $O(n^2)$ time per multiplication
for dense matrices.
Even this linear runtime can be prohibitive for large
$n$~\cite{martinsson2020randomized}.

This motivates the study of \emph{sublinear time algorithms},
which read only $o(n^2)$ entries of the input matrix. Significant work has studied sublinear time algorithms for various
matrix problems under structural assumptions, such as the input matrix being Toeplitz, Hankel, or positive semidefinite~\cite{clarkson2017low,
 musco2017sublinear, bakshi2020robust,musco2024sublinear,kapralov2023toeplitz,bakshi2018sublinear, indyk2019sample,backurs2021faster}.
A few works have studied sublinear time algorithms for
general matrices without strong structural assumptions.
For example,~\cite{BakshiChepurkoJayaram:2020,Bhattacharjee:2021wl,swartworth2025tight}
study the closely related problems of eigenvalue estimation and positive semidefiniteness testing for
symmetric matrices.
\cite{swartworth2025tight} also present a sublinear time algorithm
for approximating the top eigenvector of a symmetric positive semidefinite (PSD)
 matrix.

In general however, there have been no results on sublinear time eigenvector approximation algorithms for general symmetric matrices.

\subsection{Our contributions}

In this paper, we address this gap, presenting sublinear time algorithms that approximate all the eigenvectors of a symmetric input matrix that correspond to sufficiently large magnitude outlying eigenvalues. In particular, we tackle the following problem:
\begin{center}
\emph{Given a symmetric matrix $\bv A \in \R^{n \times n}$,
for every large magnitude outlying eigenvalue $\lambda$,
output an approximate eigenvector $\bv v$
such that $\|\bv A \bv v - \lambda \bv v\|_2$ is small.}
\end{center}
Here, $\| \cdot\|_2$ is the Euclidean norm.
Note that if $(\bv v,\lambda)$ is an actual eigenpair of $\bv A$, then $\bv A \bv v = \lambda \bv v$, so
the ``backward residual error'' $\|\bv A \bv v - \lambda \bv v\|_2$ is exactly zero.
As is standard in work on eigenvector approximation,
we focus on the residual error instead of a ``forward'' error
metric like $\|\bv u-\bv v \|_2$ where $\bv u$ is a true
eigenvector. Such a forward error bound becomes impossible to guarantee when there are small eigenvalue gaps. In contrast, small residual error can typically be achieved
independent of the eigengaps. Further, when there are eigengaps, a residual error bound can be easily converted
to a forward error bound by Davis-Kahan type
perturbation bounds~\cite{davis1970rotation, wedin1972perturbation}.

\paragraph{Algorithmic approach.}
Our algorithms all follow the same high-level approach:
let $\Sbb \in \mathbb{R}^{n \times s}$ be
a sampling matrix with a single non-zero entry per column, so that $\bv A \Sbb$ samples and potentially rescales $s$
columns from $\bv A$.
 Our algorithms output approximate eigenvectors spanned by just the $s$ sampled columns -- i.e., vectors of the form $\bv v = \bv A \Sbb \bv x$ for
$\bv x \in \mathbb{R}^s$.
As we will see, $s= \textrm{poly}(\log n, 1/\epsilon)$
where $\epsilon \in (0,1)$ is the desired accuracy will suffice. Thus, computing $\bv v$ will require just $\tilde O(n/\poly(\epsilon))$ time.
The key question is how to compute the coefficients of the linear combination, $\bv x \in \R^s$.

Building on prior work that approximates the \emph{eigenvalues} of $\bv A$ using those of the $s \times s$ random principal submatrix $\Sbb^T \bv A \Sbb$ \cite{Bhattacharjee:2021wl,swartworth2025tight}, our main approach will take $\bv x $ to be an eigenvector of $\Sbb^T \bv A \Sbb$. We prove that `lifting' this eigenvector to $\bv A$ via multiplication with $\bv A \Sbb$ yields an approximate eigenvector for $\bv A$. We can compute $\bv x$ in just $O(s^\omega) = \poly(\log n, 1/\epsilon)$ time via eigendecomposition of $\Sbb^T \bv A \Sbb$. Additionally, with $\bv x \in \R^s$ in hand, individual entries of $\bv v$ can be computed in just $O(s) = \poly(\log n, 1/\epsilon)$ time, by taking the inner product of $\bv x$ with the appropriate row of $\bv A \Sbb$. I.e., we can query individual entries of our approximate eigenvectors in just $\poly(\log n, 1/\epsilon)$ time.

We also present a related class of algorithms, where we let $\bv A \Sbb \bv x$ be an exact eigenvector of a \emph{rank-truncated Nystr\"{o}m} approximation to $\bv A$~\cite{clarkson2009numerical,nakatsukasa2020fast}. In this case, $\bv x$ can be obtained in $\tilde O(n/\poly(\epsilon))$ time by solving an appropriate generalized eigenvector problem involving the $s \times s$ matrices $\Sbb^T \bv A \Sbb$ and $\Sbb^T \bv A^2 \Sbb $. We show that this approach can achieve improved $\epsilon$ dependencies over our first approach, at the cost of sacrificing $ \poly(\log n, 1/\epsilon)$ runtime for computing individual entries of the approximate eigenvector. We note that a similar idea was used by \cite{swartworth2025tight} to approximate the top eigenvector of a PSD matrix -- we significantly extend the approach to non-PSD matrices and to approximating all large magnitude eigenvectors.

\paragraph{Summary of Contributions.}

We detail our main contributions below.

\begin{enumerate}
 \item \textbf{Random Submatrix-Based Algorithms.} For symmetric $\bv{A} \in \R^{n \times n}$ with
 entries bounded in magnitude by 1, we show that letting $\bv A \Sbb$ contain $\widetilde{O}(\log n/\epsilon^4)$ uniformly sampled columns of $\bv A$, and letting $\bv x$ be an eigenvector of $\Sbb^T \bv A \Sbb$,
 suffices to produce
 approximate eigenvectors $\bv{v}$ with
 $\|\bv{A}\bv{v} - \lambda\bv{v}\|_2 \leq \epsilon n$
 for all eigenvalues $\lambda$ with $|\lambda| \geq \epsilon n$,
 of which there are at most $1/\epsilon^2$. For approximating
 just the top eigenvector, we give a stronger bound of
 $\widetilde{O}(1/\epsilon^2)$ columns, which we show is tight in terms of sample complexity, up to logarithmic factors, even for general, possibly adaptive algorithms (our algorithm is non-adaptive).

 Note that the assumption that the entries of $\bv{A}$
 are bounded in magnitude by 1 is just for convenience in stating our bounds -- for general $\bv A$ our error bound is
 $\|\bv{A}\bv{v} - \lambda\bv{v}\|_2 \leq \epsilon n \|\bv{A}\|_{\infty}$,
 where $\| \bv{A} \|_{\infty} = \max_{i,j} |\bv A_{ij}|$. Our algorithms do not need knowledge of $\norm{\bv A}_\infty$. Also note that some additive error of this form is necessary. Consider $\bv A$ which is all zero except with $\bv A_{ij} = \bv A_{ji}$ very large for a random pair $(i,j)$. In sublinear time, we cannot hope to identify this pair with good probability, and must incur error depending on $|\bv A_{ij}|$.

 Similar to prior work on eigenvalue approximation \cite{Bhattacharjee:2021wl,swartworth2025tight}, we give stronger error bounds under the assumption that we can sample rows/columns of $\bv A$
 according to their squared Euclidean norms. In this setting, we show that we can approximate all outlying eigenvectors
 with $|\lambda| \geq \epsilon \|\bv{A}\|_F$
 to error $\epsilon\|\bv{A}\|_F$
 by sampling
 $\widetilde{O}(\log^4 n/\epsilon^4)$ columns.
 For approximating just the top eigenvector,
 we show that
 $\widetilde{O}(\log^4 n/\epsilon^2)$ columns suffice.
 Note that the error bound of $\epsilon\|\bv{A}\|_F$
 is always at least as strong as the $\epsilon n \|\bv{A}\|_{\infty}$ bound given
 by uniform sampling, and is potentially much stronger e.g., when $\bv A$ is sparse or numerically sparse.

 \item \textbf{Quantum-inspired eigenvector approximation.} As discussed, a benefit of our
 submatrix-based algorithms is that we do not need to compute the full approximate
 eigenvector $\bv v = \bv A \Sbb \bv x$ in a single shot using $\tilde O(n/\poly(\epsilon))$ time.
 In just $\poly(s) = \poly(\log n,1/\epsilon)$ time, we can compute $\bv x \in \R^s$ and
 any desired entry of $\bv v$.
 This feature also allows us to extend our squared column-norm sampling
 algorithms to the recently popular \emph{quantum-inspired} sampling-and-query (SQ)
 access model of~\cite{tang2019quantum,chia2020sampling}. In this model,
 one can sample the columns of the input matrix according to their squared
 Euclidean norms and query its entries at unit cost.
 The goal is to design an algorithm that provides similar SQ access to the desired output.
 We do this, showing
 that we can either output specified entries of our approximate eigenvectors or sample
 entries according to their squared magnitudes using just
 $\Tilde{O}\left(\frac{\log^8 n}{\epsilon^8} \right)$ queries to the SQ oracle on $\bv A$.
 To the best of our knowledge,
 this gives the first sublinear time eigenvector approximation algorithm
 in the quantum-inspired SQ model.
 A quantum algorithm which outputs a classical description
 of a good approximation of the top eigenvector
 under a constant eigenvalue gap
 in $n^{1.5+o(1)}$ time
 was proposed in~\cite{chen2025quantum}.

 \item \textbf{Improved sample complexities via rank-truncated Nystr\"{o}m.} Finally, we present sublinear time algorithms based on the
 rank-truncated Nystr\"{o}m method~\cite{nakatsukasa2023randomized, cai2022fast}, which achieve
 optimal sample complexities
 for approximating all outlying eigenvectors, up to logarithmic factors.
 The cost of this optimality is that these algorithms cannot compute individual entries
 of the approximate eigenvectors
 in $\textrm{poly}(\log n, 1/\epsilon)$ time.

 For general symmetric matrices with bounded entries, we give an algorithm that samples $\widetilde{O}(\log n/\epsilon^2)$ columns and,
 for every outlying eigenvalue $\lambda$ with $|\lambda| \geq \epsilon n$,
 outputs $\bv{v} \in \R^n$ with
 $\|\bv{A}\bv{v} - \lambda\bv{v}\|_2 \leq \epsilon n$.
 This is an $O(1/\epsilon^2)$ improvement in sample complexity over our submatrix-sampling-based result.
 On the technical side, we obtain our guarantees by bounding the spectral norm of the error of the rank-truncated Nystr\"{o}m approximation. To the best of our knowledge, such bounds were not previously known for $\emph{indefinite}$ matrices, and may be of independent interest.
 When columns can be sampled proportionally to squared norms,
 the error bound improves to $\epsilon\|\bv{A}\|_F$
 for all eigenvalues with $|\lambda| \geq \epsilon\|\bv{A}\|_F$. We give even stronger bounds when the input matrix is positive semidefinite. Here we can leverage the standard Nystr\"{o}m method~\cite{williams2000using,gittens2013revisiting}, giving an algorithm
 that samples only $\widetilde{O}(1/\epsilon)$ columns and achieves $\epsilon n$ error in the bounded entry setting .
 When columns are sampled proportionally to their diagonal entries,
 we achieve the tighter error bound of $\epsilon\,\mathrm{tr}(\bv{A})$.
 A simple lower bound shows that it is necessary to access $\Omega(1/\epsilon)$ columns, even for adaptive algorithms.

\end{enumerate}

\noindent
\textbf{Comparision with Matrix Sparsification.}
We note that one
can achieve similar $\tilde O(n/\poly(\epsilon))$ runtimes for approximating the outlying eigenvectors as outlined above
using entrywise matrix sparsification. For example,
it is known~\cite{drineas2011note, braverman2021near,bhattacharjee2024universal} that
when the entries of $\bv A$ are bounded in magnitude by 1,
one can uniformly sample
and scale
$\tilde O(n/\epsilon^2)$ entries to form a sparse matrix
$\hat{\bv A}$
such that $\|\bv A-\hat{\bv A} \|_2 \leq \epsilon n$.
Using standard eigenvector/eigenvalue perturbation bounds,
it can then be shown that the eigenvectors of $\hat{\bv A}$ are
approximate eigenvectors of $\bv A$ and achieve the same
backward error bounds as our algorithms.
\cite{shin2023adaptive} study an adaptive entrywise-sampling based approximate power method and prove that their algorithm
achieves the same sample complexity (up to polylogarithmic factors)
for approximating the top eigenvector
as our methods.

However, the approximate eigenvectors produced by entrywise-sampling methods (i.e., the eigenvectors of $\hat{\bv A}$)
cannot be represented as a linear combination of just a few columns of $\bv A$. Thus, we cannot output
an entry of these eigenvectors in $o(n)$ time, and the algorithms do not extend to the quantum-inspired query model.
Moreover, our column-sampling-based algorithms for PSD matrices achieve strictly better
query complexities.

\paragraph{Concurrent Work on Local Eigenvector Approximation.} Note that the problem of approximating the entries of the top eigenvector of a
bounded-entry matrix \emph{locally} in sublinear time was studied concurrently
in~\cite{menand2026locally}. The algorithm of~\cite{menand2026locally} is essentially the
same as our random submatrix-based algorithm (Algorithm~\ref{alg:eigvec}): the lifted
vector $\bv A \Sbb \bv x$ is normalized by the approximate eigenvalue rather than by its
Euclidean norm. We bound the residual
error $\|\bv A \bv v - \lambda \bv v\|_2 \leq \epsilon n$, while~\cite{menand2026locally}
bound the Rayleigh quotient, $|\lambda_{\max}(\bv A) - \langle \bv v, \bv A \bv v\rangle /
\|\bv v\|_2^2| \leq \epsilon n$. Note that the former implies the latter. Their analysis shows that the algorithm reads $\widetilde O(1/\epsilon^2)$ entries per queried coordinate, for a
total of $O(\frac{n}{\epsilon^{2}}\textrm{polylog}\frac{1}{\epsilon})$ entries for bounding the Rayleigh quotient, whenever the
eigenvalue being approximated has magnitude $\Omega(\|\bv A\|_2)$. This matches the per coordinate sample complexity of $O(\frac{\log n}{\epsilon^{2}}
\textrm{polylog}\frac{1}{\epsilon})$ and the total sample complexity of $O(\frac{n\log n}{\epsilon^{2}}
\textrm{polylog}\frac{1}{\epsilon})$ given by our analysis for bounding the residual error (Theorem~\ref{thm:eigtop}) up to polylogarithmic factor. We note that the $\log n$ factor from our analysis can likely be removed following the analysis of~\cite{menand2026locally} for bounding the spectral norm of the small magnitude eigenvalues after sampling. The sample complexity for estimating the top eigenvector is also shown to be optimal in our paper, as well as in~\cite{menand2026locally}, up to poly-logarithmic factors.

When the largest positive and the largest
negative eigenvalue have very different magnitudes, the query complexity given by~\cite{menand2026locally} for approximating the
eigenvector of the smaller of the two eigenvalues is
$O(\frac{1}{\epsilon^{3.3}}\textrm{polylog}\frac{1}{\epsilon})$ entries per coordinate in the worst case for the Rayleight-quotient guarantee, improving on the
$O(\frac{\log n}{\epsilon^{4}}\textrm{polylog}\frac{1}{\epsilon})$ entries that
Theorem~\ref{thm:main_bounded} requires for the residual error guarantee in this case. On the other hand,
Theorem~\ref{thm:main_bounded} applies uniformly to \emph{every} eigenvector whose
eigenvalue has magnitude at least $\epsilon n$ (of which there are at most $1/\epsilon^2$
for a bounded-entry matrix), whereas~\cite{menand2026locally} state no guarantee for
eigenvectors beyond the extreme ones. Finally, we note
that~\cite{menand2026locally} use their algorithm to provide local computation algorithms
for the sparsest-cut and max-cut problems in the dense graph model
of~\cite{goldreich1998property}. We expect our algorithm can likewise be used for these
applications.

\subsection{Related Work}

There has been significant work on sublinear time algorithms
for various linear algebraic problems, including testing matrix rank~\cite{balcan2019testing},
testing positive semidefiniteness~\cite{BakshiChepurkoJayaram:2020},
and approximating eigenvalues~\cite{swartworth2025tight,Bhattacharjee:2021wl}.
Streaming algorithms for approximating the top eigenvector of matrix
in sublinear space for matrices with only a few heavy rows
have also been proposed~\cite{kacham2024approximating}.

There has also been significant work on sublinear time algorithms for finding low-rank approximations,
particularly for positive semidefinite (PSD) matrices, using techniques like random Fourier features~\cite{rahimi2007random}, the Nystr{\"o}m method~\cite{williams2000using}, importance sampling \cite{frieze2004fast,drineas2006fast,mahoney2009cur}, leverage score sampling ~\cite{musco2017recursive,musco2017sublinear,bakshi2020robust}, adaptive sampling \cite{chen2025randomly}, and beyond \cite{boutsidis2014optimal}.
Sublinear time low-rank approximation algorithms have also been proposed for
structured matrices like
Toeplitz~\cite{musco2024sublinear,
lawrence2020low,kapralov2023toeplitz},
Hankel~\cite{kapralov2026sublinear} and
distance~\cite{bakshi2018sublinear,indyk2019sample}
matrices. This work is relevant since one could potentially use the eigenvalues and eigenvectors of
the low-rank approximations to approximate those of the original matrix. However, without assuming eigenvalue decay of the input matrix, it is not clear how to directly achieve algorithms with small residual error. Nevertheless, we do indeed leverage techniques from this line of work, especially in our rank-truncated Nystr\"{o}m-based algorithms. Our submatrix-based algorithms and their analysis, especially those that assume a column norm sampling oracle are closely related to classic work on norm and leverage score based sampling algorithms for low-rank approximation \cite{frieze2004fast,drineas2006fast}.

\section{Technical Overview}\label{sec:tech-overview}

We now give a detailed technical overview of our results, including a proof sketch for our main submatrix sampling based algorithm.

\subsection{Random Submatrix Sampling}\label{subsec:uniform}
Consider symmetric $\bv A \in \R^{n \times n}$ scaled such that
$\|\bv A \|_{\infty} \leq 1$. As discussed,
our first algorithm (Algorithm~\ref{alg:eigvec}), is based on the framework studied in~\cite{Bhattacharjee:2021wl,swartworth2025tight}
for eigenvalue approximation.
Let $\Sbb \in \R^{n \times s}$ be a matrix that samples $s$
columns uniformly at random from $\bv A$ and scales them by $\sqrt{n/s}$.
Algorithm~\ref{alg:eigvec} computes an eigenvector $\bv x$ of the
$s \times s$ scaled principal submatrix $\Sbb^T \bv A \Sbb$ corresponding to eigenvalue $\lambda$
with magnitude $\geq \frac{\epsilon n}{2}$. The algorithm then
\emph{lifts} $\bv x$ back to $\bv A$ by applying $\bv A \Sbb$ and outputs the unit-norm approximate eigenvector
$\bv v \;=\; \frac{\bv A \Sbb \bv x}{\|\bv A \Sbb \bv x\|_2}.$
We state our main result on the correctness of Algorithm~\ref{alg:eigvec} is below -- with a full statement given in Section \ref{app:uniform samp}.

\begin{algorithm}[t]
\caption{Eigenvector Approximation Using Uniform Sampling}
\label{alg:eigvec}
\begin{algorithmic}[1]
\State {\bfseries Input:} Symmetric $\bv A \in \mathbb{R}^{n\times n}$, accuracy $\epsilon \in (0,1)$, expected number of columns to sample $s$.
\State
Let $\bv{\bar S} \in \R^{n \times |S|}$ be a sampling matrix
that samples each column of $\bv A$ independently with probability $\frac{s}{n}$
and rescales sampled columns by $\sqrt{\frac{n}{s}}$.
Let $S \subseteq [n]$ be the set of sampled indices.

\State Compute the eigenvectors of $\bv x_1, \bv x_2, \ldots \bv x_{|S|}$ of
$\bv{\bar S}^T \bv{A} \bv{\bar S}$
corresponding to eigenvalues $\lambda_1 \ge \ldots \ge \lambda_{|S|}$.

\State For any $\lambda_i$ with $|\lambda_i| \geq \frac{\epsilon n}{2}$,
compute approximate eigenvector $\bv v_i=\frac{\bv A \bv{\bar S} \bv x_i}{\|\bv A \bv{\bar S} \bv x_i \|_2}$.
\State {\bfseries Return:} All eigenpairs $(\bv v_i, \lambda_i)$
computed in the previous step.
\end{algorithmic}
\end{algorithm}

\begin{theorem}[Informal]\label{thm:main_bounded} Let $\bv A \in \R^{n \times n}$ be symmetric with
 $\|\Ab \|_{\infty} \leq 1$,
 and let $\epsilon, \delta \in (0,1)$. With probability at least $1-\delta$, Algorithm~\ref{alg:eigvec} run on $\bv A$ with sample size $s=\frac{c\log n}{\epsilon^4 \delta}\log^3 \frac{1}{\epsilon \delta}$ for a sufficiently large constant $c$ outputs, for every eigenvalue $\lambda$ of $\bv A$
 with $|\lambda| \geq \epsilon n$, a pair $\tilde \lambda \in \R$
 and $\bv v \in \R^n$ with $\norm{\bv v}_2 = 1$
 such that
 \begin{align*}
 |\lambda-\tilde \lambda| \leq \epsilon n \text{, and } \quad
 \|\bv A \bv v - \lambda \bv v \|_2 \leq \epsilon n.
 \end{align*}
Further, the algorithm samples just
$\tilde{O}\left(\frac{\log n}{\epsilon^4 \delta}\right)$ columns from $\bv A$ and thus $\tilde{O}\left(\frac{n}{\epsilon^4 \delta}\right)$ entries in expectation.
\end{theorem}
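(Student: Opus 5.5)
Our plan is to split $\bv A = \bv A_o + \bv A_m$, where $\bv A_o = \bv V_o \bv\Lambda_o \bv V_o^T$ collects the eigenpairs with $|\lambda_i| \ge \epsilon^c n$ for a suitable threshold (there are at most $\epsilon^{-2c}$ of them, since $\|\bv A\|_F \le n$). For the eigenvalue part, $|\lambda - \tilde\lambda| \le \epsilon n$, we will directly invoke the eigenvalue approximation guarantee of \cite{Bhattacharjee:2021wl,swartworth2025tight}. With the stated sample size, it shows that the outlying eigenvalues of $\Sbb^T \bv A \Sbb$ match those of $\bv A$ up to $\epsilon n$. Its proof also supplies two intermediate facts that we reuse:
\begin{enumerate}
\item[(i)] $\|\Sbb^T \bv A_m \Sbb\|_2 \le \epsilon^{O(1)} n$, where the $\log n$ factor enters through matrix concentration and after zeroing out the diagonal of the sampled block.
\item[(ii)] $\bv V_o^T \Sbb \Sbb^T \bv V_o \approx \bv I$ up to a small error. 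This uses incoherence: the eigenvectors of outlying eigenvalues satisfy $\|\bv v_i\|_\infty \le 1/(\sqrt{n}\,|\lambda_i|/n)$, so uniform sampling approximately preserves their inner products.
\end{enumerate}

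For the residual bound we proceed in four steps. Fix $\bv x$, a unit eigenvector of $\Sbb^T \bv A \Sbb$ with eigenvalue $\tilde\lambda$, $|\tilde\lambda| \ge \epsilon n/2$, and set $\bv w = \bv A \Sbb \bv x$.
\begin{enumerate}
\item \emph{Control the residual of $\bv w$.} Write $\bv A\bv w - \tilde\lambda \bv w = \bv A(\bv A\Sbb\bv x) - \bv A \Sbb(\Sbb^T \bv A \Sbb \bv x)$, which equals $\bv A(\bv I - \Sbb\Sbb^T)\bv A\Sbb \bv x$. We aim to show that $\|\bv A(\bv I-\Sbb\Sbb^T)\bv A\Sbb\|_2$ is small, of order $\epsilon n \cdot n\sqrt{n/s}\cdot\mathrm{polylog}$, appropriately normalized.
\item \emph{Lower-bound $\|\bv w\|_2$.} Note that $\Sbb^T\bv w = \tilde\lambda\bv x$, so we want $\|\bv w\|_2 \gtrsim |\tilde\lambda|$. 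We obtain this from $\|\Sbb^T\|$ restricted to the relevant range: we would show that $\Sbb^T$ does not inflate the norm of $\bv w$ much, i.e.\ $\|\Sbb^T \bv w\|_2 \lesssim \|\bv w\|_2$ up to constants. For the $\bv A_o$ component this follows from fact (ii). For the $\bv A_m$ component we instead bound $\|\Sbb^T \bv A_m \Sbb\|_2$ via fact (i).
\item \emph{Split the product in step 1.} Decompose $\bv A(\bv I - \Sbb\Sbb^T)\bv A\Sbb\bv x$ along the split $\bv A = \bv A_o + \bv A_m$.
\begin{enumerate}
\item[(a)] For the outlying factor, use that $\bv V_o^T(\bv I-\Sbb\Sbb^T)\bv A$ is an approximate matrix product error. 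Its Frobenius norm is bounded in expectation by $\|\bv V_o\|_F\|\bv A\|_F\cdot \max_j\|\bv A_{:,j}\|_2 \cdot \sqrt{n/s}/\|\bv A\|_F$, using row norms of $\bv A$ at most $\sqrt n$. This gives roughly $\epsilon^{-c} n\sqrt{n/s}$. Markov's inequality introduces the $1/\delta$ factor.
\item[(b)] For $\bv A_m$, use $\|\bv A_m\|_2 \le \epsilon^c n$, together with a bound on $\|(\bv I-\Sbb\Sbb^T)\bv A\Sbb \bv x\|_2$ relative to $\|\bv w\|_2$.
\end{enumerate}
\item \emph{Move the eigenvalue.} Finally, replace $\tilde\lambda$ by the true $\lambda$ using $\|\bv A\bv v-\lambda\bv v\|_2 \le \|\bv A\bv v-\tilde\lambda\bv v\|_2 + |\lambda-\tilde\lambda|$, with $\epsilon$ rescaled.
\end{enumerate}

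The main obstacle is step 3(b), i.e.\ controlling the middle-eigenvalue contribution. The vector $\Sbb\bv x$ is far from unit-norm in $\R^n$; its norm is about $\sqrt{n/s}$ times larger. Naive bounds on $\bv A_m \Sbb\bv x$ therefore lose factors of $\sqrt{n/s}$. The approach we would try first is to exploit that $\bv x$ is nearly in the span of $\Sbb^T\bv V_o$ (by fact (i) and $|\tilde\lambda|$ large, a Davis--Kahan type argument on the $s\times s$ matrix). Then $\Sbb\bv x \approx \Sbb\Sbb^T \bv V_o \bv y$, and $\bv A\Sbb\Sbb^T\bv V_o$ is an approximate matrix product close to $\bv A\bv V_o$, whose norm is controlled. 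The residual of $\bv x$ outside that span has norm $O(\epsilon^{O(1)})$, which is why we need $s \sim \epsilon^{-4}$ rather than $\epsilon^{-2}$. Choosing the threshold exponent $c$ to balance these terms should yield $s = \tilde O(\log n/(\epsilon^4\delta))$.
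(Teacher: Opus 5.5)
\textbf{There is a genuine gap, and it is in Step~3.} It is exactly the $\sqrt{n/s}$ inflation you flag at the end but never neutralize. Since $\Sbb^T\Sbb=\frac{n}{s}\bv I$ and $\Sbb\Sbb^T\bv w=\Sbb(\Sbb^T\bv A\Sbb\bv x)=\tilde\lambda\,\Sbb\bv x$, you have $\|(\bv I-\Sbb\Sbb^T)\bv w\|_2\ge|\tilde\lambda|\sqrt{n/s}-\|\bv w\|_2$. So the bound on $\|(\bv I-\Sbb\Sbb^T)\bv A\Sbb\bv x\|_2$ ``relative to $\|\bv w\|_2$'' that 3(b) relies on cannot hold. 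Combined with $\|\bv A_m\|_2\le\epsilon^c n$, it gives only about $\epsilon^c n|\tilde\lambda|\sqrt{n/s}$, which is useless unless $s$ grows linearly in $n$. Step~3(a) has the same defect. An approximate-matrix-product bound on the fixed matrix $\bv V_o^T(\bv I-\Sbb\Sbb^T)\bv A$ must still be applied to $\Sbb\bv x$, whose norm is exactly $\sqrt{n/s}$. You also cannot use $\bv w$ as the fixed right factor instead, since it depends on $\Sbb$. The Davis--Kahan detour does not remove the problem either. Writing $\bv x=\Sbb^T\bv V_o\bv y+\bv r$, the piece $\bv A_m\Sbb\bv r$ remains, and $\|\bv A_m\|_2\|\Sbb\bv r\|_2=\epsilon^c n\sqrt{n/s}\,\|\bv r\|_2$ reintroduces the same factor, because $\|\bv r\|_2$ is only $\epsilon^{O(1)}$, not $O(\sqrt{s/n})$. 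What is missing is control of $\bv A_m\Sbb$ itself that does not go through $\|\Sbb\|_2$.

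\textbf{The missing ingredient and how the paper uses it.} The paper supplies this as Lemma~\ref{lem:as}: $\|\bv A_m\Sbb\|_2\le\epsilon n$ once $s\gtrsim\log n/(\epsilon^2\delta)$. It is proved from Tropp's bound $\E_2\|\bv A_m\bv S\|_2\le 3\sqrt{\log n}\,\E_2\|\bv A_m\bv S\|_{1\to 2}+\sqrt{s/n}\,\|\bv A_m\|_2$ for the unscaled sampler \cite{tropp2008norms}, using only that the columns of $\bv A_m=\bv U_m\bv U_m^T\bv A$ have norm at most $\sqrt n$. With it, do not estimate the vector $(\bv I-\Sbb\Sbb^T)\bv w$; instead expand the residual exactly using $\bv V_o^T\bv A_m=\bv 0$:
\[
\bv A(\bv I-\Sbb\Sbb^T)\bv w=\bv A_m\bv w-\tilde\lambda\,\bv A_m\Sbb\bv x+\bv V_o\bv\Lambda_o(\bv I-\bv V_o^T\Sbb\Sbb^T\bv V_o)\bv\Lambda_o\bv V_o^T\Sbb\bv x-\bv A_o\Sbb(\Sbb^T\bv A_m\Sbb)\bv x .
\]
This is the decomposition behind Lemma~\ref{lem:err1}. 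The first two terms are $O(\epsilon n\|\bv w\|_2)$ and $O(\epsilon n|\tilde\lambda|)$, by $\|\bv A_m\|_2\le\epsilon n$ and the lemma. The last two need absolute bounds of order $\epsilon^2 n^2$, so that dividing by $\|\bv w\|_2\gtrsim|\tilde\lambda|\ge\epsilon n/2$ leaves $O(\epsilon n)$. Concretely, you need:
\begin{itemize}
\item $\|\Sbb^T\bv A_m\Sbb\|_2=O(\epsilon^2 n)$ together with $\|\bv A_o\Sbb\|_2=O(n)$;
\item embedding distortion $O(\epsilon^2 n/\lambda)$ on eigenvectors with eigenvalues of size $\lambda$ (Property~\ref{ass:subspace-embedding} with $R\approx\epsilon^2 n$), which gives $\|\bv\Lambda_o(\bv I-\bv V_o^T\Sbb\Sbb^T\bv V_o)\bv\Lambda_o\|_2=\tilde O(\epsilon^2n^2)$ when bounded level by level as in Lemma~\ref{lem:err_top}.
\end{itemize}
These two requirements, not a Davis--Kahan residual, are what cost $s\approx\log n/\epsilon^4$.

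\textbf{A point on the threshold.} The $\epsilon^2 n$ bound on $\Sbb^T\bv A_m\Sbb$ forces your threshold down to $\epsilon^2 n$ (i.e.\ $c=2$). An eigenvalue of $\bv A_m$ just below $\epsilon n$ with an incoherent eigenvector survives sampling. For example, $\bv A=\frac{\epsilon}{2}\bv 1\bv 1^T$ has $\bv A_m=\bv A$ and $\|\Sbb^T\bv A_m\Sbb\|_2\approx\frac{\epsilon n}{2}$. Note that the write-up of Lemma~\ref{lem:err1} keeps the split at $\epsilon\sqrt\delta n$ at this step.

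\textbf{What is fine.} Your Step~2 lower bound $\|\bv w\|_2\ge\|\bv A_o\Sbb\bv x\|_2\gtrsim|\tilde\lambda|-\|\Sbb^T\bv A_m\Sbb\|_2$ is correct, and more direct than the paper's route to Lemma~\ref{Lem:asxl}. Step~4 matches the paper.
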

Since $\|\bv A \|_{\infty} \leq 1$, we have $\| \bv A\|_F \leq n$
and there can be at most $ \frac{1}{\epsilon^2} $
eigenvalues with magnitude $\geq \epsilon n$. These are
the outlying eigenvectors that Algorithm \ref{alg:eigvec} approximates well. It is known \cite{Bhattacharjee:2021wl,swartworth2025tight} that the eigenvalues of of
$\Sbb^T \bv A \Sbb$ approximate those of $\bv A$ to additive error $\Theta(\epsilon n)$
(see Imported Theorem~\ref{thm:main-eigval}). Given this error bound, one cannot distinguish an eigenvalue of magnitude $ \ge \epsilon n$ from one of
magnitude $c\epsilon n$ for a large enough constant $c \in (0,1)$; thus, Algorithm \ref{alg:eigvec} may output pairs corresponding to eigenvalues slightly
below $\epsilon n$. Nevertheless, as captured by the full statement of Theorem \ref{thm:main_bounded} in Section \ref{app:uniform samp}, any output eigenpair will satisfy the stated approximation guarantee for some true eigenvalue $\lambda$ of $\bv A$.

We also show that Algorithm~\ref{alg:eigvec} achieves a
better sample complexity for approximating just the top eigenvector.

\begin{theorem}[Informal]\label{thm:eigtop}
Let $\bv A \in \R^{n \times n}$ be symmetric with $\|\Ab \|_{\infty} \leq 1$,
and let $\lambda^*$
 be the largest magnitude eigenvalue of $\bv A$. Algorithm~\ref{alg:eigvec} run on $\bv A$ with sample size $s=\frac{c\log n}{\epsilon^2 \delta}\log^5 \frac{1}{\epsilon \delta}$ for a sufficiently large constant $c$ outputs $\tilde \lambda \in \R$ and
 $\bv v \in \R^n$ with $\norm{\bv v}_2 = 1$ such that, with probability at least $1-\delta$,
 \begin{align*}
 |\lambda^*-\tilde \lambda| \leq \epsilon n \text{, and } \quad
 \|\bv A \bv v - \lambda^* \bv v \|_2 \leq \epsilon n.
 \end{align*}
 Further, the algorithm samples just
$\tilde{O}\left(\frac{\log n}{\epsilon^2 \delta}\right)$ columns from $\bv A$ and thus $\tilde{O}\left(\frac{n}{\epsilon^2 \delta}\right)$ entries in expectation.
\end{theorem}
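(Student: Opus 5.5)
The plan is to analyze the lifted vector $\bv A\Sbb\bv x$ directly, where $\bv x$ is the unit top eigenvector of $\Sbb^T\bv A\Sbb$ with eigenvalue $\tilde\lambda$. Write $\bv A = \bv A_o + \bv A_m$, where $\bv A_o$ is the projection onto eigenvalues of magnitude at least $\epsilon n$ (rank at most $1/\epsilon^2$) and $\bv A_m$ is the middle part with $\|\bv A_m\|_2 \le \epsilon n$.

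\textbf{Eigenvalue closeness.} We first invoke the eigenvalue approximation result of \cite{Bhattacharjee:2021wl,swartworth2025tight} (Imported Theorem~\ref{thm:main-eigval}). With $s=\tilde O(\log n/(\epsilon^2\delta))$, the top-magnitude eigenvalue of $\Sbb^T\bv A\Sbb$ is within $\epsilon n$ of $\lambda^*$; for the single extreme eigenvalue this needs only the $1/\epsilon^2$ rate. The same analysis gives $\|\Sbb^T\bv A_m\Sbb\|_2 \lesssim \epsilon n$, using the $\log n$ factor for matrix concentration, together with $\|\bv A_o^{1/2}\Sbb\Sbb^T \bv A_o^{1/2} - \bv A_o\|$-type bounds.

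\textbf{Residual identity.} The key algebraic step is
\[
\bv A(\bv A\Sbb\bv x) - \tilde\lambda\,\bv A\Sbb\bv x = \bv A(\bv A - \Sbb\Sbb^T\bv A)\Sbb\bv x,
\]
which uses $\Sbb^T\bv A\Sbb\bv x = \tilde\lambda\bv x$. So the task is to bound $\|\bv A(\bv I-\Sbb\Sbb^T)\bv A\Sbb\bv x\|_2$ relative to $\|\bv A\Sbb\bv x\|_2$.

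\emph{Denominator.} We have $\bv x^T\Sbb^T\bv A^2\Sbb\bv x \ge$ (roughly) $\tilde\lambda^2$ up to error, since $\bv x^T\Sbb^T\bv A\Sbb\bv x=\tilde\lambda$ and $\Sbb^T\bv A^2\Sbb \approx (\Sbb^T\bv A\Sbb)^2$ in a suitable sense. A cleaner route is Cauchy--Schwarz: $|\tilde\lambda| = |\bv x^T\Sbb^T\bv A\Sbb\bv x| \le \|\Sbb\bv x\|\,\|\bv A\Sbb\bv x\|$, combined with a bound $\|\Sbb\bv x\|_2 \lesssim 1$. That bound holds because $\bv x$ is delocalized: each entry $|\bv x_i| \le \|\Sbb^T\bv A\Sbb\|_{\infty\to}\ldots$, more precisely $|\tilde\lambda||\bv x_i| = |(\Sbb^T\bv A\Sbb\bv x)_i| \le (n/s)\sqrt s$, which gives $|\bv x_i|\lesssim 1/(\epsilon\sqrt s)$.

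\emph{Numerator.} We use the approximate matrix multiplication (AMM) property $\|\bv A\Sbb\Sbb^T\bv A - \bv A^2\|$, which for uniform sampling has Frobenius error about $n^2/\sqrt s$. This is too weak if applied naively. Instead we apply it on the low-dimensional range: split $\bv A\Sbb\bv x$ into its $\bv A_o$ component and its $\bv A_m$ component. The middle part contributes at most $\|\bv A_m\|\,\|\Sbb\bv x\|$-type terms of size $\epsilon n\cdot\|\bv A\Sbb\bv x\|/|\tilde\lambda|$. The outlying part reduces to AMM against a rank-$\le 1/\epsilon^2$ projection, and that AMM concentrates at rate $\epsilon n$ with $s\approx 1/\epsilon^2$ samples, since only one fixed direction (the top one) needs to be controlled up to a union bound over a net.

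Combining, $\|\bv A\bv v - \tilde\lambda\bv v\|_2 \lesssim \epsilon n$, and the triangle inequality with $|\tilde\lambda-\lambda^*|\le\epsilon n$ transfers this to $\lambda^*$, after rescaling $\epsilon$ by a constant.

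\textbf{Main obstacle.} The difficulty is the numerator bound without losing an extra $1/\epsilon^2$: $\Sbb\bv x$ depends on the sample, so we cannot apply AMM to a fixed vector. I would first try to show that $\Sbb\bv x$ lies close to the span of $\Sbb^T$ applied to the top eigenvectors of $\bv A$. Because $\lambda^*$ is extremal, a gap-free Rayleigh-quotient argument should confine $\bv x$ to eigen-directions of magnitude at least $|\lambda^*|-O(\epsilon n)$, weighted by their eigenvalue gaps. This would reduce the dependent randomness to a small subspace where matrix Bernstein applies. The polylog$(1/\epsilon\delta)$ factors come from a dyadic decomposition of eigenvalue levels in this step.
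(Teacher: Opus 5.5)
\textbf{There is a genuine gap: the two key bounds (denominator and main numerator term) are either wrong or only asserted.}

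Your residual identity $\bv A(\bv A\Sbb\bv x)-\tilde\lambda\,\bv A\Sbb\bv x=\bv A(\bv I-\Sbb\Sbb^T)\bv A\Sbb\bv x$ is correct. Expanding it along $\bv A=\bv A_o+\bv A_m$ gives exactly the error terms the paper bounds in Lemma~\ref{lem:err_top}.

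The denominator and middle-term bounds fail because the claim $\|\Sbb\bv x\|_2\lesssim 1$ is false. Every column of $\Sbb$ is $\sqrt{n/s}\,\bv e_i$ for a distinct index $i$, so $\Sbb^T\Sbb=\frac{n}{s}\bv I$. Hence $\|\Sbb\bv x\|_2=\sqrt{n/s}$ for \emph{every} unit $\bv x$, however delocalized $\bv x$ is. Your Cauchy--Schwarz route therefore only gives $\|\bv A\Sbb\bv x\|_2\ge\sqrt{s/n}\,|\tilde\lambda|$. The middle-part estimate via $\|\bv A_m\|_2\|\Sbb\bv x\|_2$ is off by the same factor $\sqrt{n/s}$. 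What actually works:
\begin{itemize}
\item For the middle terms you need separate concentration bounds: $\|\Sbb^T\bv A_m\Sbb\|_2\le\epsilon n$, and $\|\bv A_m\Sbb\|_2\le\epsilon n$. The latter comes from Tropp's column-submatrix norm bound (Lemma~\ref{lem:as}), which is where $\log n$ enters.
\item For the denominator you need a genuine proof that $\|\bv A\Sbb\bv x\|_2=\Theta(\|\bv A\|_2)$, as in Lemma~\ref{lem:asx}. Lower bound $\|\bv A\Sbb\bv x\|_2^2\ge\bv x^T\Sbb^T\bv A_o^2\Sbb\bv x$. Replace $\Sbb^T\bv A_o^2\Sbb$ by $(\Sbb^T\bv A_o\Sbb)^2$ at spectral cost $O(\epsilon n\|\bv A\|_2\log\frac{1}{\epsilon\delta})$. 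Then use $\bv x^T(\Sbb^T\bv A_o\Sbb)^2\bv x=\tilde\lambda^2\pm O(\epsilon n|\tilde\lambda|)$.
\end{itemize}

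For the main numerator term $\bv A_o(\bv I-\Sbb\Sbb^T)\bv A_o\Sbb\bv x$, you only assert that approximate matrix multiplication against the rank-$1/\epsilon^2$ outlying projection works at rate $\epsilon n$ with $1/\epsilon^2$ samples. With the available incoherence bounds, an $\epsilon$-distortion embedding of all of $\bv U_o$ costs $\tilde O(1/\epsilon^4)$ samples. That is exactly why Theorem~\ref{thm:main_bounded} pays $1/\epsilon^4$.

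The missing idea is a dyadic split plus level-dependent distortion:
\begin{itemize}
\item Split $\bv A_o=\sum_i\bv A_{o,i}$ into dyadic eigenvalue levels.
\item Use the level-dependent distortion $\epsilon n/\lambda$ of Property~\ref{ass:subspace-embedding}. For every pair of levels this gives $\|\bv\Lambda_{o,i}\bv U_{o,i}^T(\Sbb\Sbb^T-\bv I)\bv U_{o,j}\bv\Lambda_{o,j}\|_2\lesssim\epsilon n\max(\|\bv\Lambda_{o,i}\|_2,\|\bv\Lambda_{o,j}\|_2)\le\epsilon n\|\bv A\|_2$.
\item Factor $\bv A_{o,j}\Sbb\bv x=\bv U_{o,j}\bv\Lambda_{o,j}(\bv U_{o,j}^T\Sbb\bv x)$ with $\|\bv U_{o,j}^T\Sbb\|_2\le 1.1$. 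Never factor through $\Sbb\bv x$ itself.
\end{itemize}
Summing over the $O(\log\frac{1}{\epsilon\delta})$ levels bounds the numerator by $O(\epsilon n\|\bv A\|_2\log\frac{1}{\epsilon\delta})$. The factor $\|\bv A\|_2$ then cancels against the $\Theta(\|\bv A\|_2)$ denominator, and this cancellation is what allows $1/\epsilon^2$ samples.

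This also dissolves your stated obstacle. These are spectral-norm bounds that hold uniformly over $\bv x$, so no net argument is needed. Nor is any Rayleigh-quotient localization of $\bv x$, which would introduce eigengap dependence.
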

Note that when the largest and smallest eigenvalues $\lambda_1(\bv A)$ and $\lambda_n(\bv A)$ have different signs but
are within $O(\epsilon n)$ of each other in magnitude, we cannot distinguish which is the largest in
magnitude. Hence, the full statement of Theorem \ref{thm:eigtop} in Section~\ref{app:uniform samp} considers both
the largest magnitude positive and negative eigenvalues of $\bv A$ and
ensures that, as long as they are at least $\geq \frac{\| \bv A\|_2}{2}$ in magnitude,
the approximate eigenvector corresponding to the
them has residual error $\leq \epsilon n$.
Since $\|\bv A\|_2 = \max(|\lambda_1(\bv A)|, |\lambda_n(\bv A)|)$,
at least one of the
them is always $\geq \frac{\| \bv A\|_2}{2}$. The factor $1/2$
is
chosen for simplicity and can be replaced by any constant $c \in (0,1)$.

\paragraph{Proof Sketch.}
We now sketch the proofs of
Theorems~\ref{thm:main_bounded}
and~\ref{thm:eigtop}. Following~\cite{Bhattacharjee:2021wl,swartworth2025tight},
we write $\bv A = \bv A_o+\bv A_m$ where $\bv A_o$ is its projection onto the eigenvectors corresponding to outlying eigenvalues of magnitude at least
$\epsilon n$ and $\bv A_m$ is the projection onto the remaining
`middle' eigenvectors.

\smallskip

\noindent\textbf{Basic Matrix Approximation Bounds.}
Our analysis rests on
two key facts about the scaled sampling matrix $\Sbb$, first established in~\cite{Bhattacharjee:2021wl} and then
refined in~\cite{swartworth2025tight}.
Both hold because the outlying eigenvectors of a
bounded-entry matrix are \emph{incoherent}: their
mass is spread out across coordinates.
More specifically, if $\bv U_{\geq \lambda}$ is the set of eigenvectors
of $\bv A$ with eigenvalues with magnitude at least $\lambda$,
the leverage scores of $\bv U_{\ge \lambda}$ can be bounded by
$\leq \frac{n}{\lambda^2}$.
Using this fact, \cite{swartworth2025tight} prove the following:

\smallskip
\noindent \textbf{Property 1:} For $\lambda \geq \epsilon n$, and for $s = \widetilde O(1/\epsilon^2)$, $\Sbb$ is an $\frac{\epsilon n}{\lambda}$-distortion \emph{subspace
embedding} for $\bv U_{\geq \lambda}$ i.e.
\begin{align*}
 \big(1 - \tfrac{\epsilon n}{\lambda}\big)\, \bv I \;\preceq\; \bv U_{\ge \lambda}^T \Sbb \Sbb^T \bv U_{\ge \lambda}
 \;\preceq\; \big(1 + \tfrac{\epsilon n}{\lambda}\big)\, \bv I.
\end{align*}

We can also use the incoherence
of $\bv U_o$, i.e. the eigenvectors of $\bv A_o$, to bound $\|\bv A_o \|_{\infty}$. In turn, using triangle inequality and the assumption that $\norm{\bv A}_\infty \le 1$, we can bound $\|\bv A_m \|_{\infty} \leq 1+\|\bv A_o \|_{\infty}$.
Using existing spectral norm bounds for random principal submatrices of bounded entry matrices \cite{tropp2008norms,Bhattacharjee:2021wl}, we can show that sampling doesn't `amplify' the middle eigenvalues. Specifically:

\smallskip
\noindent \textbf{Property 2:} For $s = O(\log n/\epsilon^2)$, $\| \Sbb^T \bv A_m \Sbb\|_2 \leq
\epsilon n \| \bv A\|_{\infty} \leq \epsilon n$.
\smallskip

\smallskip

\noindent\textbf{Approximate Eigenpair Residual Bounds.}
We now use the above properties to
bound the residual error of our approximate eigenvectors.
Since $\|\bv A_m \|_2 \leq \epsilon n$,
we have
$\|\bv A_m \bv v\|_2 \leq \epsilon n$ for any unit vector $\bv v$.
So, for an approximate eigenvector $\bv v$, it is enough to show
$\| \bv A_o \bv v-\lambda \bv v\|_2 \leq \epsilon n$.
Recall that $\bv v$ is of the form $\bv v=\frac{\bv A \Sbb \bv x}{\|\bv A \Sbb \bv x \|_2}$
where $\bv x$ is an eigenvector of
$\Sbb^T \bv A \Sbb$ corresponding to some eigenvalue $\lambda$ with
$|\lambda| \geq \epsilon n$.
So we have $\bv A_o \bv v= \frac{\bv A_o^2 \Sbb \bv x}{\|\bv A \Sbb \bv x\|_2}$. Thus, we must establish that $\norm{\frac{\bv A_o^2 \Sbb \bv x}{\|\bv A \Sbb \bv x\|_2}-\lambda \bv v}_2 = O(\epsilon n).$

From \textbf{Property 1} applied with error parameter $\epsilon^2$, by sampling $\tilde{O}(1/\epsilon^4)$ columns,
$\Sbb$ is a $\epsilon$-subspace embedding for the outlying
eigenvectors $\bv U_o$ i.e. $\bv U_o^T\Sbb \Sbb^T \bv U_o =(1 \pm \epsilon)\bv I$.
Using this, we can show that $\bv A_o^2 \Sbb \bv x
\approx \bv A_o \Sbb \Sbb^T \bv A_o \Sbb \bv x$ up to error $\epsilon n \|\bv A \Sbb \bv x\|_2$.
Note that on dividing both sides by the normalization factor $\|\bv A \Sbb \bv x\|_2$,
this is exactly the desired residual error $\epsilon n$.
Thus, it remains to show
\begin{align}\label{eq:interCam}
\left \|\frac{\bv A_o \Sbb \Sbb^T \bv A_o \Sbb \bv x}{\norm{\bv A \Sbb \bv x}_2} - \lambda \bv v \right \|_2 = O(\epsilon n).
\end{align}
Since $\bv x$ is an eigenvector of $\Sbb^T \bv A \Sbb$, we have:
\begin{align}\label{eq:num}
 \bv A_o \Sbb \Sbb^T \bv A_o \Sbb \bv x=\bv A_o \Sbb \Sbb^T \bv A \Sbb \bv x-\bv A_o \Sbb \Sbb^T \bv A_m \Sbb \bv x
 &=\lambda \bv A_o \Sbb \bv x -\bv A_o \Sbb \Sbb^T \bv A_m \Sbb \bv x \nonumber \\
 &= \underbrace{\lambda\, \bv A \Sbb \bv x}_{\bv t_1}
 \;-\; \underbrace{\lambda\, \bv A_m \Sbb \bv x}_{\bv t_2}
 \;-\; \underbrace{\bv A_o \Sbb \Sbb^T \bv A_m \Sbb \bv x}_{\bv t_3}.
\end{align}
Dividing by $\|\bv A \Sbb \bv x\|_2$, the first term $\bv t_1$
becomes $\lambda \bv v$ by the definition of $\bv v$. Thus, to prove \eqref{eq:interCam}, it remains to bound
the terms
$\bv t_2$ and $\bv t_3$ by $\epsilon n \|\bv A \Sbb \bv x\|_2$.

Using that $\|\bv A_m \|_{\infty}$ is bounded, we get
$\|\bv A_m \Sbb\|_2 \leq \epsilon n$. Thus, $\norm{\bv t_2}_2 \le \epsilon n |\lambda|$. By sampling $\tilde{O}(\log n/\epsilon^4)$ columns,
from \textbf{Property 2} applied with error parameter $\epsilon^2$, we have $\|\Sbb^T \bv A_m \Sbb\|_2 \leq \epsilon^2 n$.
By spectral
submultiplicativity, we can thus bound $\bv t_3$ as:
\[
 \norm{\bv t_3}_2 = \|\bv A_o \Sbb \Sbb^T \bv A_m \Sbb \bv x\|_2
 \;\le\; \|\bv A_o \Sbb\|_2 \cdot \|\Sbb^T \bv A_m \Sbb\|_2 \;=\; O(\epsilon^2 n^2) .
\]
Here, the last step follows from \textbf{Property 1} i.e. $\Sbb$ is a constant factor
subspace embedding for the eigenvectors of $\bv A_o$,
so that $\|\bv A_o \Sbb\|_2=O(\|\bv A_o \|_2) = O(\norm{\bv A}_2) =O(n)$ since $\norm{\bv A}_\infty \le 1$.

\smallskip

\noindent\textbf{Bounding $\norm{\bv A \Sbb \bv x}_2$. }
Our final step is to bound the normalization factor as
$\|\bv A \Sbb \bv x\|_2 \approx |\lambda| \geq \epsilon n$. This ensures that our upper bounds of $\norm{\bv t_2}_2 \le \epsilon n |\lambda|$ and $\norm{\bv t_3}_2 \le \epsilon^2 n^2$ are at most $\epsilon n \norm{\bv A \Sbb \bv x}_2$, as required.
To show $\|\bv A \Sbb \bv x\|_2 \approx |\lambda|$, we expand to get:
\begin{align}\label{eq:den}
 \|\bv A \Sbb \bv x\|_2^2 \;=\; \bv x^T \Sbb^T \bv A^2 \Sbb \bv x
 \;=\; \bv x^T \Sbb^T \bv A_o^2 \Sbb \bv x + \bv x^T \Sbb^T \bv A_m^2 \Sbb \bv x.
\end{align}
To bound $\bv x^T \Sbb^T \bv A_m^2 \Sbb \bv x$
we use \textbf{Property 2}, which says that by sampling $O(\log n/\epsilon^4)$ columns, we have
$\bv x^T \Sbb^T \bv A_m^2 \Sbb \bv x
\leq \|\Sbb^T \bv A_m^2 \Sbb \|_2 \leq \epsilon^2 n \|\bv A^2 \|_{\infty} \leq \epsilon^2 n^2$.
The last step follows from the fact that $\|\bv A^2\|_\infty \le n$ as $\|\bv A\|_\infty \le 1$.
For the outlying contribution $\bv x^T \Sbb^T \bv A_o^2 \Sbb \bv x$, we again use
\textbf{Property 1} to
replace $\bv I$ with $\Sbb \Sbb^T$, giving
$\bv x^T \Sbb^T \bv A_o^2 \Sbb \bv x \approx \bv x^T (\Sbb^T \bv A_o \Sbb)^2 \bv x \approx |\lambda|^2$.
Together, we get
\[
 \|\bv A \Sbb \bv x\|_2^2 \;\approx\; |\lambda|^2 \pm \epsilon^2 n^2 \;\approx\; O(\lambda^2),
\]
where the final step uses the assumption from $|\lambda| = \Omega(\epsilon n)$ (after adjusting $\epsilon$ by constants).
Overall, we have $\|\bv A \Sbb \bv x\|_2 \approx |\lambda| $, completing
the proof of Theorem~\ref{thm:main_bounded}.

\smallskip

\noindent\textbf{Improved Bound for the Top Eigenvector.} The improved $\tilde O(1/\epsilon^2)$ sample complexity for the top eigenvector given in Theorem~\ref{thm:eigtop} follows the same proof approach. However, the sample complexity improves in two places: when bounding $\norm{\bv A \Sbb \bv x}_2$ and $\norm{\bv t_3}_2$. For $\norm{\bv A \Sbb \bv x}_2$, rather than embedding the entire
outlying subspace $\bv U_o$ at distortion $\epsilon$,
following~\cite{swartworth2025tight},
we split $\bv A_o = \sum_{i=1}^r \bv A_{o,i}$ into
level sets, where $\bv A_{o,i}$ contains eigenvalues of magnitude in
$[n 2^{-i}, n2^{-(i-1)}]$ and $r=O(\log (1/\epsilon))$.
Using \textbf{Property 1}, $\Sbb$ is a $\frac{\epsilon}{ 2^{-i}}$-distortion
subspace embedding for level set $i$ using $\tilde{O}(1/\epsilon^2)$ columns.
This is enough to show $\bv x^T \Sbb^T \bv A^2 \Sbb \bv x \geq \bv x^T \Sbb^T \bv A_o^2 \Sbb \bv x \approx \|\bv A \|_2^2$ as required.
So, from~\eqref{eq:den}, we have $\|\bv A \Sbb \bv x\|_2 \approx |\lambda|=\|\bv A \|_2$. Secondly, the term $\bv t_3$ in~\eqref{eq:num}
can now be bounded by $\epsilon n \|\bv A \|_2$ instead of $\epsilon^2 n \|\bv A \|_2$
as the $\|\bv A \|_2$ term cancels out with the denominator. This again allows us to use sample complexity $\tilde O(1/\epsilon^2)$ instead of $\tilde O(1/\epsilon^4).$

As we show in Theorem~\ref{thm:gen-nystrom-eigvec},
$\widetilde O(1/\epsilon^2)$ sampled columns do
suffice to approximate all the outlying eigenvectors,
via a different technique
leveraging a rank-truncated Nystr\"om approximation,
at the cost of losing the ability to compute individual entries of the approximate eigenvector in $\textrm{poly}(\log n, 1/\epsilon)$ time. Closing this gap for Algorithm~\ref{alg:eigvec},
or proving that this algorithm requires $\omega(1/\epsilon^2)$ columns to approximate all outlying eigenvectors
is an interesting open question.

\paragraph{Squared Column-Norm Sampling.}

In Algorithm~\ref{alg:eigvec-rowsamp}, we sample columns according to their
squared norms for approximating the outlying and the top eigenvectors (see Theorems~\ref{thm:sqnorm1} and~\ref{thm:sqnorm2} respectively) to $\epsilon \| \bv A\|_F$ error, with similar sample complexities as in the uniform sampling case.
While $\| \bv A\|_F \leq n$
for bounded-entry matrices, $\| \bv A\|_F$ can be much smaller in practice.
So the error bound of $\epsilon \| \bv A\|_F$ can be
potentially much smaller than the error of $\epsilon n$
obtained from Algorithm~\ref{alg:eigvec}. Algorithm~\ref{alg:eigvec-rowsamp}
is very similar to Algorithm~\ref{alg:eigvec}, but
following~\cite{Bhattacharjee:2021wl, swartworth2025tight},
we need to judiciously zero out certain entries of the matrix before
computing the eigenvectors of the principal submatrix and scaling them up.
The zeroing out procedure is described in Steps~\ref{step:zero} and~\ref{step:zero1} of Algorithm~\ref{alg:eigvec-rowsamp}.
To see why this is necessary for eigenvalue computation, consider the case where $\bv A$ is
the identity matrix. If one samples a $1/\epsilon^2$-sized principal submatrix and
rescales it by $n\epsilon^2$, then the resulting matrix will have eigenvalues of size
$\epsilon^2 n$. This is fine if one wants $\epsilon n$ additive error.
But it is much larger than the $\epsilon \|\bv A\|_F = \epsilon\sqrt{n}$ additive error
that we actually want. So, zeroing out the diagonal entries in this case
ensures that the approximated eigenvalues are 0, which is well within
the acceptable error bound of the true eigenvalue, which is just $1$.

\paragraph{Lower Bound for Top Eigenvector.}
In Section~\ref{app:lower}, we show that the sample complexity of Theorem \ref{thm:eigtop} for top eigenvector approximation is tight up to polylogarithmic factors. Formally,
\begin{theorem}\label{thm:lower_main}
Let $\mathcal{A}$ be any randomized algorithm that
(possibly adaptively)
reads entries of
a binary matrix $\bv M \in \{0,1 \}^{n \times n}$ and
outputs unit-norm $\bv v$ that,
for $\epsilon \in (0,1)$ satisfies with prob. at least $2/3$:
\begin{align*}
 \|\bv M \bv v-\lambda_1(\bv M) \bv v \|_2 \leq \epsilon n,
\end{align*}
Then
$\mathcal{A}$ must read at
least $\Omega(\frac{n}{\epsilon^2})$ entries of $\bv M$
provided
$\epsilon=\Omega(\frac{1}{\sqrt{n}})$.
\end{theorem}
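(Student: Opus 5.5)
We reduce from a distinguishing problem about biased coins, using a planted block construction. The lower bound must scale like $n/\epsilon^2$, which suggests $n$ independent subproblems, each costing $1/\epsilon^2$ queries. Concretely, we plant a rank-one-like structure whose top eigenvector encodes $n$ hidden bits, each recoverable only by spending $\Omega(1/\epsilon^2)$ queries in its own row/column. We then show that any $\bv v$ with small residual reveals a constant fraction of those bits.

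\emph{Construction.} Pick a hidden sign vector $\bv z \in \{\pm 1\}^n$, uniformly random. Let $\bv M$ have i.i.d.\ entries (symmetrized, so $\bv M_{ij}=\bv M_{ji}$) with
\begin{align*}
\Pr[\bv M_{ij}=1] = \tfrac12 + c\epsilon\, z_i z_j ,
\end{align*}
so that $\E[\bv M] = \tfrac12 \bv 1\bv 1^T + c\epsilon\, \bv z\bv z^T$. The all-ones direction has eigenvalue $\approx n/2$. This is inconvenient, since it would make the top eigenvector $\bv 1$ regardless of $\bv z$. We therefore adjust the construction so that the planted component dominates. One option is to use a block structure: a dense all-ones block of size $\Theta(\epsilon n)$, placed on a random row set $T$ inside an otherwise sparse or balanced background. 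Another is to take entries with mean $\tfrac12$ except for bias $\pm\epsilon$ along the planted direction, and then argue that the residual condition forces $\bv v$ to correlate with $\bv z$ in a way that is orthogonal to $\bv 1$ information. I will commit to the cleanest version, in which the top eigenvector is determined by a hidden random subset $T$ of size $k=\Theta(\epsilon^2 n)\ge 1$ (this is where $\epsilon=\Omega(1/\sqrt n)$ is used). In this version, each row $i\in T$ has $\Theta(\epsilon n)$ extra ones placed inside the columns of a hidden set $B$ of size $\epsilon n$. The leading eigenvector then puts mass $\Omega(\epsilon)$-relatively on coordinates related to these rows. Spectral-norm bounds for the random background (such as $\|\bv M - \E\bv M\|_2 = O(\sqrt n)\le \epsilon n$) control the perturbation.

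\emph{Residual implies recovery.} Using $\|\bv M\bv v - \lambda_1\bv v\|_2\le \epsilon n$ together with the gap between $\lambda_1$ and the rest of the spectrum, a Davis--Kahan argument shows that $\bv v$ has constant correlation with the planted vector. Rounding $\bv v$ (e.g., thresholding its coordinates) then identifies a constant fraction of the hidden labels. I will calibrate the planted eigenvalue so that it exceeds the next eigenvalue by more than a constant times $\epsilon n$; otherwise the residual bound is vacuous. Choosing constants so that $\epsilon n$ residual is strictly weaker than the eigengap is the delicate part.

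\emph{Information bound.} Each hidden label is a bias of size $\Theta(\epsilon)$ on $\Theta(n)$ entries of its row. By standard KL/Hellinger bounds, distinguishing a coin of bias $\tfrac12+\epsilon$ from $\tfrac12-\epsilon$ needs $\Omega(1/\epsilon^2)$ samples, and this holds adaptively. Summing over independent labels via a direct-sum (or Fano/Assouad) argument gives $\Omega(n/\epsilon^2)$ total reads to recover a constant fraction with probability $2/3$. The main obstacle is making the construction and eigengap consistent: we need the residual guarantee $\epsilon n$ to force correlation with a planted vector whose entries each carry only $\epsilon$-bias information. I would first try the simplest version, $\E\bv M=\tfrac12\bv J + \epsilon\bv z\bv z^T$ (with $\bv J$ the all-ones matrix), and check that the condition $\|\bv M\bv v-\lambda_1\bv v\|\le c'\epsilon n$ for a small enough constant $c'$ (rescaling $\epsilon$) forces $\bv v$ into $\mathrm{span}(\bv 1,\bv z)$ approximately. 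The eigenvalues $n/2$ and $\epsilon n$ are separated. This alone recovers only the $\bv 1$ direction, so I would reorder by making $\tfrac12\bv J$ into a centered background, for instance by replacing $\tfrac12\bv J$ with a matrix whose top eigenvalue is smaller than the planted one.
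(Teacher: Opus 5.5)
There is a genuine gap: you never fix a hard instance on which the residual guarantee forces recovery of the hidden bits, and the instances you do describe fail.

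\emph{The planted-block instance.} The block $T\times B$ with $|T|=\Theta(\epsilon^2 n)$ and $|B|=\epsilon n$ has spectral norm only $\Theta(\epsilon^{3/2}n)$.
\begin{itemize}
\item Over an empty background, every unit $\bv v$ satisfies $\|\bv M\bv v-\lambda_1(\bv M)\bv v\|_2\le 2\|\bv M\|_2=O(\epsilon^{3/2}n)\le\epsilon n$.
\item Over a $\mathrm{Bernoulli}(\tfrac12)$ background, the fixed vector $\bv 1/\sqrt n$ already has residual $O(\epsilon^{3/2}n+\sqrt n)\le\epsilon n$.
\end{itemize}
Either way an algorithm that reads nothing succeeds, so no lower bound can follow.

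\emph{The spike instance.} Your first instance, $\E\bv M=\tfrac12\bv J+c\epsilon\bv z\bv z^T$ with $\bv z\in\{\pm1\}^n$, fails for the reason you note. Since $\bv z$ is nearly orthogonal to $\bv 1$, the top eigenvector moves only $O(\epsilon/\sqrt n)$ toward it, and again $\bv 1/\sqrt n$ has residual $O(\sqrt n)$. Your proposed remedy of centering the background is not available for a $\{0,1\}$ matrix. A $\pm\epsilon'$ product bias on a background of density $p$ needs $p\ge\epsilon'$, so the $p\bv J$ part has eigenvalue $pn\ge\epsilon' n$ and the planted spike never dominates.

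\emph{The missing idea.} The dominant all-ones direction is an asset rather than an obstacle. It supplies an $\Omega(n)$ eigengap, so the residual bound $\epsilon n$ determines the top eigenvector to $\ell_2$ error $O(\epsilon)$. This is much finer than the constant correlation you aim for, and it then suffices that the hidden bits move the top eigenvector by $\Theta(\epsilon)$ in $\ell_2$.

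The paper gets this from the $(\rho,n)$-distributed detection instance of \cite{bhattacharjee2024universal}:
\begin{itemize}
\item Take a non-symmetric $\bv A$ whose $i$-th row is i.i.d.\ $\mathrm{Bernoulli}(\tfrac12+\rho\bv q_i)$ with $\bv q\in\{0,1\}^n$. Then $\E[\bv A]=(\tfrac12\bv 1+\rho\bv q)\bv 1^T$ is rank one. Its normalized top left singular vector has coordinates differing by $\Theta(\rho/\sqrt n)$ between $\{\bv q_i=0\}$ and $\{\bv q_i=1\}$. The bias enters additively along $\bv 1$ instead of as a nearly orthogonal spike.
\item Symmetry comes from the dilation $\bv M=\begin{bmatrix}\bv 0&\bv A\\ \bv A^T&\bv 0\end{bmatrix}$ rather than from symmetrizing entries, so each entry's law still depends on a single hidden bit.
\item Since $\|\bv A-\E\bv A\|_2=O(\sqrt n)$, Davis--Kahan for $\bv M$ plus Wedin for $\bv A$ put the first block of $\bv v$ within $O(\epsilon)$ of $\tfrac{1}{\sqrt2}(\tfrac12\bv 1+\rho\bv q)/\|\tfrac12\bv 1+\rho\bv q\|_2$, up to sign. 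The Wedin step is where $\epsilon=\Omega(1/\sqrt n)$ is used.
\item Rounding the top $n/2$ coordinates of that block to $1$ errs on at most $0.01n+O(\epsilon^2n/\rho^2)\le n/20$ indices once $\rho=C'\epsilon$.
\end{itemize}
The $\Omega(n/\rho^2)=\Omega(n/\epsilon^2)$ bound is then exactly the imported adaptive lower bound for distributed detection. That is the direct-sum statement you planned to prove, and your $z_iz_j$ bias would not have satisfied its single-bit-per-entry structure.
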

The proof is via a reduction from the
$(\epsilon, n)$-Distributed
Detection problem~\cite{bhattacharjee2024universal}.
In this problem, the goal is to recover a
hidden vector $\bv q \in \{0,1\}^n$, distributed uniformly on the Hamming cube,
given query access to an $n \times n$
matrix $\bv A$ whose entries are independent, with the $(i,j)$-th entry drawn from
$\mathrm{Bernoulli}(1/2+\epsilon)$ if $\bv q_i = 1$ and $\mathrm{Bernoulli}(1/2)$ otherwise.
It was shown in~\cite{bhattacharjee2024universal} that recovering a constant fraction of the
entries of $\bv q$ requires reading $\Omega(n/\epsilon^2)$ entries of $\bv A$.
Observe that we have $\E[\bv A] =
\tfrac12 \bv 1 \bv 1^T + \epsilon\, \bv q \bv 1^T
= \big(\tfrac12 \bv 1 + \epsilon \bv q\big)\bv 1^T$. The
entries of $\bv A - \E[\bv A]$ are random in $[-1,1]$ with mean $0$. So, by standard matrix concentration bounds,
$\|\bv A - \E[\bv A]\|_2 = O(\sqrt n)$~\cite{vershynin2018high}. Since $\E[\bv A]$ is rank-one matrix with a top singular value of $\Theta(n)$, using a standard matrix perturbation bound, we get that the top singular value of $\sigma_1(\bv A)=\Theta(n)$, and $\sigma_i(\bv A) =O(\sqrt{n})$ for all $i >1$.

To reduce the distributed detection problem to top eigenvector approximation of a symmetric matrix,
consider the matrix
$\bv M = \begin{bmatrix} \bv 0 & \bv A \\ \bv A^T & \bv 0 \end{bmatrix}$, whose eigenvalues are
$\pm\sigma_i(\bv A)$ with eigenvectors $\tfrac{1}{\sqrt2}\begin{bmatrix} \bv u_i \\ \pm\bv w_i\end{bmatrix}$,
where $\bv u_i, \bv w_i$ are the left and right singular vectors of $\bv A$.
Also, note that $\bv u_1$ is approximately $\tfrac12 \bv 1 + \epsilon \bv q$.
Since $\sigma_1(\bv A)=\Theta(n)$, and $\sigma_i(\bv A) =O(\sqrt{n})$ for all $i >1$, using a Davis-Kahan type perturbation bound,
we show that approximating the
top eigenvector of $\bv M$ to residual error $\epsilon n$ is
enough to approximate a
top left singular vector $\bv u_1$ of
$\bv A$ to $l_2$ error $O(\epsilon)$.
This is enough to infer a large $\Omega(n)$ fraction of the entries of the
hidden vector $\bv q$, thus solving the $(\epsilon, n)$-Distributed
Detection problem. This proves the lower bound.

\subsection{Fast Computation of Approximate Eigenvector Entries}

Computing an approximate eigenvector of a general symmetric $\bv A \in \R^{n \times n}$
requires at least $\Omega(n)$ time since the length of the vector is $n$.
However, if we don't want to output the full approximate eigenvectors,
but just want to compute an entry of the approximate eigenvectors
given by Algorithms~\ref{alg:eigvec} and~\ref{alg:eigvec-rowsamp},
we show that we can do it in $\mathrm{poly}(\log n, 1/\epsilon)$ time.
A key feature of the approximate eigenvectors
$\bv v=\frac{\bv A \Sbb \bv x}{\|\bv A \Sbb \bv x \|_2}$ given by
Algorithms~\ref{alg:eigvec} and~\ref{alg:eigvec-rowsamp} is that they
are spanned by $s=\mathrm{poly}(\log n, 1/\epsilon)$ columns of $\bv A$.
Moreover, $\bv x$ can be computed in $\mathrm{poly}(s)$
(where $s=\mathrm{poly}(\log n, 1/\epsilon)$)
time exactly
by an eigendecomposition of $\bv{\bar S}^T \bv{A} \bv{\bar S}$, which consists of
$s^2$ entries.
Then,
for any index $i \in [n]$, the numerator $\bv A_{i,:} \Sbb \bv x$ of the
entry
$\bv v_i=\frac{\bv A_{i,:} \Sbb \bv x}{\| \bv A \Sbb \bv x \|_2}$
can be computed
exactly in $O(s)$ time
since it is a weighted sum of $s$ entries of the $i$\textsuperscript{th}
row of $\bv A$.
As explained in the proof sketch of Theorems~\ref{thm:main_bounded}
and~\ref{thm:eigtop}
in Section~\ref{subsec:uniform},
if $\bv x$
is an eigenvector of $\Sbb^T \bv A \Sbb$ corresponding to its eigenvalue $\lambda$
where $|\lambda| \geq \epsilon n$, then
$\|\bv A \Sbb \bv x \|_2 \in [\frac{|\lambda|}{C}, C|\lambda|]$
for some constant $C$ (see Lemma~\ref{Lem:asxl}).
 Thus, $\bv v'=\frac{\bv A \Sbb \bv x}{|\lambda|}$ is a constant factor scaling of our approximate eigenvector $\bv v$ i.e. $\bv v'=c_1 \bv v$ for a constant $c_1 \in [\frac{1}{C}, C]$. Note that $\|\bv A \bv v' - \lambda \bv v'\|_2 =
c_1\|\bv A \bv v - \lambda \bv v \|_2 \leq c_1\epsilon n$. Thus, $\bv v'$ is also an approximate eigenvector of $\bv A$ corresponding to $\lambda$
after adjusting $\epsilon$ by a constant $c_1$. Thus, we can compute the $i$\textsuperscript{th} entry of
the approximate eigenvector to constant factor scaling as
$\frac{\bv A_i \Sbb \bv x}{|\lambda|}$. The time to compute the $i$\textsuperscript{th} entry is dominated by the time taken to compute the eigenvector $\bv x$ of the matrix $\Sbb^T \bv A \Sbb$. This takes at most $O(s^{\omega})$ time for an $s \times s$ submatrix where $\omega$ is the matrix multiplication exponent. Formally, we have the following result:

\begin{corollary}[Entrywise Computation of Approximate Eigenvectors]\label{cor:local}
Let $\bv A \in \R^{n \times n}$ be a symmetric matrix such that
$\|\Ab \|_{\infty} \leq 1$, and let $\epsilon, \delta \in (0,1)$.
Then, there is an algorithm that,
given $\bv A$ and any index
$j \in [n]$, reads
$O(s^2)$
entries in expectation from $\bv A$ and takes $O(s^{\omega})$ time, where $s=O\left(\frac{\log n}{\epsilon^4 \delta}
\log^3 \frac{1}{\epsilon \delta} \right)$ and $\omega$ is the matrix multiplication exponent, and outputs the pairs
$\{(\tilde \lambda_i, \bv v_{ij})\}_{i=1}^{m}$, where $\bv v_{ij}$ is the
$j$\textsuperscript{th} entry of a vector $\bv v_i \in \R^{n}$ with $\|\bv v_i \|_2 \in [\frac{1}{C}, C]$, for an absolute
constant $C > 1$. The pairs
$\{(\tilde \lambda_i, \bv v_{i})\}_{i=1}^{m}$
satisfy the error guarantees of Theorem~\ref{thm:main_bounded}
with probability at least $1-\delta$.

\end{corollary}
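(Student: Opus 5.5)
The plan is to run the sampling step of Algorithm~\ref{alg:eigvec}, stop before the lifting step, and read off the $j$\textsuperscript{th} coordinate of a rescaled lift. Fix the random sampling matrix $\Sbb$ with expected sample size $s$ as in Theorem~\ref{thm:main_bounded}. The algorithm forms $\Sbb^T \bv A \Sbb$, which requires reading $|S|^2$ entries of $\bv A$. Since $|S|$ is a sum of $n$ independent Bernoulli$(s/n)$ variables, $\E[|S|^2] = s + s^2(1-1/n) = O(s^2)$, which gives the expected read count. It then eigendecomposes this $|S|\times|S|$ matrix in $O(|S|^\omega)$ time and keeps the eigenpairs $(\tilde\lambda_i,\bv x_i)$ with $|\tilde\lambda_i|\ge \epsilon n/2$. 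Finally, it reads the $|S|$ entries $\bv A_{j,S}$ and outputs $\bv v_{ij} = \bv A_{j,:}\Sbb \bv x_i/|\tilde\lambda_i|$. This last step costs $O(|S|)$ per $i$, hence $O(|S|^2)$ overall, since there are at most $|S|$ eigenvectors. The time is therefore dominated by the eigendecomposition. To state the time bound in expectation, or with high probability, I would use $\E|S|^\omega = O(s^\omega)$ for $\omega\le 3$, which holds because $|S|$ concentrates around $s\ge \log n$.

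Correctness rests on tying the output vector $\bv v_i' = \bv A\Sbb\bv x_i/|\tilde\lambda_i|$ to the vector $\bv v_i = \bv A\Sbb\bv x_i/\|\bv A\Sbb\bv x_i\|_2$ that Algorithm~\ref{alg:eigvec} would output on the same sample $\Sbb$. Condition on the probability $1-\delta$ event of Theorem~\ref{thm:main_bounded}. On that event, each returned $(\tilde\lambda_i,\bv v_i)$ satisfies $|\lambda-\tilde\lambda_i|\le\epsilon n$ and $\|\bv A\bv v_i-\lambda\bv v_i\|_2\le\epsilon n$ for the matched true eigenvalue $\lambda$. The key ingredient is Lemma~\ref{Lem:asxl}, which holds on the same event and states that $\|\bv A\Sbb\bv x_i\|_2\in[|\tilde\lambda_i|/C, C|\tilde\lambda_i|]$ whenever $|\tilde\lambda_i|\ge\epsilon n/2$.

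Given this lemma, $\bv v_i' = c_i\bv v_i$ with $c_i = \|\bv A\Sbb\bv x_i\|_2/|\tilde\lambda_i|\in[1/C,C]$, so $\|\bv v_i'\|_2\in[1/C,C]$. The residual transfers by linearity: $\|\bv A\bv v_i'-\lambda\bv v_i'\|_2 = c_i\|\bv A\bv v_i-\lambda\bv v_i\|_2\le C\epsilon n$. The eigenvalue guarantee is unchanged, because the outputs $\tilde\lambda_i$ are identical to those of Algorithm~\ref{alg:eigvec}. Rerunning with $\epsilon/C$ in place of $\epsilon$ removes the factor $C$ and changes $s$ only by a constant factor. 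The guarantee is simultaneous over all $i$ and over every queried $j$, since everything is a deterministic function of the single sample $\Sbb$.

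The step I expect to be the main obstacle is Lemma~\ref{Lem:asxl}, the two-sided bound on $\|\bv A\Sbb\bv x\|_2$. I would take it from the proof of Theorem~\ref{thm:main_bounded}, following the decomposition in \eqref{eq:den}. The middle part is small: $\bv x^T\Sbb^T\bv A_m^2\Sbb\bv x\le\epsilon^2n^2$ by Property~2. The outlying part satisfies $\bv x^T\Sbb^T\bv A_o^2\Sbb\bv x\approx\bv x^T(\Sbb^T\bv A_o\Sbb)^2\bv x\approx\tilde\lambda_i^2$ by Property~1 at distortion $\epsilon$. Because $|\tilde\lambda_i|\ge\epsilon n/2$, the additive error $O(\epsilon^2n^2)$ is at most a constant multiple of $\tilde\lambda_i^2$, which yields constant-factor upper and lower bounds.
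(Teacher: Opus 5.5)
Your proposal is correct and follows the paper's proof. You read $\Sbb^T\bv A\Sbb$, eigendecompose it in $O(s^\omega)$ time, and output $\bv A_{j,:}\Sbb\bv x_i/|\tilde\lambda_i|$ using the $O(s)$ sampled entries of row $j$. You then invoke Lemma~\ref{Lem:asxl} to see that this vector is the unit output of Algorithm~\ref{alg:eigvec} rescaled by a factor in $[1/C, C]$, so the residual bound transfers after shrinking $\epsilon$ by a constant. Your extra bookkeeping is accurate and only makes explicit what the paper leaves implicit: the bound $\E|S|^2 = O(s^2)$, and fixing a single sample $\Sbb$ so the guarantee holds simultaneously over all queried $j$.
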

We also have a similar result for squared column-norm sampling
(see Corollary~\ref{cor:sqnorm}).

\subsection{Sublinear Time Quantum-Inspired Eigenvector Approximation}
Our algorithm also gives the first sublinear time classical algorithm
for eigenvector approximation in the \emph{sampling and query}
(SQ) access model introduced by Tang~\cite{tang2019quantum} and
formalized by Chia et al.~\cite{chia2020sampling}.
In this model, one is given SQ access to a matrix $\bv A$:
in $O(1)$ time, we can
sample an index $j \in [n]$ with probability
$\frac{\|\bv A_{j,:}\|_2^2}{\|\bv A\|_F^2}$, samply any
entry $\bv A_{ij}$ with probability $\frac{\bv A_{ij}^2}{\|\bv A\|_F^2}$,
query any entry $\bv A_{ij}$, query
any $\|\bv A_{j,:}\|_2$ or $\|\bv A\|_F$.

By adapting the squared column-norm sampling based
Algorithm~\ref{alg:eigvec-rowsamp},
we give the first sublinear time eigenvector approximation algorithm
in this model
which,
using $ \textrm{poly}(\log n, 1/\epsilon) $ queries to $\mathrm{SQ}(\bv A)$,
can output an entry $\bv v_i$ ($i \in [n]$) of an approximate eigenvector $\bv v$,
and sample an index $i \in [n]$
with probability proportional to $\bv v_i^2$.

We state the formal guarantee below.
 A detailed description of the quantum-inspired sampling and query access model is given in
 Section~\ref{sec:qram}.
\begin{theorem}\label{thm:sq-eigvec}
Let $\bv A \in \mathbb{R}^{n \times n}$ be a symmetric matrix
with eigenvalues $\lambda_1(\bv A) \geq \ldots \geq \lambda_n(\bv A)$
to which we have sampling and query access $\mathrm{SQ}(\bv A)$, and
let $\epsilon, \delta \in (0,1)$.
There is an algorithm that makes
$O\left(\frac{\log^8 n}{\epsilon^8 \delta^2} \log^6 \frac{1}{\epsilon \delta} \right)$
queries to $\mathrm{SQ}(\bv A)$ and outputs eigenvalue estimates
$\tilde\lambda_1, \ldots, \tilde\lambda_m \in R$ and provides
sampling and query access to vectors
$\bv v_1, \ldots, \bv v_m \in \R^n$ with $\|\bv v_i \|_2 \in [\frac{1}{C}, C]$ for $i \in [m]$ for an absolute constant $C>1$ such that, with probability at least $1-\delta$, the pairs $\{(\tilde\lambda_i, \bv v_i)\}_{i=1}^{m}$
satisfy the guarantees of Theorem~\ref{thm:sqnorm1}.
Specifically for every $\bv v_i$, $i \in [m]$, we have the following:
\begin{enumerate}
 \item \textup{(Query access)} Given $j \in [n]$, output $ \bv v_{ij}$ using $O\left(\frac{\log^4 n}{\epsilon^4 \delta}\log^3\frac{1}{\epsilon\delta}\right)$
 additional queries to $\mathrm{SQ}(\bv A)$.
 \item \textup{(Sampling access)} Sample an index $j \in [n]$ from the distribution
 $\mathcal{D}_{v_i}(j) = \bv v_{ij}^2$, with $O\left(\frac{\log^4 n}{\epsilon^6 \delta}\log^3\frac{1}{\epsilon\delta}\right)$
 queries to $\mathrm{SQ}(\bv A)$.
\end{enumerate}

\end{theorem}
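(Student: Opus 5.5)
We plan to build directly on Algorithm~\ref{alg:eigvec-rowsamp} and its guarantee, Theorem~\ref{thm:sqnorm1}, together with the entrywise-computation idea behind Corollary~\ref{cor:local} (its squared-norm analogue, Corollary~\ref{cor:sqnorm}). The preprocessing uses $\mathrm{SQ}(\bv A)$ in three ways: sample $s=O\!\left(\frac{\log^4 n}{\epsilon^4\delta}\log^3\frac{1}{\epsilon\delta}\right)$ indices by squared row norms (equivalently column norms, since $\bv A$ is symmetric); query the norms $\|\bv A_{j,:}\|_2$ and $\|\bv A\|_F$ to form the rescaling in $\Sbb$; and query the $s^2$ entries of the principal submatrix, applying the zeroing-out steps of Algorithm~\ref{alg:eigvec-rowsamp}. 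This costs $O(s^2)=O\!\left(\frac{\log^8 n}{\epsilon^8\delta^2}\log^6\frac{1}{\epsilon\delta}\right)$ queries, which is exactly the stated budget. We then eigendecompose the submatrix to get eigenpairs $(\tilde\lambda_i,\bv x_i)$ with $|\tilde\lambda_i|\ge \epsilon\|\bv A\|_F/2$, and define $\bv v_i = \bv A\Sbb\bv x_i/|\tilde\lambda_i|$. By the squared-norm analogue of Lemma~\ref{Lem:asxl}, $\|\bv A\Sbb\bv x_i\|_2\in[|\tilde\lambda_i|/C, C|\tilde\lambda_i|]$. Hence $\bv v_i$ is a constant rescaling of the normalized vector from Theorem~\ref{thm:sqnorm1}, so it inherits the residual bound up to a constant factor in $\epsilon$, and satisfies $\|\bv v_i\|_2\in[1/C,C]$.

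Query access is immediate. For $j\in[n]$ we have $\bv v_{ij}=\frac{1}{|\tilde\lambda_i|}\sum_{k\in S}\bv A_{jk}\,\Sbb_{kk'}\,\bv x_{i,k'}$, which takes $s$ entry queries. This matches the claimed $O\!\left(\frac{\log^4 n}{\epsilon^4\delta}\log^3\frac{1}{\epsilon\delta}\right)$.

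Sampling access goes through the standard rejection-sampling lemma for linear combinations of columns from the quantum-inspired framework \cite{tang2019quantum,chia2020sampling}. Write $\bv v_i=\sum_{k} c_k \bv A_{:,k}$ with $c_k$ the rescaled coefficients. Proposal step: pick $k$ with probability $\propto c_k^2\|\bv A_{:,k}\|_2^2$ (computable from the stored norms), then sample $j$ from column $k$ with probability $\bv A_{jk}^2/\|\bv A_{:,k}\|_2^2$ using the SQ oracle. Accept with probability $\frac{(\sum_k c_k\bv A_{jk})^2}{s\sum_k c_k^2\bv A_{jk}^2}\le 1$ by Cauchy--Schwarz, which requires $s$ queries to evaluate. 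The expected number of trials is $\frac{s\sum_k c_k^2\|\bv A_{:,k}\|_2^2}{\|\bv v_i\|_2^2}$, so the total cost is $s$ times that quantity.

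The main obstacle is bounding this ratio, i.e.\ showing $\sum_k c_k^2\|\bv A_{:,k}\|_2^2 = O(\|\bv v_i\|_2^2/\epsilon^2)$, so that sampling costs $O(s/\epsilon^2)$ as claimed. Under squared-norm sampling the rescaled columns $\bv A\Sbb$ all have norm $\|\bv A\|_F/\sqrt{s}$, so $\sum_k c_k^2\|\bv A_{:,k}\|^2=\|\bv A\|_F^2\|\bv x_i\|_2^2/(s\tilde\lambda_i^2)\cdot s=\|\bv A\|_F^2/\tilde\lambda_i^2\le 4/\epsilon^2$ for unit $\bv x_i$. Combined with $\|\bv v_i\|_2\ge 1/C$, the expected number of trials is $O(s/\epsilon^2)$ only after accounting for the factor $s$ in the acceptance normalization. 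I would first check whether the factor $s$ cancels against the column rescaling, and, if not, use the sharper rejection rule with normalization $\|\bv c\|^2$-weighted sums as in \cite{chia2020sampling}. Entries zeroed in Algorithm~\ref{alg:eigvec-rowsamp} affect only the submatrix, not the lifting $\bv A\Sbb$, so the lifting step needs no modification. A union bound over the success events, all inherited from Theorem~\ref{thm:sqnorm1}, gives overall probability $1-\delta$.
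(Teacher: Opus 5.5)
Your plan has the right overall shape. It spends $O(s^2)$ queries on preprocessing, rescales by $|\tilde\lambda_i|$, answers entry queries with $O(s)$ queries, and rejection-samples for $\mathcal D_{v_i}$. But two steps fail as written.

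\textbf{The zeroing applies to the lifting too.} Your claim that zeroing affects only the submatrix misreads Algorithm~\ref{alg:eigvec-rowsamp}. Step~\ref{step:zero} zeroes entries of the lifted matrix $\bv A'_{:,S}$ as well, since the test involves $\|\bv A_{j,:}\|_2^2\|\bv A_{i_q,:}\|_2^2$ for every row $j$. The returned vectors are $\bv A'_{:,S}\bv x/\|\bv A'_{:,S}\bv x\|_2$.
\begin{itemize}
\item Theorem~\ref{thm:sqnorm1} and Lemma~\ref{lem:asxl_norm_unconditional} are statements about $\bv A'_{[:,S]}\bv x$. So your $\bv v_i=\bv A\Sbb\bv x_i/|\tilde\lambda_i|$ inherits neither the residual bound nor $\|\bv v_i\|_2\in[1/C,C]$.
\item The removed entries are exactly the ones the $1/\sqrt{sp_{i_q}}$ rescaling amplifies; compare the bound $a_j\le b$ in Lemma~\ref{eq:sanorm}, which holds only after zeroing. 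Nothing in the paper controls $\|(\bv A-\bv A')\Sbb\|_2$.
\item The fix is to set $\bv v_i=\bv A'_{[:,S]}\bv x_i/|\tilde\lambda_i|$. Each entry of $\bv A'_{[:,S]}$ costs one entry query plus a row-norm query, so query access stays $O(s)$.
\item Your sampler survives this fix. Put $c_q=\bv x_{i,q}/(\sqrt{sp_{i_q}}\,|\tilde\lambda_i|)$. Each $(\bv A'_{[:,S]})_{jq}$ is either $\bv A_{ji_q}/\sqrt{sp_{i_q}}$ or $0$, so Cauchy--Schwarz still gives $\bv v_{ij}^2\le K\sum_q c_q^2\bv A_{ji_q}^2$.
\item Normalize with $K=|S|$ rather than $s$. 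Then acceptance probabilities are at most $1$, and accepted indices are distributed exactly as $\bv v_{ij}^2$.
\end{itemize}

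\textbf{The sampling cost has a spurious factor $s$.} Your value of $Z=\sum_q c_q^2\|\bv A_{:,i_q}\|_2^2$ is off by $s$. Since $sp_{i_q}=s\|\bv A_{i_q,:}\|_2^2/\|\bv A\|_F^2$ and $\|\bv x_i\|_2=1$, the correct value is $Z=\|\bv A\|_F^2/(s\tilde\lambda_i^2)$.
\begin{itemize}
\item The acceptance probability per trial is $\|\bv v_i\|_2^2/(KZ)$.
\item So the expected number of trials is $\frac{K}{s}\cdot\frac{\|\bv A\|_F^2}{\tilde\lambda_i^2\|\bv v_i\|_2^2}=O(1/\epsilon^2)$. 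This uses $K\le 2s$, $|\tilde\lambda_i|=\Omega(\epsilon\|\bv A\|_F)$ and $\|\bv v_i\|_2\ge 1/C$.
\item Each trial costs $O(s)$ queries, for $O(s/\epsilon^2)$ in total, as claimed.
\end{itemize}
As written, you leave open whether the cost is $O(s/\epsilon^2)$ or $O(s^2/\epsilon^2)$. So the step you yourself identify as the crux is not established.

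\textbf{Comparison with the paper.} Once repaired, your route differs from the paper's.
\begin{itemize}
\item The paper proposes from the global row-norm distribution $p_j=\|\bv A_{j,:}\|_2^2/\|\bv A\|_F^2$ and bounds $q_j/p_j\le C/\epsilon^2$ for every $j$.
\item That needs Lemma~\ref{eq:sanorm}: $\|\bv A'_{[j,S]}\|_2^2\le C\|\bv A_{j,:}\|_2^2$ for all rows, via a Poisson--Bernstein bound and a union bound over all $n$ rows. The paper justifies that union bound by assuming $\epsilon^2 s/\log^4 n$ dominates $\log(n/\delta)$.
\item Your mixture-of-columns proposal gets the domination deterministically from Cauchy--Schwarz. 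It therefore needs no row-wise concentration event and no union bound over $n$.
\item The price is that you use within-column sampling from $\mathrm{SQ}(\bv A)$, whereas the paper uses only row-norm sampling and entry queries.
\end{itemize}
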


\subsection{Improved Algorithms For PSD Matrices}
 For PSD matrices with bounded entries, in Section~\ref{sec:psd},
we give an algorithm that samples only $\tilde{O}\left(\frac{1}{\epsilon} \right)$
columns in expectation. The sample complexity is independent of $\log n$ and
a factor of $\epsilon^{-3}$ improvement
over the results of Theorem~\ref{thm:main_bounded} for
the general bounded-entry case
, while achieving the same $\epsilon n$
residual error.
The key insight is a connection to the \emph{Nystr\"{o}m method}~\cite{gittens2013revisiting}:
we form the rank-$s$ approximation $\hat{\bv A} = (\bv A \bv S)(\bv S^T \bv A \bv S)^{\dag}(\bv A \bv S)^T$
and observe that solving the generalized eigenvalue problem $\bv S^T \bv A^2 \bv S\, \bv x = \lambda\, \bv S^T \bv A \bv S\, \bv x$
directly yields an eigenvector $\bv v = \frac{\bv A \bv S \bv x}{\|\bv A \bv S \bv x\|_2}$ of $\hat{\bv A}$.
The residual error $\|\bv A \bv v - \lambda \bv v\|_2$ is then controlled
by the spectral approximation gap $\|\bv A - \hat{\bv A}\|_2$.
To bound this spectral error, we use Ridge Leverage Score (RLS)
sampling~\cite{musco2017recursive}.
In fact the connection to
RLS sampling was first observed in~\cite{swartworth2025tight}.
But they claim that the resulting algorithm will be non-adaptive
algorithm since
the algorithm to compute the RLS scores needs to repeatedly
sample from the matrix.
Thus, they do a more complicated analysis and could
only provide error guarantees for the top
eigenvector.
However, we observe that for a PSD matrix, the $i$\textsuperscript{th}
ridge leverage score satisfies
$l_i^{\beta}(\bv A) \leq \bv A_{ii}/\beta$. So setting
$\beta = \epsilon n$ for a bounded entry matrix shows
that uniform sampling with probability
$\tilde{O}(1/\epsilon n)$ per column i.e., $\tilde{O}(1/\epsilon)$
columns in expectation suffices. The final algorithm is given
in Algorithm~\ref{alg:psd_eigvec}.
Note that, unlike our previous algorithms, this requires forming the matrix
$\Sbb^T \bv A^2 \Sbb$ which takes $\tilde O(n/\epsilon^2)$ time.
So, we can't output an entry of the approximate eigenvector
in $\textrm{poly}(\log n, 1/\epsilon)$ time
unlike our previous algorithms. But we obtain an improved $\ell_\infty$
residual bound: the same algorithm guarantees
$\|\bv A \bv v - \lambda \bv v\|_\infty \leq \sqrt{\epsilon n}$ (Corollary~\ref{cor:psd_eiginf}).
 Our sample complexity is
tight up to logarithmic factors for any (adaptive) column sampling algorithms. To see this, observe that to distinguish a matrix
with a block of ones of size $(1/\epsilon) \times (1/\epsilon)$ and
zeros everywhere else from the all zeros matrix, we need
to read to least $O(1/\epsilon)$ columns (Theorem~\ref{thm:lower_psd}).

If we sample column $i$ with probability proportional
to $\bv A_{ii}$, as in the Randomly-Pivoted Cholesky algorithm~\cite{chen2025randomly},
we can remove the bounded-entry assumption entirely and
achieve error $\epsilon\,\mathrm{tr}(\bv A)$, which can be much smaller
than $\epsilon n$ in practice (Theorem~\ref{thm:diag_samp}).

\subsection{Improved Algorithms Using Rank-Truncated Nystr\"{o}m}

In Section~\ref{sec:gennys},
we present a rank-truncated Nystr\"{o}m-based algorithms
that achieve the same error guarantees
for approximating
outlying eigenvectors as Theorems~\ref{thm:main-eigval} and~\ref{thm:sqnorm1}
(for uniform sampling for bounded-entry matrices and squared column norm sampling respectively)
for general
symmetric (possibly indefinite) matrices
by sampling $\tilde{O}(\log n/\epsilon^2)$ columns (see Theorem~\ref{thm:gen-nystrom-eigvec}).
For the standard Nystr\"{o}m approximation
$\hat{\bv A}=\bv A \Sbb(\Sbb^T \bv A \Sbb)^\dagger \Sbb^T \bv A$ for a PSD $\bv A$, by a standard monotonicity argument~\cite{gittens2011spectral,gittens2013revisiting,musco2017recursive}, we have $\hat{\bv A} \preceq \bv A$ and so
the error $\bv A - \hat{\bv A}$ is itself
PSD. The eigenvalues of $\hat{\bv A}$ never exceed those of $\bv A$. For an indefinite matrix, however, $\Sbb^T \bv A \Sbb$ has both positive and negative eigenvalues, and
$\bv A - \hat{\bv A}$ is no longer PSD. A small magnitude eigenvalue of $\Sbb^T \bv A \Sbb$
becomes an eigenvalue of large magnitude in the pseudoinverse $(\Sbb^T \bv A \Sbb)^\dagger$, and can increase the spectral error
$\|\bv A - \hat{\bv A}\|_2$.
Then, the error is governed by the
eigenvalues of $\Sbb^T \bv A \Sbb$ that are close to zero~\cite{nakatsukasa2023randomized}.

To address this, we use a \emph{truncated} pseudoinverse
$[\Sbb^T \bv A \Sbb]_\tau^\dagger$,
restricting to the subspace of eigenvalues with magnitude at least
$\tau \geq \frac{\epsilon n}{2}$,
and form the rank-truncated Nystr\"{o}m approximation
$\hat{\bv A} = \bv C [\Sbb^T \bv A \Sbb]_\tau^\dagger \bv C^T$
where $\bv C = \bv A \bar{\bv S}$.
We show that $\| \hat{\bv A} - \bv A \|_2 \leq \epsilon n$ for bounded-entry matrices
using uniform sampling.
Hence, it is enough to approximate the eigenvectors of $\hat{\bv A}$.
Approximate eigenvectors of $\hat{\bv A}$ are obtained
by solving the generalized eigenvalue problem
$\Sbb^T \bv A^2 \Sbb\, \bv x = \lambda\, [\Sbb^T \bv A \Sbb]_\tau\, \bv x$
and setting $\bv v = \bv A \Sbb \bv x / \|\bv A \Sbb \bv x\|_2$,
as in Algorithm~\ref{alg:genNystr\"{o}m}.
To the best of our knowledge, spectral norm error bounds for
$\|\bv A - \hat{\bv A}\|_2 $
for
rank-truncated Nystr\"{o}m-type approximations
of \emph{indefinite} symmetric matrices were not previously known and may be of independent interest. The two properties of the sampling matrix $\Sbb$ we used in the proof of Theorems~\ref{thm:main_bounded} and~\ref{thm:eigtop}, along with the truncated eigenvalues of the pseudoinverse, are enough to bound the spectral norm of the error.
We also obtain similar error bounds (of $\epsilon \| \bv A\|_F$) for
squared column-norm sampling (see Algorithm~\ref{alg:nys_squared} and Theorem~\ref{thm:gen-nystrom-eigvec-norm}).

\paragraph{Roadmap.} The rest of the paper is organized as follows.
Section~\ref{sec:prelim} contains the necessary background, including
subspace embedding properties, bounds on middle eigenvalues,
and eigenvalue approximation guarantees for random principal submatrices.
Section~\ref{sec:main} contains the our main results for the random submatrix based algorithms: we first establish an approximate submatrix product lemma in section~\ref{sec:appsub},
then prove the key lemmas about bounding the normalization factor $\|\bv A \Sbb \bv x \|_2$ in Section~\ref{sec:asx2}, and finally prove the key eigenvector approximation results for uniform sampling in section~\ref{app:uniform samp} and for squared column-norm sampling in section~\ref{app:sqsamp}.
Section~\ref{app:lower} proves the lower bound showing $\Omega(1/\epsilon^2)$ sampled
columns are necessary even for approximating just the top eigenvector. Section~\ref{sec:qram} extends our results to the quantum-inspired query model. Finally, Section~\ref{sec:nys} presents our Nyström-based algorithms. Section~\ref{sec:psd} the improved guarantees for PSD matrices via ridge leverage score sampling and section~\ref{sec:gennys} contains the results for general symmetric matrices using the rank-truncated Nyström method.

\section{Preliminaries}\label{sec:prelim}

We first recall the key definitions and results
from~\cite{swartworth2025tight,Bhattacharjee:2021wl}
that we will use in our analysis.
Following~\cite{swartworth2025tight,Bhattacharjee:2021wl}, we decompose
$\bv A$ into matrices with outlying and middle eigenvalues
$\bv A_o$ and $\bv A_m$ and for an approximate eigenvector $\bv v$, we analyze
$\bv A_o \bv v$ and $\bv A_m \bv v$ separately.
Let us first recall the
formal definition of the decomposition.
\begin{definition}\label{def:ao-am}
Let $\bv{A}$ be a symmetric matrix with eigendecomposition $\bv{A} = \bv{U}\bv{\Lambda}\bv{U}^T$.
Let $\bv A_o=\bv U_o \bv \Lambda_o \bv U_o^T$
where $\bv \Lambda_o$ is a diagonal matrix
with eigenvalues of $\bv A$ with magnitude
$\geq L$ on its diagonal,
and $\bv U_o$ is matrix with the corresponding eigenvectors
of $\bv A$ as its columns.
Similarly, let $\bv A_m=\bv A-\bv A_o=\bv U_m \bv \Lambda_m \bv U_m^T$
where $\bv \Lambda_m $
contains all eigenvalues with magnitude $< L$
and $\bv U_m$ is the matrix with the corresponding eigenvectors.
Here, $L$ is a parameter that will be
set depending on the error we are trying to achieve e.g.
$\epsilon \sqrt{\delta}n$ or $\epsilon \sqrt{\delta} \| \bv A\|_F$.
\end{definition}

\subsection{Subspace Embedding Property}

We now formally define the key subspace embedding definitions and results
from~\cite{swartworth2025tight}. We first recall the definition of a subspace embedding.
\begin{definition}\label{def:subspace-embedding}
$\Sbb \in \R^{n \times k}$ is a $1 \pm \epsilon$ distortion subspace
embedding
for $\bv{X} \in \R^{n \times m}$ if for all $\bv{v} \in \R^m$,
\[
 (1 - \epsilon)\|\bv{X}\bv{v}\|_2^2 \leq \|\Sbb^T\bv{X}\bv{v}\|_2^2 \leq (1 + \epsilon)\|\bv{X}\bv{v}\|_2^2.
\]
\end{definition}
Following~\cite{swartworth2025tight}, the key subspace embedding property
that we will use is as follows:
\begin{property}[Assumption 4.1 of~\cite{swartworth2025tight}]\label{ass:subspace-embedding}
For all $\lambda \geq R$, $\bv{\bar{S}}$ is a $\min(R/\lambda, 1/10)$
distortion subspace embedding on the span of eigenvectors of
$\bv{A}$ with associated eigenvalue at least $\lambda$ in magnitude.
\end{property}
We now state two results from~\cite{swartworth2025tight} that show when Property~\ref{ass:subspace-embedding}
is realized under uniform sampling for bounded entry matrices and under row-norm sampling for general matrices.
\begin{implemma}[Lemma 4.2 of~\cite{swartworth2025tight}]\label{lem:subspace-embedding-bounded}
Let $\bv{A} \in \R^{n \times n}$ be a symmetric matrix such that $\|\bv A \|_{\infty} \leq 1$.
Let $\Sbb $ be the scaled sampling matrix as
in Algorithm \ref{alg:eigvec}. Then for
$s \geq \frac{c n^2}{R^2 } \log\frac{1}{\epsilon \delta}$ for some constant $c$,
with probability at least $1-\delta$, $\Sbb$
satisfies Property~\ref{ass:subspace-embedding}.
\end{implemma}
We will set the parameter $R$ according to our desired guarantee. For example, by setting $R=\epsilon \sqrt{\delta} n$, we get that $\Sbb$ is a $\min(\epsilon \sqrt{\delta}n/\lambda,0.1)$ subspace embedding for
all eigenvectors with eigenvalues with magnitude $\geq\lambda \geq \epsilon \sqrt{\delta} n$ when $s \geq \Tilde{O}(1/\epsilon^2 \delta)$
with probability at least $1-\delta$. Next, we state when Property~\ref{ass:subspace-embedding} is satisfied for
column-norm sampling.

\begin{implemma}[Lemma 4.3 of~\cite{swartworth2025tight}]\label{lem:subspace-embedding-norm}
Let $\bv{A} \in \R^{n \times n}$ be a symmetric matrix.
Let $p_i=\min \left(1, \frac{s\|\bv A_i \|^2_2}{\|\bv A\|_F^2}\right)$
for $i \in [n]$.
Let $\Sbb$ be the scaled sampling matrix
that samples each column $i \in [n]$ with probability $p_i$
and scales it by $\frac{1}{\sqrt{p_i}}$.
Then for
$s \geq \frac{c \|\bv A \|_F^2}{R^2}\log\frac{1}{\epsilon\delta}$ for some constant $c$,
with probability at least $1-\delta$, $\Sbb$
satisfies Property~\ref{ass:subspace-embedding}.

\end{implemma}
We will set $R = \epsilon^2 \sqrt{\delta} \| \bv A\|_F$ or
$R = \epsilon \sqrt{\delta} \|\bv A \|_F$ depending on the typeof error bounds we want
for this lemma.
Finally, we restate a useful subspace approximation result for
matrix products.
\begin{implemma}[Lemma 3.5 of~\cite{swartworth2025tight}]\label{lem:approx-matrix-product}
Let $\bv{A}$ and $\bv{B}$ be real matrices and
let $[\bv{A} \mid \bv{B}]$ denote
their concatenation. Suppose $\bv{\bar{S}}$ is an $\epsilon$-distortion
subspace embedding for $[\bv{A} \mid \bv{B}]$. Then
\[
 \|\bv{A}^T \bv{\bar{S}} \bv{\bar{S}}^T \bv{B} - \bv{A}^T \bv{B}\|_2 \leq \epsilon \|\bv{A}\|_2 \|\bv{B}\|_2.
\]
\end{implemma}

\subsection{Middle Eigenvalue Bounds}

We now state the results from~\cite{Bhattacharjee:2021wl}
and~\cite{swartworth2025tight} regarding
bounding $\|\Sbb^T \bv A_m \Sbb \|_2$.
\begin{implemma}[Lemma 4 of~\cite{Bhattacharjee:2021wl}]\label{lem:am}
Let $\bv A \in \mathbb{R}^{n\times n}$ be symmetric
such that $\|\bv A \|_{\infty} \leq 1$.
Let $\bv A_m$ be as in Definition \ref{def:ao-am}
with $L=\epsilon \sqrt{\delta}n$.
Let $\Sbb \in \R^{n \times k}$ be the scaled
sampling matrix as
in Algorithm \ref{alg:eigvec}.
If $s\geq \frac{c\log n}{\epsilon^2\delta}$ for some
sufficiently
large constant
$c$, then with probability at least $1-\delta$,
\begin{align*}
 \|\Sbb^T \bv A_m \Sbb\|_2 \leq \epsilon n
\end{align*}
\end{implemma}
Next, for squared column-norm sampling, following~\cite{swartworth2025tight}
we state the middle
eigenvalue bound (and all subsequent results)
assuming that $\frac{s\| \bv A_i\|_2^2}{\|\bv A \|^2_F} \leq 1$
i.e. $p_i =\frac{s\| \bv A_i\|_2^2}{\|\bv A \|^2_F}$ for all $i \in [n]$
for our target value of $s \geq \frac{c\log^4 n}{\epsilon^2\delta}$.
This assumption will be relaxed later
by the row duplication argument introduced in~\cite{swartworth2025tight},
i.e., we consider a matrix where
the rows and columns of $\bv A$ has been duplicated $M \geq s$ times
and scaled down by $\frac{1}{\sqrt{M}}$.
This \emph{inflated} matrix has the same nonzero eigenvalues as $\bv A$ and
the probability of sampling each row of this matrix is at most 1.
For details,
see Theorem~\ref{thm:sqnorm1}.

As discussed in the previous section, we will also need to zero out some entries of $\bv A$ when performing column-norm sampling for bounding the middle eigenvalues. The formal zeroing out procedure is given
in Definition~\ref{def:zeroing}. We state the middle eigenvalue bounds for the zeroed-out matrix , while assuming that we have $\frac{s\| \bv A_i\|_2^2}{\|\bv A \|^2_F} \leq 1$ for $i \in [n]$.
\begin{implemma}[Lemma 7.6 of~\cite{swartworth2025tight}]\label{lem:ao-sampling}
Let $\bv A'_m$ be as in Definition \ref{def:ao-am} with
$L=\epsilon \sqrt{\delta}\|\bv A \|_F$.
Let $\bv{\bar S} \in \R^{n \times |S|}$ be the
scaled sampling matrix
which samples row $i$ with probability $p_i$
and scales it by $\frac{1}{\sqrt{p_i}}$.
Assume that $\frac{s\| \bv A_i\|_2^2}{\|\bv A \|^2_F} \leq 1$ i.e.
$p_i =\frac{s\| \bv A_i\|_2^2}{\|\bv A \|^2_F}$ for all $i \in [n]$
for some $s\geq \frac{c\log^4 n}{\epsilon^2\delta}$
for some sufficiently large constant
$c$. Then
with probability
at least $1-\delta$,
\begin{align*}
 \|\Sbb^T \bv A'_m \Sbb\|_2 \leq \epsilon \|\bv A \|_F
\end{align*}
\end{implemma}

\subsection{Eigenvalue Approximation}
We now state the main results regarding eigenvalue approximation
from~\cite{swartworth2025tight}. We first state the uniform sampling result.
\begin{imptheo}[Theorem 6.4 of~\cite{swartworth2025tight}]\label{thm:main-eigval}
Let $\bv A \in \R^{n \times n}$ be a symmetric matrix and
let $\|\bv A \|_{\infty} \leq 1.$
Let $\bv{\bar S} $ be the scaled sampling matrix of
Algorithm~\ref{alg:eigvec}
for $s \geq \frac{c}{\epsilon^2\delta} \log^2 \frac{1}{\epsilon \delta}$
where $c$ is a sufficiently large constant.
Let the eigenvalues of $\bv{\bar S}^T \bv{A} \bv{\bar S}$
be $\lambda_1 \geq\lambda_2 \geq \ldots \geq \lambda_{|S|}$.
For all $i \in \{1, \ldots, |S|\}$ with $\lambda_i \ge 0$,
let $\tilde \lambda_i = \lambda_i$.
For all $i \in \{1, \ldots, |S|\}$ with $\lambda_i < 0$,
let $\tilde \lambda_{n-(|S|-i)} = \lambda_i$.
For all other $i \in \{1,\ldots,n\}$, let $\tilde \lambda_i=0$.
Then, with probability at least $1-\delta$, for all $i \in \{1,\ldots,n\}$,
\begin{align*}
 \lambda_i(\bv A) -\epsilon n \leq \tilde \lambda_i \leq \lambda_i(\bv A) +\epsilon n.
\end{align*}
\end{imptheo}
Now we state the main results regarding eigenvalue approximation
using squared column norm sampling. We will
again assume that $\frac{s\| \bv A_i\|_2^2}{\|\bv A \|^2_F} \leq 1$
for all $i \in [n]$.

\begin{imptheo}[Theorem 7.8 of~\cite{swartworth2025tight}]\label{thm:main-eigval-norm}
Let $\bv A \in \R^{n \times n}$ be a symmetric matrix.
Assume that $\frac{s\| \bv A_i\|_2^2}{\|\bv A \|^2_F} \leq 1$
for all $i \in [n]$
for some $s \geq \frac{c\log^4 n}{\epsilon^2\delta} \log^2 \frac{1}{\epsilon \delta}$
where $c$ is a sufficiently large constant.
Let $\bv{\bar S} \in \R^{n \times |S|}$ be the
scaled sampling matrix
which samples row $i$ with probability $p_i =\frac{s\| \bv A_i\|_2^2}{\|\bv A \|^2_F}$
and scales it by $\frac{1}{\sqrt{p_i}}$.
Let $\bv A'$ be the matrix after zeroing out entries of $\bv A$
as in Definition~\ref{def:ao-am}.
Let the eigenvalues of $\bv{\bar S}^T \bv{A}' \bv{\bar S}$
be $\lambda_1 \geq\lambda_2 \geq \ldots \geq \lambda_{|S|}$.
For all $i \in \{1, \ldots, |S|\}$ with $\lambda_i \ge 0$,
let $\tilde \lambda_i = \lambda_i$.
For all $i \in \{1, \ldots, |S|\}$ with $\lambda_i < 0$,
let $\tilde \lambda_{n-(|S|-i)} = \lambda_i$.
For all other $i \in \{1,\ldots,n\}$, let $\tilde \lambda_i=0$.
Then, with probability at least $1-\delta$, for all $i \in \{1,\ldots,n\}$,
\begin{align*}
 \lambda_i(\bv A) -\epsilon \|\bv A \|_F \leq \tilde \lambda_i \leq \lambda_i(\bv A) +\epsilon \|\bv A \|_F.
\end{align*}
\end{imptheo}

\section{Random Submatrix-Based Algorithms}\label{sec:main}

In this section, we
will state and prove the main results
regarding our random submatrix-based algorithms.
\subsection{Approximate Submatrix Product}\label{sec:appsub}
We start by stating and proving a key lemma
which shows that
the matrices $(\Sbb^T\Ab_o \Sbb)^2$
and $\Sbb^T\Ab_o^2\Sbb$ are close in spectral norm.
As explained previously,
our approximate eigenvector is
given by $\bv =\frac{\bv A \Sbb \bv x}{\| \bv A \Sbb \bv x\|_2}$
where $\bv x$ is an eigenvector of the scaled submatrix $\Sbb^T \bv A \Sbb$
corresponding to some eigenvalue $\lambda$.
To bound the residual error $\|\bv A \bv v - \lambda \bv v\|_2$,
we need to show that the denominator $\|\bv A \Sbb \bv x\|_2$ is
approximately $\lambda$.
Since
$\| \bv A \Sbb \bv x\|_2^2=\bv x^T\Sbb^T\bv A^2\Sbb \bv x$,
if $\Sbb^T\bv A^2\Sbb \approx (\Sbb^T\bv A \Sbb)^2$,
we have that $\| \bv A \Sbb \bv x\|_2^2 \approx \bv x^T(\Sbb^T\bv A \Sbb)^2 \bv x
= \lambda^2$.
Since $\Sbb^T \bv A_m \Sbb$ (and $\Sbb^T\bv A_m^2\Sbb$) can be bounded using Imported
Lemmas~\ref{lem:am} and Lemma~\ref{lem:ao-sampling},
so it is enough to show that $\Sbb^T \bv A_o^2 \Sbb \approx (\Sbb^T \bv A_o \Sbb)^2$.

\begin{lemma}\label{Lem:aosum}

Let $\bv A_o, L$ be as in Definition~\ref{def:ao-am}.
Let the matrix $\bv{\bar S}$ satisfy Property~\ref{ass:subspace-embedding}
for $R=L$.
Then, for $r=\lceil \log (\|\bv A \|_2/L)\rceil$, and some constant $C>0$:
\begin{align*}
 \|\Sbb^T\Ab_o \Sbb \Sbb^T \Ab_o\Sbb-\Sbb^T\Ab_o^2\Sbb \|_2 \leq CLr \|\bv A \|_2.
\end{align*}
\end{lemma}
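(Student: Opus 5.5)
We decompose $\Ab_o$ into dyadic level sets and apply Imported Lemma~\ref{lem:approx-matrix-product} to each pair of levels. For $i=1,\ldots,r$, let $\bv U_i$ be the eigenvectors of $\Ab$ whose eigenvalues have magnitude in $(\|\Ab\|_2 2^{-i}, \|\Ab\|_2 2^{-(i-1)}]$ (the last level is truncated at $L$). Write $\Ab_{o,i}=\bv U_i\bv\Lambda_i\bv U_i^T$, so that $\Ab_o=\sum_{i=1}^r \Ab_{o,i}$ with $r=\lceil\log(\|\Ab\|_2/L)\rceil$ levels. The $\bv U_i$ are mutually orthogonal, so $\Ab_o^2=\sum_i \Ab_{o,i}^2$ and $\Ab_{o,i}\Ab_{o,j}=0$ for $i\ne j$. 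Hence
\[
\Sbb^T\Ab_o\Sbb\Sbb^T\Ab_o\Sbb-\Sbb^T\Ab_o^2\Sbb=\sum_{i,j}\Sbb^T\bv U_i\bv\Lambda_i\left(\bv U_i^T\Sbb\Sbb^T\bv U_j-\bv U_i^T\bv U_j\right)\bv\Lambda_j\bv U_j^T\Sbb .
\]

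\textbf{Per-pair bound.} Set $\lambda_j:=\|\Ab\|_2 2^{-j}\ge L/2$ (up to the truncation of the last level). The columns of $[\bv U_i\mid\bv U_j]$ lie in the span of eigenvectors with eigenvalue magnitude at least $\min(\lambda_i,\lambda_j)$. By Property~\ref{ass:subspace-embedding} with $R=L$, $\Sbb$ is a $\min(L/\min(\lambda_i,\lambda_j),1/10)$-distortion embedding on that span, and hence on $[\bv U_i\mid\bv U_j]$. Imported Lemma~\ref{lem:approx-matrix-product} then gives
\[
\|\bv U_i^T\Sbb\Sbb^T\bv U_j-\bv U_i^T\bv U_j\|_2\le \frac{L}{\min(\lambda_i,\lambda_j)}.
\]
The same property gives $\|\Sbb^T\bv U_i\|_2\le\sqrt{1.1}$ and $\|\bv\Lambda_i\|_2\le 2\lambda_i$. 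So the $(i,j)$ term has norm at most
\[
O\!\left(\frac{L\,\lambda_i\lambda_j}{\min(\lambda_i,\lambda_j)}\right)=O\big(L\max(\lambda_i,\lambda_j)\big).
\]

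\textbf{Summing — the main obstacle.} Summing this bound naively over all $r^2$ pairs gives
\[
\sum_{i,j}L\max(\lambda_i,\lambda_j)\approx L\sum_i i\,\lambda_i=O(L\|\Ab\|_2),
\]
because the weights $i\,2^{-i}$ sum to a constant. That is already within the claimed $CLr\|\Ab\|_2$.

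A cleaner route avoids relying on the sum. Group the error as $\bv Y^T\bv E\bv Y$, where $\bv Y=[\bv\Lambda_1\bv U_1^T\Sbb;\ldots;\bv\Lambda_r\bv U_r^T\Sbb]$ and $\bv E$ is the block matrix with blocks $\bv E_{ij}=\bv U_i^T\Sbb\Sbb^T\bv U_j-\bv U_i^T\bv U_j$. For each fixed $i$, one can bound the whole row $\sum_j(\cdot)$ by applying Imported Lemma~\ref{lem:approx-matrix-product} to $[\bv U_i\bv\Lambda_i\mid\Ab_{o,\ge i}]$. Here $\Ab_{o,\ge i}$ collects the levels with larger eigenvalues; the levels below $i$ are handled symmetrically. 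All these columns lie in the span of eigenvalues of magnitude at least $\lambda_i$, so the distortion is $L/\lambda_i$. Together with $\|\Ab_{o,\ge i}\|_2\le\|\Ab\|_2$, this bounds the row by
\[
\frac{L}{\lambda_i}\cdot 2\lambda_i\cdot\|\Ab\|_2 = O(L\|\Ab\|_2).
\]
There are $r$ rows, and the total is $CLr\|\Ab\|_2$, exactly matching the statement. I would present this grouping as the proof, taking care with the $\Sbb$ factors on the outside: $\|\Sbb^T\bv U_{\ge i}\|_2=O(1)$ since $\Sbb$ is a $1/10$-distortion embedding on all of $\bv U_o$.
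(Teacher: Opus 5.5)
Your row-grouping argument is correct once the ``symmetric'' step is made precise. However, the aside before it contains a real counting error.

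\textbf{The naive sum.} It is not $O(L\|\bv A\|_2)$. Since $\max(\lambda_i,\lambda_j)=\lambda_{\min(i,j)}$, and exactly $2(r-k)+1$ ordered pairs have $\min(i,j)=k$, you get
\[
\sum_{i,j} L\max(\lambda_i,\lambda_j) = L\|\bv A\|_2\sum_{k=1}^{r}\bigl(2(r-k)+1\bigr)2^{-k} = \Theta\bigl(rL\|\bv A\|_2\bigr).
\]
You appear to have counted pairs with $\max(i,j)=k$ instead. Intuitively, the top level alone pairs with all $r$ levels at cost about $L\lambda_1$ each. There was a warning sign: had the naive sum really been $O(L\|\bv A\|_2)$, it would beat your grouping bound by a factor $r$, so the grouping could not be the ``cleaner'' route.

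The corrected computation still proves the lemma, and it is exactly the paper's proof. The paper applies the triangle inequality over all pairs of level sets. It bounds each pair via Imported Lemma~\ref{lem:approx-matrix-product} at distortion $L/\min(\lambda_i,\lambda_j)$, giving $O(L\|\bv A\|_2 2^{-\min(i,j)})$ per pair. It then uses $\sum_{i\neq j}2^{-\min(i,j)}=O(r)$.

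\textbf{The grouping route.} The pairs $(i,j)$ with $j>i$, where level $j$ has smaller eigenvalues, cannot be bounded inside row $i$. The span of $[\bv U_i\bv\Lambda_i\mid \bv U_j\bv\Lambda_j]$ reaches eigenvalues of size $\lambda_j$, which may be close to $L$. So one application of the Imported Lemma to row $i$'s lower levels only has distortion $1/10$. That yields a bound of order $\lambda_i^2$, which for $i=1$ is $\|\bv A\|_2^2\gg L\|\bv A\|_2$.

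What works is transposition. Since $\bv E_{ji}=\bv E_{ij}^T$, the strictly upper block part of $\bv Y^T\bv E\bv Y$ is the transpose of its strictly lower block part. Each pair $(i,j)$ with $j>i$ is therefore charged to row $j$, whose block of larger-eigenvalue levels contains level $i$. The whole error is then bounded by a constant times the sum over $i$ of your row bounds for $[\bv U_i\bv\Lambda_i\mid \bv A_{o,\ge i}]$ (levels $1,\ldots,i$), which is $O(rL\|\bv A\|_2)$.

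For the last level, $\lambda_r\le L$. So use Property~\ref{ass:subspace-embedding} at $\lambda=L$ (distortion $1/10$) together with $\|\bv\Lambda_r\|_2\le 2L$.

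\textbf{Comparison with the paper.} Your version makes $r$ applications of the Imported Lemma on wider matrices instead of $r^2$ pairwise ones. It also replaces the geometric-series bookkeeping by the single estimate $\|\bv A_{o,\ge i}\|_2\le\|\bv A\|_2$. The paper's pairwise route is more elementary and needs no symmetry argument. Both routes lose exactly the same factor $r$.
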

\begin{proof}

We have $\Ab_o= \sum_{i=1}^r \Ab_{o,i}$
where $\Ab_{o,i}=\bv U_{o,i} \bv \Lambda_{o,i} \bv U_{o,i}^T$
contains eigenvalues
with magnitudes
in $(\|\bv A \|_2 \sqrt{\delta}2^{-i}, \|\bv A \|_2 \sqrt{\delta} 2^{-i+1}]$. Thus,
using triangle inequality, we have:
\begin{align}\label{eq:largeeig1}
 \|\Sbb^T\Ab_o \Sbb \Sbb^T \Ab_o \Sbb - \Sbb^T\Ab_o^2\Sbb\|_2
 \leq \sum_{i=1}^r\|\Sbb^T\Ab_{o,i} \Sbb \Sbb^T \Ab_{o,i} \Sbb - \Sbb^T\Ab_{o,i}^2\Sbb\|_2
 + \sum_{i \neq j} \|\Sbb^T\Ab_{o,i} \Sbb \Sbb^T \Ab_{o,j} \Sbb \|_2.
\end{align}
We now bound the terms individually.
Note that, by Property~\ref{ass:subspace-embedding}, $\Sbb$
is a $\min (2^i L/(\|\bv A \|_2 \sqrt{\delta}),1/10)$
subspace embedding for all eigenvectors with eigenvalues
at least $\|\bv A \|_2 \sqrt{\delta} 2^{-i} \geq L$ in magnitude.
So, for any $i \in [r]$ we get:
\begin{align*}
 \|\bv \Lambda_{o,i} \bv U_{o,i}^T\Sbb \Sbb^T \bv U_{o,i}\bv \Lambda_{o,i} - \bv \Lambda_{o,i} \bv U_{o,i}^T \bv U_{o,i} \bv \Lambda_{o,i}\|_2 \leq \frac{L}{\|\bv A \|_2 \sqrt{\delta} 2^{-i}} \|\bv U_{o,i} \bv \Lambda_{o,i} \|^2_2 \leq L \|\bv A \|_2\sqrt{\delta} 2^{-i+2} .
\end{align*}
The last step follows from the fact that $\|\bv \Lambda_{o,i} \|_2 \leq \|\bv A \|_2\sqrt{\delta} 2^{-i+1}$.

Also, $\|\Sbb^T \bv U_{o,i} \|_2 \leq (1+\frac{1}{10}) = \frac{11}{10}$ for any $i \in [r]$.
Thus, for any $i \in [r]$, we get:
\begin{align}\label{eq:1stterm1}
 \|\Sbb^T\Ab_{o,i} \Sbb \Sbb^T \Ab_{o,i} \Sbb - \Sbb^T\Ab_{o,i}^2\Sbb\|_2 &= \| \Sbb^T \bv U_{o,i}(\bv \Lambda_{o,i} \bv U_{o,i}^T\Sbb \Sbb^T \bv U_{o,i}\bv \Lambda_{o,i} - \bv \Lambda_{o,i} \bv U_{o,i}^T \bv U_{o,i} \bv \Lambda_{o,i})\bv U^T_{o,i}\Sbb \|_2 \nonumber\\
 &\leq \| \Sbb^T \bv U_{o,i}\|_2 \| \bv \Lambda_{o,i} \bv U_{o,i}^T\Sbb \Sbb^T \bv U_{o,i}\bv \Lambda_{o,i} - \bv \Lambda_{o,i} \bv U_{o,i}^T \bv U_{o,i} \bv \Lambda_{o,i} \|_2 \|\bv U^T_{o,i}\Sbb \|_2 \nonumber\\
 &\leq \frac{121}{100}L \|\bv A \|_2 \sqrt{\delta} 2^{-i+1}.
\end{align}
Next, from Imported Lemma~\ref{lem:approx-matrix-product},
for any $i, j \in [r]$ such that $i > j$
(so that $j = \min(i,j)$), we get:
\begin{align*}
 \|\bv \Lambda_{o,i} \bv U_{o,i}^T\Sbb \Sbb^T \bv U_{o,j}\bv \Lambda_{o,j} \|_2
 &\leq \frac{L}{\|\bv A \|_2 \sqrt{\delta} 2^{-i}}\|\bv \Lambda_{o,i} \|_2\|\bv \Lambda_{o,j} \|_2 \\
 &\leq \frac{L}{\|\bv A \|_2 \sqrt{\delta} 2^{-i}}\cdot \|\bv A \|_2 \sqrt{\delta} 2^{-i+1}\cdot \|\bv A \|_2 \sqrt{\delta} 2^{-j+1}
 = 2L\|\bv A \|_2 \sqrt{\delta} 2^{-\min(i,j)+1}.
\end{align*}

Then, using $\|\Sbb^T \bv U_{o,i} \|_2 \leq \frac{11}{10}$ and $\|\Sbb^T \bv U_{o,j} \|_2 \leq \frac{11}{10}$, we get:
\begin{align}\label{eq:2ndterm1}
 \|\Sbb^T\Ab_{o,i} \Sbb \Sbb^T \Ab_{o,j} \Sbb \|_2
 \leq \| \Sbb^T \bv U_{o,i}\|_2\|\bv \Lambda_{o,i} \bv U_{o,i}^T\Sbb \Sbb^T \bv U_{o,j}\bv \Lambda_{o,j} \|_2 \|\Sbb^T \bv U_{o,j} \|_2
 \leq \frac{121}{50}L\|\bv A \|_2 \sqrt{\delta}2^{-\min(i,j)+1}.
\end{align}
By symmetry, the same bound holds for $i < j$.
Thus, plugging the bounds from~\eqref{eq:1stterm1} and~\eqref{eq:2ndterm1}
into~\eqref{eq:largeeig1} and bounding $\sqrt{\delta}\leq 1$,
we get (for some constant $C$):

\begin{align*}
 \|\Sbb^T\Ab_o \Sbb \Sbb^T \Ab_o \Sbb - \Sbb^T\Ab_o^2\Sbb\|_2
 &\leq CL \|\bv A \|_2 \sum_{i=1}^{r}2^{-i+1}
 + CL \|\bv A \|_2 \sum_{i \neq j} 2^{-\min(i,j)+1} \\
 &\leq 2CL \|\bv A \|_2 + 4CLr \|\bv A \|_2
 \leq CLr \|\bv A \|_2,
\end{align*}
where in the last step all constants are absorbed in $C > 0$.

\end{proof}

\subsection{Bounding $\|\bv A \Sbb \bv x\|_2$ Using Uniform Sampling}\label{sec:asx2}

To prove our main guarantees for Algorithm~\ref{alg:eigvec},
we need to show that if $\bv x$ is an eigenvector of $\Sbb^T \bv A \Sbb$ with eigenvalue
$\lambda$,
then $\|\bv A \Sbb \bv x\|_2$ is approximately $|\lambda|$
when $\Sbb$ is a scaled sampling matrix as in Algorithm~\ref{alg:eigvec}.
Throughout this section, we will set $L=\epsilon \sqrt{\delta}n$
in the definition of $\bv A_o$ and $\bv A_m$ in Definition~\ref{def:ao-am}.
We start by
applying the Lemma~\ref{Lem:aosum} to the uniform sampling setting.
\begin{lemma}\label{Lem:spec_norm1}
Let $\bv{\bar S}$ be the scaled sampling matrix as in Algorithm~\ref{alg:eigvec}.
Let $\bv A_o, L$ be as in Definition~\ref{def:ao-am} with $L=\epsilon \sqrt{\delta}n$.
Then, for $s \geq \frac{c}{\epsilon^2 \delta}
\log^3 \frac{1}{\epsilon \delta}$, for some constant $c$, with probability
at least $1-\delta$, we have:
\begin{align*}
\|\Sbb^T\Ab_o \Sbb \Sbb^T \Ab_o\Sbb-\Sbb^T\Ab_o^2\Sbb \|_2
\leq \epsilon n \|\bv A \|_2
\end{align*}
\end{lemma}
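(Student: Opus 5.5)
The plan is to apply Lemma~\ref{Lem:aosum} with a rescaled threshold, so that the factor $r$ it produces is absorbed. The approach is to invoke Imported Lemma~\ref{lem:subspace-embedding-bounded} with $R = L' := \frac{\epsilon\sqrt{\delta}\,n}{C' \log(1/(\epsilon\delta))}$ for a suitable constant $C'$. Then $n^2/R^2 = C'^2 \log^2(1/(\epsilon\delta))/(\epsilon^2\delta)$, and the imported lemma requires
\[
s \ge \frac{c\,n^2}{R^2}\log\frac{1}{\epsilon\delta} = \frac{c\,C'^2}{\epsilon^2\delta}\log^3\frac{1}{\epsilon\delta},
\]
which is exactly the stated sample size. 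So with probability at least $1-\delta$, $\Sbb$ satisfies Property~\ref{ass:subspace-embedding} with $R = L'$.

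Before applying Lemma~\ref{Lem:aosum} we must reconcile thresholds, since $\bv A_o$ is defined with $L = \epsilon\sqrt\delta n > L'$. Property~\ref{ass:subspace-embedding} with $R = L'$ gives distortion $\min(L'/\lambda, 1/10)$ for every $\lambda \ge L'$. In particular, for all $\lambda \ge L$ the distortion is at most $L'/\lambda$, which is what the proof of Lemma~\ref{Lem:aosum} actually uses. Rerunning that argument on the level sets of $\bv A_o$ (eigenvalues $\ge L$) with $L'$ in place of $L$ in the distortion bounds gives
\[
\|\Sbb^T\Ab_o \Sbb \Sbb^T \Ab_o\Sbb-\Sbb^T\Ab_o^2\Sbb \|_2 \le C L' r \|\bv A\|_2, \qquad r = \lceil \log(\|\bv A\|_2/L)\rceil .
\]
Concretely, in the level-set sums the distortion for level $i$ is $L'/(\text{lower edge of level } i)$, and the factor $r$ comes only from the count of cross terms.

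Finally I bound $r$. Since $\|\bv A\|_\infty \le 1$, we have $\|\bv A\|_2 \le n$, so $r \le \lceil\log(1/(\epsilon\sqrt\delta))\rceil \le 2\log(1/(\epsilon\delta))$. Therefore
\[
C L' r \|\bv A\|_2 \le \frac{2C}{C'}\,\epsilon\sqrt\delta\, n\,\|\bv A\|_2 \le \epsilon n\|\bv A\|_2
\]
once $C' \ge 2C$, using $\sqrt\delta\le 1$.

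The main obstacle is the bookkeeping in the second step. Lemma~\ref{Lem:aosum} is stated with $R = L$ tied to the definition of $\bv A_o$, so I must check that its proof goes through with a smaller embedding parameter $R = L' < L$ while keeping $\bv A_o$ fixed at threshold $L$. This needs the level-set decomposition to stay anchored at $\|\bv A\|_2$ and each level's lower edge to be $\ge L \ge L'$, so the distortion bound applies. If this proves awkward, the fallback is to redefine the outlying part at threshold $L'$ and absorb the extra eigenvalues between $L'$ and $L$ as a perturbation. That route is messier, so I would try the direct reparametrization first.
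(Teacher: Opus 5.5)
Your proposal is correct and follows essentially the paper's route. The paper also combines Imported Lemma~\ref{lem:subspace-embedding-bounded} with Lemma~\ref{Lem:aosum}, bounds $r \le \lceil \log(1/\epsilon\sqrt{\delta})\rceil$ using $\|\bv A\|_2 \le n$, and absorbs the leftover logarithmic factor by rescaling $\epsilon$ by $\log(1/\epsilon\delta)$, which is where the $\log^3$ in the sample size comes from. You shrink only the embedding parameter $R$ and keep $\bv A_o$ fixed at threshold $L=\epsilon\sqrt{\delta}n$. That is a more careful version of the paper's ``adjust $\epsilon$'' step, which read literally would also move the threshold defining $\bv A_o$. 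Rerunning Lemma~\ref{Lem:aosum} with distortion $R/\lambda$ works as you describe; one small correction is that the bottom level's lower edge is only guaranteed to exceed $L/2$, not $L$, but that is still above your $R$, which is all the argument needs.
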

\begin{proof}
Setting $R=L=\epsilon \sqrt{\delta}n$ in Imported Lemma~\ref{lem:subspace-embedding-bounded},
we have that,
for $s \geq \frac{c }{\epsilon^2 \delta }\log\frac{1}{\epsilon \delta}$,
with probability at least $1-\delta$, $\Sbb$ satisfies Property~\ref{ass:subspace-embedding}.
Observe that for $\|\bv A \|_{\infty}\leq 1$, $\| \bv A\|_2\leq n$.
Thus, $r=\lceil \log(\|\bv A \|_2/\epsilon \sqrt{\delta}n) \rceil \leq \lceil \log(1/\epsilon \sqrt{\delta}) \rceil$.
Then, from Lemma~\ref{Lem:aosum},
we get $\|\Sbb^T\Ab_o \Sbb \Sbb^T \Ab_o\Sbb-\Sbb^T\Ab_o^2\Sbb \|_2 \leq C\epsilon n \log (1/\epsilon\sqrt{\delta}) \|\bv A \|_2$.
The final sample complexity follows after adjusting $\epsilon$
by $\log (1/\epsilon \delta)$ and constants.

\end{proof}

We next prove that $\bv x$ is also an approximate eigenvector
for $(\Sbb^T \Ab_o \Sbb^T)^2$ with eigenvalue $\lambda^2$.
As expalined previously, since $\| \bv A \Sbb \bv x\|_2^2=\bv x^T\Sbb^T\bv A^2\Sbb \bv x$,
we need this result to show that $\bv x^T\Sbb^T\bv A^2\Sbb \bv x$ is approximately $\lambda^2$.
\begin{lemma}\label{lem:lambda_bound}
Consider the setting of Lemma~\ref{Lem:spec_norm1}.
Let $\bv x$ be a unit norm eigenvector of $\Sbb^T \Ab \Sbb$ with eigenvalue $\lambda$
such that $|\lambda| \geq \epsilon n$.
Then, for $s \geq \frac{c \log n}{\epsilon^2 \delta}$
for $c$ a sufficiently large constant,
with probability at least $1-\delta$,
 $$|\bv x^T(\Sbb^T \Ab_o \Sbb)^2\bv x -\lambda^2| \leq \epsilon n |\lambda|$$
\end{lemma}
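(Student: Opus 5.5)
Since $\bv x$ is a unit eigenvector of $\Sbb^T \Ab \Sbb$ with eigenvalue $\lambda$, we have $\bv x^T (\Sbb^T\Ab\Sbb)^2 \bv x = \lambda^2$. The plan is to write $\Sbb^T \Ab \Sbb = \Sbb^T \Ab_o \Sbb + \Sbb^T\Ab_m\Sbb$ and bound the effect of the middle part. Set $\bv M = \Sbb^T\Ab_o\Sbb$ and $\bv E = \Sbb^T\Ab_m\Sbb$. Then $\bv M \bv x = \lambda \bv x - \bv E\bv x$, so
\begin{align*}
\bv x^T \bv M^2 \bv x = \|\bv M\bv x\|_2^2 = \|\lambda \bv x - \bv E \bv x\|_2^2 = \lambda^2 - 2\lambda\, \bv x^T\bv E\bv x + \|\bv E\bv x\|_2^2 .
\end{align*}
It therefore suffices to control $\|\bv E\|_2$, since then $|\bv x^T\bv M^2\bv x - \lambda^2| \le 2|\lambda|\,\|\bv E\|_2 + \|\bv E\|_2^2$.

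For $\|\bv E\|_2$ we invoke Imported Lemma~\ref{lem:am}. With $L = \epsilon\sqrt{\delta} n$, as fixed throughout this section, and $s \ge \frac{c\log n}{\epsilon^2\delta}$, it gives $\|\Sbb^T\Ab_m\Sbb\|_2 \le \epsilon n$ with probability at least $1-\delta$. Applying it with $\epsilon$ replaced by a small constant multiple $\epsilon/4$ costs only a constant factor in $s$ and yields $\|\bv E\|_2 \le \epsilon n/4$. Then $2|\lambda|\|\bv E\|_2 \le \epsilon n|\lambda|/2$. The hypothesis $|\lambda| \ge \epsilon n$ gives $\|\bv E\|_2^2 \le \epsilon^2 n^2/16 \le \epsilon n|\lambda|/16$. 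Together these give the claimed $\epsilon n|\lambda|$ bound.

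The only real subtlety is parameter bookkeeping. Lemma~\ref{lem:am} must be stated with the same threshold $L$ as the decomposition used here, and $\epsilon$ must be rescaled consistently. Concretely, the threshold in Lemma~\ref{lem:am} is $\epsilon\sqrt\delta n$ with the un-rescaled $\epsilon$, while its conclusion needs the rescaled $\epsilon/4$. This can be handled by noting that a smaller threshold only moves eigenvalues into $\Ab_o$, or more simply by absorbing constants into $c$ and into the definition of $L$. The failure probability is $\delta$, coming from the single application of Lemma~\ref{lem:am}; a union bound with the event of Lemma~\ref{Lem:spec_norm1}, adjusting $\delta$ by a constant, covers the joint setting. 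Notably, the subspace embedding property is not needed for this step, only the middle-eigenvalue bound.
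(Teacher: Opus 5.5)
Your proof is correct and follows essentially the same route as the paper. You split $\Sbb^T\Ab\Sbb$ into its outlying and middle parts, use the eigen-equation, and control the middle part with Imported Lemma~\ref{lem:am} together with $|\lambda|\ge\epsilon n$. The only cosmetic differences are that you expand $\|\lambda\bv x - \bv E\bv x\|_2^2$ directly, whereas the paper expands $(\bv M+\bv E)^2$ and bounds the cross term by Cauchy--Schwarz using $\|\bv M\bv x\|_2\le 2|\lambda|$, and that the paper handles your threshold bookkeeping concern with the same ``adjust $\epsilon$ by a constant'' step.
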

\begin{proof}
We have
$$\bv x^T(\Sbb^T \Ab \Sbb^T)^2\bv x=\bv x^T(\Sbb^T \Ab_o \Sbb^T)^2\bv x+ \bv x^T(\Sbb^T \Ab_m \Sbb^T)^2\bv x+ 2\bv x^T \Sbb^T \Ab_o \Sbb \Sbb^T \Ab_m \Sbb \bv x.$$
Since $\bv x$ is a unit norm eigenvector of $\Sbb^T \Ab \Sbb^T$ with eigenvalue $\lambda$, $\bv x^T(\Sbb^T \Ab \Sbb^T)^2\bv x=\lambda^2$. We can thus rearrange to obtain:
\begin{align}\label{eq:interm}
 \bv x^T(\Sbb^T \Ab_o \Sbb^T)^2\bv x -\lambda^2=\bv x^T(\Sbb^T \Ab_m \Sbb^T)^2\bv x+2 \bv x \Sbb^T \Ab_o \Sbb \Sbb^T \Ab_m \Sbb \bv x.
\end{align}
Note that, from Imported Lemma~\ref{lem:am}, with probability at least $1-\delta$,
$\|\Sbb^T \Ab_m \Sbb^T \|_2 \leq \epsilon n$ and thus $|\bv x^T(\Sbb^T \Ab_m \Sbb^T)^2\bv x | \leq \epsilon^2 n^2\leq \epsilon n |\lambda|$, by the assumption that $|\lambda| \ge \epsilon n$.

Additionally, $\|\Sbb^T \Ab_o \Sbb \bv x\|_2 =
\| \Sbb^T \Ab \Sbb \bv x-\Sbb^T \Ab_m \Sbb \bv x\|_2
\leq \|\lambda \bv x \|_2+\|\Sbb^T \Ab_m \Sbb \bv x\|_2 \leq |\lambda|+\epsilon n \leq 2|\lambda|$.
Thus, via Cauchy–Schwarz, $|\bv x^T \Sbb^T \Ab_o \Sbb \Sbb^T \Ab_m \Sbb \bv x | \leq \|\Sbb^T \Ab_o \Sbb \bv x\|_2 \cdot \|\Sbb^T \Ab_m \Sbb \bv x\|_2 \leq 2\epsilon n |\lambda|$.
Plugging these bounds into~\eqref{eq:interm}
and adjusting $\epsilon$ by a constant, we have:
\begin{align*}
 |\bv x^T(\Sbb^T \Ab_o \Sbb^T)^2\bv x -\lambda^2| \leq \epsilon n |\lambda|
\end{align*}
\end{proof}

In the next lemma, whose proof
uses similar techniques to that of Imported Lemma~\ref{lem:am},
we bound $\| \bv{A}_m \bv{\bar S} \|_2$.
Since $\|\bv A_m \Sbb \bv x \|_2 \leq \|\bv A_m \Sbb \|_2$,
we will use this to bound the contribution of $\bv A_m$ to $\|\bv A \Sbb \bv x \|_2$ .
We provide a short proof for completeness.

\begin{lemma}\label{lem:as}
Consider the setting of Lemma~\ref{Lem:spec_norm1}.
Then, with probability at least $1-\delta$, we have that:
\begin{align*}
 \| \bv{A}_m \bv{\bar S} \|_2 \leq \epsilon n.
\end{align*}
\end{lemma}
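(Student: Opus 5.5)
We bound $\|\bv A_m \Sbb\|_2$ through $\|\bv A_m\Sbb\|_2^2 = \|\Sbb^T \bv A_m^2 \Sbb\|_2$. This reduces the lemma to a spectral norm bound on a random scaled principal submatrix of the symmetric matrix $\bv B := \bv A_m^2$, which is exactly the type of bound supplied by Imported Lemma~\ref{lem:am}. The plan is to redo the argument behind Imported Lemma~\ref{lem:am}, with $\bv B$ in place of $\bv A_m$. We need $\|\Sbb^T \bv B \Sbb\|_2 \le \epsilon^2 n^2$ up to constants, since taking square roots then gives the claim after adjusting $\epsilon$.

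\textbf{Step 1: bound the entries and spectrum of $\bv B$.} Its spectral norm is $\|\bv B\|_2 = \|\bv A_m\|_2^2 < L^2 = \epsilon^2 \delta n^2$. For the entries, write $\bv B = \bv A^2 - \bv A_o^2$, which is valid because $\bv A_o \bv A_m = 0$.
\begin{itemize}
\item $|(\bv A^2)_{ij}| \le n$, since $\|\bv A\|_\infty \le 1$.
\item $\bv A_o^2 = \bv U_o \bv\Lambda_o^2 \bv U_o^T$, and incoherence of the outlying eigenvectors gives $\|\bv U_o^T e_i\|_2^2 \le n/L^2$. Alternatively, $\sum_k \lambda_k^2 (\bv U_o)_{ik}^2 \le \|\bv A e_i\|_2^2 \le n$ directly. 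Cauchy--Schwarz then gives $|(\bv A_o^2)_{ij}| \le n$.
\end{itemize}
Hence $\|\bv B\|_\infty \le 2n$. We can also control row norms: $\|\bv B e_i\|_2 \le \|\bv A_m\|_2 \|\bv A_m e_i\|_2 \le L\|\bv A e_i\|_2 \le L\sqrt n$.

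\textbf{Step 2: decompose the sampled matrix.} Write $\Sbb^T \bv B \Sbb = \frac{n}{s}\bv S^T \bv B \bv S$, where $\bv S$ is the unscaled selector, and split $\bv B = \bv D + \bv H$ into its diagonal and off-diagonal parts.
\begin{itemize}
\item \emph{Diagonal part.} The diagonal entries are nonnegative, with $\bv B_{ii} = \|\bv A_m e_i\|_2^2 \le \|\bv A e_i\|_2^2 \le n$. The diagonal therefore contributes at most $\frac{n}{s}\cdot n = n^2/s \le \epsilon^2 n^2$ once $s \ge 1/\epsilon^2$.
\item \emph{Off-diagonal part.} Here we invoke the Tropp-type bound for random principal submatrices with Bernoulli($s/n$) selectors \cite{tropp2008norms}, as used in the proof of Imported Lemma~\ref{lem:am}. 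Up to the $\sqrt{\log n}$ and constant factors, it bounds $\mathbb E\|\bv S^T \bv H \bv S\|_2$ by
\[
\frac{s}{n}\|\bv H\|_2 + \sqrt{\tfrac{s}{n}}\,\max_i \|\bv H e_i\|_2 + \|\bv H\|_\infty .
\]
Multiplying by $n/s$ and inserting the Step 1 bounds gives
\[
L^2 + \sqrt{\log n}\, \sqrt{n/s}\,L\sqrt n + \log n \cdot \frac{n}{s}\, n .
\]
With $L = \epsilon\sqrt\delta n$ and $s \ge c\log n/(\epsilon^2\delta)$, each term is $O(\epsilon^2 \delta n^2)$. (Note that $\|\bv H\|_2 \le \|\bv B\|_2 + \max_i \bv B_{ii}$, and the second piece is again handled by $n/s$.)
\end{itemize}

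\textbf{Step 3: from expectation to probability.} Markov's inequality turns the expectation bound into the statement with probability $1-\delta$, and the extra $1/\delta$ is absorbed by the $\delta$ already inside $L^2$. Taking square roots yields $\|\bv A_m\Sbb\|_2 \le \epsilon n$ after rescaling constants.

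The main obstacle is Step 2: verifying that the row-norm term in the principal submatrix bound is controlled at the right scale. Applying only the entrywise bound $\|\bv B\|_\infty \le 2n$ to the middle term would give a bound that is too weak. The fix is the refined row bound $\|\bv B e_i\|_2 \le L\sqrt n$, which uses $\|\bv A_m\|_2 < L$ together with $\|\bv A e_i\|_2 \le \sqrt n$.
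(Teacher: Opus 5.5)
Your proof is correct, but it takes a different route from the paper. The paper applies the one-sided column-sampling bound (Theorem 4.1 of \cite{tropp2008norms}) directly to the unscaled $\bv A_m \bv S$:
$\mathbb{E}_2\|\bv A_m\bv S\|_2 \le 3\sqrt{\log n}\,\mathbb{E}_2\|\bv A_m\bv S\|_{1\rightarrow 2} + \sqrt{s/n}\,\|\bv A_m\|_2$.
It then needs only two facts, $\|\bv A_m e_i\|_2 \le \|\bv A e_i\|_2 \le \sqrt n$ and $\|\bv A_m\|_2 < L$, before rescaling by $\sqrt{n/s}$ and applying Markov.

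You instead square, writing $\|\bv A_m\Sbb\|_2^2=\|\Sbb^T\bv A_m^2\Sbb\|_2$, and re-run the two-sided principal-submatrix machinery behind Imported Lemma~\ref{lem:am} on $\bv B=\bv A_m^2$. This costs a diagonal/off-diagonal split, an entrywise bound, and the row-norm bound $\|\bv B e_i\|_2\le L\sqrt n$. The two routes give the same estimate. Your three terms $L^2$, $\sqrt{\log n}\,Ln/\sqrt s$ and $n^2\log n/s$ are the expansion of the square of the paper's $L + n\sqrt{\log n/s}$. Indeed, your row-norm bound is exactly the product of the paper's two ingredients.

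The paper's route is shorter and uses a simpler one-sided inequality. Yours reuses exactly the tool already invoked for Imported Lemma~\ref{lem:am}. It also makes the probability step clean: you apply first-moment Markov to a quantity whose mean already carries the factor $\delta$.

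One small fix: your parenthetical about $\|\bv H\|_2$ is not right as stated. After multiplying by $n/s$, the term $\frac{s}{n}\|\bv H\|_2$ becomes $\|\bv H\|_2$ with no $n/s$ factor left. So the diagonal piece enters in full, and with your bound $\bv B_{ii}\le n$ you would need $n \lesssim \epsilon^2\delta n^2$. That does hold, since $s\le n$ forces $\epsilon^2\delta n \ge c\log n$. It is cleaner, though, to note that $\bv B_{ii}=\|\bv A_m e_i\|_2^2\le \|\bv A_m\|_2^2 < L^2$, which gives $\|\bv H\|_2 \le 2L^2$ directly.
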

\begin{proof}
To bound $\| \bv A_m \Sbb\|_2$, we first bound $\| \bv A_m \bv S\|_2$
where $\bv S$ is the unscaled sampling matrix.
Theorem 4.1 of~\cite{tropp2008norms} states the following:
\begin{align*}
 \mathbb{E}_2[\|\bv A_m \bv S \|_2] \leq 3\sqrt{\log n}\mathbb{E}_2[\|\bv A_m \bv S \|_{1 \rightarrow 2}]+\sqrt{\frac{s}{n}}\|\bv A_m \|_2.
\end{align*}
Here, $\| .\|_{1 \rightarrow 2}$ denotes the maximum Euclidean norm of the columns of the matrix
and $\mathbb{E}_2[X]=(\mathbb{E}[X^2])^{1/2}$.
Observe that we have:
\begin{align*}
 \|\bv A_m \bv S \|_{1 \rightarrow 2} \leq \|\bv A_m \|_{1 \rightarrow 2}=\|\bv U_m \bv U_m^T \bv A \|_{1 \rightarrow 2} \leq \| \bv A \|_{1 \rightarrow 2} \leq \sqrt{n}.
\end{align*}
Thus, using the fact that $\|\bv A_m \|_2 \leq \epsilon\sqrt{\delta }n$,
we get (for some constant C)
$\mathbb{E}_2[\|\bv A_m \bv S \|_2] \leq C[\sqrt{n\log n}+ \epsilon\sqrt{\delta s n}]$. Thus, setting $s \geq \frac{c\log n}{\epsilon^2 \delta}$ for some large constant $c$,
we have
\begin{align*}
 \mathbb{E}_2[\|\bv A_m \bv S \|_2] \leq C \epsilon\sqrt{\delta sn}.
\end{align*}

Using Markov's inequality, we get $\|\bv A_m \bv S \|_2 \leq C\epsilon \sqrt{sn}$
with probability at least $1-\delta$.
Since $\Sbb$ is scaled by $\sqrt{n/s}$, we have $\|\bv A_m \Sbb\|_2 \leq C\epsilon n$
We get the final after adjusting $\epsilon$ by $C$.

\end{proof}

Next, we show the main result of this section ie.e $\|\bv A\bv \Sbb \bv x \|_2$ is close to $|\lambda|$
when $\bv x$ is an eigenvector of $\Sbb^T \Ab \Sbb$
 with eigenvalue $\lambda$. We first tackle the case
 when $\bv x$ is the top eigenvector of $\Sbb^T \Ab \Sbb$.
\begin{lemma}\label{lem:asx}
Consider the setting of Lemma~\ref{Lem:spec_norm1}
and let $\|\bv A \|_2 \geq \frac{\epsilon n}{4}$.
Let $\bv{x}$ be the eigenvector of $\Sbb^T \Ab \Sbb$ corresponding to
its largest magnitude eigenvalue.
For $s \geq \frac{c \log n}{\epsilon^2 \delta} \log^3 \frac{1}{\epsilon \delta} $,
for sufficiently large constants $c$ and $C$, with
probability at least $1-\delta$, we have:
\begin{align*}
 \frac{\|\bv A \|_2}{C} \leq \|\bv A \Sbb \bv x \|_2 \leq C\|\bv A \|_2.
\end{align*}

\end{lemma}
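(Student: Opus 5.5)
Let $\lambda$ be the eigenvalue of $\Sbb^T \Ab \Sbb$ paired with $\bv x$, so $|\lambda| = \|\Sbb^T\Ab\Sbb\|_2$. We work on the intersection of the good events from Imported Lemma~\ref{lem:subspace-embedding-bounded} (with $R=L=\epsilon\sqrt{\delta}n$), Imported Lemma~\ref{lem:am}, Lemma~\ref{Lem:spec_norm1} and Lemma~\ref{lem:as}; this costs a union bound and a constant rescaling of $\delta$.

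\textbf{Step 1: $|\lambda| \approx \|\Ab\|_2$.} Imported Theorem~\ref{thm:main-eigval} gives $\big||\lambda| - \|\Ab\|_2\big| \le \epsilon' n$. We apply it with accuracy $\epsilon' = \epsilon/c'$, which is affordable within the stated $s$ after adjusting constants. Together with $\|\Ab\|_2 \ge \epsilon n/4$, this yields $\|\Ab\|_2/2 \le |\lambda| \le 2\|\Ab\|_2$, and in particular $|\lambda| \gtrsim \epsilon n$. This is what lets us apply Lemma~\ref{lem:lambda_bound}, up to constant adjustments of $\epsilon$.

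\textbf{Step 2: expand the norm.} Write
\[
\|\Ab\Sbb\bv x\|_2^2 = \bv x^T\Sbb^T\Ab_o^2\Sbb\bv x + \|\Ab_m\Sbb\bv x\|_2^2 .
\]
The cross terms vanish because $\Ab_o\Ab_m = 0$.

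For the upper bound, Lemma~\ref{lem:as} gives $\|\Ab_m\Sbb\bv x\|_2 \le \epsilon n \le 4\|\Ab\|_2$. Property~\ref{ass:subspace-embedding} gives $\|\Sbb^T\bv U_o\|_2 \le 11/10$, hence $\|\Ab_o\Sbb\bv x\|_2 \le \|\Ab\|_2\|\bv U_o^T\Sbb\|_2 = O(\|\Ab\|_2)$.

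For the lower bound, drop the nonnegative $\Ab_m$ term and use Lemma~\ref{Lem:spec_norm1} and then Lemma~\ref{lem:lambda_bound}:
\[
\bv x^T\Sbb^T\Ab_o^2\Sbb\bv x \ge \bv x^T(\Sbb^T\Ab_o\Sbb)^2\bv x - \epsilon n\|\Ab\|_2 \ge \lambda^2 - \epsilon n|\lambda| - \epsilon n\|\Ab\|_2 .
\]

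\textbf{Main obstacle.} The lower bound is the delicate step. The error terms are $\Theta(\epsilon n\|\Ab\|_2)$ while $\lambda^2 \approx \|\Ab\|_2^2$, and $\|\Ab\|_2$ may be as small as $\epsilon n/4$, so the errors are not automatically dominated. The fix is to run all the sub-lemmas with $\epsilon$ replaced by $\epsilon/c''$ for a large constant $c''$. This is absorbed into the constant $c$ in $s$, because Lemma~\ref{Lem:spec_norm1}, Lemma~\ref{lem:lambda_bound}, Lemma~\ref{lem:as} and Imported Theorem~\ref{thm:main-eigval} all have sample complexity scaling as $1/\epsilon^2$. Then
\[
\lambda^2 - \tfrac{\epsilon n}{c''}\big(|\lambda| + \|\Ab\|_2\big) \ge \tfrac{\|\Ab\|_2^2}{4} - \tfrac{12}{c''}\|\Ab\|_2^2 \ge \tfrac{\|\Ab\|_2^2}{8},
\]
using $\epsilon n \le 4\|\Ab\|_2$ and $|\lambda| \le 2\|\Ab\|_2$. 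Taking square roots gives both bounds with an absolute constant $C$.
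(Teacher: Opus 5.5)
Your proposal is correct and follows essentially the same route as the paper. The lower bound drops the nonnegative $\|\bv A_m\Sbb\bv x\|_2^2$ term and chains Lemma~\ref{Lem:spec_norm1}, Lemma~\ref{lem:lambda_bound} and Imported Theorem~\ref{thm:main-eigval}, exactly as the paper does; the one small deviation is your upper bound, where you bound $\|\bv A_o\Sbb\bv x\|_2 \le \|\bv A\|_2\|\bv U_o^T\Sbb\|_2$ directly via the subspace embedding rather than routing it through $\tilde\lambda_1^2 \approx \|\bv A\|_2^2$, and your uniform $\epsilon/c''$ constant bookkeeping is somewhat cleaner than the paper's fixed $\epsilon/16$, $\epsilon/32$ choices.
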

\begin{proof}

Note that $\|\bv A \Sbb \bv x \|^2_2 = \bv x^T \Sbb^T \bv A^2\Sbb\bv x$.
Hence, it suffices to show that $\bv x^T \Sbb^T \bv A^2\Sbb\bv x$ is approximately $\|\bv A \|_2^2$.
Let $\tilde \lambda_1=\bv x^T\Sbb^T \bv A \Sbb\bv x$.
From Imported Theorem~\ref{thm:main-eigval},
setting the error parameter to $\frac{\epsilon}{16}$,
we get $||\Tilde{\lambda}_1| -\|\bv A \|_2|\leq \frac{\epsilon n}{16} \leq \frac{\|\bv A \|_2}{4}$. So, we get:
\begin{align}\label{eq:a1}
 \frac{3\|\bv A \|_2}{4} \leq |\Tilde{\lambda}_1| \leq \frac{5\|\bv A \|_2}{4}
\end{align}
Thus, using the upper bound $|\Tilde{\lambda}_1|$, we have:
\begin{align*}
 |\tilde{\lambda}_1^2 -\|\bv A \|_2^2| =|\tilde \lambda_1-\| \bv A\|_2| \cdot |\tilde \lambda_1+\| \bv A\|_2|\leq \frac{\|\bv A \|_2}{4}\cdot \frac{9\|\bv A \|_2}{4} \leq \frac{9\|\bv A \|^2_2}{16}.
\end{align*}
Next,
from Lemmas~\ref{Lem:spec_norm1} and~\ref{lem:lambda_bound}
(setting the error parameter to $\frac{\epsilon}{32}$),
we have:
\begin{align}\label{eq:l1}
 |\tilde \lambda_1^2-\bv x^T \Sbb^T \bv A_o^2\Sbb\bv x|
 &\leq |\tilde \lambda_1^2-\bv x^T(\Sbb^T\Ab _o\Sbb)^2\bv x|
 +|\bv x^T(\Sbb^T\Ab _o\Sbb)^2\bv x- \bv x^T \Sbb^T \bv A_o^2\Sbb\bv x | \nonumber \\
 &\leq \frac{\epsilon}{16} n \| \bv A\|_2 \nonumber \\
 &\leq \frac{\| \bv A\|^2_2}{4}.
\end{align}
Note that in the second step, we bounded
$|\tilde \lambda_1^2-\bv x^T(\Sbb^T\Ab _o\Sbb)^2\bv x|$ by $\frac{\epsilon}{32} n \| \bv A\|_2$
using Lemma~\ref{lem:lambda_bound}, and we bounded
$|\bv x^T(\Sbb^T\Ab _o\Sbb)^2\bv x- \bv x^T \Sbb^T \bv A_o^2\Sbb\bv x |$ by $\frac{\epsilon}{32} n \| \bv A\|_2$
using Lemma~\ref{Lem:spec_norm1}.
We also have: $$|\bv x^T\Sbb^T \bv A^2 \Sbb\bv x-\bv x^T\Sbb^T \bv A_o^2 \Sbb \bv x |
=|\bv x^T\Sbb^T \bv A_m^2 \Sbb\bv x|=\| \bv A_m \Sbb \bv x\|^2_2
\leq \|\bv A_m \Sbb \|_2^2 \leq \epsilon^2 n^2 \leq 16\|\bv A \|^2_2.$$
Here, the second last step follows from Lemma~\ref{lem:as}, and
last step follows from the fact that $\|\bv A \|_2 \geq \frac{\epsilon n}{4}$.

Combining the bounds above, we get:
\begin{align*}
 |\bv x^T \Sbb^T \bv A^2\Sbb\bv x- \|\bv A \|^2_2|
 &\leq |\bv x^T \Sbb^T \bv A^2\Sbb\bv x-\bv x^T \Sbb^T \bv A_o^2\Sbb\bv x|
 +|\bv x^T \Sbb^T \bv A_o^2\Sbb\bv x-\tilde \lambda_1^2|+|\tilde\lambda_1^2-\| \bv A\|_2^2|\\
 &\leq C\|\bv A \|_2^2,
\end{align*}
for some large constant $C$.
This proves the upper bound of $O(\|\bv A \|_2^2)$ on $\bv x^T \Sbb^T \bv A^2\Sbb\bv x$. Note that we now need to prove the lower bound separately as the constant on upper bound $C$ is $>1$ as $ \|\bv A_m \Sbb \|_2^2$ is bounded by $O(\|\bv A \|_2^2)$ above. So we prove the the lower bound on $\bv x^T \Sbb^T \bv A^2\Sbb\bv x$ separately.
Observe that we have $\bv x^T \Sbb^T \bv A^2\Sbb\bv x \geq \bv x^T \Sbb^T \bv A_o^2\Sbb\bv x$. From~\eqref{eq:l1} and~\eqref{eq:a1}, we get:
\begin{align*}
\bv x^T \Sbb^T \bv A^2\Sbb\bv x \geq \bv x^T \Sbb^T \bv A_o^2\Sbb\bv x \geq \tilde \lambda_1^2-\frac{\|\bv A \|_2^2}{4} \geq \frac{5\|\bv A \|_2^2}{16}
\end{align*}
Finally, note that we can set the confidence parameter to $\frac{\delta}{3}$
in Lemmas~\ref{Lem:spec_norm1},~\ref{lem:lambda_bound} and~\ref{lem:as} and
take a union bound over the three events to get the final result with probability at least $1-\delta$.
This completes the proof after adjusting the constants appropriately.
\end{proof}

Next, we show that, when
$\bv x$ is an eigenvector of $\Sbb^T \Ab \Sbb$ corresponding to any outlying eigenvalue
$\lambda$ such that $|\lambda| \geq \epsilon n$,
$\|\bv A \Sbb \bv x \|_2$ is still close to $|\lambda|$
as long as $s = \tilde{\Omega}(1/\epsilon^4)$.
Note that the number of columns being sampled here is worse than that of
Lemma~\ref{lem:asx} (for the top eigenvector) by a factor of $\Omega(1/\epsilon^2)$.

\begin{lemma}\label{Lem:asxl}
Consider the setting of Lemma~\ref{Lem:spec_norm1}.
Let $\bv{x}$ be an eigenvector of $\bv{\bar S}^T \bv{A} \bv{\bar S}$
corresponding to its eigenvalue $\lambda$ such that $|\lambda| \geq \epsilon n$.
For $s \geq \frac{c \log n}{\epsilon^4 \delta} \log^3 \frac{1}{\epsilon \delta}$,
for sufficiently large constants $c$ and $C$, with
probability at least $1-\delta$,
\begin{align*}
 \frac{|\lambda| }{C}\leq \|\bv A \Sbb \bv x \|_2 \leq C|\lambda|
\end{align*}
\end{lemma}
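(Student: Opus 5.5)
We follow the template of Lemma~\ref{lem:asx}, now running every approximation at error parameter $\epsilon^2$ instead of $\epsilon$. That way each additive error term is $O(\epsilon^2 n^2)$, which is $O(\lambda^2)$ because $|\lambda|\ge\epsilon n$. Concretely, we apply Lemma~\ref{Lem:spec_norm1}, Lemma~\ref{lem:lambda_bound} and Lemma~\ref{lem:as} with $\epsilon$ replaced by $\epsilon^2/c'$ for a suitable constant $c'$. The resulting sample requirement $s \ge \frac{c\log n}{\epsilon^4\delta}\log^3\frac{1}{\epsilon\delta}$ matches the hypothesis up to constants. Each lemma is invoked with confidence $\delta/3$, followed by a union bound. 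Throughout we take $\bv x$ to be unit norm and use the expansion
\[
\|\bv A\Sbb\bv x\|_2^2=\bv x^T\Sbb^T\bv A_o^2\Sbb\bv x+\bv x^T\Sbb^T\bv A_m^2\Sbb\bv x ,
\]
where the cross terms vanish because $\bv A_o\bv A_m=0$.

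\textbf{Step 1 (middle part).} Lemma~\ref{lem:as} at accuracy $\epsilon^2$ gives $\|\bv A_m\Sbb\|_2\le\epsilon^2 n$. Hence
\[
0\le \bv x^T\Sbb^T\bv A_m^2\Sbb\bv x\le \epsilon^4n^2\le \epsilon^2 n^2\le \lambda^2 .
\]
One subtlety is that $L$ in Definition~\ref{def:ao-am} changes when we rescale $\epsilon$. We therefore fix $L=\epsilon^2\sqrt\delta n/c'$ throughout and check that every lemma is used with this same $L$.

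\textbf{Step 2 (outlying part).} Lemma~\ref{lem:lambda_bound} at accuracy $\epsilon^2$ requires $|\lambda|\ge\epsilon^2 n$, which holds here. It gives
\[
|\bv x^T(\Sbb^T\bv A_o\Sbb)^2\bv x-\lambda^2|\le \epsilon^2 n|\lambda|\le \epsilon|\lambda|^2 .
\]
Lemma~\ref{Lem:spec_norm1} at accuracy $\epsilon^2$ gives
\[
|\bv x^T(\Sbb^T\bv A_o\Sbb)^2\bv x-\bv x^T\Sbb^T\bv A_o^2\Sbb\bv x|\le \epsilon^2 n\|\bv A\|_2\le \epsilon^2 n^2\le \lambda^2/c' ,
\]
using $\|\bv A\|_2\le n$ and $|\lambda|\ge \epsilon n$. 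Combining the two estimates, and choosing the constants so that the total error is at most $\lambda^2/2$, we get $\bv x^T\Sbb^T\bv A_o^2\Sbb\bv x\in[\lambda^2/2,\,2\lambda^2]$.

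\textbf{Step 3 (conclusion).} The lower bound comes from $\|\bv A\Sbb\bv x\|_2^2\ge \bv x^T\Sbb^T\bv A_o^2\Sbb\bv x\ge \lambda^2/2$. The upper bound comes from adding Step 1 and Step 2, which gives $\|\bv A\Sbb\bv x\|_2^2\le 3\lambda^2$. Taking square roots yields the claim with $C=\sqrt3$, or with a larger $C$ after constant adjustments.

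The main obstacle is Step 2's use of Lemma~\ref{Lem:spec_norm1}. Its error scales with $\|\bv A\|_2$, which can be as large as $n$, rather than with $|\lambda|$. This is exactly why $\epsilon^2$ accuracy, and hence the $1/\epsilon^4$ sample size, is needed, unlike the top-eigenvector case. We must also verify that the $\log(1/\epsilon\delta)$ factor from $r$ in Lemma~\ref{Lem:aosum} is absorbed into the $\log^3$ term.
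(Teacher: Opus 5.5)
Your proposal is correct and follows essentially the same route as the paper. Like the paper, you split $\bv x^T\Sbb^T\bv A^2\Sbb\bv x$ into its $\bv A_o^2$ and $\bv A_m^2$ parts, bound the outlying part with Lemmas~\ref{lem:lambda_bound} and~\ref{Lem:spec_norm1} at accuracy $\epsilon^2$ (absorbing $\epsilon^2 n\|\bv A\|_2 \le \lambda^2$ via $|\lambda|\ge\epsilon n$ and $\|\bv A\|_2\le n$), bound the middle part with Lemma~\ref{lem:as}, and take the lower bound from $\bv x^T\Sbb^T\bv A^2\Sbb\bv x\ge \bv x^T\Sbb^T\bv A_o^2\Sbb\bv x$; your explicit choice of one threshold $L$ at the $\epsilon^2$ scale for all three lemmas makes precise a step the paper leaves implicit.
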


\begin{proof}
The proof follows the structure of the proof of Lemma~\ref{lem:asx} with some adjustments.
Using trainagle inequality, we have:
\begin{align}\label{eq:lem6main}
 |\lambda^2-\bv x^T\Sbb^T \Ab^2 \Sbb \bv x|
 \leq |\lambda^2-\bv x^T\Sbb^T \Ab_o^2 \Sbb \bv x|+
 |\bv x^T\Sbb^T \Ab_o^2 \Sbb \bv x-\bv x^T\Sbb^T \Ab^2 \Sbb \bv x|
\end{align}
We now bound the two terms individually.
From Lemmas~\ref{Lem:spec_norm1} and~\ref{lem:lambda_bound}
(by setting the error parameter to $\frac{\epsilon}{32}$),

\begin{align*}
 |\lambda^2-\bv x^T \Sbb^T \bv A_o^2\Sbb\bv x| &\leq | \lambda^2-\bv x^T(\Sbb^T\Ab _o\Sbb)^2\bv x|
 +|\bv x^T(\Sbb^T\Ab _o\Sbb)^2\bv x- \bv x^T \Sbb^T \bv A_o^2\Sbb\bv x | \\
 &\leq \frac{\epsilon^2}{16} n \| \bv A\|_2 \\
 &\leq \frac{\lambda^2}{16}.
\end{align*}
Here, the last step follows from the fact that $\lambda^2>\epsilon^2 n \|\bv A \|_2$
since $|\lambda| \geq \epsilon n$ and $\|\bv A \|_2 \leq n$. From Lemma~\ref{lem:as},
we also have:
$$|\bv x^T\Sbb^T \bv A^2 \Sbb\bv x-\bv x^T\Sbb^T \bv A_o^2 \Sbb \bv x |
=|\bv x^T\Sbb^T \bv A_m^2 \Sbb\bv x|=\| \bv A_m \Sbb \bv x\|^2_2
\leq \|\bv A_m \Sbb \|_2^2 \leq \epsilon^2 n^2 \leq \lambda^2.$$
Thus, from~\eqref{eq:lem6main}, we get:
\begin{align*}
 |\lambda^2-\bv x^T\Sbb^T \bv A^2 \Sbb\bv x|\leq \frac{17\lambda^2}{16}.
\end{align*}
We can also also lower bound $\bv x^T\Sbb^T \bv A^2 \Sbb\bv x$ by $\Omega(\lambda^2)$ as in Lemma~\ref{lem:asx}.
Finally, observe that we can set the confidence parameter to $\frac{\delta}{3}$
in Lemmas~\ref{Lem:spec_norm1},~\ref{lem:lambda_bound} and~\ref{lem:as} and take a union bound over
the three events to get the final result with probability at least $1-\delta$.
This completes the proof.

\end{proof}

\subsection{Eigenvector Approximation Using Uniform Sampling}\label{app:uniform samp}

We now state our final guarantees for Algorithm~\ref{alg:eigvec}.
First, we state the key lemma for approximating the top eigenvector using Algorithm~\ref{alg:eigvec}.
\begin{lemma}\label{lem:err_top}
Consider the setting of Lemma~\ref{Lem:spec_norm1}.
Let $\bv{x}_1$ be the eigenvector of $\Sbb^T \Ab \Sbb$ corresponding to
its largest magnitude eigenvalue $\lambda$. Also, let $|\lambda| \geq \frac{\epsilon n}{2}$.
Let $\bv{v}=\frac{\Ab \Sbb \xb_1}{\|\Ab \Sbb \xb_1 \|_2}$.

Then, for $s \geq \frac{c \log n}{\epsilon^2 \delta} \log^5 \frac{1}{\epsilon \delta}$
where $c$ is a sufficiently large constant,
with probability at least $1-\delta$, we have:
\begin{equation*}
 \|\Ab \bv{v}-\lambda \bv{v}\|_2 \leq \epsilon n.
\end{equation*}
\end{lemma}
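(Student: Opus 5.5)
We follow the decomposition from the proof sketch, with $\bv A = \bv A_o + \bv A_m$ and $L=\epsilon\sqrt{\delta}n$ as in Definition~\ref{def:ao-am}. Since $\|\bv A_m\|_2 \le L \le \epsilon n$, we have $\|\bv A_m \bv v\|_2\le \epsilon n$ for the unit vector $\bv v$. It therefore suffices to bound $\|\bv A_o\bv v - \lambda \bv v\|_2$. Write $N = \|\bv A\Sbb\bv x_1\|_2$. Because $|\lambda|\ge \epsilon n/2$, Imported Theorem~\ref{thm:main-eigval} gives $\|\bv A\|_2 \ge |\lambda| - \epsilon n/16 \ge \epsilon n/4$ after rescaling $\epsilon$. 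Lemma~\ref{lem:asx} then applies and gives $N \ge \|\bv A\|_2/C$. This is the key point: the denominator is comparable to $\|\bv A\|_2$, not merely to $\epsilon n$.

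Next, decompose the numerator. We have $\bv A_o \bv v = \bv A_o^2\Sbb \bv x_1 / N$ and
\[
\bv A_o^2\Sbb\bv x_1 = \bigl(\bv A_o^2\Sbb\bv x_1 - \bv A_o\Sbb\Sbb^T\bv A_o\Sbb\bv x_1\bigr) + \lambda \bv A\Sbb\bv x_1 - \lambda\bv A_m\Sbb\bv x_1 - \bv A_o\Sbb\Sbb^T\bv A_m\Sbb\bv x_1 ,
\]
using $\Sbb^T\bv A\Sbb\bv x_1=\lambda\bv x_1$ as in \eqref{eq:num}. After dividing by $N$, the term $\lambda\bv A\Sbb\bv x_1$ becomes $\lambda\bv v$, and three error terms remain, each to be bounded by $O(\epsilon n)\cdot N$.
\begin{itemize}
\item The term $\bv t_2 = \lambda\bv A_m\Sbb\bv x_1$ satisfies $\|\bv t_2\|_2\le |\lambda|\,\epsilon n$ by Lemma~\ref{lem:as}. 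Since $|\lambda|\le \frac54\|\bv A\|_2 \le O(N)$, this term is fine.
\item The term $\bv t_3 = \bv A_o\Sbb\Sbb^T\bv A_m\Sbb\bv x_1$ is bounded by $\|\bv A_o\Sbb\|_2\,\|\Sbb^T\bv A_m\Sbb\|_2$. Property~\ref{ass:subspace-embedding} gives $\|\Sbb^T\bv U_o\|_2\le 11/10$, so $\|\bv A_o\Sbb\|_2 \le \frac{11}{10}\|\bv A\|_2$. Imported Lemma~\ref{lem:am} gives $\|\Sbb^T\bv A_m\Sbb\|_2\le \epsilon n$. Hence $\|\bv t_3\|_2 \le O(\epsilon n\|\bv A\|_2) = O(\epsilon n N)$.
\end{itemize}

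The remaining term, the first bracket, is the main obstacle. It equals $\bv U_o\bv\Lambda_o(\bv U_o^T\bv U_o - \bv U_o^T\Sbb\Sbb^T\bv U_o)\bv\Lambda_o\bv U_o^T\Sbb\bv x_1$. A crude distortion-$\epsilon$ embedding of all of $\bv U_o$ would cost $1/\epsilon^4$ samples, so instead we reuse the level-set argument of Lemma~\ref{Lem:aosum}:
\begin{itemize}
\item Split $\bv A_o=\sum_{i=1}^r\bv A_{o,i}$ with eigenvalues of magnitude about $\|\bv A\|_2 2^{-i}$.
\item By Imported Lemma~\ref{lem:approx-matrix-product} and Property~\ref{ass:subspace-embedding}, $\|\bv\Lambda_{o,i}(\bv U_{o,i}^T\Sbb\Sbb^T\bv U_{o,j} - \bv U_{o,i}^T\bv U_{o,j})\bv\Lambda_{o,j}\|_2 \le O(L\|\bv A\|_2 2^{-\min(i,j)})$.
\item Here we do not conjugate by $\Sbb^T\bv U_o$ on the left. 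The left factor is the orthonormal $\bv U_{o,i}$, which costs nothing, and the right factor $\|\bv U_{o,j}^T\Sbb\|_2\le 11/10$.
\item Summing over the $r^2$ pairs gives $\|\bv A_o^2\Sbb - \bv A_o\Sbb\Sbb^T\bv A_o\Sbb\|_2 \le O(Lr\|\bv A\|_2)$, with $r=O(\log(1/\epsilon\delta))$.
\end{itemize}
Dividing by $N\gtrsim\|\bv A\|_2$ leaves $O(\epsilon n\log(1/\epsilon\delta))$. Shrinking $\epsilon$ by polylogarithmic factors in $1/(\epsilon\delta)$ accounts for the $\log^5$ in $s$.

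Finally, we union bound over the events of Lemmas~\ref{Lem:spec_norm1}, \ref{lem:as} and \ref{lem:asx}, Imported Lemma~\ref{lem:am}, and Imported Theorem~\ref{thm:main-eigval}, each run with confidence $\delta/5$. The triangle inequality then gives $\|\bv A\bv v - \lambda\bv v\|_2 \le \|\bv A_m\bv v\|_2 + \|\bv A_o\bv v-\lambda\bv v\|_2 = O(\epsilon n)$, and rescaling $\epsilon$ by a constant gives the stated bound.
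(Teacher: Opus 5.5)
Your proof is correct and follows essentially the same route as the paper. The paper also reduces to bounding $\|\bv A_o\bv v-\lambda\bv v\|_2$ and uses Lemma~\ref{lem:asx} to get $\|\bv A\Sbb\bv x_1\|_2 \gtrsim \|\bv A\|_2$, which lets the $\|\bv A\|_2$ factors cancel. Its per-level $\bv E_i$ terms and its cross terms $\bv A_{o,i}\Sbb\Sbb^T\bv A_{o,j}\Sbb\bv x_1$ for $i\neq j$ are exactly the diagonal and off-diagonal blocks of your single bracket $\bv A_o^2\Sbb\bv x_1-\bv A_o\Sbb\Sbb^T\bv A_o\Sbb\bv x_1$, because $\bv U_{o,i}^T\bv U_{o,j}=0$. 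Your remaining terms $\lambda\bv A_m\Sbb\bv x_1$ and $\bv A_o\Sbb\Sbb^T\bv A_m\Sbb\bv x_1$ are bounded the same way the paper bounds them.
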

\begin{proof}
We set $L=\epsilon \sqrt{\delta}n$ in Definition~\ref{def:ao-am}.
We have $\bv{\bar S}^T \bv{A} \bv{\bar S} \xb_1 =\lambda \xb_1$
or $\bv{\bar S}^T \bv{A}_m \bv{\bar S} \xb_1 + \bv{\bar S}^T \bv{A}_o \bv{\bar S} \xb_1=\lambda \xb_1$.
Thus, for $s \geq \frac{\log n}{\epsilon^2 \delta} $,
from Imported Lemma~\ref{lem:am}, with probability at least $1-\delta$, we get:
\begin{equation*}
 \bv{\bar S}^T \bv{A}_o \bv{\bar S} \xb_1=\lambda \xb_1+\bv{e},
\end{equation*}
where $\bv{e}=-\bv{\bar S}^T \bv{A}_m \bv{\bar S} \xb_1$
and $\|\bv{e} \|_2=\|\bv{\bar S}^T \bv{A}_m \bv{\bar S} \xb_1 \|_2
\leq \|\bv{\bar S}^T \bv{A}_m \bv{\bar S} \|_2 \leq \epsilon n$.
Next, observe that $\Ab \bv{v}=\Ab_o\bv{v}+\Ab_m\bv{v}$ which gives us:
\begin{equation}\label{Eq:2}
 \Ab \bv{v}=\Ab_o\bv{v}+\bv{e}'
\end{equation}
where $\|\bv{e}' \|_2=\|\Ab_m\bv{v} \|_2 \leq \|\Ab_m \|_2 \leq \epsilon n$.

Thus, it is enough to prove that $\bv A_o \bv v=\lambda \bv v+\bv e'' $
such that $\|\bv e'' \|_2 \leq \epsilon n$.

Let $\bv{v}'=\Ab \Sbb \xb_1 = \bv{v} \|\Ab \Sbb \xb_1 \|_2$,
and $\bv{v}'_o=\Ab_o\Sbb \xb_1$,
and $\bv{v}'_m=\Ab_m \Sbb \xb_1$.
Recall that in the proof of Lemma~\ref{Lem:aosum},
we had defined $\Ab_{o,i}=\bv U_{o,i} \bv \Lambda_{o,i} \bv U_{o,i}^T$
as the matrix containing eigenvalues of $\Ab_o$ with
magnitudes in $(\|\bv A \|_2 \sqrt{\delta} 2^{-i}, \|\bv A \|_2 \sqrt{\delta} 2^{-i+1}]$. Then, we have:
\begin{align}\label{eq:ao}
 \Ab_o \bv{v}'=\sum_{i=1}^r \Ab_{o,i} \Ab_{o,i} \Bar{\Sb} \bv x_1,
\end{align}
for $r=\lceil \log (\|\bv A \|_2/L)\rceil$.

We also have $\|\bv \Lambda_{o,i} \|_2 \geq \|\bv A \|_2 \sqrt{\delta}2^{-i}$.
So, from Imported Lemma~\ref{lem:subspace-embedding-bounded},
for $s \geq \frac{c}{\epsilon^2 \delta}\log \frac{1}{\epsilon \delta}$,
we get that, with probability at least $1-\delta$, for all $i \in [r]$:

\begin{align*}
 \| \bv U_{o,i}^T\Sbb \Sbb^T \bv U_{o,i}-\bv U_{o,i}^T \bv U_{o,i}\|_2
 \leq \frac{L}{\|\bv A \|_2 \sqrt{\delta} 2^{-i}}.
\end{align*}
So we have:
\begin{align*}
 \| \bv \Lambda_{o,i} \bv U_{o,i}^T\Sbb \Sbb^T \bv U_{o,i}\bv \Lambda_{o,i} - \bv \Lambda_{o,i} \bv U_{o,i}^T \bv U_{o,i} \bv \Lambda_{o,i}\|_2 \leq
 \frac{L}{\|\bv A \|_2 \sqrt{\delta} 2^{-i}} \|\bv \Lambda_{o,i} \|^2_2 \leq L\|\bv A \|_2 \sqrt{\delta} 2^{-i+2}.
\end{align*}
Thus, we have that for all $i \in [r]$:
\begin{align*}
\Ab_{o,i} \Ab_{o,i}=
\bv U_{o,i} \bv \Lambda_{o,i} \bv U_{o,i}^T \bv U_{o,i} \bv \Lambda_{o,i} \bv U_{o,i}^T
=\bv U_{o,i} \bv \Lambda_{o,i} \bv U_{o,i}^T \Sbb \Sbb^T \bv U_{o,i} \bv \Lambda_{o,i} \bv U_{o,i}^T
+ \bv U_{o,i} \bv E_i \bv U_{o,i}^T
\end{align*}
where
\begin{align}\label{eq:ei}
 \| \bv E_i\|_2 \leq L\|\bv A \|_2 \sqrt{\delta} 2^{-i+2}.
\end{align}
Multiplying both sides above by $\Sbb \bv x_1$, for every $i \in [r]$, we have:
\begin{align*}
\Ab_{o,i} \Ab_{o,i} \Bar{\Sb} \bv x_1 =\Ab_{o,i}\Sbb \Sbb^T \Ab_{o,i} \Sbb \bv x_1+ \bv U_{o,i} \bv E_i \bv U_{o,i}^T \Sbb \bv x_1.
\end{align*}
Summing over both sides above all $i \in [r]$, from~\eqref{eq:ao} we get that:
\begin{align}\label{eq:aoi}
 \Ab_o \bv{v}'&=\sum_{i=1}^r \Ab_{o,i} \Ab_{o,i} \Bar{\Sb} \bv x_1=\sum_{i=1}^r \Ab_{o,i}\Sbb \Sbb^T \Ab_{o,i} \Sbb \bv x_1+ \sum_{i=1}^r \bv U_{o,i} \bv E_i \bv U_{o,i}^T \Sbb \bv x_1.
\end{align}
Now, $\Sbb^T \bv A \Sbb \bv x_1 =\lambda \bv x_1$. Then, we have $ \Sbb^T \Ab_{o,i} \Sbb \bv x_1 = \lambda \bv x_1 -\sum_{j \neq i}\Sbb^T \bv A_{o,j} \Sbb \bv x_1-\Sbb^T \bv A_{m} \Sbb \bv x_1$. So, we get:
\begin{align*}
 \Ab_{o,i}\Sbb \Sbb^T \Ab_{o,i} \Sbb \bv x_1 &= \Ab_{o,i}\Sbb (\lambda \bv x_1 -\sum_{j \neq i}\Sbb^T \bv A_{o,j} \Sbb \bv x_1-\Sbb^T \bv A_{m} \Sbb \bv x_1) \\
 &= \lambda \Ab_{o,i}\Sbb \bv x_1-\sum_{j \neq i} \Ab_{o,i}\Sbb \Sbb^T \bv A_{o,j} \Sbb \bv x_1-\Ab_{o,i}\Sbb \Sbb^T \bv A_{m} \Sbb \bv x_1.
\end{align*}
So, summing over all $i \in [r]$, we get that:
\begin{align*}
 \sum_{i=1}^r \Ab_{o,i}\Sbb \Sbb^T \Ab_{o,i} \Sbb \bv x_1 &= \lambda \sum_{i=1}^r \Ab_{o,i}\Sbb \bv x_1- \sum_{i=1}^r \sum_{j \neq i} \Ab_{o,i}\Sbb \Sbb^T \bv A_{o,j} \Sbb \bv x_1-\sum_{i=1}^r \Ab_{o,i}\Sbb \Sbb^T \bv A_{m} \Sbb \bv x_1 \\
 &= \lambda (\bv A -\bv A_m) \Sbb \bv x_1- \sum_{i=1}^r \sum_{j \neq i} \Ab_{o,i}\Sbb \Sbb^T \bv A_{o,j} \Sbb \bv x_1-\sum_{i=1}^r \Ab_{o,i}\Sbb \Sbb^T \bv A_{m} \Sbb \bv x_1.
\end{align*}

Hence, combining the result above with~\eqref{eq:aoi}, we get:
\begin{align*}
 \bv A_o \bv v' &=\sum_{i=1}^r \Ab_{o,i}\Sbb \Sbb^T \Ab_{o,i} \Bar{\Sb} \bv x_1 + \sum_{i=1}^r \bv U_{o,i} \bv E_i \bv U_{o,i}^T \Bar{\Sb} \bv x_1\\
 &= \lambda \bv A \Sbb \bv x_1- \lambda \bv A_m \Sbb \bv x_1- \sum_{i=1}^r \sum_{j \neq i} \Ab_{o,i}\Sbb \Sbb^T \bv A_{o,j} \Sbb \bv x_1-\sum_{i=1}^r \Ab_{o,i}\Sbb \Sbb^T \bv A_{m} \Sbb \bv x_1\\
 &+ \sum_{i=1}^r \bv U_{o,i} \bv E_i \bv U_{o,i}^T \Bar{\Sb} \bv x_1.
\end{align*}
Dividing both sides by $\|\Ab \Sbb \bv x_1\|_2$ and using the fact that $\bv v=\frac{\Ab \Sbb \bv x_1}{\|\Ab \Sbb \bv x_1\|_2}$, we get,
\begin{align}\label{eq:spliteq}
 \bv A_o \bv v &= \lambda \bv v-\frac{\lambda \bv A_m \Sbb \bv x_1}{\|\Ab \Sbb \bv x_1\|_2} -\frac{\sum_{i=1}^r \sum_{j \neq i} \Ab_{o,i}\Sbb \Sbb^T \bv A_{o,j} \Sbb \bv x_1}{\|\Ab \Sbb \bv x_1\|_2} -\frac{\sum_{i=1}^r \Ab_{o,i}\Sbb \Sbb^T \bv A_{m} \Sbb \bv x_1}{\|\Ab \Sbb \bv x_1\|_2} \nonumber\\
 &+ \frac{\sum_{i=1}^r \bv U_{o,i} \bv E_i \bv U_{o,i}^T \Bar{\Sb} \bv x_1}{\|\Ab \Sbb \bv x_1\|_2}.
\end{align}
We will now bound each of the last four terms on the RHS individually.
We will upper bound the terms in the numerator and lower bound the denominator $\| \bv A \Sbb \bv x_1\|_2$.

\textbf{First term} $\frac{\lambda \bv A_m \Sbb \bv x_1}{\|\Ab \Sbb \bv x_1\|_2}$:
The numerator can be upper bounded by $\|\bv A \|_2\| \bv A_m \Sbb\|_2$.
From Lemma~\ref{lem:as}, we have $\| \bv A_m \Sbb\|_2 \leq \epsilon n$.
From Imported Theorem~\ref{thm:main-eigval}, by choosing $\epsilon$ to be sufficiently small,
we have $\|\bv A \|_2 \geq |\lambda| -\frac{\epsilon n}{4} \geq \frac{\epsilon n}{4}$.
Thus, from Lemma~\ref{lem:asx},
for $s \geq \frac{c \log n}{\epsilon^2 \delta}\log^3 \frac{1}{\epsilon \delta}$,
 we have $$\|\Ab \Sbb \bv x_1 \|_2 \geq c_1\|\bv A \|_2$$
for some constant $c_1$.
So, $\frac{\lambda \bv A_m \Sbb \bv x_1}{\|\Ab \Sbb \bv x_1\|_2}$ is bounded by
$c'\epsilon n$ for some constant $c'$.

\textbf{Second term} $\frac{\sum_{i=1}^r \sum_{j \neq i} \Ab_{o,i}\Sbb \Sbb^T \bv A_{o,j} \Sbb \bv x_1}{\|\Ab \Sbb \bv x_1\|_2}$:
Consider a single term $\Ab_{o,i}\Sbb \Sbb^T \bv A_{o,j} \Sbb \bv x_1$ for some $i,j \in [r]$ with $i \neq j$.
We have:
\begin{align*}
 \Ab_{o,i}\Sbb \Sbb^T \bv A_{o,j} \Sbb \bv x_1 &= \bv U_{o,i}\bv \Lambda_{o,i} ( \bv U_{o,i}^T \Sbb \Sbb^T \bv U_{o,j}) \bv \Lambda_{o,j} \bv U_{o,j}^T \Sbb\bv x_1
\end{align*}
Assume w.l.o.g that $i > j$.
From Imported Lemmas~\ref{lem:subspace-embedding-bounded} and~\ref{lem:approx-matrix-product}, $\Sbb$ is a
$\frac{2^i L}{\|\bv A \|_2}$ distortion subspace embedding for $[\bv U_{o,i} | \bv U_{o,j}]$.
Thus, using the fact that $\bv U_{o,i}^T \bv U_{o,j}=0$ we have:
\begin{align*}
 \| \bv \Lambda_{o,i} \bv U_{o,i}^T \Sbb \Sbb^T \bv U_{o,j}\bv \Lambda_{o,j} \|_2
 &\leq \| \bv U_{o,i}^T \Sbb \Sbb^T \bv U_{o,j}\|_2 \|\bv \Lambda_{o,i} \|_2 \|\bv \Lambda_{o,j} \|_2 \\
 &\leq \frac{L}{\|\bv A \|_2\sqrt{\delta} 2^{-i}} \|\bv \Lambda_{o,i} \|_2 \|\bv \Lambda_{o,j} \|_2 \\
 &\leq \frac{L}{\|\bv A \|_2\sqrt{\delta} 2^{-i}} \|\bv A \|^2_2 \delta 2^{-(i+j)+2} \\
 &\leq 4L \sqrt{\delta} \|\bv A \|_2 2^{-j}.
\end{align*}
Since $\Sbb$ is at
least a constant factor subspace embedding for any $\bv U_{o,j}$
according to Imported Lemma~\ref{lem:subspace-embedding-bounded}, we also have:
\begin{align*}
 \| \bv U_{o,j}^T \Sbb \bv x_1\|_2 \leq 2.
\end{align*}
Using the above two bounds, we get:
\begin{align*}
 \|\Ab_{o,i}\Sbb \Sbb^T \bv A_{o,j} \Sbb \bv x_1 \|_2 &\leq \| \bv U_{o,i} \|_2 \| \bv \Lambda_{o,i} \bv U_{o,i}^T \Sbb \Sbb^T \bv U_{o,j}\bv \Lambda_{o,j} \|_2 \| \bv U_{o,j}^T \Sbb \bv x_1 \|_2 \\
 &\leq C L \sqrt{\delta}\|\bv A \|_2 2^{-j} ,
\end{align*}
for some constant $C$. Since we also have $\|\Ab \Sbb \bv x_1 \|_2 \geq c_1\|\bv A \|_2$,
we get that $\frac{\|\Ab_{o,i}\Sbb \Sbb^T \bv A_{o,j} \Sbb \bv x_1 \|_2}{\|\bv A \Sbb \bv x_1\|_2} \leq C' L 2^{-j+1}\leq C'L$ for some constant $C'$.
So, we get:
\begin{align*}
\frac{\sum_{i=1}^r \sum_{j \neq i} \|\Ab_{o,i}\Sbb \Sbb^T \bv A_{o,j} \Sbb \bv x_1 \|_2}{\|\Ab \Sbb \bv x_1\|_2}
\leq C'Lr \leq C'\epsilon n \log (\|\bv A \|_2/L)
\end{align*}

\textbf{Third term} $\frac{\sum_{i=1}^r \Ab_{o,i}\Sbb \Sbb^T \bv A_{m} \Sbb \bv x_1}{\|\Ab \Sbb \bv x_1\|_2}$:
We have $\sum_{i=1}^r \Ab_{o,i}\Sbb \Sbb^T \bv A_{m} \Sbb \bv x_1=\Ab_{o}\Sbb \Sbb^T \bv A_{m} \Sbb \bv x_1$.
Using spectral submultiplicativity, we have:
$\|\Ab_o\Sbb\Sbb^T\Ab_m\Sbb\xb_1 \|_2 \leq \|\Ab_o\Sbb \|_2 \| \Sbb^T\Ab_m\Sbb\|\|\xb_1\|_2$.
Now, $\| \Sbb^T\Ab_m\Sbb\| \leq \epsilon n$ (from Imported Lemma~\ref{lem:am})
and $\|\Ab_o\Sbb \|_2=\|\bv \Sigma_o \bv U_o^T \Sbb\|_2 \leq \|\bv A \|_2\| \bv U_o^T \Sbb\|_2 \leq 2\|\bv A \|_2$,
where the last inequality follows from the fact that $\Sbb$ is
at least a constant factor subspace embedding
for $\bv U_o$ (Imported Lemma~\ref{lem:subspace-embedding-bounded}).
Thus, we have:
\begin{align}\label{eq:aoam1}
 \|\Ab_o\Sbb\Sbb^T\Ab_m\Sbb\xb_1 \|_2 \leq 2\epsilon n\| \bv A\|_2.
\end{align}
Finally, since $\|\Ab \Sbb \bv x_1\|_2 \geq c_1\|\bv A \|_2$ for some constant $c_1$,
we get that $\frac{\sum_{i=1}^r \Ab_{o,i}\Sbb \Sbb^T \bv A_{m} \Sbb \bv x_1}{\|\Ab \Sbb \bv x_1\|_2}$
is bounded by $O(\epsilon n)$.

\textbf{Fourth term} $\frac{\sum_{i=1}^r \bv U_{o,i} \bv E_i \bv U_{o,i}^T \Bar{\Sb} \bv x_1}{\|\Ab \Sbb \bv x_1\|_2}$:
We have:
\begin{align*}
 \| \bv U_{o,i} \bv E_i \bv U_{o,i}^T \Bar{\Sb} \bv x_1\|_2
 \leq \|\bv E_i \|_2 \| \bv U_{o,i}^T \Bar{\Sb} \bv x_1\|_2
 \leq C L \sqrt{\delta}\|\bv A \|_2 2^{-i+1}
\end{align*}
where we used the fact that $\|\bv E_i \|_2 \leq 2L\sqrt{\delta} \|\bv A \|_2 2^{-i+1}$ (from~\eqref{eq:ei}). Since $\|\bv A \Sbb \bv x_1 \|_2 \geq c_1\|\bv A \|_2 $, we get:
$\frac{\| \bv U_{o,i} \bv e_i \bv U_{o,i}^T \Bar{\Sb} \bv x_1\|_2 }{\|\bv A \Sbb \bv x_1 \|_2} \leq C L \sqrt{\delta} 2^{-i+1}$.
Then, we have:
\begin{align*}
 \sum_{i=1}^r \frac{\| \bv U_{o,i} \bv E_i \bv U_{o,i}^T \Bar{\Sb} \bv x_1\|_2 }{\|\bv A \Sbb \bv x_1 \|_2} \leq 2C L .
\end{align*}

Combining the bounds for all the four terms in~\eqref{eq:spliteq},
we get that $\|\bv A_o \bv v-\lambda \bv v\|_2$
is bounded by $O(Lr)= O(\epsilon n \log(1/\epsilon \delta))$.
By adjusting $\epsilon$ by $\log (1/\epsilon \delta)$
(and constant factors),
and taking a union bound over all the events from
Imported Lemma~\ref{lem:am}, Lemmas~\ref{Lem:aosum}
and~\ref{lem:asx}, we get the final error bound of $\epsilon n$
with probability at least $1-\delta$.
\end{proof}

We will now extend the previous lemma to approximate eigenvectors
corresponding to all the large outlying eigenvalues.
\begin{lemma}\label{lem:err1}
Consider the setting of Lemma~\ref{Lem:spec_norm1}.
Let $\bv{x}$ be an eigenvector of $\bv{\bar S}^T \bv{A} \bv{\bar S}$
corresponding to its eigenvalue $\lambda$ such that
$|\lambda| \geq \epsilon n$.
Let $\bv{v}=\frac{\Ab \Sbb \xb}{\|\Ab \Sbb \xb\|_2}$.
Then, for $s \geq \frac{c \log n}{\epsilon^4 \delta}
\log^3 \frac{1}{\epsilon \delta}$ (where $c$ is a sufficiently large constant),
with probability at least $1-\delta$, we have:
\begin{equation*}
 \|\Ab \bv{v}-\lambda \bv{v}\|_2 \leq \epsilon n
\end{equation*}
\end{lemma}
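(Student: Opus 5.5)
The plan is to mirror the proof of Lemma~\ref{lem:err_top}, but without the level-set decomposition. The loss of that decomposition is paid for with the stronger $\tilde O(1/\epsilon^4)$ sample size. Set $L=\epsilon\sqrt{\delta}n$ in Definition~\ref{def:ao-am}. Then $\|\Ab_m\bv v\|_2\le\|\Ab_m\|_2\le\epsilon n$, so it suffices to show $\|\Ab_o\bv v-\lambda\bv v\|_2=O(\epsilon n)$. Write $\bv v'=\Ab\Sbb\xb$, so that $\Ab_o\bv v'=\Ab_o^2\Sbb\xb$.

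\textbf{Step 1: decompose $\Ab_o^2\Sbb\xb$.} Insert $\Sbb\Sbb^T$ and write
\[\Ab_o^2\Sbb\xb=\Ab_o\Sbb\Sbb^T\Ab_o\Sbb\xb+\bv U_o\bv E\bv U_o^T\Sbb\xb,\qquad \bv E=\bv\Lambda_o(\bv I-\bv U_o^T\Sbb\Sbb^T\bv U_o)\bv\Lambda_o.\]
Now use $\Sbb^T\Ab_o\Sbb\xb=\lambda\xb-\Sbb^T\Ab_m\Sbb\xb$. This gives
\[\Ab_o\bv v'=\lambda\Ab\Sbb\xb-\lambda\Ab_m\Sbb\xb-\Ab_o\Sbb\Sbb^T\Ab_m\Sbb\xb+\bv U_o\bv E\bv U_o^T\Sbb\xb.\]
After dividing by $\|\Ab\Sbb\xb\|_2$, the first term becomes $\lambda\bv v$. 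Lemma~\ref{Lem:asxl} bounds the denominator: $\|\Ab\Sbb\xb\|_2\ge|\lambda|/C$.

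\textbf{Step 2: bound the three error terms by $O(\epsilon n|\lambda|)$.} Invoke Imported Lemma~\ref{lem:am}, Lemma~\ref{lem:as}, and Imported Lemma~\ref{lem:subspace-embedding-bounded} with the error parameter replaced by $\epsilon^2$. The sample size $s\ge \frac{c\log n}{\epsilon^4\delta}\log^3\frac1{\epsilon\delta}$ permits this. The resulting bounds are as follows.

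(a) $\|\lambda\Ab_m\Sbb\xb\|_2\le|\lambda|\epsilon n$, by Lemma~\ref{lem:as} at parameter $\epsilon$.

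(b) $\|\Ab_o\Sbb\Sbb^T\Ab_m\Sbb\xb\|_2\le\|\Ab_o\Sbb\|_2\|\Sbb^T\Ab_m\Sbb\|_2\le 2\|\Ab\|_2\cdot\epsilon^2 n\le 2\epsilon^2n^2\le2\epsilon n|\lambda|$. This uses $\|\Ab\|_2\le n$ and $|\lambda|\ge\epsilon n$.

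(c) Property~\ref{ass:subspace-embedding} with $R=\epsilon^2\sqrt\delta n$ makes $\Sbb$ an $O(\epsilon)$-distortion embedding for all of $\bv U_o$, since every outlying eigenvalue is at least $\epsilon\sqrt\delta n$ in magnitude. Hence $\|\bv E\|_2\le O(\epsilon)\|\Ab\|_2^2$ and $\|\bv U_o^T\Sbb\xb\|_2\le 2$. This gives $O(\epsilon n^2)$, which is too weak.

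\textbf{Main obstacle.} Bound (c) loses a factor of $n/|\lambda|$. I would avoid it by not bounding $\|\bv E\|_2$ crudely. Instead, use the level-set bound from the proof of Lemma~\ref{Lem:aosum}, which gives $\|\bv E_i\|_2\lesssim R\,\|\Ab\|_2 2^{-i}$. Summing over $i$, together with the cross terms between levels, yields $O(R\,r\,\|\Ab\|_2)=O(\epsilon^2 n^2\log\frac1{\epsilon\delta})$. This is in effect Lemma~\ref{Lem:spec_norm1} applied with parameter $\epsilon^2$, and it requires $\|\Ab\|_2\le n$. Since $\epsilon^2n^2\le\epsilon n|\lambda|$, the bound is sufficient.

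\textbf{Conclusion.} Divide all terms by $\|\Ab\Sbb\xb\|_2\ge|\lambda|/C$. Each error term is then $O(\epsilon n\log\frac1{\epsilon\delta})$. Rescale $\epsilon$ by the log factor and constants, which the $\log^3$ in $s$ absorbs. Finally, take a union bound over the events of Imported Lemma~\ref{lem:am}, Lemma~\ref{lem:as}, the subspace embedding, and Lemma~\ref{Lem:asxl}, each at confidence $\delta/4$.
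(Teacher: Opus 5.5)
Your Step~1 decomposition and bounds (a) and (b) coincide with the paper's. The two proofs part ways only on the term $\bv U_o\bv E\bv U_o^T\Sbb\xb$, and there the paper needs no level sets. Writing $\bv E=\bv\Lambda_o\bv E_3\bv\Lambda_o$ with $\bv E_3=\bv I-\bv U_o^T\Sbb\Sbb^T\bv U_o$, it observes that $\bv\Lambda_o\bv U_o^T\Sbb\xb=\bv U_o^T\Ab_o\Sbb\xb=\bv U_o^T\Ab\Sbb\xb$, because $\bv U_o^T\Ab_m=0$. So the term has norm at most $\|\Ab\|_2\|\bv E_3\|_2\|\Ab\Sbb\xb\|_2$. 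The normalization then cancels exactly, and the plain $\epsilon$-distortion embedding of $\bv U_o$ gives $\epsilon\|\Ab\|_2\le\epsilon n$. Your ``main obstacle'' arises only because you bounded $\|\bv\Lambda_o\bv U_o^T\Sbb\xb\|_2$ by $2\|\Ab\|_2$ instead of by $\|\Ab_o\Sbb\xb\|_2\le\|\Ab\Sbb\xb\|_2$.

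Your workaround is nonetheless sound. Under Property~\ref{ass:subspace-embedding} with any $R\le L$, the $(i,j)$ level block of $\bv E$ has norm $O(R\|\Ab\|_2 2^{-\min(i,j)})$ (using Imported Lemma~\ref{lem:approx-matrix-product} off the diagonal), and the blocks sum to $O(Rr\|\Ab\|_2)$. Two small corrections apply:
\begin{itemize}
\item This is the computation inside Lemma~\ref{Lem:aosum} with $R$ decoupled from $L$. It is not literally Lemma~\ref{Lem:spec_norm1} at parameter $\epsilon^2$, which would also move the threshold and thus change $\Ab_o$.
\item Only this term carries the factor $r$, so shrink $R$ to about $\epsilon^2\sqrt{\delta}n/r$ rather than rescaling $\epsilon$ globally. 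The former costs $r^2$ in $s$ and fits the $\log^3$ budget; the latter does not.
\end{itemize}

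As for what each route buys: the paper's factorization is shorter and log-free for this term. However, it needs distortion $\epsilon$ on all of $\bv U_o$, i.e.\ $R\le\epsilon L$, which forces $L\gtrsim\epsilon\sqrt{\delta}n$ when $s\approx\epsilon^{-4}\delta^{-1}$. Your bound depends on $L$ only through $r$, and that flexibility is worth exploiting.

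The reason is that bound (b) as written is not justified, in your proposal and in the paper alike. Imported Lemma~\ref{lem:am} at parameter $\epsilon^2$ controls only the part of the spectrum below $\epsilon^2\sqrt{\delta}n$, not $\Ab_m$ at threshold $\epsilon\sqrt{\delta}n$. For instance, take $\Ab=\frac{\mu}{n}\bv 1\bv 1^T$ with $\mu=\epsilon\sqrt{\delta}n/2$. Then $\Ab_m=\Ab$ and $\|\Sbb^T\Ab_m\Sbb\|_2\approx\mu\gg\epsilon^2 n$ when $\epsilon\ll\sqrt{\delta}$. Falling back on the lemma at parameter $\epsilon$ only gives $O(\|\Ab\|_2)$ for (b) after dividing by $|\lambda|\approx\epsilon n$.

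With your route the repair costs nothing: take $L=\epsilon^2\sqrt{\delta}n$ from the outset. Then:
\begin{itemize}
\item $\|\Ab_m\|_2\le\epsilon n$ still holds.
\item Bound (a) (Lemma~\ref{lem:as} at $\epsilon^2$) and bound (b) (Imported Lemma~\ref{lem:am} at $\epsilon^2$) become legitimate applications.
\item $\Sbb$ is still a constant-factor embedding of $\bv U_o$, because $R\le L$.
\item The conclusion of Lemma~\ref{Lem:asxl} does not involve $L$.
\item Term (c) still costs $O(Rr\|\Ab\|_2)$ with $r=O(\log\frac{1}{\epsilon\delta})$.
\end{itemize}
The paper's factorization could only lower $L$ this way at the price of $s\approx\epsilon^{-6}\delta^{-1}$.
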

\begin{proof}
We reuse the notations from the proof of Lemma~\ref{lem:err_top}.
Let $L=\epsilon \sqrt{\delta}n $ in Definition~\ref{def:ao-am}.
Let $\bv{v}'=\Ab \Sbb \xb = \bv{v} \|\Ab \Sbb \xb \|_2$ and $\bv{v}'_o=\Ab_o\Sbb \xb$
and $\bv{v}'_m=\Ab_m \Sbb \xb$. Then, $\Ab_o \bv{v}'=\Ab_o \Ab_o\Sbb \xb
+\Ab_o \Ab_m \Sbb \xb =\Ab_o \Ab_o\Sbb \xb =\bv U_o\bv{\Lambda}_o\bv U_o^T\bv U_o\bv{\Lambda}_o\bv U_o^T \Sbb \xb$.
Recall that $\bv U_o$ contains all
eigenvectors of $\bv A$ corresponding to eigenvalues with magnitude at least $L=\epsilon \sqrt{\delta}n$.
Now, from Imported Lemma~\ref{lem:subspace-embedding-bounded} with $R=\epsilon^2 \sqrt{\delta}n$,
$\Sbb$ is a $\min(1/10,\epsilon)$-subspace embedding for $\bv U_o$.
Thus, for $s \geq \frac{c}{\epsilon^4 \delta}\log\frac{1}{\epsilon \delta}$,
with probability at least $1-\delta$,
we have: $$\bv U_o^T\bv U_o=\bv{I}=\bv U_o^T \bar{\bv S} \bar{\bv S}^T \bv U_o+\bv{E}_3$$
where $\|\bv{E}_3\|_2 \leq \epsilon $.
So, we get
\begin{equation}\label{Eq:3}
 \Ab_o \bv{v}'=\bv U_o\bv{\Lambda}_o\bv U_o^T\Sbb \Sbb^T \bv U_o\bv{\Lambda}_o\bv U_o^T\Sbb \xb+\bv U_o\bv{\Lambda}_o \bv{E}_3 \bv{\Lambda}_o \bv U_o^T \Sbb \xb
\end{equation}
Now, observe that the second term in~\eqref{Eq:3} can be written as
$\bv U_o\bv{\Lambda}_o \bv{E}_3 \bv{\Lambda}_o \bv U_o^T \Sbb \xb
= \bv U_o\bv{\Lambda}_o \bv{E}_3 \bv U_o^T\bv U_o\bv{\Lambda}_o \bv U_o^T \Sbb \xb=
\bv U_o\bv{\Lambda}_o \bv{E}_3 \bv U_o^T \Ab_o \Sbb \xb=\bv U_o\bv{\Lambda}_o \bv{E}_3 \bv U_o^T(\Ab-\Ab_m)\Sbb \xb
=\bv U_o\bv{\Lambda}_o \bv{E}_3 \bv U_o^T \Ab \Sbb \xb $
where the last step follows from the fact that $\bv U_o^T\Ab_m=0$ and so the second term cancels out. Next, observe that the first term in~\eqref{Eq:3} can be written as
\begin{align*}
 \bv U_o\bv{\Lambda}_o\bv U_o^T\Sbb \Sbb^T \bv U_o\bv{\Lambda}_o\bv U_o^T\Sbb \xb &=\bv U_o\bv{\Lambda}_o\bv U_o^T\Sbb (\Sbb^T\Ab_o\Sbb) \xb \\
 &=\bv U_o\bv{\Lambda}_o\bv U_o^T\Sbb(\lambda \xb +\bv{e}_1) \\
 &= \lambda\bv U_o\bv{\Lambda}_o\bv U_o^T\Sbb \xb + \bv U_o\bv{\Lambda}_o\bv U_o^T\Sbb \bv{e}_1 \\
 &= \lambda\Ab_o \Sbb \xb+ \bv U_o\bv{\Lambda}_o\bv U_o^T\Sbb \bv{e}_1 \\
 &= \lambda \bv{v}'_o + \bv U_o\bv{\Lambda}_o\bv U_o^T\Sbb \bv{e}_1 \\
 &= \lambda \bv{v}' + \bv U_o\bv{\Lambda}_o\bv U_o^T\Sbb \bv{e}_1 -\lambda \bv{v}'_m
\end{align*}
where the second step follows Imported Lemma~\ref{lem:am} and the last second to last
and last step follow from the definition of $\bv{v}'_o$ and $\bv{v}'$ respectively.
Plugging the bounds back into~\eqref{Eq:3}, and dividing both sides
of~\eqref{Eq:3} by $\|\Ab \Sbb \xb \|_2$, we get:
\begin{align}\label{Eq:final}
 \Ab_o \bv{v} = \lambda \bv{v}
 + \frac{\bv U_o\bv{\Lambda}_o\bv U_o^T\Sbb \bv{e}_1 }{\|\Ab \Sbb \xb \|_2}
 -\frac{\lambda \bv{v}'_m }{\|\Ab \Sbb \xb \|_2}
 + \frac{\bv U_o\bv{\Lambda}_o \bv{E}_3 \bv U_o^T \Ab \Sbb \xb }{\|\Ab \Sbb \xb \|_2}.
\end{align}
We will now bound each of the last three terms on the right-hand side above individually.

\textbf{First term:} For the first term,
we have $\bigg\|\frac{\bv U_o\bv{\Lambda}_o\bv U_o^T\Sbb \bv{e}_1 }{\|\Ab \Sbb \xb \|_2}\bigg\|_2
=\bigg\|\frac{\Ab_o\Sbb\Sbb^T\Ab_m\Sbb\xb}{\|\Ab \Sbb \xb \|_2} \bigg\|_2$.
From Imported Lemma~\ref{lem:am}, we have $\|\Sbb^T \bv A \Sbb \|_2 \leq \epsilon^2 n$
since $s \geq \frac{c \log n}{\epsilon^4 \delta}$.
Thus, the numerator can be bounded using~\eqref{eq:aoam1} as:
\begin{align}\label{eq:aoam}
 \|\Ab_o\Sbb\Sbb^T\Ab_m\Sbb\xb \|_2 \leq 2\epsilon^2 n\| \bv A\|_2 \leq \epsilon^2 n^2.
\end{align}

From Lemma~\ref{Lem:asxl}, the denominator can be lower bounded as:
\begin{align}\label{eq:denom1}
 \|\Ab \Sbb \xb \|_2 \geq c|\lambda| \geq c\epsilon n.
\end{align}
Thus, we get that
$\bigg\|\frac{\bv U_o\bv{\Lambda}_o\bv U_o^T\Sbb \bv{e}_1 }{\|\Ab \Sbb \xb \|_2}\bigg\|_2
\leq c \epsilon n$
for some constant $c$.

\textbf{Second term:} Next, for bounding the second term,
observe that
$\bigg \|\frac{\lambda \bv{v}'_m }{\|\Ab \Sbb \xb \|_2} \bigg \|_2=\frac{\|\lambda\Ab_m \Sbb \xb \|_2}{\|\Ab \Sbb \xb \|_2}$.
We again bound the numerator and denominator separately.
From Lemma~\ref{Lem:asxl}, we have $\|\Ab \Sbb \xb \|_2 \geq c |\lambda|$ for some constant $c$.
For the numerator,
we have $\|\lambda\Ab_m \Sbb \xb \|_2=|\lambda| \|\Ab_m \Sbb \xb \|_2 $.
From Lemma~\ref{lem:asx}, we have $\|\Ab_m \Sbb \|_2 \leq \epsilon^2 n$.
So, we get $\|\lambda\Ab_m \Sbb \xb \|_2 \leq |\lambda|\epsilon n$.
Finally, using the bounds on the numerator and denominator, we get $\bigg \|\frac{\lambda \bv{v}'_m }{\|\Ab \Sbb \xb \|_2} \bigg \|_2 \leq c'\epsilon n$ for some cpnstant $c'$.

\textbf{Third term:} Finally, for the third term in~\eqref{Eq:final},
observe that
$\bigg \| \frac{\bv U_o\bv{\Lambda}_o \bv{E}_3
\bv U_o^T \Ab \Sbb \xb }{\|\Ab \Sbb \xb \|_2}\bigg \|_2
\leq \frac{\|\bv U_o \|_2 \|\bv{\Lambda}_o \|_2 \|\bv{E}_3 \|_2 \|\bv U_o \|_2 \|\Ab \Sbb \xb \|_2}{\|\Ab \Sbb \xb \|_2}$
using spectral submultiplicativity.
Now, $\|\bv U_o \|_2=1$, $\|\bv{\Lambda}_o \|_2 \leq n$ and $\|\bv{e}_3 \|_2\leq \epsilon $
as stated in~\eqref{Eq:3}. So, we have $\bigg \| \frac{\bv U_o\bv{\Lambda}_o \bv{E}_3 \bv U_o^T \Ab \Sbb \xb }{\|\Ab \Sbb \xb \|_2}\bigg \|_2 \leq \epsilon n$.

Using the bounds on the three terms in~\eqref{Eq:final},
we get $\|\Ab_o \bv{v} - \lambda \bv{v}\|_2 \leq C_1\epsilon n$
for some constant $C_1$. Thus, we have $\bv A_o \bv{v}=\lambda \bv{v}+\bv{e}_4$
where $\|\bv{e}_4 \|_2 \leq C_1\epsilon n$.
Finally, using this in ~\eqref{Eq:2},
we get $\Ab\bv{v}=\Ab_o\bv{v}_o +\bv{e}_2+\bv{e}_4$
where $\| \bv{e}_2+\bv{e}_4\| \leq \|\bv{e}_2 \|_2+\|\bv{e}_4 \|_2 \leq C_2\epsilon n$.
This gives us the final bound after adjusting $\epsilon$ by constant factors.
\end{proof}

We now state the final theorem for the approximating
all the eigenvectors corresponding to the top eigenvalues using Algorithm~\ref{alg:eigvec}.

\begin{reptheorem}{thm:main_bounded}
Let $\bv A \in \R^{n \times n}$ be a symmetric matrix such that
$\|\Ab \|_{\infty} \leq 1$, and let $\epsilon, \delta \in (0,1)$.
Let Algorithm~\ref{alg:eigvec} output the pairs
$\{(\tilde \lambda_j, \bv v_j)\}_{j=1}^{m}$, where $\tilde \lambda_j \in \R$, $\bv v_j \in \R^n$,
$\|\bv v_j \|_2 = 1$, and let $m_+$ (resp.\ $m_-$) be the number of output pairs with
$\tilde\lambda_j > 0$ (resp.\ $\tilde\lambda_j < 0$), so that $m = m_+ + m_-$. Let
$\lambda_1(\bv A) \geq \ldots \geq \lambda_n(\bv A)$ denote the eigenvalues of $\bv A$. Then,
with probability at least $1-\delta$:
\begin{enumerate}
\item There is a bijection $\pi$ between the output pairs and the $m_+$ largest together with the
$m_-$ smallest eigenvalues of $\bv A$, i.e.\ the multiset
$\{\lambda_1(\bv A), \ldots, \lambda_{m_+}(\bv A)\} \cup \{\lambda_{n-m_-+1}(\bv A), \ldots, \lambda_n(\bv A)\}$,
such that every output pair $j \in [m]$ satisfies
\begin{align*}
 |\lambda_{\pi(j)}(\bv A)-\tilde \lambda_{j}| \leq \epsilon n \text{, and } \quad
 \|\bv A \bv v_{j}-\lambda_{\pi(j)}(\bv A) \bv v_{j} \|_2 \leq \epsilon n.
\end{align*}
\item Every eigenvalue of $\bv A$ outside this matched multiset has magnitude less than $\epsilon n$.
\end{enumerate}
Moreover, Algorithm~\ref{alg:eigvec} samples
$\frac{c \log n}{\epsilon^4 \delta}
\log^3 \frac{1}{\epsilon \delta}$
columns from $\bv A$ in expectation,
for some sufficiently large constant $c$.
\end{reptheorem}
\begin{proof}
We condition on the events from Imported Theorem~\ref{thm:main-eigval} (applied
with error parameter $\frac{\epsilon}{4}$)
and Lemma~\ref{lem:err1} (applied with error parameter
$\frac{\epsilon}{2}$), each of which holds with probability at least
$1-\delta/2$, and take a union bound.
First,
if $\tilde\lambda_1 \geq \ldots \geq \tilde\lambda_{|S|}$ are the
eigenvalues of $\bv{\bar S}^T\bv A\bv{\bar S}$, the signed sorted matching of Imported
Theorem~\ref{thm:main-eigval} assigns the $i$-th largest \emph{positive} eigenvalue of
$\bv{\bar S}^T\bv A\bv{\bar S}$ to $\lambda_i(\bv A)$, and the $i$-th smallest \emph{negative} one
to $\lambda_{n-i+1}(\bv A)$, with
$|\lambda - \tilde\lambda| \leq \frac{\epsilon n}{4}$ for every matched pair $\lambda, \tilde \lambda$
(all remaining
eigenvalues of $\bv A$ are matched to $0$).
Define $\pi$ by this signed sorted matching: the output pair
carrying the $i$-th largest positive eigenvalue is mapped to $\lambda_i(\bv A)$ for $i \in [m_+]$,
and the pair carrying the $i$-th smallest negative eigenvalue to $\lambda_{n-i+1}(\bv A)$ for
$i \in [m_-]$. For each positive output pair,
$\lambda_{\pi(j)}(\bv A) \geq \tilde\lambda_j - \frac{\epsilon n}{4} \geq \frac{\epsilon n}{4} > 0$,
and symmetrically each negative output pair is matched to an eigenvalue
$\leq -\frac{\epsilon n}{4} < 0$; hence the two matched sets are disjoint and $\pi$ is a bijection
onto the multiset $\{\lambda_1(\bv A), \ldots, \lambda_{m_+}(\bv A)\} \cup
\{\lambda_{n-m_-+1}(\bv A), \ldots, \lambda_n(\bv A)\}$.

Second, from Lemma~\ref{lem:err1},
 for $s \geq \frac{c' \log n}{\epsilon^4\delta}\log^3\frac{1}{\epsilon\delta}$
with $c'$ a sufficiently large constant, every eigenpair $(\bv x, \tilde\lambda)$ of
$\bv{\bar S}^T \bv{A} \bv{\bar S}$ with $|\tilde\lambda| \geq \frac{\epsilon n}{2}$ satisfies
$\|\bv A \bv v - \tilde\lambda \bv v\|_2 \leq \frac{\epsilon n}{2}$ for
$\bv v = \frac{\bv A \Sbb \bv x}{\|\bv A \Sbb \bv x\|_2}$.

\emph{The two bounds.} For every output pair $j$, the first event gives
$|\lambda_{\pi(j)}(\bv A) - \tilde\lambda_j| \leq \frac{\epsilon n}{4} \leq \epsilon n$, and by
the second event together with the triangle inequality,
\[
\|\bv A \bv v_{j}-\lambda_{\pi(j)}(\bv A) \bv v_{j} \|_2
\leq \|\bv A \bv v_{j}-\tilde \lambda_{j} \bv v_{j} \|_2
+ |\tilde \lambda_{j}-\lambda_{\pi(j)}(\bv A)| \,\|\bv v_{j}\|_2
\leq \frac{\epsilon n}{2} + \frac{\epsilon n}{4} \leq \epsilon n.
\]

\emph{No large eigenvalue is missed.} Suppose $\lambda_i(\bv A) \geq \epsilon n$ for some
$i > m_+$. Its matched value under the signed sorted matching satisfies
$\tilde\lambda \geq \lambda_i(\bv A) - \frac{\epsilon n}{4} \geq \frac{3\epsilon n}{4} >
\frac{\epsilon n}{2}$; in particular $\tilde\lambda > 0$, so it is the $i$-th largest positive
eigenvalue of $\bv{\bar S}^T\bv A\bv{\bar S}$ and is always output by the algorithm.
Symmetrically, every eigenvalue $\lambda_{n-i+1}(\bv A) \leq
-\epsilon n$ satisfies $i \leq m_-$. Hence every eigenvalue outside the matched set has
magnitude less than $\epsilon n$.

\end{proof}

We now state the final theorem regarding approximating
the top eigenvector using Algorithm~\ref{alg:eigvec}.
\begin{reptheorem}{thm:eigtop}
Let $\bv A \in \R^{n \times n}$ be a symmetric matrix such that $\|\Ab \|_{\infty} \leq 1$, with
eigenvalues $\lambda_1(\bv A) \geq \ldots \geq \lambda_n(\bv A)$, and let $\epsilon, \delta \in (0,1)$.
Set $s \geq \frac{c \log n}{\epsilon^2 \delta}
\log^5 \frac{1}{\epsilon \delta}$ in Algorithm~\ref{alg:eigvec}
for some sufficiently large constant $c$. Let $\tilde\lambda_1$ and $\tilde\lambda_2$ be the largest
positive and smallest negative eigenvalues output by Algorithm~\ref{alg:eigvec}, and let $\bv v_1$
and $\bv v_2$ be their corresponding approximate (unit) eigenvectors, respectively. If no positive
(resp.\ negative) eigenvalue is output by Algorithm~\ref{alg:eigvec}, set $\tilde\lambda_1 = 0$
(resp.\ $\tilde\lambda_2 = 0$). Then, with probability at
least $1-\delta$, the following hold simultaneously:
\begin{enumerate}
\item If $\|\bv A \|_2 \geq 2\epsilon n$ and
$\lambda_1(\bv A) \geq \frac{\|\bv A\|_2}{2}$, then $\tilde \lambda_1>0$, and:
 \begin{align*}
 |\lambda_1(\bv A)-\tilde \lambda_1| \leq \epsilon n \text{, and }\quad
 \|\bv A \bv v_1-\lambda_1(\bv A) \bv v_1 \|_2 \leq \epsilon n.
 \end{align*}
\item If $\|\bv A \|_2 \geq 2\epsilon n$ and
$\lambda_n(\bv A) \leq -\frac{\|\bv A\|_2}{2}$, then $\tilde \lambda_2<0$, and:
 \begin{align*}
 |\lambda_n(\bv A)-\tilde \lambda_2| \leq \epsilon n \text{, and }\quad
 \|\bv A \bv v_2-\lambda_n(\bv A) \bv v_2 \|_2 \leq \epsilon n.
 \end{align*}
\item If $\|\bv A\|_2 < 2\epsilon n$, $\tilde\lambda_1 = 0$ and $\tilde\lambda_2 = 0$, and for
\emph{any} unit vectors $\bv v_1$ and $\bv v_2$ the pairs $(\tilde\lambda_1, \bv v_1)$ and
$(\tilde\lambda_2, \bv v_2)$ satisfy
$|\lambda_1(\bv A) - \tilde\lambda_1| \leq 2\epsilon n$,
$|\lambda_n(\bv A) - \tilde\lambda_2| \leq 2\epsilon n$,
$\|\bv A \bv v_1 - \lambda_1(\bv A)\bv v_1\|_2 \leq 2\epsilon n$, and
$\|\bv A \bv v_2 - \lambda_n(\bv A)\bv v_2\|_2 \leq 2\epsilon n$ trivially.
\end{enumerate}
Moreover, Algorithm~\ref{alg:eigvec} samples $\frac{c \log n}{\epsilon^2 \delta}
\log^5 \frac{1}{\epsilon \delta}$ columns from $\bv A$ in expectation.
\end{reptheorem}
\begin{proof}

\noindent We first assume that $\|\bv A \|_2 \geq 2\epsilon n$ and prove items 1 and item 2.

\smallskip
\noindent \textbf{Assume} $\|\bv A \|_2 \geq 2\epsilon n.$
Since $\|\bv A \|_2=\max (|\lambda_1(\bv A)|, |\lambda_n(\bv A)|)$, at least one of
$\lambda_1(\bv A)$ or $\lambda_n(\bv A)$ must be $\geq \frac{\|\bv A \|_2}{2} \geq \epsilon n$
in magnitude.
We assume $\lambda_1(\bv A) \geq \frac{\|\bv A \|_2}{2} \geq \epsilon n$
and prove item 1.
Then, item 2 follows by a symmetric argument.
We condition on the following two events, each holding with probability at least $1-\delta/2$,
and take a union bound.
First, the event of Imported Theorem~\ref{thm:main-eigval} applied with error parameter
$\frac{\epsilon}{2}$: the signed sorted matching assigns the $i$-th largest positive eigenvalue
of $\bv{\bar S}^T\bv A\bv{\bar S}$ to $\lambda_i(\bv A)$ and the $i$-th smallest negative one to
$\lambda_{n-i+1}(\bv A)$, with each matched pair satisfying
$|\lambda - \tilde\lambda| \leq \frac{\epsilon n}{2}$.
Second, the conclusions of Lemma~\ref{lem:asx} and Lemma~\ref{lem:err_top}, applied with error
parameter $\frac{\epsilon}{2}$, for every eigenpair $(\bv x, \tilde\lambda)$ of
$\Sbb^T\bv A\Sbb$ satisfying $|\tilde\lambda| \geq \max\big(\frac{\epsilon n}{4},
\frac{\|\bv A\|_2}{4}\big)$: the proofs of these lemmas use the assumption that $\tilde\lambda$
is the largest magnitude eigenvalue of $\Sbb^T\bv A\Sbb$ only through the two bounds
$|\tilde\lambda| \geq \frac{\epsilon n}{4}$ and $|\tilde\lambda| = \Omega(\|\bv A\|_2)$, and
therefore apply, with the constant $c$ in the sample complexity adjusted, to every such
eigenpair simultaneously. In particular, for every such pair, the vector
$\bv v = \frac{\bv A\Sbb\bv x}{\|\bv A\Sbb\bv x\|_2}$ satisfies
$\|\bv A \bv v - \tilde\lambda \bv v\|_2 \leq \frac{\epsilon n}{2}$.

\emph{Eigenvalue guarantee.}
As $\lambda_1(\bv A) \geq \epsilon n$, the largest magnitude
eigenvalue $\tilde\lambda$ of $\Sbb^T\bv A\Sbb$ matched to
$\lambda_1(\bv A)$ satisfies $\tilde\lambda \geq \lambda_1(\bv A) - \frac{\epsilon n}{2} \geq
\frac{\epsilon n}{2} > 0$, so it clears the threshold and a positive eigenvalue is satisfying
error output (thus
$\tilde\lambda_1 > 0$).

\emph{Eigenvector guarantee.} We have $ \|\bv A \bv v_1-\lambda_1(\bv A) \bv v_1\|_2 \leq \frac{\epsilon n}{2}$
and $|\tilde\lambda_1-\lambda_1(\bv A)| \leq \frac{\epsilon n}{2}$ as explained above. Then, byy triangle inequality:
\begin{align*}
 \|\bv A \bv v_1-\lambda_1(\bv A) \bv v_1\|_2 \leq \|\bv A \bv v_1-\tilde\lambda_1 \bv v_1\|_2
+ |\tilde\lambda_1-\lambda_1(\bv A)| \,\|\bv v_1\|_2 \leq \epsilon n
\end{align*}
This completes the proof of item 1. We can similarly prove item 2 when $\|\bv A \|_2 \geq 2\epsilon n$.

\smallskip
\noindent \textbf{Assume} $\|\bv A \|_2 < 2\epsilon n.$ We now prove item 3.
Both $|\lambda_1(\bv A))|$ and $|\lambda_n(\bv A))|$ are $<\epsilon n$ in this case.
Then, both the largest magnitude positive and negative eigenvalue of $\Sbb^T \bv A \Sbb$
will be $<\frac{\epsilon n}{2}$ in magnitude,
with probability at least $1-\delta$, by choosing the error parameter $\frac{\epsilon}{2}$.
So, we have $\tilde \lambda_1=0$ and $\tilde \lambda_2=0$. Moreover, we
always have $\|\bv A \bv v-\lambda_i(\bv A)\bv A \|_2$ for any eigenvalue $\lambda_i(\bv A)$
after adjusting $\epsilon$ by constant factors.
\end{proof}

\paragraph{Fast Computation of Entries of Approximate Eigenvector.}
Finally, we state the following lemma, which
shows that the entries of the approximate eigenvectors given
by Algorithm~\ref{alg:eigvec} can be computed
in $\textrm{poly}(\log n, 1/\epsilon, 1/\delta)$ time.

\begin{repcorollary}{cor:local}
Let $\bv A \in \R^{n \times n}$ be a symmetric matrix such that
$\|\Ab \|_{\infty} \leq 1$, and let $\epsilon, \delta \in (0,1)$.
Then, there is an algorithm that,
given $\bv A$ and any index
$j \in [n]$, reads
$O(s^2)$
entries in expectation from $\bv A$ and takes $O(s^{\omega})$ time, where $s=O\left(\frac{\log n}{\epsilon^4 \delta}
\log^3 \frac{1}{\epsilon \delta} \right)$ and $\omega$ is the matrix multiplication exponent, and outputs the pairs
$\{(\tilde \lambda_i, \bv v_{ij})\}_{i=1}^{m}$, where $\bv v_{ij}$ is the
$j$\textsuperscript{th} entry of a vector $\bv v_i \in \R^{n}$ with $\|\bv v_i \|_2 \in [\frac{1}{C}, C]$, for an absolute
constant $C > 1$. The pairs
$\{(\tilde \lambda_i, \bv v_{i})\}_{i=1}^{m}$
satisfy the error guarantees of Theorem~\ref{thm:main_bounded}
with probability at least $1-\delta$.
\end{repcorollary}
\begin{proof}

Recall that
the $i$-th output vector from Theorem~\ref{thm:main_bounded} is
$\bv v_i=\frac{\bv A \Sbb \bv x_i}{\|\bv A \Sbb \bv x_i \|_2}$,
where $(\bv x_i, \tilde\lambda_i)$ is the $i$-th returned eigenpair of
$\bv{\bar S}^T \bv{A} \bv{\bar S}$.
The algorithm reads the $s \times s$ submatrix $\Sbb^T \bv A \Sbb$, computes the eigenvectors and eigenvalues $(\bv x_i, \tilde\lambda_i)$ in $O(s^{\omega})$ time, and outputs $\frac{\bv A_{j,:}\Sbb \bv x_i}{|\tilde \lambda_i|}$ for each $i \in [m]$. The numerator $\bv A_{j,:}\Sbb \bv x_i$,
can be computed for all $i \in [m]$ by reading the (at most) $s$ entries of the
$j$\textsuperscript{th} row of $\bv A$.

By Lemma~\ref{Lem:asxl}, with probability at least $1-\delta$ we have
$\frac{1}{C}|\tilde \lambda_i| \leq \|\bv A \Sbb \bv x_i \|_2 \leq C |\tilde \lambda_i|$
for all $i \in [m]$. Let $\bv v'=\frac{\bv A \Sbb \bv x}{|\lambda|}$. Then, $\bv v'=c_1 \bv v$ for a constant $c_1 \in [\frac{1}{C}, C]$. Note that, from Theorem~\ref{thm:main_bounded} ,$\|\bv A \bv v' - \lambda \bv v'\|_2 =
c_1\|\bv A \bv v - \lambda \bv v \|_2 \leq c_1\epsilon n$. Thus, $\bv v'$ also satisfies the guarantee of Theorem~\ref{thm:main_bounded}
after adjusting $\epsilon$ by a constant $C$.
\end{proof}

\subsection{Eigenvector Approximation Using Squared Column-Norm Sampling}\label{app:sqsamp}

\begin{algorithm}[t]
\caption{Eigenvector Approximation Using Squared Column-Norm Sampling}
\label{alg:eigvec-rowsamp}
\begin{algorithmic}[1]
\State {\bfseries Input:} Symmetric $\bv A \in \mathbb{R}^{n\times n}$,
row norms $\{\|\bv A_{i,:}\|_2\}_{i=1}^n$,
accuracy $\epsilon \in (0,1)$, expected sample size $s$
\State \label{step:pois} Sample $K \sim \mathrm{Poisson(s)}$.
\State \label{step:samp} \textbf{(Sampling)} Sample $K$ times from the distribution of the squared
column norms $(p_1,....,p_n)$ where $p_i=\frac{\|\bv A_i \|^2_2}{\| \bv A\|^2_F}$. Let $S=\{i_1, \ldots, i_K \}$ ($i_p \in [n]$ for all $p \in [K]$) be the ordered list of sampled indices.

\State \label{step:zero} \textbf{(Forming $\bv A'_{[:,S]}$)}
Form the matrix $\bv A'_{:,S} \in \mathbb{R}^{n \times K}$ such that for all $(j,q) \in [n] \times [K]$ and for a sufficiently large absolute constant $c>0$:
\[
\left(\bv A'_{:,S} \right)_{jq} = \begin{cases}
0 & \text{if }
\|\bv A_{j,:}\|_2^2 \|\bv A_{i_{q},:}\|_2^2
\leq \dfrac{\epsilon^2 \|\bv A\|_F^2 |\bv A_{j i_{q}}|^2}{c\log^4 n},\\[4pt]
\frac{\bv A_{j i_{q}}}{\sqrt{sp_{i_q}}} & \text{otherwise. }
\end{cases}
\]
\State \label{step:zero1} \textbf{(Forming $\bv A'_{S,S}$)} Form the matrix $\bv A'_{S,S} \in \mathbb{R}^{K \times K}$ such that for all $(p,q) \in [K] \times [K]$ and and for the same constant $c$ in the previous step:
\[
\left(\bv A'_{S,S} \right)_{pq} = \begin{cases}
0 & \text{if } p = q,\\[4pt]
0 & \text{if } p \neq q \text{ and }
\|\bv A_{i_{p},:}\|_2^2 \|\bv A_{i_{q},:}\|_2^2
\leq \dfrac{\epsilon^2 \|\bv A\|_F^2 |\bv A_{i_{p}i_{q}}|^2}{c\log^4 n},\\[4pt]
\frac{\bv A_{i_{p}i_{q}}}{s\sqrt{p_{i_p}p_{i_q}}} & \text{otherwise, }
\end{cases}.
\]

\State Compute unit norm eigenvectors $\bv x_1, \bv x_2, \ldots, \bv x_{|S|}$
of $\bv A'_{S,S}$
corresponding to its eigenvalues
$\lambda_1 \ge \lambda_2 \ge \cdots \ge \lambda_{|S|}$.

\State For all $i$ with $|\lambda_i| \geq \frac{\epsilon \|\bv A\|_F}{2}$,
compute the approximate eigenvector
$\bv v_i = \frac{\left(\bv A'_{:,S} \right)\bv x_i}{\left\|\left(\bv A'_{:,S} \right) \bv x_i \right\|_2}.$

\State {\bfseries Return:} All eigenpairs $(\lambda_i, \bv v_i)$ computed in the previous step.
\end{algorithmic}
\end{algorithm}

\begin{algorithm}[t]
\caption{Restricted Squared Column Norm Sampling}
\label{alg:rest-eigvec-sqsamp}
\begin{algorithmic}[1]
\State {\bfseries Input:} Symmetric $\bv A \in \mathbb{R}^{n\times n}$,
row norms $\{\|\bv A_{i,:}\|_2\}_{i=1}^n$,
accuracy $\epsilon \in (0,1)$,
expected sample size $s$ such that $\frac{s\|\bv A_i \|^2_2}{\|\bv A \|_F^2} \leq 1$ for $i \in [n]$.

\State Let $\Sbb$ be the scaled sampling matrix
which samples column $i$
with probability $p_i=\frac{\|\bv A_{i,:}\|_2^2}{\|\bv A\|_F^2}$
and scales it by $\frac{1}{\sqrt{sp_i}}$.
\State \textbf{(Zeroing out)}
Implicitly form the matrix $\bv A'$ from $\bv A$ by setting
\[
\bv A'_{ij} = \begin{cases}
0 & \text{if } i = j \text{ and } \|\bv A_{i,:}\|_2^2 < \dfrac{\epsilon^2}{4}\|\bv A\|_F^2,\\[4pt]
0 & \text{if } i \neq j \text{ and }
\|\bv A_{i,:}\|_2^2 \|\bv A_{j,:}\|_2^2
\leq \dfrac{\epsilon^2 \|\bv A\|_F^2 |\bv A_{ij}|^2}{c\log^4 n},\\[4pt]
\bv A_{ij} & \text{otherwise, }
\end{cases}
\]
where $c > 0$ is a sufficiently large absolute constant.

\State Compute unit norm eigenvectors $\bv x_1, \bv x_2, \ldots, \bv x_{|S|}$
of $\Sbb^T \bv A' \Sbb$
corresponding to its eigenvalues
$\lambda_1 \ge \lambda_2 \ge \cdots \ge \lambda_{|S|}$.

\State For all $i$ with $|\lambda_i| \geq \frac{\epsilon \|\bv A\|_F}{2}$,
compute the approximate eigenvector
$\bv v_i = \frac{\bv A' \Sbb \bv x_i}{\| \bv A' \Sbb \bv x_i \|_2}.$

\State {\bfseries Return:} All eigenpairs $(\lambda_i, \bv v_i)$ computed in the previous step.
\end{algorithmic}
\end{algorithm}

We now analyze Algorithm~\ref{alg:eigvec-rowsamp}
which samples columns of $\bv A$ with
probability proportional to their squared column norms. This sampling model has been previously studied in classical work on randomized linear algebraic methods ~\cite{frieze2004fast,
drineas2004clustering, drineas2006fast}, and recently under the \emph{quantum-inspired} machine learning
literature~\cite{tang2019quantum, chia2020sampling, gilyen2018quantum,chepurko2022quantum}.
Instead of directly analyzing Algorithm~\ref{alg:eigvec-rowsamp},
following~\cite{swartworth2025tight},
we will first analyze
a slightly modified version of the algorithm,
Algorithm~\ref{alg:rest-eigvec-sqsamp}, where
we assume that $\frac{s\| \bv A_{i,:}^2\|_2^2}{\| \bv A \|_F^2} \leq 1$ for all $i \in [n]$
for our choice of $s$.
Note that we can make a new matrix where
the rows/columns of $\bv A$ are repeated
$M$ times and scaled by $\frac{1}{\sqrt{M}}$ for
some $M \geq s$
to ensure that
the assumption $\frac{s\| \bv A_{i,:}^2\|_2^2}{\| \bv A \|_F^2} \leq 1$
for all rows and columns of this new matrix.
As we will show in Theorem~\ref{thm:sqnorm1}, this new \emph{inflated}
matrix has the same nonzero spectrum as that of $\bv A$ and can be used
to approximate the eigenvectors and eigenvalues of $\bv A$. Basically, we will
show that running Algorithm~\ref{alg:eigvec-rowsamp} with
$\bv A$ as input is almost the same
as running Algorithm~\ref{alg:rest-eigvec-sqsamp}
with the \emph{inflated} matrix as input.

We first state some imported results from~\cite{Bhattacharjee:2021wl}
and~\cite{swartworth2025tight}. We will again use
same definitions of $\bv A_o$ and $\bv A_m$ as in
Definition~\ref{def:ao-am}
but with the split at $L=\epsilon\sqrt{\delta}\|\bv A \|_F$.
Following~\cite{Bhattacharjee:2021wl}, we will
use the following entrywise zeroing out procedure to
ensure the middle eigenvalues are bounded after sampling:
\begin{definition}[Definition 7.1
of~\cite{swartworth2025tight}]\label{def:zeroing}
Given a symmetric matrix $\bv A \in \mathbb{R}^{n \times n}$,
let $\bv A'$ be the matrix formed by zeroing out all entries $A_{ij}$ satisfying one of the following conditions:
\begin{enumerate}
 \item $i = j$ and $\|\bv A_{i,:}\|_2^2 < \dfrac{\epsilon^2}{4}\|\bv A\|_F^2$,
 \item $i \neq j$ and $\|\bv A_{i,:}\|_2^2 \|\bv A_{j,:}\|_2^2 < \dfrac{\epsilon^2 \|\bv A\|_F^2 |A_{ij}|^2}{c \log^4 n}$,
\end{enumerate}
for a sufficiently large absolute constant $c$.
\end{definition}
In Lemma 11 of~\cite{Bhattacharjee:2021wl}, it is proven
that $\bv A'$ is spectrally close to $\bv A$:
\begin{implemma}[From Lemma 11 of~\cite{Bhattacharjee:2021wl}]\label{lem:aprime}
For $\bv A'$ as defined in~\ref{def:zeroing}, we have:
\begin{align*}
 \|\bv A'-\bv A \|_2 \leq \epsilon \| \bv A\|_F.
\end{align*}
\end{implemma}
Thus, for any unit vector $\bv v$,
it is enough to show that $\|\bv A' \bv v-\lambda \bv v \|_2 \leq \epsilon n$
to claim that it also satisfies a similar bound for
$\bv A$ after adjusting by constants.
Hence, in the subsequent results,
we will prove results prove everything using $\bv A'$.
Most of the proofs in this section
are very similar to those in Sections~\ref{sec:asx2}
and~\ref{app:uniform samp}.
So we will only provide sketches of the proofs and refer to the proofs in
Section~\ref{sec:asx2} and~\ref{app:uniform samp} for more details.

As in Section~\ref{sec:asx2}, first we show that $\| \bv A \Sbb \bv x\|_2$
is close to $|\lambda|$ where $\lambda$ is the
eigenvalue of $\Sbb^T \bv A' \Sbb$
corresponding to the eigenvector $\bv x$.
We start by applying Lemma~\ref{Lem:aosum}
to the squared column-norm sampling setting.
Note that we define $\bv A'_o$ and $\bv A'_m$
according to definition~\ref{def:ao-am}
by setting the threshold as $L=\epsilon \sqrt{\delta}\|\bv A \|_F$.
\begin{lemma}\label{Lem:spec_norm2}
Let $\Sbb$ be the scaled sampling matrix
which samples column $i$ of $\bv A$ with
probability $p_i=\min\left(1, \frac{s\|\bv A_{i,:} \|^2_2}{\|\bv A \|^2_F}\right)$
 and rescales it by $\frac{1}{\sqrt{p_i}}$.
Let $\bv A'$ be the matrix with zeroed out entries as defined in
Definition~\ref{def:zeroing}.
Then, for $s \geq \frac{c}{\epsilon^2 \delta}
\log^3 \frac{1}{\epsilon}$, with probability
at least $1-\delta$, we have:
\begin{align*}
\|\Sbb^T\Ab'_o \Sbb \Sbb^T \Ab'_o\Sbb-\Sbb^T(\Ab_o')^2\Sbb \|_2
\leq \epsilon \|\bv A \|_F \|\bv A \|_2
\end{align*}
\end{lemma}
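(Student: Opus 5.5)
This follows the proof of Lemma~\ref{Lem:spec_norm1}, with uniform sampling replaced by squared column-norm sampling and $\bv A$ replaced by the zeroed-out matrix $\bv A'$. Lemma~\ref{Lem:aosum} is purely deterministic. Given that $\Sbb$ satisfies Property~\ref{ass:subspace-embedding} with $R=L$ for the eigen-decomposition of the matrix in question, it yields
\[
\|\Sbb^T\Ab'_o \Sbb \Sbb^T \Ab'_o\Sbb-\Sbb^T(\Ab_o')^2\Sbb \|_2 \le C L r \|\bv A'\|_2,
\qquad r=\lceil \log(\|\bv A'\|_2/L)\rceil .
\]
So the plan has three steps. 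First, verify Property~\ref{ass:subspace-embedding} for the outlying eigenvectors of $\bv A'$ under this sampling. Second, bound $r$ and $\|\bv A'\|_2$ in terms of $\|\bv A\|_F$ and $\|\bv A\|_2$. Third, rescale $\epsilon$.

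\textbf{Step 1 (subspace embedding).} Set $L=R=\epsilon\sqrt{\delta}\|\bv A\|_F$. I would invoke Imported Lemma~\ref{lem:subspace-embedding-norm}, which needs $s \ge c\|\cdot\|_F^2 \log(1/\epsilon\delta)/R^2 = \tilde O(1/\epsilon^2\delta)$. The subtlety, and the step I expect to be the main obstacle, is a mismatch. The sampling probabilities are proportional to the squared row norms of $\bv A$, whereas the subspace to be embedded belongs to $\bv A'$. I would reprove the embedding from incoherence rather than try to quote it for $\bv A'$ directly.

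Let $\bv u$ be an eigenvector of $\bv A'$ with $|\lambda|\ge\lambda_0$. Then
\[
|\bv u_i| = |\bv A'_{i,:}\bv u|/|\lambda| \le \|\bv A'_{i,:}\|_2/\lambda_0 \le \|\bv A_{i,:}\|_2/\lambda_0 ,
\]
where the last step uses that zeroing only removes entries, so $\|\bv A'_{i,:}\|_2 \le \|\bv A_{i,:}\|_2$. Consequently the leverage scores of $\bv U'_{\ge\lambda_0}$ satisfy $\|(\bv U'_{\ge\lambda_0})_{i,:}\|_2^2 \le \|\bv A_{i,:}\|_2^2/\lambda_0^2 = p_i \|\bv A\|_F^2/\lambda_0^2$. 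This is exactly the input the proof of Imported Lemma~\ref{lem:subspace-embedding-norm} uses: matrix Chernoff sampling with probabilities dominating the leverage scores up to the factor $\|\bv A\|_F^2/\lambda_0^2$, union bounded over dyadic levels. The same argument therefore goes through with $\bv A'$ in place of $\bv A$, still with sampling by the row norms of $\bv A$.

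\textbf{Step 2 (bounding $r$ and $\|\bv A'\|_2$).} By Imported Lemma~\ref{lem:aprime}, $\|\bv A'\|_2 \le \|\bv A\|_2 + \epsilon\|\bv A\|_F \le 2\|\bv A\|_F$. Hence $r \le \lceil\log(2/\epsilon\sqrt\delta)\rceil = O(\log(1/\epsilon\delta))$. Substituting into Lemma~\ref{Lem:aosum} gives the bound
\[
C\epsilon\sqrt\delta\,\|\bv A\|_F \log(1/\epsilon\delta)\,\big(\|\bv A\|_2+\epsilon\|\bv A\|_F\big).
\]

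The extra additive term $\epsilon\|\bv A\|_F$ is the only delicate point. If $\|\bv A\|_2 \ge \epsilon\|\bv A\|_F$, it is absorbed into a factor of $2\|\bv A\|_2$. Otherwise $\bv A'$ has no eigenvalue of magnitude at least $L$ (after constant adjustment), so $\bv A'_o = 0$ and the claim is trivial. Alternatively, one can simply state the bound with $\|\bv A'\|_2$, which is all that downstream uses require.

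\textbf{Step 3 (rescaling).} Replace $\epsilon$ by $\epsilon/(C\log(1/\epsilon\delta))$ and union bound the failure probabilities. The sample requirement $\tilde O(1/\epsilon^2\delta)$ then becomes $\frac{c}{\epsilon^2\delta}\log^3\frac{1}{\epsilon\delta}$, matching the statement.
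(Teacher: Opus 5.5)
Your overall route is the paper's: Lemma~\ref{Lem:aosum} applied to $\bv A'$, the norm-sampling embedding of Imported Lemma~\ref{lem:subspace-embedding-norm}, the bound $r=O(\log(1/\epsilon\delta))$, and a rescaling of $\epsilon$.

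Your Step~1 is more careful than the paper. The paper cites Imported Lemma~\ref{lem:subspace-embedding-norm} without noting that the probabilities come from the row norms of $\bv A$, while the subspace to be embedded is that of $\bv A'$. The facts $\|\bv A'_{i,:}\|_2\le\|\bv A_{i,:}\|_2$ and $\|\bv A'\|_F\le\|\bv A\|_F$ are exactly what reconcile the two. One fix is needed there. Get the subspace leverage bound from $(\bv U'_{\ge\lambda_0})_{i,:}=\bv A'_{i,:}\bv U'_{\ge\lambda_0}(\bv\Lambda'_{\ge\lambda_0})^{-1}$, which gives $\|\bv A'_{i,:}\|_2/\lambda_0$ at once. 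It does not follow by summing your per-eigenvector bound, which loses a factor of the dimension.

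The real gap is the ``otherwise'' case of Step~2. From $\|\bv A\|_2<\epsilon\|\bv A\|_F$ you only get $\|\bv A'\|_2<2\epsilon\|\bv A\|_F$. But $\bv A'_o$ is cut at $L=\epsilon\sqrt{\delta}\|\bv A\|_F$, so the mismatch is at least a factor $\sqrt{\delta}$, not a constant. Hence $\bv A'_o$ need not vanish. In the window $\|\bv A\|_2\ll L\le\|\bv A'\|_2$, Lemma~\ref{Lem:aosum} only gives $CLr\|\bv A'\|_2$, which is not $O(\epsilon\|\bv A\|_F\|\bv A\|_2)$.

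No argument can close this window, because zeroing can raise the spectral norm by $\Theta(\epsilon\|\bv A\|_F)$. Here is an example.
\begin{itemize}
\item Take a mean-zero random symmetric matrix whose off-diagonal entries are $a$ with probability $q$ and $-aq/(1-q)$ otherwise.
\item Choose $\epsilon^2\delta\le q\le\epsilon^2/(c\log^4 n)$, which is possible when $\delta\lesssim 1/\log^4 n$. Then exactly the $a$'s are zeroed.
\item So $\|\bv A\|_2=O(a\sqrt{qn})$, while $\bv A'\approx -aq\bv 1\bv 1^T$ has a rank-one outlying part $\lambda\bv u\bv u^T$ with $|\lambda|\approx aqn\ge L$.
\item The left-hand side is then $\lambda^2\|\Sbb^T\bv u\|_2^2\,|\,\|\Sbb^T\bv u\|_2^2-1|$. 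With constant probability this is of order $(aqn)^2/\sqrt{s}$.
\item For large $n$ this exceeds $\epsilon\|\bv A\|_F\|\bv A\|_2=O(\epsilon a^2 q n^{3/2})$.
\end{itemize}

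So your fallback, the bound with $\|\bv A'\|_2$, is what this argument actually proves. In your first case $\|\bv A\|_2\ge\epsilon\|\bv A\|_F$ it implies the stated bound up to a factor~$2$. It is also all the downstream lemmas use.

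Note that the paper's one-line proof makes the same substitution silently. It writes $r$ and the final bound in terms of $\|\bv A\|_2$, whereas Lemma~\ref{Lem:aosum} applied to $\bv A'$ produces $\|\bv A'\|_2$.
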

\begin{proof}
Observe that $r=\lceil \log(\|\bv A \|_2/\epsilon
\sqrt{\delta}\|\bv A \|_F) \rceil \leq \lceil \log(1/\epsilon \sqrt{\delta}) \rceil$.
The proof follows from Imported Lemma~\ref{lem:subspace-embedding-norm}
and Lemma~\ref{Lem:aosum}
after adjusting $\epsilon$
by $\log (1/\epsilon \delta)$ and constants.
\end{proof}

Similar to Lemma~\ref{lem:lambda_bound}, we can show that $\bv x$ is
an approximate eigenvector of $(\Sbb^T \Ab'_o \Sbb^T)^2$
with eigenvalue $\lambda^2$:
\begin{lemma}\label{lem:lambda_bound1}
Consider the setting of Lemma~\ref{Lem:spec_norm2}.
Let $\bv x$ be an eigenvector of $\Sbb^T \Ab' \Sbb^T$ with eigenvalue $\lambda$
such that $|\lambda| \geq \frac{\epsilon \|\bv A \|_F}{2}$.
Then, for some $s \geq \frac{c \log^4 n }{\epsilon^2\delta} \log^3 \frac{1}{\epsilon \delta}$
for some sufficiently large constant $c$, if
$\frac{s \|\bv A_{i,:} \|_2^2}{\|\bv A \|_F^2} \leq 1$ for all $i \in [n]$,
with probability at least $1-\delta$, we have:
 $$|\bv x^T(\Sbb^T \Ab'_o \Sbb)^2\bv x -\lambda^2| \leq \epsilon \|\bv A \|_F|\lambda|$$
\end{lemma}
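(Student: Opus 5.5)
The argument will mirror the proof of Lemma~\ref{lem:lambda_bound}, with $\bv A'$ in place of $\bv A$, the threshold $L=\epsilon\sqrt{\delta}\|\bv A\|_F$ in Definition~\ref{def:ao-am}, and Imported Lemma~\ref{lem:ao-sampling} in place of Imported Lemma~\ref{lem:am}. Write $\bv A' = \bv A'_o + \bv A'_m$, where the split is taken with respect to the eigendecomposition of $\bv A'$. Expanding $(\Sbb^T\bv A'\Sbb)^2$ gives
\begin{align*}
\bv x^T(\Sbb^T \bv A'\Sbb)^2\bv x = \bv x^T(\Sbb^T \bv A'_o\Sbb)^2\bv x + \bv x^T(\Sbb^T \bv A'_m\Sbb)^2\bv x + 2\bv x^T\Sbb^T\bv A'_o\Sbb\Sbb^T\bv A'_m\Sbb\bv x .
\end{align*}
Since $\bv x$ is a unit eigenvector of $\Sbb^T\bv A'\Sbb$, the left side equals $\lambda^2$. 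It therefore suffices to bound the two cross/middle terms by $O(\epsilon\|\bv A\|_F|\lambda|)$.

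\textbf{Step 1.} The hypotheses $\frac{s\|\bv A_{i,:}\|_2^2}{\|\bv A\|_F^2}\le 1$ and $s\ge \frac{c\log^4 n}{\epsilon^2\delta}$ are exactly those of Imported Lemma~\ref{lem:ao-sampling}. Invoking it gives $\|\Sbb^T\bv A'_m\Sbb\|_2\le \epsilon\|\bv A\|_F$ with probability $1-\delta$, after shrinking $\epsilon$ by a constant. Then
\[
|\bv x^T(\Sbb^T\bv A'_m\Sbb)^2\bv x| \le \epsilon^2\|\bv A\|_F^2 \le 2\epsilon\|\bv A\|_F|\lambda|,
\]
where the last inequality uses $|\lambda|\ge \epsilon\|\bv A\|_F/2$.

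\textbf{Step 2.} For the cross term, use the eigen-equation:
\[
\|\Sbb^T\bv A'_o\Sbb\bv x\|_2 = \|\lambda\bv x - \Sbb^T\bv A'_m\Sbb\bv x\|_2 \le |\lambda|+\epsilon\|\bv A\|_F \le 3|\lambda|.
\]
By Cauchy--Schwarz, the cross term is at most $2\cdot 3|\lambda|\cdot\epsilon\|\bv A\|_F$. Combining Steps 1 and 2 gives $|\bv x^T(\Sbb^T\bv A'_o\Sbb)^2\bv x-\lambda^2|\le 8\epsilon\|\bv A\|_F|\lambda|$, and rescaling $\epsilon$ by a constant (absorbed into $c$) yields the claim. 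The extra $\log^3\frac{1}{\epsilon\delta}$ factor in $s$ is not needed for this step. It is kept only so that the same sample $\Sbb$ simultaneously satisfies Lemma~\ref{Lem:spec_norm2}, which the downstream argument will use, under a single union bound.

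\textbf{Main obstacle.} The one point needing care is matching the objects in Imported Lemma~\ref{lem:ao-sampling}. That lemma bounds the middle part of the zeroed matrix $\bv A'$ at threshold $L=\epsilon\sqrt{\delta}\|\bv A\|_F$, and we must make sure this is the same $\bv A'_m$ that appears in our decomposition. In particular, the split must use eigenvectors of $\bv A'$ and not of $\bv A$; otherwise $\Sbb^T\bv A'\Sbb \neq \Sbb^T\bv A'_o\Sbb + \Sbb^T\bv A'_m\Sbb$. The threshold should be expressed via $\|\bv A\|_F$ and not $\|\bv A'\|_F$; the two differ only by constants, since zeroing entries can only decrease the Frobenius norm. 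Beyond this bookkeeping, the argument is a direct transcription of Lemma~\ref{lem:lambda_bound}.
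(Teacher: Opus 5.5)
Your proposal is correct and follows the paper's proof, which simply transcribes Lemma~\ref{lem:lambda_bound} with Imported Lemma~\ref{lem:ao-sampling} in place of Imported Lemma~\ref{lem:am}. It uses the same expansion of $(\Sbb^T\bv A'\Sbb)^2$, bounds the middle term via $|\lambda|\ge\epsilon\|\bv A\|_F/2$, and applies Cauchy--Schwarz to the cross term using $\|\Sbb^T\bv A'_o\Sbb\bv x\|_2=O(|\lambda|)$. Your point that the split must use the eigendecomposition of $\bv A'$ (not of $\bv A$) matches how the paper defines $\bv A'_o,\bv A'_m$.
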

\begin{proof}
The proof is exactly the same as that of Lemma~\ref{lem:lambda_bound},
with the only difference being we use Imported Lemma~\ref{lem:ao-sampling}.
So we skip it.
\end{proof}

Next, we show that $\|\bv A'_m \Sbb \|_2$ is bounded
by $\epsilon \|\bv A \|_F$. We will use this to bound the
contribution of $\bv A'_m$ to $\|\bv A' \Sbb \bv x \|_2$.
\begin{lemma}\label{lem:as_norm}
Consider the setting of Lemma~\ref{Lem:spec_norm2}.
For $s \geq \frac{c\log^2 n}{\epsilon^2\delta}$ where $c$
is a sufficiently large constant, if
$\frac{s \|\bv A_{i,:} \|_2^2}{\|\bv A \|_F^2} \leq 1$ for all $i \in [n]$,
with probability at least $1-\delta$, we have that:
\begin{align*}
 \| \bv{A}'_m \bv{\bar S} \|_2 \leq \epsilon \|\bv A \|_F.
\end{align*}
\end{lemma}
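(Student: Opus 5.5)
We mirror the proof of Lemma~\ref{lem:as}, replacing the uniform sampling scaling with the importance weights $1/\sqrt{sp_i}$. Since every $p_i\le 1$, the norm-bounded version of Tropp's result for column subsampling does not apply directly, because the sampling is non-uniform. Instead, we will write $\bv A'_m\Sbb\Sbb^T(\bv A'_m)^T=\sum_{i}\frac{\xi_i}{p_i}\bv a_i\bv a_i^T$. Here $\bv a_i$ is the $i$th column of $\bv A'_m$ and $\xi_i\sim\mathrm{Bernoulli}(p_i)$ are independent. We will then apply matrix Bernstein/Chernoff to this sum of independent PSD rank-one matrices. The expectation is $(\bv A'_m)^2$, whose norm is $\|\bv A'_m\|_2^2\le L^2=\epsilon^2\delta\|\bv A\|_F^2$, provided we define the split of $\bv A'$ at $L=\epsilon\sqrt{\delta}\|\bv A\|_F$ as in Definition~\ref{def:ao-am}.

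The key quantity is the uniform bound $R=\max_i \frac{\|\bv a_i\|_2^2}{p_i}$. First, $\|\bv a_i\|_2=\|\bv U'_m\bv U_m'^T\bv A'_{:,i}\|_2\le\|\bv A'_{:,i}\|_2\le\|\bv A_{:,i}\|_2$, since zeroing only removes entries. Second, by the assumption $p_i=s\|\bv A_{i,:}\|_2^2/\|\bv A\|_F^2$, this gives $R\le\|\bv A\|_F^2/s$.

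Matrix Chernoff for sums of PSD matrices then gives $\E\|\sum_i \frac{\xi_i}{p_i}\bv a_i\bv a_i^T\|_2\lesssim \|(\bv A'_m)^2\|_2+R\log n\le \epsilon^2\delta\|\bv A\|_F^2+\frac{\|\bv A\|_F^2\log n}{s}$. With $s\ge\frac{c\log^2 n}{\epsilon^2\delta}$ (only $\log n$ is really needed, so the extra factor is slack), the second term is at most $\epsilon^2\delta\|\bv A\|_F^2$. Markov's inequality then yields $\|\bv A'_m\Sbb\|_2^2\le C\epsilon^2\|\bv A\|_F^2$ with probability $1-\delta$. Taking square roots and adjusting $\epsilon$ by a constant gives the claim.

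\textbf{Main obstacle.} Some care is needed with normalization: the paper's $\Sbb$ in Algorithm~\ref{alg:rest-eigvec-sqsamp} scales by $1/\sqrt{sp_i}$ with $p_i=\|\bv A_{i,:}\|^2/\|\bv A\|_F^2$. This coincides with $1/\sqrt{p_i'}$ for $p_i'=sp_i\le1$, which is exactly what the assumption ensures, so the expectation identity holds. The bound $\|\bv a_i\|_2\le\|\bv A_{:,i}\|_2$ also deserves care. If the projection is instead taken onto the eigenvectors of $\bv A'$, then $\bv A'_m=\bv A'\bv U'_m\bv U_m'^T$ and the column norm is still at most that of $\bv A'$, which is at most that of $\bv A$. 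So that bound holds for either split.
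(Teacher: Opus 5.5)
Your proof is correct. It differs from the paper's only in the concentration tool.

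The paper invokes Theorem 5 of \cite{Bhattacharjee:2021wl}, a non-uniform analogue of the Tropp bound used in Lemma~\ref{lem:as}. That theorem gives
\[
\E_2\|\bv A'_m\Sbb\|_2 \le 5\sqrt{\log n}\,\E_2\|\bv A'_m\Sbb\|_{1\rightarrow 2} + \|\bv A'_m\|_2 .
\]
You instead apply matrix Chernoff to the Gram matrix $\bv A'_m\Sbb\Sbb^T\bv A'_m$, which gives
\[
\E\|\bv A'_m\Sbb\|_2^2 \le (e-1)\|\bv A'_m\|_2^2 + \log n\cdot\max_i \|(\bv A'_m)_{:,i}\|_2^2/p_i .
\]
These are the same inequality up to squaring and constants. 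The remaining steps match the paper's argument: the column bound $\|(\bv A'_m)_{:,i}\|_2/\sqrt{p_i}\le\|\bv A\|_F/\sqrt{s}$, obtained from $\|(\bv A'_m)_{:,i}\|_2\le\|\bv A_{:,i}\|_2$, followed by a Markov step as in Lemma~\ref{lem:as}.

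What each route buys:
\begin{itemize}
\item Your version relies only on a standard textbook inequality. It also makes explicit that $s = O(\log n/(\epsilon^2\delta))$ already suffices; the paper's argument likewise needs only one factor of $\log n$, despite the stated $\log^2 n$.
\item The paper's version reuses the same random-submatrix machinery as Lemma~\ref{lem:as} and the imported middle-eigenvalue bounds.
\end{itemize}

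One small correction to your framing. The uniform Tropp bound does not apply here, as you say, but a non-uniform version exists in \cite{Bhattacharjee:2021wl}, and that is exactly what the paper uses.
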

\begin{proof}
The proof follows from Lemma 14 of~\cite{Bhattacharjee:2021wl}.
We provide a brief proof sketch.
From Theorem 5 of~\cite{Bhattacharjee:2021wl}, we have:
\begin{align*}
 \E_2 \|\bv{A'_m}\Sbb \|_2 \leq 5\sqrt{\log n} \cdot\E_2 \|\bv{A'_m}\Sbb \|_{1 \rightarrow 2}+ \|\bv{A_m}'\|_2.
\end{align*}
We have $\|\bv{A_m}'\|_2\leq \epsilon \sqrt{\delta}\|\bv A\|_F$.
We also have:
\begin{align*}
 E_2 \|\bv{A'_m }\Sbb \|_{1 \rightarrow 2}
\leq \max_{i \in [n]}\frac{\|(\bv A'_m)_{i,:} \|_2}{\sqrt{p_i}}
\leq \max_{i \in [n]} \frac{\|\bv A \|_F \|(\bv A'_m)_{i.:} \|_2}{\sqrt{s}\|\bv A_{i,:} \|_2}
\leq \frac{\|\bv A \|_F}{\sqrt{s}}.
\end{align*}
Thus, for $s \geq \frac{ \log^2 n}{\epsilon^2 \delta}$, we have
$E_2 \|\bv{A'_m }\Sbb \|_{1 \rightarrow 2} \leq \epsilon\sqrt{\delta}\|\bv A \|_F $.
We get the final bound by adjusting $\epsilon$ by constant factors.
\end{proof}

We now show that $\|\bv A' \Sbb \bv x \|_2$ is close to $\|\bv A \|_2$
when $\bv x$ is the eigenvector corresponding to the largest eigenvalue
of $\Sbb^T \Ab' \Sbb$.
\begin{lemma}\label{lem:asx_norm}
Consider the setting of Lemma~\ref{Lem:spec_norm2}
and
let $\|\bv A' \|_2 \geq \frac{\epsilon \|\bv A \|_F}{4}$.
Let $\bv{x}$ be the eigenvector of $\Sbb^T \Ab' \Sbb$ corresponding to
its largest magnitude eigenvalue.
For $s \geq \frac{c \log^4 n}{\epsilon^2\delta} \log^3 \frac{1}{\epsilon \delta}$,
for a sufficiently large constant $c$, if
$\frac{s \|\bv A_{i,:} \|_2^2}{\|\bv A \|_F^2} \leq 1$ for all $i \in [n]$,
with
probability at least $1-\delta$, we have (for some constant $C$):
\begin{align*}
 \frac{1}{C}\|\bv A \|_2 \leq \|\bv A' \Sbb \bv x \|_2 \leq C|\bv A \|_2
\end{align*}
\end{lemma}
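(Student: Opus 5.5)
The plan is to mirror the proof of Lemma~\ref{lem:asx}, with every uniform-sampling ingredient replaced by its squared column-norm analogue and $\bv A$ replaced by the zeroed-out matrix $\bv A'$. Write $\bv A' = \bv A'_o + \bv A'_m$ with threshold $L=\epsilon\sqrt{\delta}\|\bv A\|_F$. Since $\|\bv A' \Sbb \bv x\|_2^2 = \bv x^T \Sbb^T (\bv A')^2 \Sbb \bv x$, it suffices to show this quadratic form is $\Theta(\|\bv A\|_2^2)$. Let $\tilde\lambda_1 = \bv x^T \Sbb^T \bv A' \Sbb \bv x$ denote the top-magnitude eigenvalue of $\Sbb^T \bv A' \Sbb$.

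\textbf{Step 1: approximate $\tilde\lambda_1$ by $\|\bv A\|_2$.} Apply Imported Theorem~\ref{thm:main-eigval-norm} with error parameter $\epsilon/16$ to get $\bigl||\tilde\lambda_1| - \|\bv A\|_2\bigr| \le \epsilon\|\bv A\|_F/16$. By Imported Lemma~\ref{lem:aprime}, $\|\bv A\|_2$ and $\|\bv A'\|_2$ differ by at most $\epsilon\|\bv A\|_F$ (after constant rescaling of $\epsilon$). Together with the hypothesis $\|\bv A'\|_2 \ge \epsilon\|\bv A\|_F/4$, this shows $\|\bv A\|_2$, $\|\bv A'\|_2$ and $|\tilde\lambda_1|$ are all within constant factors of one another. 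It follows that $\tfrac{3}{4}\|\bv A\|_2 \lesssim |\tilde\lambda_1| \lesssim \tfrac54\|\bv A\|_2$, and hence $|\tilde\lambda_1^2 - \|\bv A\|_2^2| \le c\|\bv A\|_2^2$ for some $c<1$. This is the step I expect to be the main obstacle. The guarantees are phrased in terms of $\|\bv A\|_2$ while the sampled matrix is $\bv A'$, so I will need to choose the error parameter small enough that the $\epsilon\|\bv A\|_F$ perturbation is a small constant fraction of $\|\bv A\|_2$. If the hypothesis constant $1/4$ turns out too weak for this, I will strengthen it by absorbing constants into $\epsilon$.

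\textbf{Step 2: relate $\tilde\lambda_1^2$ to the outlying quadratic form.} Combine Lemma~\ref{lem:lambda_bound1} with Lemma~\ref{Lem:spec_norm2}, both at error parameter $\epsilon/32$. Lemma~\ref{lem:lambda_bound1} gives $|\tilde\lambda_1^2 - \bv x^T(\Sbb^T\bv A'_o\Sbb)^2\bv x| \le \tfrac{\epsilon}{32}\|\bv A\|_F|\tilde\lambda_1|$. Lemma~\ref{Lem:spec_norm2} gives $|\bv x^T(\Sbb^T\bv A'_o\Sbb)^2\bv x - \bv x^T\Sbb^T(\bv A'_o)^2\Sbb\bv x| \le \tfrac{\epsilon}{32}\|\bv A\|_F\|\bv A\|_2$. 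Using Step 1 to write $|\tilde\lambda_1| = O(\|\bv A\|_2)$ and $\epsilon\|\bv A\|_F = O(\|\bv A\|_2)$, both errors are at most $\|\bv A\|_2^2/4$ once constants are chosen.

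\textbf{Step 3: handle the middle part and conclude.} The middle contribution is $\bv x^T\Sbb^T(\bv A'_m)^2\Sbb\bv x = \|\bv A'_m\Sbb\bv x\|_2^2 \le \|\bv A'_m\Sbb\|_2^2$, which Lemma~\ref{lem:as_norm} bounds by $\epsilon^2\|\bv A\|_F^2 = O(\|\bv A\|_2^2)$. Adding the three bounds by the triangle inequality gives the upper bound $\bv x^T\Sbb^T(\bv A')^2\Sbb\bv x \le C\|\bv A\|_2^2$. For the lower bound, use that the cross term vanishes because $\bv A'_o\bv A'_m = 0$, so $\bv x^T\Sbb^T(\bv A')^2\Sbb\bv x \ge \bv x^T\Sbb^T(\bv A'_o)^2\Sbb\bv x \ge \tilde\lambda_1^2 - \|\bv A\|_2^2/4 = \Omega(\|\bv A\|_2^2)$. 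Taking square roots then yields both inequalities of the lemma. Finally, run each of the invoked results at confidence $\delta/4$ and take a union bound. The sample requirement $s \ge \frac{c\log^4 n}{\epsilon^2\delta}\log^3\frac1{\epsilon\delta}$, together with the assumption $s\|\bv A_{i,:}\|_2^2/\|\bv A\|_F^2\le1$, covers all of them.
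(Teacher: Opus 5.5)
Your Steps 1--3 match the paper's proof. The paper invokes Imported Theorem~\ref{thm:main-eigval-norm} to get $\bigl||\tilde\lambda_1|-\|\bv A\|_2\bigr|\le\frac{\epsilon\|\bv A\|_F}{8}\le\frac{\|\bv A\|_2}{2}$, then defers to the argument of Lemma~\ref{lem:asx}, which is your Steps 2--3 with Lemmas~\ref{lem:lambda_bound1}, \ref{Lem:spec_norm2} and \ref{lem:as_norm} substituted.

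The step where you go beyond the paper does not close as written. This is the bridge in Step 1 from the hypothesis on $\|\bv A'\|_2$ to a lower bound on $\|\bv A\|_2$.
\begin{itemize}
\item Imported Lemma~\ref{lem:aprime} only gives $\|\bv A-\bv A'\|_2\le\epsilon\|\bv A\|_F$, which exceeds the margin $\frac{\epsilon\|\bv A\|_F}{4}$.
\item Absorbing constants into $\epsilon$ does not help. The diagonal rule of Definition~\ref{def:zeroing} alone can delete an entry of size nearly $\frac{\epsilon}{2}\|\bv A\|_F$, and rescaling $\epsilon$ throughout rescales the hypothesis by the same factor.
\item Your fallback of strengthening the hypothesis works, but it proves a weaker lemma.
\item The paper avoids the issue by simply using $\|\bv A\|_2\ge\frac{\epsilon\|\bv A\|_F}{4}$. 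In effect it reads the hypothesis as one on $\|\bv A\|_2$, as in Lemma~\ref{lem:asx}.
\end{itemize}

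If you want to keep the hypothesis on $\bv A'$ as stated, use a different bridge.
\begin{itemize}
\item Every zeroed diagonal entry satisfies $|\bv A_{ii}|=|\bv e_i^T\bv A\bv e_i|\le\|\bv A\|_2$.
\item Let $\bv Z$ be the zeroed off-diagonal part. Each zeroed entry has $\|\bv A_{j,:}\|_2<\frac{\epsilon\|\bv A\|_F|\bv A_{ij}|}{\sqrt{c}\log^2 n\,\|\bv A_{i,:}\|_2}$. A Schur test with weights $\|\bv A_{i,:}\|_2$ then gives $\|\bv Z\|_2\le\frac{\epsilon\|\bv A\|_F}{\sqrt{c}\log^2 n}$.
\item Together these give $\|\bv A'\|_2\le 2\|\bv A\|_2+\frac{\epsilon\|\bv A\|_F}{\sqrt{c}}$.
\item Hence $\|\bv A\|_2\ge\frac{\epsilon\|\bv A\|_F}{9}$ once $c$ is large enough.
\end{itemize}
After that, Step 1 proceeds as you planned, with the error parameter in Imported Theorem~\ref{thm:main-eigval-norm} shrunk to about $\epsilon/36$.
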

\begin{proof}
From Imported Theorem~\ref{thm:main-eigval-norm},
choosing $\epsilon$ to be sufficiently small,
we have that,
for the largest magnitude eigenvalue $\Tilde{\lambda}_1$ of $\Sbb^T \Ab' \Sbb$,
we get
$||\Tilde{\lambda}_1| - \|\bv A \|_2|\leq \frac{\epsilon \|\bv A \|_F}{8} \leq \frac{\|\bv A \|_2}{2}$.
The rest of the proof follows the proof of Lemma~\ref{lem:asx}.
\end{proof}

When $\bv{x}$ be an eigenvector of
$\bv{\bar S}^T \bv{A}' \bv{\bar S}$
corresponding to an outlying eigenvalue $\lambda$,
we can also show that $\|\bv A\bv \Sbb \bv x\|_2$ is close to $|\lambda|$ like Lemma~\ref{Lem:asxl}.
\begin{lemma}\label{lem:asxl_norm}
Consider the setting of Lemma~\ref{Lem:spec_norm2}.
Let $\bv{x}$ be an eigenvector of $\bv{\bar S}^T \bv{A}' \bv{\bar S}$
corresponding to its eigenvalue $\lambda$ such that $|\lambda| \geq \epsilon \|\bv A \|_F$.
For $s \geq \frac{c \log^4 n}{\epsilon^4 \delta}\log^3 \frac{1}{\epsilon \delta} $,
for a sufficiently large constant $c$, if
$\frac{s \|\bv A_{i,:} \|_2^2}{\|\bv A \|_F^2} \leq 1$ for all $i \in [n]$,
with
probability at least $1-\delta$, we have (for some constant $C$):
\begin{align*}
 \frac{1}{C}|\lambda| \leq \|\bv A' \Sbb \bv x \|_2 \leq C|\lambda|
\end{align*}
\end{lemma}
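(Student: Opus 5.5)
We mirror the proof of Lemma~\ref{Lem:asxl}, working with the zeroed-out matrix $\bv A'$ and splitting at $L=\epsilon\sqrt{\delta}\|\bv A\|_F$ into $\bv A'_o$ and $\bv A'_m$. We expand
\begin{align*}
\|\bv A'\Sbb\bv x\|_2^2=\bv x^T\Sbb^T(\bv A')^2\Sbb\bv x=\bv x^T\Sbb^T(\bv A'_o)^2\Sbb\bv x+\bv x^T\Sbb^T(\bv A'_m)^2\Sbb\bv x .
\end{align*}
The goal is to show that the first term is $\lambda^2(1\pm 1/16)$ and that the second term is nonnegative and at most $O(\lambda^2)$. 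Together these give $\Omega(\lambda^2)\le\|\bv A'\Sbb\bv x\|_2^2\le O(\lambda^2)$.

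\textbf{Outlying term.} We use the triangle inequality
\begin{align*}
|\lambda^2-\bv x^T\Sbb^T(\bv A'_o)^2\Sbb\bv x|\le|\lambda^2-\bv x^T(\Sbb^T\bv A'_o\Sbb)^2\bv x|+\|\Sbb^T\bv A'_o\Sbb\Sbb^T\bv A'_o\Sbb-\Sbb^T(\bv A'_o)^2\Sbb\|_2 .
\end{align*}
The first piece is bounded by Lemma~\ref{lem:lambda_bound1}. The second piece is bounded by Lemma~\ref{Lem:spec_norm2}. We invoke both with error parameter $\epsilon^2/32$ rather than $\epsilon$, which is why $s$ has the $1/\epsilon^4$ dependence: the stated $s\ge \frac{c\log^4 n}{\epsilon^4\delta}\log^3\frac1{\epsilon\delta}$ covers $s\ge\frac{c\log^4 n}{(\epsilon^2)^2\delta}\log^3$ up to adjusting the logs.

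One subtlety is that Lemma~\ref{lem:lambda_bound1} needs $|\lambda|\ge \epsilon^2\|\bv A\|_F/2$ at the finer parameter, and this holds trivially. Its proof also uses the middle bound $\|\Sbb^T\bv A'_m\Sbb\|_2\le\epsilon^2\|\bv A\|_F$ from Imported Lemma~\ref{lem:ao-sampling}, which holds at this $s$. The two pieces are then at most $\frac{\epsilon^2}{32}\|\bv A\|_F|\lambda|$ and $\frac{\epsilon^2}{32}\|\bv A\|_F\|\bv A\|_2$. Since $|\lambda|\ge\epsilon\|\bv A\|_F$ and $\|\bv A\|_2\le\|\bv A\|_F$, we get $\epsilon^2\|\bv A\|_F\|\bv A\|_2\le\epsilon^2\|\bv A\|_F^2\le\lambda^2$. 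Both pieces are therefore at most $\lambda^2/32$.

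A second subtlety is that the spectra of $\bv A'$ and $\bv A$ differ. Lemma~\ref{Lem:spec_norm2} is stated with $\|\bv A\|_2$, and $\|\bv A'\|_2\le\|\bv A\|_2+\epsilon\|\bv A\|_F\le 2\|\bv A\|_F$ by Imported Lemma~\ref{lem:aprime}. This only changes constants.

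\textbf{Middle term.} The second term equals $\|\bv A'_m\Sbb\bv x\|_2^2\ge0$, so dropping it gives the lower bound $\|\bv A'\Sbb\bv x\|_2^2\ge\frac{15}{16}\lambda^2$. For the upper bound, Lemma~\ref{lem:as_norm} gives $\|\bv A'_m\Sbb\|_2^2\le\epsilon^2\|\bv A\|_F^2\le\lambda^2$, so $\|\bv A'\Sbb\bv x\|_2^2\le\frac{33}{16}\lambda^2$.

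\textbf{Probability.} We run each of Lemmas~\ref{Lem:spec_norm2}, \ref{lem:lambda_bound1}, \ref{lem:as_norm} and Imported Lemma~\ref{lem:ao-sampling} with confidence parameter $\delta/4$ and take a union bound. Taking square roots then yields the claim with an absolute constant $C$.

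\textbf{Main obstacle.} The hard step is ensuring that the finer-parameter applications are legitimate. First, the split threshold $L$ must be consistent across the lemmas we invoke, since re-parametrizing $\epsilon\to\epsilon^2$ changes $L$ and hence the definitions of $\bv A'_o$ and $\bv A'_m$. Second, the zeroing in Definition~\ref{def:zeroing} depends on $\epsilon$. I would handle this by fixing the zeroing and the split at the original $\epsilon$, and using only the subspace-embedding and middle-eigenvalue sample bounds at the finer accuracy, exactly as the proof of Lemma~\ref{lem:err1} does in the uniform case.
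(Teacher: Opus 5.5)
Your decomposition is the one the paper intends. The paper gives no separate proof here, only a pointer to Lemma~\ref{Lem:asxl}, and you reproduce that argument: the outlying part via Lemmas~\ref{lem:lambda_bound1} and~\ref{Lem:spec_norm2}, the middle part via Lemma~\ref{lem:as_norm}, and the lower bound by dropping $\|\bv A'_m\Sbb\bv x\|_2^2\ge 0$.

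The gap is in your parameter bookkeeping. You apply Lemma~\ref{lem:lambda_bound1} at accuracy $\epsilon^2/32$, which requires $\|\Sbb^T\bv A'_m\Sbb\|_2\le\epsilon^2\|\bv A\|_F$. But your stated resolution keeps the split at $L=\epsilon\sqrt{\delta}\|\bv A\|_F$. With that split, Imported Lemma~\ref{lem:ao-sampling} gives only $\epsilon\|\bv A\|_F$, and nothing better holds in general. For instance, $\bv A'_m$ may contain a zero-diagonal all-ones block with eigenvalue about $L/2$, inside a matrix with enough additional Frobenius mass. The rescaled principal submatrix reproduces that eigenvalue up to constants, so $\|\Sbb^T\bv A'_m\Sbb\|_2=\Omega(\epsilon\sqrt{\delta}\|\bv A\|_F)$, far above $\epsilon^2\|\bv A\|_F$. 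So the invocation you rely on is unavailable in the configuration you propose. The proof of Lemma~\ref{lem:err1}, which you cite as the model, makes the same move with Imported Lemma~\ref{lem:am}, so it is not a safe template.

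The repair is that this piece never needs $\epsilon^2$ accuracy. From $\Sbb^T\bv A'_o\Sbb\bv x=\lambda\bv x-\Sbb^T\bv A'_m\Sbb\bv x$ you get $\bigl|\|\Sbb^T\bv A'_o\Sbb\bv x\|_2-|\lambda|\bigr|\le\|\Sbb^T\bv A'_m\Sbb\|_2$. Refine the split only by a constant, to $\frac{\epsilon}{8}\sqrt{\delta}\|\bv A\|_F$; relative to zeroing at $\epsilon$ this changes only constants. Imported Lemma~\ref{lem:ao-sampling} then gives $\|\Sbb^T\bv A'_m\Sbb\|_2\le\frac{\epsilon}{8}\|\bv A\|_F\le\frac{|\lambda|}{8}$, hence $\bv x^T(\Sbb^T\bv A'_o\Sbb)^2\bv x\in[\frac{49}{64}\lambda^2,\frac{81}{64}\lambda^2]$.

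The $1/\epsilon^4$ sample size is needed only for the commutator $\|\Sbb^T\bv A'_o\Sbb\Sbb^T\bv A'_o\Sbb-\Sbb^T(\bv A'_o)^2\Sbb\|_2$. That bound uses nothing but Property~\ref{ass:subspace-embedding}, which is independent of the split. Rerun the proof of Lemma~\ref{Lem:aosum} with embedding parameter $R\approx\epsilon^2\|\bv A\|_F/\log\frac{1}{\epsilon\delta}$, via Imported Lemma~\ref{lem:subspace-embedding-norm}. That lemma applies to $\bv A'$ because zeroing only shrinks row norms, so the leverage bound survives. Keeping the coarse split $L$, this gives error $O(R\,r\,\|\bv A'\|_2)\le\lambda^2/32$. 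The middle term is then handled by Lemma~\ref{lem:as_norm} at the coarse split, exactly as you do.

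Alternatively, keep the zeroing at $\epsilon$ but move the split, which is internal to the proof, to $\epsilon^2\sqrt{\delta}\|\bv A\|_F$ consistently in all three lemmas. Zeroing at the coarser level removes a superset of entries, so the entrywise bounds behind Imported Lemma~\ref{lem:ao-sampling} still hold at the finer split.
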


We now state the key lemma regarding approximating
the top eigenvector using squared column-norm sampling.
\begin{lemma}\label{lem:err_top_norm}
Consider the setting of Lemma~\ref{Lem:spec_norm2}.
Let $\bv{x}$ be the eigenvector of $\Sbb^T \Ab' \Sbb$ corresponding to
its largest magnitude eigenvalue $\lambda$ such that $|\lambda| \geq \frac{\epsilon \|\bv A \|_F}{2}$.
Let $\bv{v}=\frac{\Ab' \Sbb \xb}{\|\Ab' \Sbb \xb \|_2}$.
Then, for $s \geq \frac{c\log^4 n}{\epsilon^2\delta} \log^5 \frac{1}{\epsilon \delta}$,
if $\frac{s \|\bv A_{i,:} \|_2^2}{\|\bv A \|_F^2} \leq 1$ for all $i \in [n]$,
with probability at least $1-\delta$, we have:
\begin{equation*}
 \|\Ab' \bv{v}-\lambda \bv{v}\|_2 \leq \epsilon \|\bv A \|_F.
\end{equation*}
\end{lemma}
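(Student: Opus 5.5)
The plan is to mirror the proof of Lemma~\ref{lem:err_top} line by line, replacing $\bv A$ by the zeroed-out matrix $\bv A'$ and setting the threshold to $L=\epsilon\sqrt{\delta}\|\bv A\|_F$ in Definition~\ref{def:ao-am}. Split $\bv A' = \bv A'_o+\bv A'_m$. Since $\|\bv A'_m\|_2\le L\le\epsilon\|\bv A\|_F$, we have $\bv A'\bv v = \bv A'_o\bv v + \bv e'$ with $\|\bv e'\|_2\le \epsilon\|\bv A\|_F$. It therefore suffices to show $\|\bv A'_o\bv v-\lambda\bv v\|_2 = O(\epsilon\|\bv A\|_F\log(1/\epsilon\delta))$ and then rescale $\epsilon$.

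Next, decompose $\bv A'_o=\sum_{i=1}^r \bv A'_{o,i}$ into dyadic level sets, with $r=O(\log(1/\epsilon\delta))$ since $\|\bv A'\|_2\le \|\bv A\|_F$. Imported Lemma~\ref{lem:subspace-embedding-norm} with $R=L$ gives Property~\ref{ass:subspace-embedding}; it needs $s\ge \tilde O(1/\epsilon^2\delta)$ and uses the assumption $p_i\le 1$. With it, write $(\bv A'_{o,i})^2 = \bv A'_{o,i}\Sbb\Sbb^T\bv A'_{o,i} + \bv U_{o,i}\bv E_i\bv U_{o,i}^T$, where $\|\bv E_i\|_2\lesssim L\|\bv A'\|_2 2^{-i}$. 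Then substitute the eigen-relation $\Sbb^T\bv A'_{o,i}\Sbb\bv x = \lambda\bv x - \sum_{j\ne i}\Sbb^T\bv A'_{o,j}\Sbb\bv x - \Sbb^T\bv A'_m\Sbb\bv x$. This yields exactly the decomposition \eqref{eq:spliteq}: $\lambda\bv v$ plus four error terms, namely $\lambda\bv A'_m\Sbb\bv x$, the cross-level terms, $\bv A'_o\Sbb\Sbb^T\bv A'_m\Sbb\bv x$, and $\sum_i\bv U_{o,i}\bv E_i\bv U_{o,i}^T\Sbb\bv x$, each divided by $\|\bv A'\Sbb\bv x\|_2$.

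Each term is bounded as in Lemma~\ref{lem:err_top}:
\begin{itemize}
\item The first term uses Lemma~\ref{lem:as_norm}, which gives $\|\bv A'_m\Sbb\|_2\le\epsilon\|\bv A\|_F$, together with $|\lambda|\lesssim\|\bv A\|_2$ from Imported Theorem~\ref{thm:main-eigval-norm}.
\item The cross-level terms use Imported Lemma~\ref{lem:approx-matrix-product} with distortion $L/(\|\bv A'\|_2 2^{-i})$, contributing $O(L)$ each and $O(Lr)$ in total.
\item The third term uses Imported Lemma~\ref{lem:ao-sampling}, which gives $\|\Sbb^T\bv A'_m\Sbb\|_2\le\epsilon\|\bv A\|_F$, together with $\|\bv A'_o\Sbb\|_2\le 2\|\bv A'\|_2$.
\item The fourth term sums the geometric series to $O(L)$.
\end{itemize}
In every case the numerators are $O(\epsilon\|\bv A\|_F\|\bv A'\|_2)$. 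Dividing by the denominator bound $\|\bv A'\Sbb\bv x\|_2\ge \|\bv A'\|_2/C$ from Lemma~\ref{lem:asx_norm} gives $O(\epsilon\|\bv A\|_F)$ per term. Imported Lemma~\ref{lem:aprime} is used here to pass freely between $\|\bv A\|_2$ and $\|\bv A'\|_2$, since they differ by at most $\epsilon\|\bv A\|_F$.

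The main obstacle is verifying the hypotheses of Lemma~\ref{lem:asx_norm}. We need $\|\bv A'\|_2\ge \epsilon\|\bv A\|_F/4$, which follows from $|\lambda|\ge\epsilon\|\bv A\|_F/2$ via Imported Theorem~\ref{thm:main-eigval-norm} applied with a smaller constant error parameter. A related check is that the eigenvalue estimates hold for $\bv A'$ rather than $\bv A$; this is again handled by Imported Lemma~\ref{lem:aprime} at the cost of constants. Finally, replace $\epsilon$ by $\epsilon/\log(1/\epsilon\delta)$, which accounts for the $\log^5$ factor in $s$ (two logs inherited from Lemma~\ref{lem:asx_norm} plus two from the rescaling), and take a union bound over the events of Imported Lemmas~\ref{lem:subspace-embedding-norm} and~\ref{lem:ao-sampling}, Lemma~\ref{lem:as_norm}, and Lemma~\ref{lem:asx_norm}, each at confidence $\delta/4$.
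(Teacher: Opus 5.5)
Your proposal is correct and follows essentially the same route as the paper. The paper likewise reruns the proof of Lemma~\ref{lem:err_top} for $\bv A'$ with $L=\epsilon\sqrt{\delta}\|\bv A\|_F$, derives the same four-term analogue of~\eqref{eq:spliteq}, bounds the terms via Lemmas~\ref{lem:as_norm}, \ref{lem:asx_norm} and~\ref{Lem:spec_norm2} (together with Imported Lemma~\ref{lem:ao-sampling}), and then rescales $\epsilon$ by a logarithmic factor. Two small bookkeeping points do not affect the conclusion: Lemma~\ref{lem:asx_norm} already carries $\log^3\frac{1}{\epsilon\delta}$, so the $\log^5$ comes from three plus two logs rather than two plus two, and you should keep the $2^{-\min(i,j)}$ decay on each cross-level pair $(i,j)$, since that decay is what makes the double sum $O(Lr)$ rather than $O(Lr^2)$.
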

\begin{proof}
The proof is almost exactly the same as
the proof of Lemma~\ref{lem:err_top}.
We recall the key steps of the proof for completeness. We have:
\begin{align*}
 \bv A' \bv v=\bv A'_o \bv v+\bv e',
\end{align*}
where $\| \bv e'\|_2 \leq \epsilon \sqrt{\delta}\|\bv A \|_F$.
So, it is enough to prove $\|\Ab'_o \bv{v}-\lambda \bv{v}\|_2
\leq \epsilon \|\bv A \|_F$.
Using Lemma~\ref{Lem:spec_norm2} and Imported Lemma~\ref{lem:ao-sampling}, similar to~\eqref{eq:spliteq}
we get:
\begin{align*}
 \bv A'_o \bv v &= \lambda \bv v-\frac{\lambda \bv A'_m \Sbb \bv x}{\|\Ab' \Sbb \bv x\|_2}
 -\frac{\sum_{i=1}^r \sum_{j \neq i} \Ab'_{o,i}\Sbb \Sbb^T \bv A'_{o,j} \Sbb \bv x}{\|\Ab' \Sbb \bv x\|_2}
 -\frac{\sum_{i=1}^r \Ab'_{o,i}\Sbb \Sbb^T \bv A'_{m} \Sbb \bv x}{\|\Ab' \Sbb \bv x\|_2} \nonumber\\
 &+ \frac{\sum_{i=1}^r \bv U'_{o,i} \bv E_i (\bv U'_{o,i})^T \Bar{\Sb} \bv x}{\|\Ab' \Sbb \bv x\|_2}.
\end{align*}
Here the eigenvalues of $\bv A'_{o,i}$ (for $i \in \log (\|\bv A' \|_2/\epsilon \sqrt{\delta} \|\bv A \|_F)$)
are in $[\|\bv A'_o\|_2 \sqrt{\delta}2^{-i},\| \bv A'_o\|_2 \sqrt{\delta} 2^{-i+1}]$
and $\| \bv E_i \|_2 \leq \epsilon \sqrt{\delta}\|\bv A \|_F \|\bv A \|_2 \sqrt{\delta} 2^{-i+2}$.
We can now bound each term on the right above by $O(\epsilon \|\bv A \|_F
\log (1/\epsilon))$
using Lemmas~\ref{lem:as_norm},~\ref{lem:asx_norm}, and~\ref{Lem:spec_norm2}
as in the proof of Lemma~\ref{lem:err_top}.
We get the final bound after adjusting $\epsilon$ by
$\log (1/\epsilon)$ and constant factors.
\end{proof}

Similar to the uniform sampling setting,
we can use Algorithm~\ref{alg:rest-eigvec-sqsamp}
to approximate all eigenvectors corresponding
to large eigenvalues at the cost of
a $\Tilde{O}(1/\epsilon^2)$ factor
increase in sampling complexity. We state the results
formally below.
\begin{lemma}\label{lem:eignorm}
Consider the setting of Lemma~\ref{Lem:spec_norm2}.
Let $\bv{x}$ be an eigenvector of $\bv{\bar S}^T \bv{A}' \bv{\bar S}$
corresponding
to its eigenvalue $\lambda$ such that $|\lambda| \geq \epsilon n$.
Let $\bv{v}=\frac{\Ab' \Sbb \xb}{\|\Ab' \Sbb \xb \|_2}$.
Then, for $s \geq \frac{c\log^4 n}{\epsilon^4\delta} \log^3 \frac{1}{\epsilon \delta}$,
if $\frac{s \|\bv A_{i,:} \|_2^2}{\|\bv A \|_F^2} \leq 1$ for all $i \in [n]$,
with probability at least $1-\delta$, we have:
\begin{equation*}
 \|\Ab' \bv{v}-\lambda \bv{v}\|_2 \leq \epsilon \|\bv A \|_F
\end{equation*}
\end{lemma}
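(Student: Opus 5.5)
The plan is to transplant the proof of Lemma~\ref{lem:err1} to the squared column-norm setting, working throughout with the zeroed-out matrix $\bv A'$. Set $L=\epsilon\sqrt{\delta}\|\bv A\|_F$ in Definition~\ref{def:ao-am} to split $\bv A'=\bv A'_o+\bv A'_m$. (We read the threshold $|\lambda|\ge \epsilon n$ in the statement as $|\lambda|\ge\epsilon\|\bv A\|_F$, which is what the algorithm and the surrounding lemmas use.)

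First we reduce to the outlying part. Write $\bv A'\bv v=\bv A'_o\bv v+\bv A'_m\bv v$ with $\|\bv A'_m\bv v\|_2\le L\le\epsilon\|\bv A\|_F$, so it suffices to show $\|\bv A'_o\bv v-\lambda\bv v\|_2=O(\epsilon\|\bv A\|_F)$. Let $\bv v'=\bv A'\Sbb\bv x$. Since $\bv A'_o\bv A'_m=0$, we have $\bv A'_o\bv v'=\bv U'_o\bv\Lambda'_o\bv U_o'^T\bv U'_o\bv\Lambda'_o\bv U_o'^T\Sbb\bv x$.

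Next we embed the outlying subspace. Apply Imported Lemma~\ref{lem:subspace-embedding-norm} with $R=\epsilon^2\sqrt{\delta}\|\bv A\|_F$, which needs $s=\tilde O(1/\epsilon^4\delta)$ samples. This makes $\Sbb$ an $\epsilon$-distortion embedding for $\bv U'_o$, so $\bv I=\bv U_o'^T\Sbb\Sbb^T\bv U'_o+\bv E_3$ with $\|\bv E_3\|_2\le\epsilon$. One caveat: that lemma is stated for $\bv A$, and we need it for $\bv A'$. Since $\|\bv A'-\bv A\|_2\le\epsilon\|\bv A\|_F$ by Imported Lemma~\ref{lem:aprime}, and the norm-sampling leverage argument only uses $\|\bv A'_{i,:}\|_2\le\|\bv A_{i,:}\|_2$ and $\|\bv A'\|_F\le\|\bv A\|_F$, we take it as applying to $\bv A'$ exactly as in~\cite{swartworth2025tight}.

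Substituting this identity and using the eigen-equation $\Sbb^T\bv A'_o\Sbb\bv x=\lambda\bv x-\Sbb^T\bv A'_m\Sbb\bv x$ gives the analogue of \eqref{Eq:final}:
\[
\bv A'_o\bv v=\lambda\bv v-\frac{\bv A'_o\Sbb\Sbb^T\bv A'_m\Sbb\bv x}{\|\bv A'\Sbb\bv x\|_2}-\frac{\lambda\bv A'_m\Sbb\bv x}{\|\bv A'\Sbb\bv x\|_2}+\frac{\bv U'_o\bv\Lambda'_o\bv E_3\bv U_o'^T\bv A'\Sbb\bv x}{\|\bv A'\Sbb\bv x\|_2}.
\]
All three error terms share the denominator $\|\bv A'\Sbb\bv x\|_2$, which Lemma~\ref{lem:asxl_norm} bounds below by $|\lambda|/C$ at the same $\tilde O(\log^4 n/\epsilon^4\delta)$ sample size. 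The terms are then handled one at a time.

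\textbf{Third term.} It is at most $\|\bv A'\|_2\,\epsilon\le\epsilon\|\bv A\|_F$; the denominator cancels exactly.

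\textbf{Second term.} Lemma~\ref{lem:as_norm} gives $\|\bv A'_m\Sbb\|_2\le\epsilon\|\bv A\|_F$, so the numerator is at most $|\lambda|\epsilon\|\bv A\|_F$. Dividing by the denominator, which is at least $|\lambda|/C$, leaves $O(\epsilon\|\bv A\|_F)$.

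\textbf{First term.} Apply Imported Lemma~\ref{lem:ao-sampling} with accuracy parameter $\epsilon^2$. This requires $s\ge c\log^4 n/(\epsilon^4\delta)$, within budget, and gives $\|\Sbb^T\bv A'_m\Sbb\|_2\le\epsilon^2\|\bv A\|_F$. The constant-factor embedding gives $\|\bv A'_o\Sbb\|_2\le2\|\bv A'\|_2=O(\|\bv A\|_F)$. Hence the numerator is $O(\epsilon^2\|\bv A\|_F^2)$, and since $|\lambda|\ge\epsilon\|\bv A\|_F$ the ratio is $O(\epsilon\|\bv A\|_F)$.

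This first term is the main obstacle. It is where the $1/\epsilon^4$ sample size is genuinely needed, and it requires the middle-eigenvalue bound to hold for the zeroed matrix at the finer accuracy $\epsilon^2$. One point needs checking: Imported Lemma~\ref{lem:ao-sampling} at accuracy $\epsilon^2$ defines its split at $L=\epsilon^2\sqrt{\delta}\|\bv A\|_F$ and uses zeroing thresholds with $\epsilon^2$. I would handle this the way the uniform case does. Keep the zeroing at level $\epsilon$, but observe that the bound on $\|\Sbb^T\bv A'_m\Sbb\|_2$ scales with $1/\sqrt{s}$. If that scaling fails, I would instead move the outlying/middle split to $L=\epsilon^2\sqrt{\delta}\|\bv A\|_F$. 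The subspace-embedding step tolerates this, since $R$ is already $\epsilon^2\sqrt{\delta}\|\bv A\|_F$.

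Finally, we union bound over the events from Imported Lemmas~\ref{lem:subspace-embedding-norm} and~\ref{lem:ao-sampling} and Lemmas~\ref{lem:as_norm} and~\ref{lem:asxl_norm}, each taken at confidence $\delta/4$. Adjusting $\epsilon$ by constants then gives $\|\bv A'\bv v-\lambda\bv v\|_2\le\epsilon\|\bv A\|_F$.
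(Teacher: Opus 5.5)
\emph{Gap: the bound on the first error term.} Your outline matches the paper's, which just transplants Lemma~\ref{lem:err1}. The reduction to $\bv A'_o$, the displayed decomposition, and your bounds on the second and third terms are fine. The gap is the first-term step you flagged yourself, and neither of your remedies closes it.

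\emph{Remedy (a) fails.} With the split at $L=\epsilon\sqrt{\delta}\|\bv A\|_F$, the bound $\|\Sbb^T\bv A'_m\Sbb\|_2\le\epsilon^2\|\bv A\|_F$ is false in general, for every $s$. With $R=\epsilon^2\sqrt\delta\|\bv A\|_F$, Property~\ref{ass:subspace-embedding} already makes $\Sbb$ a near-isometry on eigenvectors of $\bv A'$ with eigenvalues in $[R,L)$. So suppose $\bv A'_m$ has an eigenpair $(\mu,\bv u)$ with $\mu\approx L/2$ and its other middle eigenvalues are negligible. Testing on $\Sbb^T\bv u$ gives $\|\Sbb^T\bv A'_m\Sbb\|_2\gtrsim\mu\|\Sbb^T\bv u\|_2^2\approx\mu=\Theta(\epsilon\sqrt\delta\|\bv A\|_F)$, which is far above $\epsilon^2\|\bv A\|_F$ when $\epsilon\ll\sqrt\delta$. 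Only the random-fluctuation part of the bound in Imported Lemma~\ref{lem:ao-sampling} decays with $s$; the deterministic $\|\bv A'_m\|_2\le L$ part does not.

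\emph{Remedy (b) fails.} Moving the split to $L=R=\epsilon^2\sqrt\delta\|\bv A\|_F$ makes the first term legitimate, but it breaks your third term. Property~\ref{ass:subspace-embedding} then gives only distortion $1/10$ at the bottom of the enlarged $\bv U'_o$. So all you know is $\|\bv E_3\|_2\le 1/10$, and the third term is only $O(\|\bv A'\|_2)$. Uniform distortion $\epsilon$ on that subspace would need $R=\epsilon^3\sqrt\delta\|\bv A\|_F$, i.e.\ $\tilde\Omega(1/\epsilon^6)$ samples. The paper's one-line proof does not resolve this either: the proof of Lemma~\ref{lem:err1}, which it copies, invokes Imported Lemma~\ref{lem:am} at accuracy $\epsilon^2$ for $\bv A_m$ split at $\epsilon\sqrt\delta n$.

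\emph{What is missing: a level-set bound on the third term.} Use the argument from the fourth term in the proof of Lemma~\ref{lem:err_top}. Keep your fallback split $L=R=\epsilon^2\sqrt\delta\|\bv A\|_F$, so your first and second terms are $O(\epsilon\|\bv A\|_F)$ as computed. Then bound the third term blockwise rather than through a global bound on $\|\bv E_3\|_2$:
\begin{itemize}
\item Group the eigenvalues of $\bv A'_o$ into levels $(2^{-i}\|\bv A'\|_2,\,2^{-i+1}\|\bv A'\|_2]$, for $i\le r=O(\log\frac{1}{\epsilon\delta})$.
\item Property~\ref{ass:subspace-embedding} together with Imported Lemma~\ref{lem:approx-matrix-product} gives $\|\bv E_{3,ij}\|_2\le R\,2^{\max(i,j)}/\|\bv A'\|_2$. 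Hence $\|\bv\Lambda'_{o,i}\bv E_{3,ij}\bv\Lambda'_{o,j}\|_2\le 4R\|\bv A'\|_2\,2^{-\min(i,j)}$.
\item Summing over blocks and using $\|(\bv U'_o)^T\Sbb\bv x\|_2=O(1)$, the numerator of the third term is $O(R\|\bv A'\|_2\, r)$.
\item Dividing by $\|\bv A'\Sbb\bv x\|_2\ge|\lambda|/C\ge\epsilon\|\bv A\|_F/C$ (Lemma~\ref{lem:asxl_norm}) gives $O(\epsilon\sqrt\delta\|\bv A'\|_2\, r)$. This is absorbed by rescaling $\epsilon$, at polylogarithmic cost in $s$.
\end{itemize}
This route gives up the exact cancellation of the denominator in the third term and uses $|\lambda|\ge\epsilon\|\bv A\|_F$ instead. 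You still need to check that Imported Lemma~\ref{lem:ao-sampling} at accuracy $\epsilon^2$ holds for the matrix zeroed at level $\epsilon$, which is what the algorithm forms.
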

\begin{proof}
The proof follows exactly the same structure as that of Lemma~\ref{lem:err1}
by writing $(\bv U'_o)^T \bv U'_o$ as $(\bv U'_o)^T\Sbb \Sbb^T \bv U'_o+\bv E$
where $\|\bv E \|_2\leq \epsilon$ from the error bound in Lemma~\ref{lem:subspace-embedding-norm}
by setting $L=\epsilon^2 \|\bv A \|_F$.
We also use Lemma~\ref{lem:asxl_norm} to bound the denominator $\|\bv A' \Sbb \bv x \|_2$.
We omit the details here
\end{proof}

We are now ready to state the main results the top
eigenvector approximation using squared column-norm sampling.
First, we state the eigenvector approximation result using
Algorithm~\ref{alg:rest-eigvec-sqsamp}
assuming we have $\frac{s\|\bv A_i \|^2_2}{\| \bv A\|_F^2} \leq 1$
for each $i \in [n]$.
\begin{lemma}\label{lem:sq1}
Let $\bv A \in \R^{n \times n}$ be a symmetric matrix,
with eigenvalues $\lambda_1(\bv A) \geq \ldots \geq \lambda_n(\bv A)$,
and let $\epsilon, \delta \in (0,1)$.
Let Algorithm~\ref{alg:rest-eigvec-sqsamp} output the pairs
$\{(\tilde \lambda_j, \bv v_j)\}_{j=1}^{m}$, where $\tilde \lambda_j \in \R$, $\bv v_j \in \R^n$,
$\|\bv v_j \|_2 = 1$, and let $m_+$ (resp.\ $m_-$) be the number of output pairs with
$\tilde\lambda_j > 0$ (resp.\ $\tilde\lambda_j < 0$), so that $m = m_+ + m_-$.
Then, if $s \geq
\frac{c \log^4 n}{\epsilon^4\delta} \log^3 \frac{1}{\epsilon \delta}$ for some constant $c$
and
$\frac{s \|\bv A_{i,:} \|_2^2}{\|\bv A \|_F^2} \leq 1$ for all $i \in [n]$,
with probability at least $1-\delta$:
\begin{enumerate}
\item There is a bijection $\pi$ between the output pairs and the $m_+$ largest together with the
$m_-$ smallest eigenvalues of $\bv A$, i.e.\ the set
$\{\lambda_1(\bv A), \ldots, \lambda_{m_+}(\bv A)\} \cup \{\lambda_{n-m_-+1}(\bv A), \ldots, \lambda_n(\bv A)\}$,
such that every output pair $j \in [m]$ satisfies
\begin{align*}
 |\lambda_{\pi(j)}(\bv A)-\tilde \lambda_{j}| \leq \epsilon \| \bv A\|_F \text{, and } \quad
 \|\bv A \bv v_{j}-\lambda_{\pi(j)}(\bv A) \bv v_{j} \|_2 \leq \epsilon \| \bv A\|_F.
\end{align*}
\item Every eigenvalue of $\bv A$ outside this matched multiset has magnitude less than $\epsilon \|\bv A \|_F$.
\end{enumerate}
Moreover, Algorithm~\ref{alg:rest-eigvec-sqsamp} samples
$\frac{c \log^4 n}{\epsilon^4 \delta}
\log^3 \frac{1}{\epsilon \delta}$
columns from $\bv A$ in expectation,
for some sufficiently large constant $c$.
\end{lemma}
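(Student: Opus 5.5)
The plan mirrors the proof of Theorem~\ref{thm:main_bounded} given above, with uniform-sampling ingredients replaced by their squared column-norm analogues and one extra step for passing from $\bv A'$ back to $\bv A$. Run everything with error parameter a small constant multiple of $\epsilon$, say $\epsilon/8$, and condition on two events, each holding with probability at least $1-\delta/2$, then take a union bound.

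\begin{itemize}
\item \emph{Eigenvalue event.} Imported Theorem~\ref{thm:main-eigval-norm}: the signed sorted matching between the eigenvalues of $\Sbb^T\bv A'\Sbb$ and those of $\bv A$ has every matched pair within $\frac{\epsilon}{8}\|\bv A\|_F$.
\item \emph{Eigenvector event.} Lemma~\ref{lem:eignorm}, applied with threshold $\frac{\epsilon}{2}\|\bv A\|_F$ in place of $\epsilon n$; that statement's ``$|\lambda|\ge\epsilon n$'' is evidently a typo for $\epsilon\|\bv A\|_F$, and the constants absorb the factor $2$. Its proof goes through Lemma~\ref{lem:asxl_norm}, Imported Lemma~\ref{lem:ao-sampling}, Lemma~\ref{lem:as_norm} and Imported Lemma~\ref{lem:subspace-embedding-norm}. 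It is uniform over all eigenpairs of the sampled matrix, since the underlying events (subspace embedding, middle-part bounds) do not depend on which $\bv x$ is chosen. It gives, for every output pair $(\tilde\lambda_j,\bv v_j)$, the bound $\|\bv A'\bv v_j-\tilde\lambda_j\bv v_j\|_2\le\frac{\epsilon}{8}\|\bv A\|_F$.
\end{itemize}

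To transfer from $\bv A'$ to $\bv A$, use Imported Lemma~\ref{lem:aprime} with parameter $\epsilon/8$: $\|\bv A-\bv A'\|_2\le\frac{\epsilon}{8}\|\bv A\|_F$. For a unit $\bv v_j$ this gives
\[
\|\bv A\bv v_j-\tilde\lambda_j\bv v_j\|_2\le \|\bv A'\bv v_j-\tilde\lambda_j\bv v_j\|_2+\|(\bv A-\bv A')\bv v_j\|_2\le\tfrac{\epsilon}{4}\|\bv A\|_F .
\]

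Next, construct $\pi$ exactly as in the proof of Theorem~\ref{thm:main_bounded}. Map the $i$-th largest positive output to $\lambda_i(\bv A)$ and the $i$-th most negative output to $\lambda_{n-i+1}(\bv A)$. A positive output has $\tilde\lambda_j\ge\frac{\epsilon}{2}\|\bv A\|_F$, so its partner satisfies $\lambda_{\pi(j)}(\bv A)\ge \frac{3\epsilon}{8}\|\bv A\|_F>0$. Negative outputs are handled symmetrically, so the two index ranges are disjoint and $\pi$ is a bijection onto the stated multiset. Combining the triangle inequality with $|\lambda_{\pi(j)}(\bv A)-\tilde\lambda_j|\le\frac{\epsilon}{8}\|\bv A\|_F$ gives residual at most $\frac{3\epsilon}{8}\|\bv A\|_F\le\epsilon\|\bv A\|_F$.

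For item 2, suppose $\lambda_i(\bv A)\ge\epsilon\|\bv A\|_F$ with $i>m_+$. Its matched sampled value is then at least $\frac{7\epsilon}{8}\|\bv A\|_F>\frac{\epsilon}{2}\|\bv A\|_F$, so it is positive and would have been output. This contradicts $i>m_+$, and the negative side is symmetric. The expected sample count is $\sum_i p_i s$ with $\sum_i p_i=1$, which is $s$, so it matches the stated bound.

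The main obstacle is the transfer step together with the uniformity claim. Lemma~\ref{lem:eignorm} controls residuals for $\bv A'$, and its eigenpair is of $\Sbb^T\bv A'\Sbb$, not of $\Sbb^T\bv A\Sbb$. I would verify two things here. First, Imported Lemma~\ref{lem:aprime}'s zeroing of the diagonal and of small entries matches Algorithm~\ref{alg:rest-eigvec-sqsamp}'s rule with the same constant $c$, so that the $\epsilon\|\bv A\|_F$ perturbation applies. Second, the subspace-embedding and middle-part events used in Lemma~\ref{lem:eignorm} are stated for $\bv A'_o,\bv A'_m$ with threshold $L=\epsilon\sqrt\delta\|\bv A\|_F$ computed from $\|\bv A\|_F$ and not $\|\bv A'\|_F$. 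Since $\|\bv A'\|_F\le\|\bv A\|_F$, the sampling probabilities $p_i$ built from $\bv A$'s row norms dominate what $\bv A'$ needs, and Imported Lemma~\ref{lem:subspace-embedding-norm} still applies. If this needs care, I would fall back to running the argument with a constant factor larger $s$.
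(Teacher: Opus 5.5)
Your proposal is correct and is essentially the paper's argument. The paper likewise combines Imported Theorem~\ref{thm:main-eigval-norm} with Lemma~\ref{lem:eignorm}, reuses the signed-matching argument from the proof of Theorem~\ref{thm:main_bounded}, and passes from $\bv A'$ to $\bv A$ via Imported Lemma~\ref{lem:aprime} (as remarked before Lemma~\ref{Lem:spec_norm2}); you simply spell this out in more detail. One small correction to your last paragraph: what makes the $\bv A$-based probabilities adequate for $\bv A'$ is the row-wise bound $\|\bv A'_{i,:}\|_2 \le \|\bv A_{i,:}\|_2$, which bounds the leverage scores of $\bv A'$'s outlying eigenspace, not $\|\bv A'\|_F \le \|\bv A\|_F$; and since $s\|\bv A_{i,:}\|_2^2 \le \|\bv A\|_F^2$ with $s \ge c/\epsilon^4$ forces every diagonal entry to be zeroed at level $\epsilon$ and at level $\epsilon/8$ alike, invoking the zeroing-based results at $\epsilon/8$ only changes the constant $c$.
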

\begin{proof}
We use Imported Theorem~\ref{thm:main-eigval-norm}
for proving the eigenvalue approximation result along with
Lemma~\ref{lem:eignorm} for bounding the eigenvector approximation error.
We omit the details as the proof follows
exactly the same steps as the
proof of Theorem~\ref{thm:main-eigval-norm}.
\end{proof}

Now, we will relax the $\frac{s\|\bv A_i \|^2_2}{\| \bv A\|_F^2} \leq 1$ assumption and use Algorithm~\ref{alg:eigvec-rowsamp} to approximate the eigenvectors for any symmetric matrix. We accomplish this in Theorem~\ref{thm:sqnorm1} via two main steps: first, following~\cite{swartworth2025tight}, we analyze a binomial sampling scheme applied to an \emph{inflated} matrix $\bv A_M$---where each column of $\bv A$ is duplicated $M$ times and scaled by $1/\sqrt{M}$ to ensure all sampling probabilities are at most $1$. Second, we transfer this error guarantee to the Poisson sampling scheme used in Algorithm~\ref{alg:eigvec-rowsamp} by bounding the total variation distance between the two sampling distributions (Lemma~\ref{lem:TV}).
Note that in Algorithm~\ref{alg:eigvec-rowsamp},
we first sample a Poisson distribution with mean $s$
$K \sim \mathrm{Poisson(s)}$, and then sample $K$ times independently from
the squared column-norm distribution.
We use this sampling scheme, instead of simply sampling each column $i$
$\mathrm{Binomial}\left(M, \frac{s\| \bv A_i\|^2_2}{M\|\bv A \|_F^2} \right)$
times as in~\cite{swartworth2025tight}, so that we can adapt
Algorithm~\ref{alg:eigvec-rowsamp} to the
quantum-inspired framework in Section~\ref{sec:qram}.
In this framework, we can query entries of the matrix and
draw independent samples according to the distribution
of the squared column-norms (see Section~\ref{sec:qram} for details).
The most natural thing for adapting our algorithm
to this quantum-inspired framework
would be to
just sample $s$ times from the squared column-norm distribution.
However, for the bounds from~\cite{tropp2008norms}, which
are used to bound the
small magnitude eigenvalues of the sampled principal submatrix
in~\cite{swartworth2025tight,Bhattacharjee:2021wl}
(Imported Lemma~\ref{lem:ao-sampling}), it is crucial
that each row and column is sampled independently
of other rows and columns. This won't
be the case if we simply sample $s$ times from
the squared column-norm distribution. We briefly explain the sampling scheme based on the Poisson distribution below and prove some results which will help in proving Theorem~\ref{thm:sqnorm1}.

\paragraph{Poisson Sampling:} Note that the two-step sampling in Algorithm~\ref{alg:eigvec-rowsamp}
is exactly the same as sampling index $i$ $\mathrm{Poisson}(sp_i)$
times independently of other indices (Lemma~\ref{lem:pois})

.
Let $n_i \sim \mathrm{Poisson}(sp_i)$.
By the Poisson splitting property,
$(n_1, \ldots, n_n)$ are \emph{mutually independent} (Lemma~\ref{lem:pois}).
Thus, we can use Algorithm~\ref{alg:eigvec-rowsamp} in the quantum-inspired
model.

To transfer the error guarantee from Binomial to Poisson sampling, we again use the inflated-matrix argument described above:
we compare $\mathrm{Poisson}(sp_i)$ to $\mathrm{Binomial}(M, sp_i/M)$,
which arises from sampling the inflated matrix $\bv Q_M \bv A \bv Q_M^T$
where each row/columns of $\bv A$ is repeated $M$ times
and then scaled down by $\frac{1}{\sqrt{M}}$.
(see the proof of Theorem~\ref{thm:sqnorm1}).
By Lemma~\ref{lem:TV},
the TV distance between the two joint distributions of $\mathrm{Poisson}(sp_i)$
and $\mathrm{Binomial}(M, sp_i/M)$
is at most $s^2/M$,
which is bounded by $\delta$ for $M \geq \frac{s^2}{\delta}$.

We now state the lemma bounding the Total Variation (TV) distance between
these distributions.

\begin{lemma}[TV Distance Bound]\label{lem:TV}
Let $P_M$ be the joint distribution of the independent counts $n_i^{(M)} \sim \mathrm{Binomial}(M, \frac{sp_i}{M})$ for all $i \in [n]$. Let $P_{Pois}$ be the joint distribution of independent random variables $n_i \sim \mathrm{Poisson}(sp_i)$ for all $i \in [n]$. Then the Total Variation distance is bounded by:
\[ d_{TV}(P_M, P_{Pois}) \le \frac{s^2}{M} \]
\end{lemma}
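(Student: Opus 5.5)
Since both $P_M$ and $P_{Pois}$ are product measures over the coordinates $i \in [n]$, I will bound their total variation distance coordinatewise. The standard subadditivity of total variation for product measures gives
\[
d_{TV}(P_M, P_{Pois}) \le \sum_{i=1}^n d_{TV}\big(\mathrm{Binomial}(M, \tfrac{sp_i}{M}), \mathrm{Poisson}(sp_i)\big).
\]
This follows by coupling each coordinate optimally and independently, then taking a union bound over the event that some coordinate disagrees. The whole proof therefore reduces to a one-dimensional Binomial–Poisson comparison, plus a final summation using $\sum_i p_i = 1$.

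For the one-dimensional step, write $q_i = sp_i/M$. We need $q_i \le 1$ for the Binomial to be well defined; this holds whenever $M \ge s$, which the surrounding argument assumes. A Binomial$(M,q)$ variable is a sum of $M$ independent Bernoulli$(q)$ variables, and a Poisson$(Mq)$ variable is a sum of $M$ independent Poisson$(q)$ variables. Applying the same product/sum coupling again gives
\[
d_{TV}\big(\mathrm{Bin}(M,q),\mathrm{Pois}(Mq)\big) \le M\, d_{TV}\big(\mathrm{Ber}(q),\mathrm{Pois}(q)\big).
\]
Here I use the fact that total variation does not increase under the deterministic map $(x_1,\dots,x_M) \mapsto \sum_j x_j$.

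The elementary Bernoulli–Poisson estimate is the classical one. The total variation is
\[
\tfrac12\big(|1-q-e^{-q}| + |q - qe^{-q}| + \Pr[\mathrm{Pois}(q)\ge 2]\big).
\]
Using $e^{-q} \ge 1-q$, this simplifies to $q(1-e^{-q}) \le q^2$. So each coordinate contributes at most $Mq_i^2 = s^2p_i^2/M$. Alternatively, one can cite Le Cam's inequality, which gives the same bound directly.

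Summing over coordinates gives
\[
d_{TV}(P_M,P_{Pois}) \le \frac{s^2}{M}\sum_i p_i^2 \le \frac{s^2}{M}\Big(\max_i p_i\Big)\sum_i p_i \le \frac{s^2}{M},
\]
since the $p_i = \|\bv A_i\|_2^2/\|\bv A\|_F^2$ form a probability distribution. I do not expect a real obstacle. The only points to watch are justifying the two subadditivity steps, which follow from an explicit coupling argument, and getting the Bernoulli–Poisson constant right. If the constant came out as $2q^2$, the stated bound would still follow after absorbing the factor via $\sum_i p_i^2 \le 1$, or one could simply cite Le Cam's bound $\sum_j q_j^2$.
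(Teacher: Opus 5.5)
Your proposal is correct and follows the same route as the paper: bound the joint total variation by the sum of per-coordinate distances, apply the Le Cam bound $M(sp_i/M)^2$ to each coordinate, and sum using $\sum_i p_i^2 \le 1$. The only difference is that you derive Le Cam's bound yourself via the Bernoulli--Poisson coupling (your computation $q(1-e^{-q}) \le q^2$ is right), whereas the paper cites it. Your fallback remark that a possible factor of $2$ could be absorbed via $\sum_i p_i^2 \le 1$ would not actually work, but this is moot since the constant comes out as $1$.
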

\begin{proof}
By Le Cam's inequality~\cite{le1960approximation}, the TV distance between a $\mathrm{Binomial}(M, p)$ and $\mathrm{Poisson}(Mp)$ distribution is at most $M p^2$. Because the variables are mutually independent across $i \in [n]$ in both distributions, the joint TV distance is bounded by the sum of the marginal TV distances:
\[ d_{TV}(P_M, P_{Pois}) \le \sum_{i=1}^n M \left(\frac{sp_i}{M}\right)^2 = \frac{s^2}{M} \sum_{i=1}^n p_i^2 \]
Since $p_i$ is a probability distribution ($\sum_{i=1}^n p_i = 1$), we have $\sum_{i=1}^n p_i^2 \le 1$, yielding $d_{TV} \le \frac{s^2}{M}$.
\end{proof}

We next state the Poisson splitting property which guarantees that the columns
are sampled mutually independent of each other.
This is a standard property of the Poisson distribution.
A $\mathrm{Multinomial}(K, p_1, \dots, p_n)$ distribution where the
number of trials $K$ is distributed as $\mathrm{Poisson}(s)$ results
in marginal counts $n_i$ that are mutually independent and
distributed exactly as $\mathrm{Poisson}(sp_i)$~\cite{mitzenmacher2017probability}.
\begin{lemma}[Poisson Splitting]\label{lem:pois}
Let $(p_1, \ldots, p_n)$ with $\sum_i p_i=1$ and $p_i \geq 0$ be a distribution over n indices such that the probability of drawing the $i$\textsuperscript{th} index is $p_i$. Let $P_{Pois}$ be a distribution defined in the following way: we draw $K \sim \mathrm{Poisson}(s)$ and then independently draw $K$ samples from the distribution $(p_1, \ldots, p_n)$. Let $n_i$ be the total number of times index $i$ is drawn. Then, each $n_i$ is distributed as $\mathrm{Poisson}(sp_i)$. The joint distribution of $(n_1, \dots, n_n)$ is exactly $P_{Pois}$.
\end{lemma}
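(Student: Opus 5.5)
The plan is to compute the joint probability mass function of $(n_1,\ldots,n_n)$ directly by conditioning on $K$, and to show that it factors into a product of Poisson$(sp_i)$ mass functions. Factorization gives, at once, that each $n_i \sim \mathrm{Poisson}(sp_i)$ and that the $n_i$ are mutually independent. Hence the joint law coincides with the product distribution $P_{Pois}$ used in Lemma~\ref{lem:TV}.

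Fix nonnegative integers $k_1,\ldots,k_n$ and set $k=\sum_i k_i$. The event $\{n_i=k_i \ \forall i\}$ forces $K=k$. Conditioned on $K=k$, the counts are $\mathrm{Multinomial}(k,p_1,\ldots,p_n)$. Therefore
\[
\Pr[n_1=k_1,\ldots,n_n=k_n] = e^{-s}\frac{s^k}{k!}\cdot \frac{k!}{k_1!\cdots k_n!}\prod_{i=1}^n p_i^{k_i}.
\]
The $k!$ factors cancel. Using $\sum_i p_i = 1$, we write $e^{-s}=\prod_i e^{-sp_i}$ and $s^k=\prod_i s^{k_i}$. The expression then becomes
\[
\prod_{i=1}^n e^{-sp_i}\frac{(sp_i)^{k_i}}{k_i!},
\]
which is exactly the product of Poisson$(sp_i)$ mass functions evaluated at $k_i$.

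Summing this product over all $k_j$ with $j\neq i$ gives the marginal $n_i\sim\mathrm{Poisson}(sp_i)$. The product form then establishes mutual independence. Indices with $p_i=0$ are harmless: they are never drawn, and $\mathrm{Poisson}(0)$ is the point mass at $0$, consistent with the convention $0^0=1$.

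There is no real obstacle here. The only care needed is to note that conditioning on $K$ is legitimate because $K$ is determined by the counts, and to use $\sum_i p_i=1$ when splitting $e^{-s}$.
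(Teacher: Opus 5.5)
Your proof is correct and complete. Conditioning on $K=\sum_i k_i$, cancelling the $k!$, and splitting $e^{-s}=\prod_i e^{-sp_i}$ using $\sum_i p_i=1$ is the standard computation behind this fact; the paper gives no proof of its own and cites it as a textbook property of the Poisson distribution~\cite{mitzenmacher2017probability}. You also read the somewhat circular last sentence of the statement correctly: $P_{Pois}$ is meant to be the product law of independent $\mathrm{Poisson}(sp_i)$ counts, as used in Lemma~\ref{lem:TV}.
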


\paragraph{Final Bounds.} We are now ready to state our main result for eigenvector
approximation using Algorithm~\ref{alg:eigvec-rowsamp}.
\begin{theorem}\label{thm:sqnorm1}
Let $\bv A \in \R^{n \times n}$ be a symmetric matrix,
with eigenvalues $\lambda_1(\bv A) \geq \ldots \geq \lambda_n(\bv A)$,
and let $\epsilon, \delta \in (0,1)$.
Let Algorithm~\ref{alg:eigvec-rowsamp} output the pairs
$\{(\tilde \lambda_j, \bv v_j)\}_{j=1}^{m}$ with $\bv A$ and $s=\frac{c \log^4 n}{\epsilon^4 \delta}
\log^3 \frac{1}{\epsilon \delta}$ as inputs (for a sufficiently large constant $c$), where $\tilde \lambda_j \in \R$, $\bv v_j \in \R^n$,
$\|\bv v_j \|_2 = 1$, and let $m_+$ (resp.\ $m_-$) be the number of output pairs with
$\tilde\lambda_j > 0$ (resp.\ $\tilde\lambda_j < 0$), so that $m = m_+ + m_-$. Then,
with probability at least $1-\delta$:
\begin{enumerate}
\item There is a bijection $\pi$ between the output pairs and the $m_+$ largest together with the
$m_-$ smallest eigenvalues of $\bv A$, i.e.\ the set
$\{\lambda_1(\bv A), \ldots, \lambda_{m_+}(\bv A)\} \cup \{\lambda_{n-m_-+1}(\bv A), \ldots, \lambda_n(\bv A)\}$,
such that every output pair $j \in [m]$ satisfies
\begin{align*}
 |\lambda_{\pi(j)}(\bv A)-\tilde \lambda_{j}| \leq \epsilon \| \bv A\|_F \text{, and } \quad
 \|\bv A \bv v_{j}-\lambda_{\pi(j)}(\bv A) \bv v_{j} \|_2 \leq \epsilon \| \bv A\|_F.
\end{align*}
\item Every eigenvalue of $\bv A$ outside this matched multiset has magnitude less than $\epsilon \|\bv A \|_F$.
\end{enumerate}
Moreover, Algorithm~\ref{alg:eigvec-rowsamp} samples
$O\left(\frac{ \log^4 n}{\epsilon^4 \delta}
\log^3 \frac{1}{\epsilon \delta} \right)$
columns from $\bv A$ in expectation.
\end{theorem}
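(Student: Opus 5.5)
The plan is to reduce Theorem~\ref{thm:sqnorm1} to Lemma~\ref{lem:sq1} through the inflation argument, and then move from binomial to Poisson sampling using Lemmas~\ref{lem:TV} and~\ref{lem:pois}.

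\textbf{Step 1: the inflated matrix.} Fix $M \geq s^2/\delta$ (and $M \ge s$). Let $\bv Q_M \in \R^{nM \times n}$ be the matrix that repeats each standard basis row $M$ times and scales by $1/\sqrt{M}$. Then $\bv Q_M^T \bv Q_M = \bv I$, and $\bv A_M = \bv Q_M \bv A \bv Q_M^T$ has the following properties:
\begin{itemize}
\item its nonzero eigenvalues are exactly those of $\bv A$, with eigenvectors $\bv Q_M \bv u$;
\item $\|\bv A_M\|_F = \|\bv A\|_F$;
\item each row of $\bv A_M$ has squared norm $\|\bv A_{i,:}\|_2^2/M$, so $s\|(\bv A_M)_{k,:}\|_2^2/\|\bv A_M\|_F^2 \le s/M \le 1$.
\end{itemize}
The hypothesis of Lemma~\ref{lem:sq1} therefore holds for $\bv A_M$, and Algorithm~\ref{alg:rest-eigvec-sqsamp} run on $\bv A_M$ yields, with probability $1-\delta$, eigenpairs satisfying items 1 and 2 with respect to $\bv A_M$.

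\textbf{Step 2: matching the two algorithms.} I would check that the output of Algorithm~\ref{alg:rest-eigvec-sqsamp} on $\bv A_M$ depends only on the counts $n_i^{(M)}\sim\mathrm{Binomial}(M, sp_i/M)$ of sampled copies of each original index $i$.

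\emph{Zeroing.} The zeroing rule of Definition~\ref{def:zeroing} is invariant under inflation up to two points. Off-diagonal test: the left side scales by $1/M^2$, and $|\bv A_{ij}|^2$ on the right also scales by $1/M^2$, so the test is unchanged. Diagonal entries: in $\bv A_M$ the diagonal entries become $\bv A_{ii}/M$ and have tiny row norms, so they are zeroed. Copies of the same index $i$ form entries $\bv A_{ii}/M$ at off-diagonal positions; these satisfy the off-diagonal test, which matches Algorithm~\ref{alg:eigvec-rowsamp}'s choice to zero out only $p=q$ and keep repeated-index pairs subject to the same test. This correspondence is the step I expect to be the main obstacle: verifying that the zeroing patterns of Step~\ref{step:zero1} and Definition~\ref{def:zeroing} on $\bv A_M$ agree exactly, including the $i_p=i_q$, $p\ne q$ entries. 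If they do not agree exactly, I would absorb the discrepancy via Imported Lemma~\ref{lem:aprime}, since any zeroed pattern obeying these thresholds is within $\epsilon\|\bv A\|_F$ in spectral norm.

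\emph{Scaling.} The rescaling $1/\sqrt{s p^{(M)}_k}$ with $p^{(M)}_k = p_i/M$, combined with the $1/\sqrt{M}$ entries of $\bv A_M$, gives exactly $\bv A_{j i_q}/\sqrt{s p_{i_q}}$ per column, up to the row duplication $\bv Q_M$.

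\emph{Output.} With these in place, $\Sbb^T \bv A_M' \Sbb$ coincides with $\bv A'_{S,S}$, and $\bv A_M'\Sbb \bv x = \bv Q_M \bv A'_{:,S}\bv x$. Because $\bv Q_M$ is an isometry, the normalized vectors and the residuals $\|\bv A_M \bv Q_M\bv v - \lambda \bv Q_M \bv v\|_2 = \|\bv A\bv v-\lambda\bv v\|_2$ transfer. Eigenvalues transfer as well, since the ordered spectra of $\bv A$ and $\bv A_M$ agree after padding with zeros.

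\textbf{Step 3: binomial to Poisson.} By Lemma~\ref{lem:pois}, Algorithm~\ref{alg:eigvec-rowsamp} draws independent counts $n_i\sim\mathrm{Poisson}(sp_i)$. Apart from the order of sampled indices, to which the eigenvalues and lifted vectors are invariant, its output is the same function of the counts as in Step 2. Lemma~\ref{lem:TV} bounds the total variation distance between the count distributions by $s^2/M\le\delta$. The failure event therefore has probability at most $2\delta$, and rescaling $\delta$ finishes the proof. The expected number of sampled columns is $\E[K]=s$, which gives the stated bound.
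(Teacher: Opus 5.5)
Your overall route is the paper's: inflate with $M=s^2/\delta$, apply Lemma~\ref{lem:sq1} to $\bv A_M$, then move from binomial to Poisson counts via Lemmas~\ref{lem:pois} and~\ref{lem:TV}. However, the ``Output'' step asserts an identity that is false.

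You checked the zeroing correspondence only for the principal submatrix of Step~\ref{step:zero1}, where it is indeed exact. You then claimed $\bv A_M'\Sbb\bv x=\bv Q_M\bv A'_{:,S}\bv x$. This fails because the lifting matrix of Step~\ref{step:zero} has no diagonal branch: in column $q$, row $i_q$ keeps $\bv A_{i_qi_q}/\sqrt{sp_{i_q}}$ whenever it passes the off-diagonal test. In $\bv A_M'$, by contrast, every diagonal entry is zeroed by the row-norm rule; as you note, $\|(\bv A_M)_{k,:}\|_2^2=\|\bv A_{i,:}\|_2^2/M<\frac{\epsilon^2}{4}\|\bv A\|_F^2$. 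So in each sampled column, the one row-copy of $i_q$ sitting at that column's own diagonal position in $\bv A_M$ is $0$ in $\bv A_M'\Sbb$, but equals $\frac{1}{\sqrt M}\bv A_{i_qi_q}/\sqrt{sp_{i_q}}$ in $\bv Q_M\bv A'_{:,S}$. Thus $\bv Q_M\bv A'_{:,S}=\bv A_M'\Sbb+\bv D$ with $\bv D\neq\bv 0$ in general. The inflated algorithm's output vector is then not in the range of $\bv Q_M$, and your isometry transfer of the residual does not apply as written.

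Your fallback via Imported Lemma~\ref{lem:aprime} does not repair this, for two reasons.
\begin{enumerate}
\item That lemma bounds an unsampled perturbation $\|\bv A-\bv A'\|_2$. This discrepancy lives in the sampled matrix, whose columns are rescaled by up to $\sqrt{M/(sp_i)}$.
\item A perturbation of the unnormalized lifted vector of size $\epsilon\|\bv A\|_F$ would be comparable to $\|\bv A_M'\Sbb\bv x\|_2=\Theta(|\lambda|)$ itself, which would wreck the normalized vector.
\end{enumerate}

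What is needed is the paper's direct estimate on $\bv D$. It has at most one nonzero per column, in distinct rows. Using $\bv A_{jj}^2\le p_j\|\bv A\|_F^2$, this gives $\|\bv D\|_F^2\le\sum_{q\in[K]}\frac{\bv A_{i_qi_q}^2}{Msp_{i_q}}\le\frac{K\|\bv A\|_F^2}{Ms}\le\frac{2\delta\|\bv A\|_F^2}{s^2}$ for $M=s^2/\delta$ and $K\le 2s$. The event $K\le 2s$ is a tail event you must add to the union bound.

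Combine this with the lower bound $\|\bv A_M'\Sbb\bv x\|_2\ge|\lambda|/C\ge\epsilon\|\bv A\|_F/(2C)$ from Lemma~\ref{lem:asxl_norm} applied to $\bv A_M$. The two normalized vectors then differ by $O(\sqrt{\delta}/(\epsilon s))$, which changes the residual by $O(\sqrt{\delta}\|\bv A\|_F/(\epsilon s))\ll\epsilon\|\bv A\|_F$. This is a second place where the large choice $M=s^2/\delta$ is used, beyond the TV bound.

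The eigenvalue part of your argument is fine as is, since Step~\ref{step:zero1} does reproduce $\Sbb^T\bv A_M'\Sbb$ exactly. A smaller omission: Lemma~\ref{lem:sq1} on the $nM$-dimensional matrix needs $s$ to scale with $\log^4(nM)$ rather than $\log^4 n$, which the paper absorbs into the constant using $M=s^2/\delta$.
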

\begin{proof}
From Lemma~\ref{lem:sq1}, the guarantee holds for if we run Algorithm~\ref{alg:rest-eigvec-sqsamp}
with any matrix
with $\frac{s\|\bv A_i \|^2_2}{\|\bv A \|^2_F} \leq 1$ for all $i \in [n]$ as input.
We now relax this assumption by using a
similar argument as in~\cite{swartworth2025tight} of
making multiple copies if the rows and columns of $\bv A$ and then scaling down the norms.

\noindent\textbf{Reduction to Binomial sampling.} Let $\bv Q_M$ be a vertical stack of consisting of
$M =\frac{s^2}{\delta}$ copies of
$\frac{1}{\sqrt{M}}\bv I_n$.
Now, consider the matrix $\bv Q_M \bv A \bv Q_M^T$.
Note that~\cite{swartworth2025tight} make this argument for $M=s$
while we argue for this for $M =\frac{s^2}{\delta}$,
the reasons for which will become clear.
The nonzero spectrum of $\bv Q_M \bv A \bv Q_M^T$
coincides with the nonzero spectrum of $\bv A$ and $\|\bv Q_M \bv A \bv Q_M^T \|_F=\|\bv A \|_F$.
Also, for any row $i$ of $\bv Q_M \bv A \bv Q_M^T$, we have:
 $$\frac{s\|(\bv Q_M \bv A \bv Q_M^T)_{i,:} \|_2^2}{\|\bv Q_M \bv A \bv Q_M^T \|_F^2}
=\frac{1}{M}\frac{s\|\bv A_{i',:} \|_2^2}{\|\bv A \|^2_F} = \frac{sp_i}{M}$$ for some
row $i'$ of $\bv A$. Now for any $M>s$, the row norms of
$\bv Q_M \bv A \bv Q_M^T$ are bounded by $1$.
Thus, Lemma~\ref{lem:sq1} should hold for
sampling from $\bv Q \bv A \bv Q^T$ when sample column $i$ with probability
$\frac{s\|(\bv Q_M \bv A \bv Q_M^T)_{i,:} \|_2^2}{\|\bv Q \bv A \bv Q^T \|_F^2}=\frac{sp_i}{M}$.
Since $\bv Q_M \bv A \bv Q_M $ has $nM=\frac{ns^2}{\delta}$ columns,
in order to achieve error $\epsilon \|\bv Q_M \bv A \bv Q_M \|_F=\epsilon \|\bv A \|_F$,
we need to sample $s \geq \frac{c\log^4 (ns^2/\delta )}{\epsilon^4 \delta}\log ^3 \frac{1}{\epsilon \delta}$
columns i.e. $s \geq \frac{c\log^4 n}{\epsilon^4 \delta}\log ^3 \frac{1}{\epsilon \delta}$
columns after adjusting the constant $c$.
To simulate Algorithm~\ref{alg:rest-eigvec-sqsamp}
with $\bv Q_M \bv A \bv Q_M^T $ as input, we can sample each column $i$ from $\bv A$
$\textrm{Binom}(M, \frac{sp_i}{M})$ many times.
Note that we will still draw $s$ columns in expectation.
Also, note that for the matrix $\bv Q_M \bv A \bv Q_M^T$, the diagonal zeroing out condition
in Algorithm~\ref{alg:rest-eigvec-sqsamp} is always satisfied since $s\geq \frac{c}{\epsilon^4}$, so exactly the copy positions
$(i',i')$ are zeroed; two \emph{distinct} copies of the same index of $\bv A$ form an
off-diagonal pair and are governed by the off-diagonal condition.
On the other hand, the off-diagonal zeroing condition in Algorithm~\ref{alg:rest-eigvec-sqsamp}
is still the same for $\bv Q_M \bv A \bv Q_M^T$.
Thus, the final algorithm applied to $\bv A$ is Algorithm~\ref{alg:eigvec-rowsamp},
but instead of the sampling scheme in Steps~\ref{step:pois} and~\ref{step:samp}, we sample each column
$i$, $\textrm{Binom}(M, \frac{sp_i}{M})$ times independently.
Under this correspondence Step~\ref{step:zero1} reproduces the sampled principal submatrix
of $\bv Q_M \bv A \bv Q_M^T$ exactly, while Step~\ref{step:zero} differs only at the single
row-copy of $i_q$ per sampled slot $q$; writing $\bv D$ for this difference,
$\|\bv D\|_F^2 \le \sum_{q \in [K]} \frac{1}{M}\frac{\bv A_{i_q i_q}^2}{s p_{i_q}}
\le \frac{K \|\bv A\|_F^2}{Ms} \le \frac{2\delta \|\bv A\|_F^2}{s^2}$,
using $\bv A_{jj}^2 \le p_j \|\bv A\|_F^2$, $M=\frac{s^2}{\delta}$ and $K \le 2s$.
Since $s \geq \frac{c}{\epsilon^4}$, we get $\|\bv D\|_F = O(\epsilon^4 \|\bv A\|_F)$,
which is absorbed into the $\epsilon \|\bv A\|_F$ error.

\noindent\textbf{Transfer to Poisson sampling.}
Now, we show that using the sampling scheme of Algorithm~\ref{alg:eigvec-rowsamp}
should also work instead of picking column $i$ $\textrm{Binom}(M, \frac{sp_i}{M})$ times.
First, from the Poisson splitting argument in Lemma~\ref{lem:pois},
the sampling scheme from Algorithm~\ref{alg:eigvec-rowsamp} is
equivalent to picking column each $i$ $\mathrm{Poisson}(sp_i)$ many times.
Let $P_M$ be the joint distribution of the independent counts
$n_i^{(M)} \sim \mathrm{Binomial}(M, \frac{sp_i}{M})$ for all $i \in [n]$.
Let $P_{Pois}$ be the joint distribution of
independent random variables $n_i \sim \mathrm{Poisson}(sp_i)$ for all $i \in [n]$.
Then, from Lemma~\ref{lem:TV}, we have $ d_{TV}(P_M, P_{Pois}) \le \frac{s^2}{M}$.

Let $\mathcal{E}_{err}$ be the bad case when any approximate
eigenvector given by Algorithm~\ref{alg:eigvec-rowsamp}, with the binomial sampling scheme
specified above, has residual
error exceeding $\epsilon \|\mathbf{A}\|_F$. From our argument above,
we know that $\mathbb{P}_M(\mathcal{E}_{err}) \le \delta$.
By the definition of total variation distance, we have:
\[ \mathbb{P}_{Pois}(\mathcal{E}_{err}) \le \mathbb{P}_M(\mathcal{E}_{err})
+ d_{TV}(P_M, P_{Pois}) \le \delta + \frac{s^2}{M} \]
By setting $M=\frac{s^2}{\delta}$ we have
$\mathbb{P}_{Pois}(\mathcal{E}_{err}) \leq 2\delta$.
Thus, we have bounded the case when any approximate eigenvector
given by Algorithm~\ref{alg:eigvec-rowsamp} has error exceeding $2\delta$.
Finally, we bound the sample complexity of Algorithm~\ref{alg:eigvec-rowsamp}.
Note that we sample $K \sim \mathrm{Poisson}(s)$ columns. By standard Poisson tail bounds,
we always have $K<2s$ with probability at least $1-\delta$.
Thus, we still pick $O(s)$ columns in expectation. We get the final result
by taking a union bound over the complement of the events $\mathcal{E}_{err}$ under
Algorithm~\ref{alg:eigvec-rowsamp}
(i.e. the approximate eigenvector has bounded residual error) and
$K<2s$ and
adjusting $\epsilon$, $\delta$ by constant factors.

\end{proof}

Similar to Corollary~\ref{cor:local} for uniform sampling,
we state a corollary that Algorithm~\ref{alg:eigvec-rowsamp}
gives the first sublinear time algorithm to output a single entry of an approximate
eigenvector of $\bv A$ by reading $\tilde{O}(1/\epsilon^4)$ entries of $\bv A$.
First, we show that $\leq \|\bv A'_{[:,S]} \bv x \|_2 \approx |\lambda|$ in
Algorithm~\ref{alg:eigvec-rowsamp}. This is basically Lemma~\ref{lem:asxl_norm}
without the assumption $\frac{s\|\bv A_{i,:}\|_2^2}{\|\bv A\|_F^2} \le 1$ for all $i \in [n]$.

\begin{lemma}\label{lem:asxl_norm_unconditional}
Let $\bv A'_{S,S}$ be the $|S| \times |S|$ matrix and $\bv A'_{[:,S]}$
be the $n \times |S|$ matrix constructed in Steps~\ref{step:zero} and~\ref{step:zero1} of Algorithm~\ref{alg:eigvec-rowsamp}
with a symmetric $\bv A \in \R^{n \times n}$ as input.
Let $\bv{x}$ be an eigenvector of $\bv A'_{S,S}$ with eigenvalue $\lambda$
such that $|\lambda| \geq \epsilon \|\bv A \|_F$.
For $s \geq \frac{c \log^4 n}{\epsilon^4 \delta}\log^3 \frac{1}{\epsilon \delta} $,
for a sufficiently large constant $c$, with probability at least $1-\delta$, we have (for some constant $C$):
\begin{align*}
 \frac{1}{C}|\lambda| \leq \|\bv A'_{[:,S]} \bv x \|_2 \leq C|\lambda|
\end{align*}
\end{lemma}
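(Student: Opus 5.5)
We reduce Lemma~\ref{lem:asxl_norm_unconditional} to Lemma~\ref{lem:asxl_norm} applied to the inflated matrix $\bv A_M = \bv Q_M \bv A \bv Q_M^T$ with $M = s^2/\delta$, exactly as in the proof of Theorem~\ref{thm:sqnorm1}.

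\textbf{Step 1: the inflated instance.} For $\bv A_M$ every sampling probability is $sp_i/M \le 1$, we have $\|\bv A_M\|_F = \|\bv A\|_F$, and $\log(nM) = O(\log n + \log(s/\delta))$, which is absorbed into $\log^4 n$ after adjusting $c$. Hence Lemma~\ref{lem:asxl_norm} applies to $\bv A_M$ under the binomial scheme (each column of $\bv A$ drawn $\mathrm{Binom}(M, sp_i/M)$ times). With probability $1-\delta$, every eigenpair $(\bv x,\lambda)$ of $\Sbb^T \bv A_M' \Sbb$ with $|\lambda|\ge \epsilon\|\bv A\|_F$ then satisfies $\|\bv A_M'\Sbb \bv x\|_2 \in [|\lambda|/C, C|\lambda|]$.

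\textbf{Step 2: matching with Algorithm~\ref{alg:eigvec-rowsamp}.} As noted in the proof of Theorem~\ref{thm:sqnorm1}, Step~\ref{step:zero1} reproduces $\Sbb^T\bv A_M'\Sbb$ exactly, so $\bv x$ and $\lambda$ coincide. We then need to relate $\bv A_M'\Sbb\bv x \in \R^{nM}$ to $\bv A'_{[:,S]}\bv x \in \R^n$. The $M$ row-copies of index $j$ in $\bv A_M'\Sbb$ are identical up to the $1/\sqrt M$ scaling, apart from the one copy that coincides with a sampled slot and is zeroed on the diagonal. Summing the squared norms of the $M$ copies therefore gives $\|\bv A'_{[:,S]}\bv x\|_2^2$ up to the perturbation $\bv D$, where $\|\bv D\|_F \le O(\epsilon^4\|\bv A\|_F)$ as bounded in Theorem~\ref{thm:sqnorm1}. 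Since $\|\bv x\|_2=1$ and $|\lambda|\ge \epsilon\|\bv A\|_F$, this perturbation is at most $O(\epsilon^3)|\lambda|$. By the triangle inequality, $\|\bv A'_{[:,S]}\bv x\|_2$ lies within $|\lambda|/(2C)$ of $\|\bv A_M'\Sbb\bv x\|_2$, so the two-sided bound survives with a slightly larger constant.

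\textbf{Step 3: transfer to Poisson sampling.} The failure event, namely that some eigenpair with $|\lambda|\ge\epsilon\|\bv A\|_F$ violates the bound, is a function of the count vector $(n_i)$. By Lemma~\ref{lem:pois}, Algorithm~\ref{alg:eigvec-rowsamp} realizes independent $\mathrm{Poisson}(sp_i)$ counts. By Lemma~\ref{lem:TV}, its probability under these counts exceeds its probability under binomial counts by at most $s^2/M = \delta$. A union bound with the event $K\le 2s$ (used to control $\bv D$) and rescaling $\delta$ finishes the proof.

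The main obstacle is Step 2: making precise that the column-restricted vector $\bv A'_{[:,S]}\bv x$ equals the "collapsed" inflated vector, up to the single zeroed diagonal copy per sampled slot. Once that is in place, the rest is bookkeeping with the already-established $\|\bv D\|_F$ bound and the TV argument.
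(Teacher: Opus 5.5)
Your proposal is correct and follows essentially the same route as the paper: inflate to $\bv Q_M \bv A \bv Q_M^T$ with $M = s^2/\delta$, and apply the conditional lemma under binomial counts. Then use the exact identity for the sampled principal submatrix, together with the isometry of $\bv Q_M$ and the $\|\bv D\|_F$ bound, to transfer the norm; finally, move to Poisson counts via a count-measurable failure event and the TV bound. Your explicit union bound with $K \le 2s$ is a small improvement in care, since the paper reuses the $\|\bv D\|_F$ bound from Theorem~\ref{thm:sqnorm1} without restating that it relies on that event.
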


\begin{proof}
We remove the assumption
$\frac{s\|\bv A_{i,:}\|_2^2}{\|\bv A\|_F^2} \le 1$ for all $i \in [n]$ using
the matrix inflation argument described in Theorem~\ref{thm:sqnorm1}.
We will now go over. the key steps.
Let $\hat{\bv A}=\bv Q_M \bv A \bv Q_M^T$ be the $nM \times nM$ \emph{inflated}
matrix where each row/column is repeated $M$ times
and scaled by $\frac{1}{\sqrt{M}}$ for $M=\frac{s^2}{\delta}$,
as described in Theorem~\ref{thm:sqnorm1}.
Then, as proved in Theorem~\ref{thm:sqnorm1},
if a column $i'$ of $\hat{\bv A}$ is the scaled copy of column $i$ of $\bv A$,
its squared column-norm sampling probability satisfies
$\frac{s\|\hat{\bv A}_{i',:}\|_2^2}{\|\hat{\bv A}\|_F^2}=\frac{sp_i}{M} \leq 1$,
where $p_i=\frac{\|\bv A_{i,:}\|_2^2}{\|\bv A\|_F^2}$; hence $\hat{\bv A}$
satisfies the assumption of Lemma~\ref{lem:asxl_norm}.
Let $\Sbb'$ be the scaled sampling matrix that samples column $i' \in [nM]$ of
$\hat{\bv A}$ with probability $s\,p_{i'}$, where
$p_{i'}=\frac{\|\hat{\bv A}_{i',:}\|_2^2}{\|\hat{\bv A}\|_F^2}$, and scales it by
$\frac{1}{\sqrt{s\,p_{i'}}}$. Let $S'$ be the set of indices of the columns sampled from
$\hat{\bv A}$, and let $S_1$ be the corresponding multiset of indices of
the columns of $\bv A$.

Let $\bv A'_{[:,S_1]}$ and $\bv A'_{S_1, S_1}$
be the zeroed-out, sampled, and scaled matrices formed from the multiset $S_1$
according to Steps~\ref{step:zero} and~\ref{step:zero1} of Algorithm~\ref{alg:eigvec-rowsamp}.
Following the proof of Theorem~\ref{thm:sqnorm1}, and because the off-diagonal
zeroing condition evaluates identically for $\hat{\bv A}$ and $\bv A$, the
$\frac1M$ from the entries of $\hat{\bv A}'$ cancels against the sampling scale
$\frac{1}{\sqrt{s\,p_{i'}}}=\sqrt{\tfrac{M}{sp_i}}$ entrywise
$\big(\tfrac1M(\bv A')_{ij}\cdot\sqrt{\tfrac{M}{sp_j}}
=\tfrac{1}{\sqrt M}\,(\bv A'_{[:,S_1]})_{ij}\big)$, giving
\begin{align}
 \bv A'_{S_1, S_1}=(\Sbb')^T\hat{\bv A}' \Sbb'
 \qquad\text{and}\qquad
 \hat{\bv A}' \Sbb' = \bv Q_M\,\bv A'_{[:,S_1]}+\bv D.
 \label{eq:inflation-identities}
\end{align}
The first identity means $\hat{\bv A}'$'s sampled principal submatrix is exactly
$\bv A'_{S_1,S_1}$. The second says $\hat{\bv A}' \Sbb'$ is just each row of
$\bv A'_{[:,S_1]}$ repeated $M$ times and scaled by $\frac{1}{\sqrt M}$, up to the
difference $\bv D$ bounded in the proof of Theorem~\ref{thm:sqnorm1}; since
$\bv Q_M$ has orthonormal columns, for \emph{any} unit vector $\bv y$ we have
\begin{align}
 \Big| \big\|\bv A'_{[:,S_1]}\,\bv y\big\|_2
 - \big\|\hat{\bv A}'\Sbb'\,\bv y\big\|_2 \Big|
 \le \|\bv D\|_F \le \frac{\sqrt{2\delta}}{s}\|\bv A\|_F.
 \label{eq:norm-preserved-13}
\end{align}

\paragraph{A count-measurable good event.}
For a sample multiset $S$ produced by counts $(n_1,\dots,n_n)$ (where $n_i$ is
the number of times column $i$ is drawn), we define the following event
based on the matrices $\bv A'_{S,S}$ and
$\bv A'_{[:,S]}$ from
Algorithm~\ref{alg:eigvec-rowsamp}:
\[
 \mathcal G := \Big\{\ \text{for every eigenpair } (\bv y, \mu)
 \text{ of } \bv A'_{S,S} \text{ with } |\mu|\ge \epsilon\|\bv A\|_F:\quad
 \tfrac1C|\mu| \le \|\bv A'_{[:,S]}\,\bv y\|_2 \le C|\mu|\ \Big\},
\]
for the constant $C$ of Lemma~\ref{lem:asxl_norm}. Since $\bv A'_{S,S}$,
$\bv A'_{[:,S]}$, and hence all of their eigenpairs are deterministic functions
of the counts $(n_1,\dots,n_n)$, the event $\mathcal G$ is a function of the
counts alone. Crucially, $\mathcal G$ does not reference any fixed eigenvalue,
so it is well defined under \emph{any} sampling scheme.

\paragraph{Bounding $\mathcal G$ under Binomial sampling.}
The multiset $S_1$ is exactly the multiset obtained by drawing each column $i$
of $\bv A$ a $\mathrm{Binomial}\!\big(M,\tfrac{sp_i}{M}\big)$ number of times,
independently across $i$; equivalently, it is the sample produced by running
Algorithm~\ref{alg:rest-eigvec-sqsamp} on $\hat{\bv A}$ with the matrix
$\Sbb'$ above. From Lemma~\ref{lem:asxl_norm},
$\|\hat{\bv A}'\Sbb'\bv y\|_2=\Theta(|\mu|)$ holds with probability at least $1-\delta$
 for every eigenpair $(\bv y,\mu)$ of
$(\Sbb')^T\hat{\bv A}'\Sbb'$ with $|\mu|\ge \epsilon\|\bv A\|_F$. By~\eqref{eq:inflation-identities},
these are exactly the outlying eigenpairs of
$\bv A'_{S_1,S_1}$, and by~\eqref{eq:norm-preserved-13},
$\|\bv A'_{[:,S_1]}\bv y\|_2=\|\hat{\bv A}'\Sbb'\bv y\|_2
\pm\frac{\sqrt{2\delta}}{s}\|\bv A\|_F=\Theta(|\mu|)$, since
$|\mu|\ge\epsilon\|\bv A\|_F$ and $s\ge\frac{c}{\epsilon^4}$. Writing $\mathbb{P}_M$ for the probability under the
Binomial counts, we have
\[
 \mathbb{P}_M(\neg\mathcal G) \le \delta.
\]

\paragraph{Transfer to Poisson sampling.}
Algorithm~\ref{alg:eigvec-rowsamp} draws its sample via the Poisson scheme,
which by Lemma~\ref{lem:pois} produces counts $n_i\sim\mathrm{Poisson}(sp_i)$
independently across $i$; let $\mathbb{P}_{Pois}$ denote their joint sampling distribution.
By
Lemma~\ref{lem:TV}, $d_{TV}(\mathbb{P}_M,\mathbb{P}_{Pois})\le \frac{s^2}{M}=\delta$
for $M=\frac{s^2}{\delta}$. Since $\mathcal G$ is a function of the counts,
\[
 \mathbb{P}_{Pois}(\neg\mathcal G)
 \le \mathbb{P}_M(\neg\mathcal G) + d_{TV}(\mathbb{P}_M,\mathbb{P}_{Pois})
 \le \delta + \delta = 2\delta.
\]
Thus, with probability at least $1-2\delta$, the sample $S$ produced by
Algorithm~\ref{alg:eigvec-rowsamp} lies in $\mathcal G$. The eigenpair
$(\bv x, \lambda)$ in the statement is an eigenpair of $\bv A'_{S,S}$ with
$|\lambda|\ge \epsilon\|\bv A\|_F$, hence one of the eigenpairs quantified over
in $\mathcal G$; therefore $\frac1C|\lambda|\le\|\bv A'_{[:,S]}\bv x\|_2\le C|\lambda|$.
Rescaling $\delta$ by a constant factor completes the proof.
\end{proof}

We now state the corollary for fast computation of the entries of the approximate eigenvectors of Theorem~\ref{thm:sqnorm1}.
\begin{corollary}\label{cor:sqnorm}
Let $\bv A \in \R^{n \times n}$ be a symmetric matrix such that
$\|\Ab \|_{\infty} \leq 1$, and let $\epsilon, \delta \in (0,1)$.
Then, there is an algorithm that,
given $\bv A$ and any index
$j \in [n]$, reads
$O(s^2)$
entries in expectation from $\bv A$ and takes $O(s^{\omega})$ time, where $s=O\left(\frac{\log^4 n}{\epsilon^4 \delta}
\log^3 \frac{1}{\epsilon \delta} \right)$ and $\omega$ is the matrix multiplication exponent, and outputs the pairs
$\{(\tilde \lambda_i, \bv v_{ij})\}_{i=1}^{m}$, where $\bv v_{ij}$ is the
$j$\textsuperscript{th} entry of a vector $\bv v_i \in \R^{n}$ with $\|\bv v_i \|_2 \in [\frac{1}{C}, C]$, for an absolute
constant $C > 1$. The pairs
$\{(\tilde \lambda_i, \bv v_{i})\}_{i=1}^{m}$
satisfy the error guarantees of Theorem~\ref{thm:sqnorm1}
with probability at least $1-\delta$.
\end{corollary}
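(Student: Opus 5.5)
The plan mirrors the proof of Corollary~\ref{cor:local}, with Theorem~\ref{thm:sqnorm1} playing the role of Theorem~\ref{thm:main_bounded} and Lemma~\ref{lem:asxl_norm_unconditional} playing the role of Lemma~\ref{Lem:asxl}. The algorithm is as follows.
\begin{enumerate}
\item Run Steps~\ref{step:pois}--\ref{step:zero1} of Algorithm~\ref{alg:eigvec-rowsamp} with $s=O\left(\frac{\log^4 n}{\epsilon^4\delta}\log^3\frac{1}{\epsilon\delta}\right)$. Draw $K\sim\mathrm{Poisson}(s)$ indices from the squared column-norm distribution, and form only the $K\times K$ matrix $\bv A'_{S,S}$. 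This reads $O(K^2)$ entries of $\bv A$ together with the row norms of the sampled indices (and $\|\bv A\|_F$), which are needed for the scalings $\frac{1}{s\sqrt{p_{i_p}p_{i_q}}}$ and the zeroing tests.
\item Eigendecompose $\bv A'_{S,S}$ in $O(K^\omega)$ time to get the pairs $(\bv x_i,\tilde\lambda_i)$ with $|\tilde\lambda_i|\ge \frac{\epsilon\|\bv A\|_F}{2}$.
\item For the queried index $j$, read the $K$ entries $\bv A_{j i_q}$. Apply the zeroing rule of Step~\ref{step:zero}, which needs only $\|\bv A_{j,:}\|_2$, $\|\bv A_{i_q,:}\|_2$, $\|\bv A\|_F$ and $\bv A_{ji_q}$. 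This gives the $j$-th row of $\bv A'_{[:,S]}$ in $O(K)$ time.
\item Output $\bv v_{ij}=\frac{(\bv A'_{[:,S]})_{j,:}\bv x_i}{|\tilde\lambda_i|}$ for each $i$.
\end{enumerate}
Since $K\le 2s$ with probability $1-\delta$ by a Poisson tail bound, and $\E[K^2]=s+s^2$, the expected number of entries read is $O(s^2)$ and the time is $O(s^\omega)$.

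For correctness, let $\bv v_i^\star=\frac{\bv A'_{[:,S]}\bv x_i}{\|\bv A'_{[:,S]}\bv x_i\|_2}$ be the unit vector output by Algorithm~\ref{alg:eigvec-rowsamp}; our vector is $\bv v_i=c_i\bv v_i^\star$ with $c_i=\|\bv A'_{[:,S]}\bv x_i\|_2/|\tilde\lambda_i|$. Lemma~\ref{lem:asxl_norm_unconditional} is stated for $|\lambda|\ge\epsilon\|\bv A\|_F$, so apply it with $\epsilon/2$ to cover the threshold $\frac{\epsilon\|\bv A\|_F}{2}$, adjusting constants in $s$. It then gives $c_i\in[\frac1C,C]$ simultaneously for all output pairs, since its good event $\mathcal G$ quantifies over every outlying eigenpair. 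Hence $\|\bv v_i\|_2\in[\frac1C,C]$.

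Scaling is linear in the residual, so $\|\bv A\bv v_i-\lambda_{\pi(i)}(\bv A)\bv v_i\|_2=c_i\|\bv A\bv v_i^\star-\lambda_{\pi(i)}(\bv A)\bv v_i^\star\|_2\le C\epsilon\|\bv A\|_F$ by Theorem~\ref{thm:sqnorm1}. The eigenvalue guarantees and the bijection $\pi$ carry over unchanged. Running Theorem~\ref{thm:sqnorm1} with $\epsilon/C$, union-bounding with Lemma~\ref{lem:asxl_norm_unconditional} and the event $K\le 2s$, and rescaling $\delta$ gives the claim.

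The main obstacle is keeping the local computation consistent with the global object. Every query for a different $j$ must use the same sample $S$ and the same $\bv x_i$ (fixed randomness), so the $\bv v_{ij}$ are entries of one fixed vector $\bv v_i$.

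A second subtlety is the normalization. We must avoid computing $\|\bv A'_{[:,S]}\bv x_i\|_2$, which costs $\Omega(n)$ time; dividing by $|\tilde\lambda_i|$ instead is exactly why the two-sided bound of Lemma~\ref{lem:asxl_norm_unconditional} is needed. That lemma already handles the Poisson-versus-Binomial transfer, so no new probabilistic argument is required.

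Finally, the hypothesis $\|\bv A\|_\infty\le1$ is not used. We only assume access to the row norms and $\|\bv A\|_F$, as in Algorithm~\ref{alg:eigvec-rowsamp}.
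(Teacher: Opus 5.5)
Your proposal is correct and follows the paper's proof. Like the paper, which defers to the proof of Corollary~\ref{cor:local}, you compute $\bv x_i$ from $\bv A'_{S,S}$, evaluate the numerator $(\bv A'_{[:,S]})_{j,:}\bv x_i$ exactly from $O(s)$ entries of row $j$, and replace the unaffordable normalization $\|\bv A'_{[:,S]}\bv x_i\|_2$ by $|\tilde\lambda_i|$ using the two-sided bound of Lemma~\ref{lem:asxl_norm_unconditional}, rescaling $\epsilon$ to absorb $C$. Your handling of the $\frac{\epsilon\|\bv A\|_F}{2}$ threshold, the $\E[K^2]=s+s^2$ read count, and fixed randomness across queries fills in details the paper leaves implicit.
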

\begin{proof}
The proof is basically the same as the proof of Corollary~\ref{cor:local}
so we omit the details.
To compute the $j$\textsuperscript{th} entry of an approximate eigenvector
$\bv v_j=\frac{\bv A'_{[j,S]} \bv x}{\|\bv A'_{[:,S]} \bv x \|_2}$, we
compute the numerator $\bv A'_{[j,S]} \bv x$ exactly, and for computing the denominator $\|\bv A'_{[:,S]} \bv x \|_2$, we use
Lemma~\ref{lem:asxl_norm_unconditional}.
\end{proof}

We now state the main theorem about approximating
the top eigenvector using squared column-norm sampling.
We need the result from Lemma~\ref{lem:err_top_norm} for this proof.
We omit the proof since it will be very similar to the
proof of Theorem~\ref{thm:sqnorm1}.
\begin{theorem}\label{thm:sqnorm2}
Let $\bv A \in \R^{n \times n}$ be a symmetric matrix such that $\|\Ab \|_{\infty} \leq 1$, with
eigenvalues $\lambda_1(\bv A) \geq \ldots \geq \lambda_n(\bv A)$, and let $\epsilon, \delta \in (0,1)$.
Set $s \geq \frac{c \log^4 n}{\epsilon^2 \delta}
\log^5 \frac{1}{\epsilon \delta}$ in Algorithm~\ref{alg:eigvec}
for some sufficiently large constant $c$. Let $\tilde\lambda_1$ and $\tilde\lambda_2$ be the largest
positive and smallest negative eigenvalues output by Algorithm~\ref{alg:eigvec}, and let $\bv v_1$
and $\bv v_2$ be their corresponding approximate (unit) eigenvectors, respectively. If no positive
(resp.\ negative) eigenvalue is output by Algorithm~\ref{alg:eigvec}, set $\tilde\lambda_1 = 0$
(resp.\ $\tilde\lambda_2 = 0$). Then, with probability at
least $1-\delta$, the following hold simultaneously:
\begin{enumerate}
\item If $\|\bv A \|_2 \geq 2\epsilon n$ and
$\lambda_1(\bv A) \geq \frac{\|\bv A\|_2}{2}$, then $\tilde \lambda_1>0$, and:
 \begin{align*}
 |\lambda_1(\bv A)-\tilde \lambda_1| \leq \epsilon \|\bv A \|_F \text{, and }\quad
 \|\bv A \bv v_1-\lambda_1(\bv A) \bv v_1 \|_2 \leq \epsilon \|\bv A \|_F.
 \end{align*}
\item If $\|\bv A \|_2 \geq 2\epsilon \|\bv A \|_F$ and
$\lambda_n(\bv A) \leq -\frac{\|\bv A\|_2}{2}$, then $\tilde \lambda_2<0$, and:
 \begin{align*}
 |\lambda_n(\bv A)-\tilde \lambda_2| \leq \epsilon \|\bv A \|_F \text{, and }\quad
 \|\bv A \bv v_2-\lambda_n(\bv A) \bv v_2 \|_2 \leq \epsilon \|\bv A \|_F.
 \end{align*}
\item If $\|\bv A\|_2 < 2\epsilon \|\bv A \|_F$, $\tilde\lambda_1 = 0$ and $\tilde\lambda_2 = 0$, and for
\emph{any} unit vectors $\bv v_1$ and $\bv v_2$ the pairs $(\tilde\lambda_1, \bv v_1)$ and
$(\tilde\lambda_2, \bv v_2)$ satisfy
$|\lambda_1(\bv A) - \tilde\lambda_1| \leq 2\epsilon \|\bv A \|_F$,
$|\lambda_n(\bv A) - \tilde\lambda_2| \leq 2\epsilon \|\bv A \|_F$,
$\|\bv A \bv v_1 - \lambda_1(\bv A)\bv v_1\|_2 \leq 2\epsilon \|\bv A \|_F$, and
$\|\bv A \bv v_2 - \lambda_n(\bv A)\bv v_2\|_2 \leq 2\epsilon \|\bv A \|_F$ trivially.
\end{enumerate}
Moreover, Algorithm~\ref{alg:eigvec} samples $\frac{c \log^4 n}{\epsilon^2 \delta}
\log^5 \frac{1}{\epsilon \delta}$ columns from $\bv A$ in expectation.
\end{theorem}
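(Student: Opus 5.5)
The proof will follow the template of Theorem~\ref{thm:sqnorm1}, with the top-eigenvector analysis of Theorem~\ref{thm:eigtop} as the core. I read the statement as referring to Algorithm~\ref{alg:eigvec-rowsamp} (squared column-norm sampling with zeroing out), with the thresholds $2\epsilon n$ in item 1 replaced by $2\epsilon\|\bv A\|_F$. The argument has three layers.

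\textbf{Step 1: the restricted setting.} I first work under the assumption $\frac{s\|\bv A_{i,:}\|_2^2}{\|\bv A\|_F^2}\le 1$ for all $i$, i.e.\ Algorithm~\ref{alg:rest-eigvec-sqsamp}. Three events are conditioned on, each with failure probability $\delta/3$:
\begin{itemize}
\item Imported Theorem~\ref{thm:main-eigval-norm} with error parameter $\epsilon/4$. This gives the signed sorted matching between eigenvalues of $\Sbb^T\bv A'\Sbb$ and of $\bv A$, with additive error $\frac{\epsilon}{4}\|\bv A\|_F$. The sample bound needed is $\frac{\log^4 n}{\epsilon^2\delta}\log^2$, which is within budget.
\item Lemma~\ref{lem:err_top_norm} with error parameter $\epsilon/4$.
\item Imported Lemma~\ref{lem:aprime}, used to pass from $\bv A'$ to $\bv A$ at the cost of a further $\frac{\epsilon}{4}\|\bv A\|_F$.
\end{itemize}
As in the proof of Theorem~\ref{thm:eigtop}, I observe that the proofs of Lemmas~\ref{lem:asx_norm} and~\ref{lem:err_top_norm} use ``largest magnitude'' only through two bounds: $|\tilde\lambda|\ge \frac{\epsilon}{4}\|\bv A\|_F$ and $|\tilde\lambda|=\Omega(\|\bv A\|_2)$. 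They therefore apply simultaneously to the top positive and the top negative eigenvalue of the sketch, whenever that eigenvalue is at least $\|\bv A\|_2/4$ in magnitude.

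\textbf{Items 1 and 2.} Take the case $\lambda_1(\bv A)\ge\|\bv A\|_2/2\ge\epsilon\|\bv A\|_F$. The matched sketch eigenvalue satisfies $\tilde\lambda_1\ge \lambda_1(\bv A)-\frac{\epsilon}{4}\|\bv A\|_F \ge \frac34\lambda_1(\bv A)>\frac{\epsilon}{2}\|\bv A\|_F$. Hence it is output, it is positive, and it is $\Omega(\|\bv A\|_2)$. The residual bound for $\tilde\lambda_1$ then comes from Lemma~\ref{lem:err_top_norm} together with Imported Lemma~\ref{lem:aprime}: $\|\bv A\bv v_1-\bv A'\bv v_1\|_2\le \frac{\epsilon}{4}\|\bv A\|_F$. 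A triangle inequality with $|\tilde\lambda_1-\lambda_1(\bv A)|\le\frac{\epsilon}{4}\|\bv A\|_F$ then bounds the residual against $\lambda_1(\bv A)$. Item 2 is symmetric.

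\textbf{Item 3.} If $\|\bv A\|_2<2\epsilon\|\bv A\|_F$, I would run the eigenvalue approximation at a smaller constant multiple of $\epsilon$, so that all sketch eigenvalues fall below the output threshold. Then $\tilde\lambda_1=\tilde\lambda_2=0$, and the required bounds hold trivially because $\|\bv A\bv v-\lambda\bv v\|_2\le 2\|\bv A\|_2$. This step needs care: strictly below the threshold is not guaranteed when $\|\bv A\|_2$ is close to $2\epsilon\|\bv A\|_F$. I would repair it the way the paper does, by adjusting $\epsilon$ by constants. The key point is that any pair that \emph{is} output already satisfies the item~1/2 analysis, since its eigenvalue is $\Omega(\epsilon\|\bv A\|_F)$ and hence $\Omega(\|\bv A\|_2)$. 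So the stated bounds with $2\epsilon\|\bv A\|_F$ hold either way.

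\textbf{Step 2: removing the row-norm assumption (main obstacle).} This is done exactly as in Theorem~\ref{thm:sqnorm1}. I inflate to $\bv Q_M\bv A\bv Q_M^T$ with $M=s^2/\delta$. This matrix has the same nonzero spectrum, the same Frobenius norm, and every sampling probability is at most $1$. Its $\log^4(nM)$ factor is absorbed into the constant. The sampled principal submatrix of the inflated matrix coincides with $\bv A'_{S,S}$, and the lifted columns agree up to a matrix $\bv D$ with $\|\bv D\|_F=O(\sqrt\delta\,\|\bv A\|_F/s)$.

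Because $\bv D$ is small, normalized vectors change by only $O(\|\bv D\|_F/|\tilde\lambda|)$, using the lower bound on $\|\bv A'\Sbb\bv x\|_2$ from Lemma~\ref{lem:asx_norm}. Here $s\ge c/\epsilon^2$ suffices, since $\sqrt\delta/s\le\epsilon^2$.

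Finally, I define the good event as a function of the counts alone, as in the proof of Lemma~\ref{lem:asxl_norm_unconditional}: every output pair with eigenvalue of magnitude at least $\|\bv A\|_2/4$ has small residual. This event transfers from Binomial to Poisson counts via Lemmas~\ref{lem:TV} and~\ref{lem:pois}, at an additive cost of $\delta$. A Poisson tail bound gives $K\le 2s$, which yields the expected sample count. The delicate point I expect is checking that the $\bv D$ perturbation is dominated at the $\epsilon^2$ scale rather than the $\epsilon^4$ scale used in Theorem~\ref{thm:sqnorm1}; with $M=s^2/\delta$ this holds.
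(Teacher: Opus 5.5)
Your proposal is correct and follows the same route as the paper. The paper omits this proof as ``very similar to Theorem~\ref{thm:sqnorm1}'', and your argument fills it in the same way: Lemma~\ref{lem:err_top_norm} (noting it applies to every sketch eigenpair of magnitude $\Omega(\|\bv A\|_2)$) feeds into the case analysis of Theorem~\ref{thm:eigtop}, followed by the $M=s^2/\delta$ inflation and the Binomial-to-Poisson transfer through a count-measurable good event.

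Your reading of the typos is right (Algorithm~\ref{alg:eigvec-rowsamp}, and threshold $2\epsilon\|\bv A\|_F$ in item 1). Your caveat on item 3 is a genuine defect of the statement, not of your argument. When $\|\bv A\|_2$ is just below $2\epsilon\|\bv A\|_F$, the largest-magnitude sketch eigenvalue clears the output threshold $\frac{\epsilon}{2}\|\bv A\|_F$, so $\tilde\lambda_1=\tilde\lambda_2=0$ cannot be guaranteed, and no constant adjustment fixes this. Likewise, the ``trivial'' bound for arbitrary unit vectors is $2\|\bv A\|_2<4\epsilon\|\bv A\|_F$, not $2\epsilon\|\bv A\|_F$. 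What you actually establish, correctly, via your fallback that any output pair falls under the item 1/2 analysis, is item 3's bounds up to constant factors.
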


\section{Lower Bound for Top Eigenvector Estimation}\label{app:lower}

In this section, we prove a lower bound on the number of queries needed to
estimate the top eigenvector.
Our query complexity lower bound will follow from a reduction from the
following problem,
introduced in~\cite{bhattacharjee2024universal}.
\begin{definition}\label{def:distdet_n}
 ($(\rho, n)$-Distributed detection problem) For $\bv q \in \{0, 1\}^n$ distributed uniformly
 on the Hamming cube, generate the matrix $\bv A \in \{0,1\}^{n \times n}$ such that
 if $\bv q_i = 0$, then all entries in the $i$-th row of $\bv A$ are i.i.d. samples from
 $\operatorname{Bernoulli}(1/2)$. Otherwise, let all entries in the $i$-th row be sampled from
 $\operatorname{Bernoulli}(1/2 + \rho)$.
 The $(\rho, n)$-Distributed detection problem is the task of recovering a vector
 $\hat{\bv q} \in \{0,1\}^n$ such that $\|\hat{\bv q} - \bv q\|_1 \leq \frac{n}{20}$.
\end{definition}
The following lemma (Lemma 3 of~\cite{bhattacharjee2024universal}) gives the
following theorem on the number of entries one must read to solve the $(\rho, n)$-distributed detection
problem:
\begin{implemma}\label{lemma:entrywise_lower}[Lemma 3 of~\cite{bhattacharjee2024universal}]
 Any adaptive randomized algorithm which solves the $(\rho, n)$-Distributed detection problem
 with probability at least $\frac{2}{3}$ must observe $\Omega(\frac{n}{\rho^2})$
 entries of $\bv A$.
\end{implemma}

The following lemma follows from Davis-Kahan~\cite{davis1970rotation} type bounds. We provide a short proof for completeness.
\begin{lemma}\label{lem:eigvecgap}
Let $\bv M \in \R^{2n \times 2n}$ be a symmetric matrix and let $\bv u_1$ be an
eigenvector corresponding to its largest magnitude eigenvalue. For any unit vector $\bv v$, we have:
\begin{align*}
 \min_{\eta \in \{-1,+1\}} \|\eta\, \bv u_1-\bv v \|_2 \leq
 \frac{\sqrt{2}\,\|\bv M \bv v-\lambda_1(\bv M) \bv v \|_2}{\min_{i \in \{2,\ldots, 2n \}} |\lambda_i(\bv M)-\lambda_1(\bv M)|}.
\end{align*}
\end{lemma}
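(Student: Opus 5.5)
We expand the unit vector $\bv v$ in the orthonormal eigenbasis of $\bv M$: write $\bv v = \alpha_1 \bv u_1 + \sum_{i\ge 2} \alpha_i \bv u_i$ with $\sum_i \alpha_i^2 = 1$. Here $\bv u_1, \ldots, \bv u_{2n}$ are orthonormal eigenvectors of $\bv M$ with eigenvalues $\lambda_1(\bv M), \ldots, \lambda_{2n}(\bv M)$, with $\lambda_1(\bv M)$ the largest magnitude eigenvalue. Then
\[
\bv M \bv v - \lambda_1(\bv M)\bv v = \sum_{i \ge 2} \alpha_i (\lambda_i(\bv M) - \lambda_1(\bv M)) \bv u_i .
\]
Write $g = \min_{i \ge 2} |\lambda_i(\bv M) - \lambda_1(\bv M)|$ and $r = \|\bv M \bv v - \lambda_1(\bv M)\bv v\|_2$. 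By orthonormality,
\[
r^2 = \sum_{i\ge 2} \alpha_i^2 (\lambda_i(\bv M)-\lambda_1(\bv M))^2 \ge g^2 \sum_{i \ge 2}\alpha_i^2 = g^2(1-\alpha_1^2).
\]
Hence $1 - \alpha_1^2 \le r^2/g^2$. If $g = 0$ the bound is vacuous, so we may assume $g > 0$.

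Next we convert this into a distance bound. Choose $\eta = \mathrm{sign}(\alpha_1)$, taking $\eta = 1$ if $\alpha_1 = 0$. Since $\bv v$ and $\bv u_1$ are unit vectors,
\[
\|\eta \bv u_1 - \bv v\|_2^2 = 2 - 2|\alpha_1| \le 2(1 - \alpha_1^2),
\]
where the last inequality uses $|\alpha_1| \le 1$, so that $1 - |\alpha_1| \le (1-|\alpha_1|)(1+|\alpha_1|)$. Combining the two steps gives $\|\eta \bv u_1 - \bv v\|_2^2 \le 2r^2/g^2$. Taking square roots yields the stated bound with the factor $\sqrt{2}$.

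The only point needing care is the sign ambiguity, which the choice of $\eta$ resolves. Also, $\bv u_1$ should be fixed as a single unit eigenvector. If $\lambda_1(\bv M)$ has multiplicity greater than one, then $g=0$ and the statement is trivial. There is no real obstacle; this is the standard $\sin\theta$ argument specialized to a single eigenvector.
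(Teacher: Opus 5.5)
Your proposal is correct and follows essentially the same argument as the paper: expand $\bv v$ in the eigenbasis, lower bound the squared residual by $g^2(1-\alpha_1^2)$, then choose $\eta=\mathrm{sign}(\alpha_1)$ and use $2-2|\alpha_1|\le 2(1-\alpha_1^2)$. Your remarks on the degenerate cases $g=0$ and $\alpha_1=0$ are harmless refinements that the paper leaves implicit.
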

\begin{proof}
Let $\bv u_i$ be the eigenvector corresponding to $\lambda_i(\bv M)$ for $i \in [2n]$.
Let $\bv v=\sum_{i=1}^{2n} c_i\bv u_i $ such that $\sum_{i=1}^{2n} c_i^2=1$.
Then, $\bv M \bv v-\lambda_1(\bv M) \bv v
=\sum_{i=2}^{2n} (\lambda_i(\bv M)-\lambda_1(\bv M))c_i\bv u_i$. Thus, we have:
\begin{align*}
 \|\bv M \bv v-\lambda_1(\bv M) \bv v \|^2_2 =\sum_{i=2}^{2n} (\lambda_i(\bv M)-\lambda_1(\bv M))^2 c^2_i \geq \delta^2 \sum_{i=2}^{2n}c_i^2 = \delta^2(1-c_1^2)
\end{align*}
where $\delta=\min_{i \in \{2,\ldots, 2n \}} |\lambda_i(\bv M)-\lambda_1(\bv M)|$ and the last step follows from the fact that $\sum_{i=1}^{2n} c_i^2=1$. Rearranging, we get $1-c_1^2\leq \frac{\|\bv M \bv v-\lambda_1(\bv M) \bv v \|^2_2}{\delta^2}$. Choosing $\eta=\operatorname{sign}(c_1)$, we have $\|\eta\,\bv u_1-\bv v \|^2_2=2-2\eta\, \bv u_1^T \bv v=2-2|c_1|$. Since $|c_1| \leq 1$, we have $(1-|c_1|) \leq (1-|c_1|)(1+|c_1|)=1-c_1^2$, and hence:
\begin{align*}
 \min_{\eta \in \{-1,+1\}} \|\eta\,\bv u_1-\bv v \|^2_2 \leq 2(1-c_1^2) \leq \frac{2\|\bv M \bv v-\lambda_1(\bv M) \bv v \|^2_2}{\delta^2}.
\end{align*}
Taking square roots gives the claim.
\end{proof}

\begin{lemma}\label{lem:sigma1}
Let $\bar{\bv A}$ be a rank-1 matrix given by $\bar{\bv A}= (\frac{1}{2}\bv 1+\rho \bv q) \bv 1^T$,
where $\bv 1$ is vector of ones of length $n$ and $\bv q \in \{0,1\}^n$ is
distributed uniformly on the Hamming cube. Then, with probability at least $\frac{99}{100}$,
\begin{align*}
 \frac{n}{2}\leq \sigma_1(\bar{\bv A}) \leq \frac{n}{2}+ 2n\rho
\end{align*}
\end{lemma}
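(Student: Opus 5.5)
Since $\bar{\bv A}=\bv w\bv 1^T$ with $\bv w=\tfrac12\bv 1+\rho\bv q$ is rank one, its only nonzero singular value is the product of the norms of its two factors. We therefore start from
\[
\sigma_1(\bar{\bv A})=\|\bv w\|_2\,\|\bv 1\|_2=\sqrt{n}\,\|\bv w\|_2 ,
\]
and the whole task reduces to bounding $\|\bv w\|_2$.

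Because $\bv q\in\{0,1\}^n$, every coordinate of $\bv w$ is either $\tfrac12$ or $\tfrac12+\rho$. Writing $k=\|\bv q\|_1$ for the number of ones in $\bv q$, we get
\[
\|\bv w\|_2^2=\tfrac{n}{4}+k\big(\rho+\rho^2\big).
\]

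For the lower bound, assume $\rho\ge 0$, as in Definition~\ref{def:distdet_n}. Then $\|\bv w\|_2^2\ge n/4$, so $\sigma_1(\bar{\bv A})\ge \sqrt n\cdot\sqrt n/2=n/2$. This holds for every $\bv q$.

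For the upper bound, use $k\le n$ and $\rho\le 1$ (the latter is needed for $\tfrac12+\rho$ to be a valid Bernoulli parameter, and in fact $\rho\le\tfrac12$). This gives
\[
\|\bv w\|_2^2\le \tfrac n4+2n\rho\le\Big(\tfrac{\sqrt n}{2}+2\sqrt n\,\rho\Big)^2 ,
\]
because expanding the right-hand side yields $\tfrac n4+2n\rho+4n\rho^2$. Multiplying by $\sqrt n$ gives $\sigma_1(\bar{\bv A})\le n/2+2n\rho$. A simpler route is the triangle inequality: $\|\bv w\|_2\le \tfrac12\|\bv 1\|_2+\rho\|\bv q\|_2\le \tfrac{\sqrt n}{2}+\rho\sqrt n$, which gives the sharper bound $\sigma_1(\bar{\bv A})\le n/2+n\rho$.

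Both bounds therefore hold deterministically, for every $\bv q$, and the ``probability at least $99/100$'' clause is satisfied trivially; no concentration argument is needed. If the later argument needs the sharper $k\approx n/2$ (giving $\sigma_1\approx n/2+n\rho/2$), a Chernoff bound on $k$ would supply it with probability $99/100$ for large $n$. I do not expect any real obstacle. The only care points are the sign assumption $\rho\ge 0$ and the observation that the spectral norm of a rank-one matrix $\bv w\bv 1^T$ equals $\|\bv w\|_2\|\bv 1\|_2$.
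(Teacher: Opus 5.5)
Your proof is correct and follows the paper's argument: write $\sigma_1(\bar{\bv A})=\sqrt{n}\,\|\tfrac12\bv 1+\rho\bv q\|_2$, compute $\|\tfrac12\bv 1+\rho\bv q\|_2^2=\tfrac n4+\|\bv q\|_1\rho(1+\rho)$, and finish with $\rho\le 1$ and $\sqrt{1/4+2\rho}\le 1/2+2\rho$. The one difference is that the paper uses a Chernoff bound $\|\bv q\|_1\le 0.51n$ to justify the $99/100$ clause. As you observe, this is unnecessary: $\|\bv q\|_1\le n$ already gives $\rho(1+\rho)\|\bv q\|_1\le 2\rho n$, so the lemma holds deterministically, and your triangle-inequality variant gives the sharper bound $n/2+n\rho$.
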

\begin{proof}
Let $\bv z=\frac{1}{2}\bv 1+\rho \bv q$. Note that $\sigma_1(\bar{\bv A})=\| \bv 1\|_2\| \bv z\|_2
=\sqrt{n}\| \bv z\|_2$. Thus, we get:
\begin{align*}
 \|\bv z \|^2_2=\sum_{i=1}^n (0.5+\rho \bv q_j)^2=\frac{n}{4}+\| \bv q\|_1\rho(1+\rho)
\end{align*}
Since the entries of $\bv q$ are distributed uniformly in $\{0,1\}$, by a standard Chernoff bound,
we have $ 0.49n \leq \|\bv q \|_1 \leq 0.51n$ with probability at least $\frac{99}{100}$
for large enough $n$. So, using $\rho \leq 1$ and $\|\bv q\|_1 \leq 0.51n$, we have
$\rho(1+\rho)\|\bv q\|_1 \leq 2\rho \cdot 0.51 n \leq 2\rho n$, and hence:
\begin{align*}
 \frac{n}{4} \leq \|\bv z \|^2_2 \leq \frac{n}{4}+2\rho n
\end{align*}
Thus, using $\sqrt{\frac{1}{4}+2\rho} \leq \frac{1}{2}+2\rho$, we get:
\begin{align*}
 \frac{n}{2}\leq \sigma_1(\bar{\bv A}) \leq \frac{n}{2}+ 2n\rho
\end{align*}
\end{proof}

We are now ready to state our final theorem
\begin{reptheorem}{thm:lower_main}
Let $\mathcal{A}$ be any randomized algorithm that (possibly adaptively) reads entries of
a binary matrix $\bv M \in \{0,1 \}^{2n \times 2n}$ and outputs a unit vector $\bv v$ that,
for any $\epsilon \in (0,1)$ satisfies:
\begin{align*}
 \|\bv M \bv v-\lambda_1(\bv M) \bv v \|_2 \leq \epsilon n,
\end{align*}
with probability at least $\frac{2}{3}$. Then, $\mathcal{A}$ must read at
least $\Omega(\frac{n}{\epsilon^2})$ entries of $\bv M$ provided
$\epsilon=\Omega(\frac{1}{\sqrt{n}})$.
\end{reptheorem}
\begin{proof}
The proof proceeds by reducing the $(\rho,n)-$distributed detection problem in Definition~\ref{def:distdet_n}
to the problem of approximating the top eigenvector of a matrix, where we set
$\rho=C'\epsilon$ for a sufficiently large constant $C'$ to be fixed later. We assume
$\epsilon$ is at most a sufficiently small constant so that $\rho \leq \frac{1}{2}$
(and hence $\operatorname{Bernoulli}(1/2+\rho)$ is well defined); larger $\epsilon$
only makes the lower bound easier.
Consider a matrix $\bv A \in \{ 0,1\}^{n \times n}$ generated exactly as in the
Definition~\ref{def:distdet_n}.
Let $\mathbb{E}[\bv A|\bv q]=\bar{\bv A}$ where $\bv q \in \{0,1\}^n$ is the indicator
as described in Definition~\ref{def:distdet_n}. Since the $i$-th row of $\bv A$ has all
its entries with mean $\frac{1}{2}+\rho \bv q_i$, we have
$\bar{\bv A}= (\frac{1}{2}\bv 1+\rho \bv q)\bv 1^T$, where $\bv 1$ is the
vector of ones of length $n$; that is, $\bv q$ is encoded in the rows of $\bar{\bv A}$
and hence in its top \emph{left} singular vector. Then, the entries of $\bv R=\bv A-\bar{\bv A}$ are independent,
zero mean and bounded in $[-1,1]$. So, from~\cite{vershynin2018high} we get that
$\|\bv R \|_2=O(\sqrt{n})$. Using Weyls' inequality~\cite{weyl1912},
$|\sigma_i(\bv A)-\sigma_i(\bar{\bv A})| \leq \|\bv A-\bar{\bv A} \|_2 \leq C\sqrt{n}$
for a constant $C$. From Lemma~\ref{lem:sigma1},
we have $\sigma_1(\bar{\bv A}) \geq \frac{n}{2}$. Also, $\sigma_i(\bar{\bv A})=0$
for $i \in \{2,\ldots, n\}$. We thus have:
\begin{align}\label{eq:sigA}
 \sigma_1(\bv A) \geq \frac{n}{2}-C\sqrt{n} \quad \quad \sigma_i(\bv A)
 \leq C\sqrt{n} \text{ for } i \in \{2,\ldots, n\}
\end{align}
Let $\bv x$ and $\bv y$ be the top left and right singular vectors of $\bv A$ respectively.
Let $\bar{\bv x}$ and $\bar{\bv y}$ be the top left and right singular vectors of
$\bar{\bv A}$ respectively. Let
$\delta=\sigma_1(\bv A)-\max_{i \in \{2, \ldots, n\}} \sigma_i(\bar{\bv A})$.
Then, $\delta \geq \frac{n}{4}$ for sufficiently large $n$.
Then, using Wedin's theorem~\cite{wedin1972perturbation}, for some constant $c$, we get that:
\begin{align}\label{eq:vec1}
 \| \bv x- \bar{\bv x} \|_2 \leq \frac{\sqrt{2}\|\bv A-\bar{\bv A} \|_2}{\delta}
 \leq \frac{c} {\sqrt{n}} \leq c\epsilon.
\end{align}
Let $\bv M$ be the matrix:
\begin{align*}
\bv M=\begin{bmatrix}
 \bv 0 & \bv A \\
 \bv A^T & \bv 0
\end{bmatrix}
\end{align*}
where $\bv 0 \in \{0\}^{n \times n}$. Note that the eigenvalues of $\bv M$ are $\pm \sigma_i(\bv A)$
for $i \in [n]$. We have $\lambda_1(\bv M)-\max_{i \neq 1}\lambda_i(\bv M) \geq \sigma_1(\bv A)
-\max_{i \neq 1} \sigma_i(\bv A) \geq \frac{n}{2}-2C\sqrt{n} \geq \frac{n}{4}$
where the second to last step follows from~\eqref{eq:sigA} and the last step follows
for large enough $n$. The top eigenvector of $\bv M$ is given by $\bv u_1=\frac{1}{\sqrt{2}}\begin{bmatrix}
 \bv x \\ \bv y
\end{bmatrix}$. From Lemma~\ref{lem:eigvecgap}, we get that, for some constant $c'$:
\begin{align}\label{eq:vec2}
 \min_{\eta \in \{-1,+1\}} \|\eta\,\bv u_1-\bv v \|_2 \leq \frac{\sqrt{2}\|\bv M \bv v-\lambda_1(\bv M) \bv v \|_2}{\min_{i \in \{2,\ldots, 2n \}} |\lambda_i(\bv M)-\lambda_1(\bv M)|} \leq \frac{\sqrt{2}\epsilon n}{n/4} \leq c'\epsilon.
\end{align}
Replacing $\bv v$ by $-\bv v$ if necessary (which changes neither the number of entries of $\bv M$ read by $\mathcal{A}$ nor the residual $\|\bv M \bv v-\lambda_1(\bv M) \bv v \|_2$), we may assume the minimizing sign in~\eqref{eq:vec2} is $\eta=+1$, so that $\|\bv u_1-\bv v \|_2 \leq c'\epsilon$.
Let $\bv v=\begin{bmatrix}
 \bv v_1 \\ \bv v_2
\end{bmatrix}$ where $\bv v_1$ and $\bv v_2$ are vectors of length $n$. Then, from~\eqref{eq:vec1} and~\eqref{eq:vec2} and using triangle inequality, we get (for some constant $c$):
\begin{align}\label{eq:err1}
 \|\bv v_1-\frac{1}{\sqrt{2}}\bar{\bv x} \|_2 &\leq \|\bv v_1-\frac{1}{\sqrt{2}}\bv x \|_2+\frac{1}{\sqrt{2}}\| \bv x- \bar{\bv x}\|_2 \nonumber\\
 &\leq \|\bv u_1-\bv v \|_2+\| \bv x- \bar{\bv x}\|_2 \nonumber\\
 &\leq c\epsilon.
\end{align}

Let $\hat{\bv q}$ be the vector by setting the top $\frac{n}{2}$ elements of $\bv v_1$ to 1
and the rest to zero. $\hat{\bv q}$ is our estimate of $\bv q$.
We will now bound $\|\bv q-\hat{\bv q} \|_1$. Let
\begin{align*}
 S=\{i \in [n] | \bv q_i=1\} \text{ and } \hat{S}=\{i \in [n] | \hat{\bv q}_i=1\},
\end{align*}
i.e. $S$ and $\hat{S}$ contain the true set and the labeled set of ones respectively.
Note that $|\hat{S}|=\frac{n}{2}$. Then, $\|\bv q-\hat{\bv q} \|_1=|S \Delta \hat{S}|$
where $\Delta$ is the symmetric difference between sets as usually defined.

Recall that $\bar{\bv x}=\frac{\bv z}{\|\bv z \|_2}$ where $\bv z=\frac{1}{2}\bv 1+\rho \bv q$.
The $i$'th entry of $\frac{1}{\sqrt{2}}\bar{\bv x}$ is either
$\mu_0=\frac{1}{2\sqrt{2}\|\bv z \|_2}$ if $\bv q_i=0$ or
$\mu_1=\frac{1}{\sqrt{2}\|\bv z \|_2}(\frac{1}{2}+\rho)$ if $\bv q_i=1$.
Let $T=\frac{\mu_0+\mu_1}{2}=\frac{1+\rho}{2\sqrt{2}\|\bv z \|_2}$. Let $$S_T=\{i \in [n] | (\bv v_1)_i > T \}.$$
By triangle inequality, we have:
\begin{align}\label{eq:symdiff}
 |S \Delta \hat{S}| \leq |S \Delta S_T|+ |S_T \Delta \hat{S}|.
\end{align}
We now bound each of the terms above individually.

\noindent\textbf{Bounding} $|S \Delta S_T|$: The indices in $S \Delta S_T$ belong to either of these sets:
$B_{low}=\{i |\bv q_i=1 \text{ and } (\bv v_1)_i \leq T \}$ and
$B_{high}=\{i |\bv q_i=0 \text{ and } (\bv v_1)_i>T \}$. Then,
$|S \Delta S_T|= |B_{low}|+|B_{high}|$. For any $i \in B_{low} \cup B_{high}$,
we have $|(\bv v_1)_i-\frac{1}{\sqrt{2}}\bar{\bv x}_i|
\geq \frac{|\mu_1-\mu_0|}{2}=\frac{\rho}{2\sqrt{2}\|\bv z \|_2}$.
Then, from~\eqref{eq:err1}, we have $\sum_{i \in B_{low} \cup B_{high}} |(\bv v_1)_i-\frac{1}{\sqrt{2}}\bar{\bv x}_i|^2 \leq c^2 \epsilon^2$.
Plugging in the lower bound $\frac{\rho}{2\sqrt{2}\|\bv z \|_2}$ on
$|(\bv v_1)_i-\frac{1}{\sqrt{2}}\bar{\bv x}_i|$ for any $i \in B_{low} \cup B_{high}$,
we get (absorbing the constant into $c$):
\begin{align}\label{eq:badindex}
 |B_{low}|+|B_{high}| \leq \frac{c \|\bv z \|^2_2 \epsilon^2}{\rho^2}.
\end{align}
So, we get $|S \Delta S_T| \leq \frac{c \|\bv z \|^2_2 \epsilon^2}{\rho^2}$.

\noindent\textbf{Bounding} $|S_T \Delta \hat{S}|$: Since $\hat{S}$ contains the indices corresponding to exactly the
largest $\frac{n}{2}$ magnitude values of $\bv v_1$, it is not hard to show
$|S_T \Delta \hat{S}|=||S_T|-\frac{n}{2}|$. We also have ($S^c$ is the complement set of $S$):
\begin{align*}
 |S_T|=|S \cap S_T|+|S^c \cap S_T|=||S|-|B_{low}|+|B_{high}|
\end{align*}
Substituting this, we get:
\begin{align*}
 |S_T \Delta \hat{S}|=\left ||S_T|-\frac{n}{2} \right| &= \left||S|-|B_{low}|+|B_{high}|-\frac{n}{2} \right| \\
 &\leq \left| |S|- \frac{n}{2}\right| +|B_{low}|+|B_{high}|.
\end{align*}
Since the entries of $\bv q$ are distributed uniformly in $\{0,1\}$, by a standard Chernoff bound,
we have $ \left| |S|-\frac{n}{2} \right| \leq 0.01n$ with probability at least $\frac{99}{100}$
for large enough $n$. Along with the bound on $|B_{low}|+|B_{high}|$ from~\eqref{eq:badindex},
this gives us:
\begin{align*}
 |S_T \Delta \hat{S}| \leq 0.01n+\frac{c \|\bv z \|^2_2 \epsilon^2}{\rho^2}.
\end{align*}
Plugging the bounds on $|S_T \Delta \hat{S}|$ and $|S \Delta S_T|$ into~\eqref{eq:symdiff}, we get:
\begin{align*}
 |S \Delta \hat{S}| \leq 0.01n +\frac{2c \|\bv z \|^2_2 \epsilon^2}{\rho^2}.
\end{align*}
From Lemma~\ref{lem:sigma1}, we have $\|\bv z \|^2_2 = O(n)$ with probability at
least $\frac{99}{100}$. Thus, as long as $\rho=C'\epsilon$ for a large enough constant $C'$,
and taking a union bound over all events, we have:
\begin{align*}
 |S \Delta \hat{S}| \leq \frac{n}{20}.
\end{align*}
Thus, $|S \Delta \hat{S}|=\| \hat{\bv q}-\bv q \|_1 \leq \frac{n}{20}$ and
we have solved the $(\rho,n)$ distributed detection problem.
So, we must have observed $\Omega(n/\rho^2)$ or $\Omega(n/\epsilon^2)$
elements from $\bv A$ according to
Imported Lemma~\ref{lemma:entrywise_lower}.
\end{proof}

\section{Quantum-Inspired Eigenvector Approximation}\label{sec:qram}

In this section, we show how our squared column-norm sampling algorithm (Algorithm~\ref{alg:eigvec-rowsamp}) can be adapted to the \emph{quantum-inspired} sampling and query (SQ) model. This model of computation gained prominence following Tang's breakthrough classical algorithm for recommendation systems~\cite{tang2019quantum}, which dequantized previous quantum machine learning algorithms by assuming classical data structures that provide $O(1)$-time sampling and query access to the input matrix. This framework was subsequently formalized and extended to various linear algebraic problems by Chia et al.~\cite{chia2020sampling}.

In the quantum-inspired setting, the algorithm must output a solution that itself supports SQ access.
For sublinear time eigenvector approximation in this model,
the algorithm must provide the ability to query individual entries of the approximate eigenvector, and to sample indices proportional to their squared entry magnitudes, all using only $\mathrm{poly}(\log n, 1/\epsilon)$ queries to the input matrix.

Because the approximate eigenvectors produced by our submatrix-based algorithms are expressed as a linear combination of just a few ($s = \mathrm{poly}(\log n, 1/\epsilon)$) sampled columns of $\bv A$, they are perfectly suited for this task. We establish the first classical sublinear time algorithm for eigenvector approximation in the quantum-inspired framework, demonstrating that one can obtain full SQ access to all outlying approximate eigenvectors using $\widetilde{O}(\mathrm{poly}(\log n, 1/\epsilon))$ queries to $\mathrm{SQ}(\bv A)$.

\begin{definition}[Sampling and query access~\cite{chia2020sampling}]\label{def:sq-vector}
For a vector $\bv v \in \mathbb{R}^n$, we have $\mathrm{SQ}(\bv v)$,
\emph{sampling and query access} to $\bv v$, if we can:
\begin{enumerate}
 \item query any entry $\bv v_i$ for $i \in [n]$;
 \item draw independent samples $i \in [n]$ from the distribution
 $\mathcal{D}_v(i) := \bv v_i^2 / \|\bv v\|_2^2$;
 \item query $\|\bv v\|_2$.
\end{enumerate}
\end{definition}

We can similarly define sampling and query access to a matrix:
\begin{definition}[Sampling and query access to a matrix~\cite{chia2020sampling}]
For a symmetric matrix $\bv A \in \mathbb{R}^{n \times n}$, we say we have $\mathrm{SQ}(\bv A)$ if we have
$\mathrm{SQ}(\bv A_{:,i})$ for every column $i \in [n]$ and $\mathrm{SQ}(\bv a)$ for the vector of
column norms $\bv a$ with $\bv a_i=\|\bv A_{:,i} \|_2$. Concretely, we can:
\begin{enumerate}
 \item query individual entries $\bv A_{ij}$ of $\bv A$,
 \item sample an index $i \in [n]$ with probability
$\frac{\|\bv A_{:,i}\|_2^2}{\|\bv A \|^2_F}$ (squared column-norm sampling),
\item sample any entry $\bv A_{ij}$ with probability $\frac{\bv A^2_{ij}}{\|\bv A \|_F^2}$
\item query $\|\bv A_{:,i} \|_2$ for any $i \in [n]$, and
\item query $\|\bv A\|_F$.
\end{enumerate}
\end{definition}

\paragraph{Main result.}

We now show that any approximate eigenvector $\bv v$
produced by Algorithm~\ref{alg:eigvec-rowsamp}
admits $\mathrm{SQ}(\bv v)$ access with $\textrm{poly}(\log n, 1/\epsilon) $
queries to $\mathrm{SQ}(\bv A)$.
Note that $\bv v$ is actually supported on the columns of
another matrix $\bv A'$
whose entries have been zeroed out
according to zeroing out step in Algorithm~\ref{alg:eigvec-rowsamp}
instead of $\bv A$. Hence, the main thing we need to argue is that
we can sample according to the columns of $\bv A$ and use rejection sampling
to sample from the columns of $\bv A'$.
\begin{reptheorem}{thm:sq-eigvec}
Let $\bv A \in \mathbb{R}^{n \times n}$ be a symmetric matrix
with eigenvalues $\lambda_1(\bv A) \geq \ldots \geq \lambda_n(\bv A)$
to which we have sampling and query access $\mathrm{SQ}(\bv A)$, and
let $\epsilon, \delta \in (0,1)$.
There is an algorithm that makes
$O\left(\frac{\log^8 n}{\epsilon^8 \delta^2} \log^6 \frac{1}{\epsilon \delta} \right)$
queries to $\mathrm{SQ}(\bv A)$ and outputs eigenvalue estimates
$\tilde\lambda_1, \ldots, \tilde\lambda_m \in R$ and provides
sampling and query access to vectors
$\bv v_1, \ldots, \bv v_m \in \R^n$ with $\|\bv v_i \|_2 \in [\frac{1}{C}, C]$ for $i \in [m]$ for an absolute constant $C>1$ such that, with probability at least $1-\delta$, the pairs $\{(\tilde\lambda_i, \bv v_i)\}_{i=1}^{m}$
satisfy the guarantees of Theorem~\ref{thm:sqnorm1}.
Specifically for every $\bv v_i$, $i \in [m]$, we have the following:
\begin{enumerate}
 \item \textup{(Query access)} Given $j \in [n]$, output $ \bv v_{ij}$ using $O\left(\frac{\log^4 n}{\epsilon^4 \delta}\log^3\frac{1}{\epsilon\delta}\right)$
 additional queries to $\mathrm{SQ}(\bv A)$.
 \item \textup{(Sampling access)} Sample an index $j \in [n]$ from the distribution
 $\mathcal{D}_{v_i}(j) = \bv v_{ij}^2$, with $O\left(\frac{\log^4 n}{\epsilon^6 \delta}\log^3\frac{1}{\epsilon\delta}\right)$
 queries to $\mathrm{SQ}(\bv A)$.
\end{enumerate}
\end{reptheorem}

\begin{proof}

By Theorem~\ref{thm:sqnorm1}, Algorithm~\ref{alg:eigvec-rowsamp}
gives an approximate eigenvector corresponding to every large eigenvalue of
$\bv A$ with magnitude at least $\epsilon \| \bv A\|_F$
which satisfies the desired residual error bound with probability at least
$1-\delta$. Moreover, Algorithm~\ref{alg:eigvec-rowsamp} only samples
indices from the distribution of the squared column norms of $\bv A$ and
queries the corresponding entries of $\bv A$ and hence it can be implemented
using $\mathrm{SQ}(\bv A)$. Now we show that since we only need sampling and query
access to the approximate eigenvectors given by Algorithm~\ref{alg:eigvec-rowsamp}, reading only
$O\left(\frac{\log^8 n}{\epsilon^8 \delta^2} \log^6 \frac{1}{\epsilon \delta} \right)$
entries of $\bv A$ suffice i.e. we don't need to read
all the $n$ entries of $O(s)$ columns of $\bv A$. In the rest of the proof,
we will focus on a single output pair $(\tilde\lambda, \bv v)$ of
Algorithm~\ref{alg:eigvec-rowsamp} (so that $|\tilde\lambda| = \Omega(\epsilon\|\bv A\|_F)$) but the
same argument applies to every output pair simultaneously. Also,
$s=O\left(\frac{\log^4 n}{\epsilon^4 \delta} \log^3 \frac{1}{\epsilon \delta} \right)$
in the proof.
We verify the conditions for $\mathrm{SQ}(\bv v)$:

\noindent\textbf{Query access.} The approximate eigenvector
is given by $\bv v=\frac{\bv A'_{[:,S]} \bv x}{\|\bv A'_{[:,S]} \bv x \|_2}$
where $S$ is the multiset of sampled indices,
$\bv A'_{[:,S]}$ is the sampled columns of $\bv A$ after zeroing entries
and $\bv x$ is an
eigenvector of the $O(s) \times O(s)$ matrix $\bv A'_{S,S}$
as defined in Algorithm~\ref{alg:eigvec-rowsamp}.
Computing $\bv x$ only requires querying $\mathrm{SQ}(\bv A)$ $O(s^2)$ times, once, independently
of the queried index. Given $\bv x$, the numerator $\bv A'_{[j,S]} \bv x$ of any entry
$\bv v_j$ can be computed exactly by making at most $O(s)$ extra queries to $\mathrm{SQ}(\bv A)$ after zeroing entries appropriately. By
Lemma~\ref{lem:asxl_norm_unconditional}, we have
$\frac{|\tilde\lambda|}{C} \leq \|\bv A'_{[:,S]} \bv x \|_2 \leq C|\tilde\lambda|$ for an absolute constant $C>1$. So, Then, following~\ref{cor:sqnorm}, we can output
$\frac{\bv A'_{[j,S]} \bv x}{|\tilde \lambda|} $.

\noindent\textbf{Sampling Access.} The probability of sampling index $i \in [n]$ is given by $p_i=\frac{\|\bv A_{i,:} \|^2_2}{\| \bv A\|^2_F}$.
The probability of sampling index $i \in [n]$ from $\bv v$ is given by
$q_i = \frac{(\bv A'_{[i,S]} \bv x)^2}{\|\bv A'_{[:,S]} \bv x \|^2_2}$.
We will sample an index $i \in [n]$ from the columns of $\bv A$ according to $p_i$ and
then use rejection sampling to sample the corresponding index in $\bv v$. Specifically,
we show below that $\frac{q_i}{p_i} \leq \frac{C}{\epsilon^2}$
for every $i \in [n]$ for some constant $C$
with probability at least $1-\delta$.
So we need $O(1/\epsilon^2 )$ queries in expectation to sample from the distribution according to $q_i$.
The final
bound is established by taking a union bound over the
events from Theorem~\ref{thm:sqnorm1} and the events
$\frac{q_i}{p_i} \leq \frac{C}{\epsilon^2}$ for all $i \in [n]$ for all
approximate eigenvectors (and adjusting $\delta$ by constant factors).

\paragraph{Bounding $q_i/p_i$.}
We have $\frac{q_i}{p_i}=\frac{(\bv A'_{[i,S]} \bv x)^2}{\|\bv A'_{[:,S]} \bv x \|^2_2}
\frac{\|\bv A \|^2_F}{\|\bv A_{i,:} \|_2^2}$. From Theorem~\ref{thm:sqnorm1},
\begin{align}\label{eq:err2}
 \bv A \bv v=\lambda \bv v+\bv b
\end{align}
where $\|\bv b \|_2 \leq \epsilon \|\bv A \|_F$. So,
we get $\bv v=\frac{\bv A \bv v}{\lambda}-\frac{\bv b}{\lambda}$.
Thus, for any $i \in [n]$, we have $\bv v_i=\frac{\bv A_{i,:} \bv v}{\lambda}-\frac{\bv b_i}{\lambda}$.
Using triangle inequality and squaring both sides,
we have $\|\bv v_i \|^2_2
\leq \frac{2\|\bv A_{i,:} \bv v\|^2_2}{\lambda^2}+\frac{2\|\bv b_i \|_2^2}{\lambda^2}$.
So, we get:
\begin{align}\label{eq:err3}
 \frac{(\bv A'_{[i,S]} \bv x)^2}{\|\bv A'_{[:,S]} \bv x \|^2_2}
 \leq \frac{2\|\bv A_{i,:} \|^2_2}{\lambda^2}+\frac{2(\bv b_i)^2}{\lambda^2}.
\end{align}
We will now bound $|\bv b_i |$ for all $i \in [n]$.
We have $\bv b_i= \bv A_{i,:} \bv v-\lambda \bv v_i$.
Thus, we get:
$|\bv b_i | \leq |\bv A_{i,:} \bv v|+|\lambda \bv v_i|
\leq \|\bv A_{i,:} \|_2+|\lambda| |\bv v_i|$.
We have $|\bv v_i|=\frac{(|\bv A'_{[i,S]} \bv x|}{\|\bv A'_{[:,S]} \bv x \|_2} $.
From Lemma~\ref{lem:asxl_norm_unconditional}, we also have, for some constant $c$, with probability
at least $1-\delta$,:
\begin{align*}
 \|\bv A'_{[:,S]} \bv x \|_2 \geq c|\lambda|,
\end{align*}
for some constant $c$. So, we have:
\begin{align*}
 |\bv b_i | \leq \|\bv A_{i,:} \|_2+|\bv A'_{[i,S]} \bv x| \leq \|\bv A'_i \|_2+ \| \bv A'_{[i,S]} \|_2.
\end{align*}
We can bound $\|\bv A'_{[i,S]} \|^2_2$
by $\|\bv A_{i,:} \|^2_2$ with good probability from Lemma~\ref{eq:sanorm}.
Plugging the bounds on $|\bv b_i|$ and $\|\bv A_{i,:} \Sbb \bv x \|^2_2$ into~\eqref{eq:err3}, we get (for some constant $C$):
\begin{align*}
 \frac{(\bv A'_{[i,S]} \bv x)^2}{\|\bv A'_{[:,S]} \bv x \|^2_2} \leq
 C\left(\frac{\|\bv A'_i \|^2_2}{\lambda^2}+\frac{\|\bv A'_{[i,S]} \|^2_2}{\lambda^2} \right)
 \leq C\left(\frac{2\|\bv A_{i,:} \|^2_2}{\lambda^2} \right)
 \leq \frac{2C\|\bv A_{i,:} \|^2_2}{\epsilon^2 \| \bv A\|^2_F}.
\end{align*}
The last step follows from the fact that $|\lambda| \geq \epsilon \|\bv A \|_F$. Thus, we get:
\begin{align*}
 \frac{q_i}{p_i}= \frac{(\bv A'_{[i,S]} \bv x)^2}{\|\bv A'_{[:,S]} \bv x \|^2_2}
 \frac{\|\bv A_{i,:} \|^2_F}{\|\bv A_{i,:} \|_2^2} \leq \frac{2C}{\epsilon^2 }.
\end{align*}
Hence, we need $O(1/\epsilon^2 )$ expected rejection sampling trials with probability
at least $1-\delta$ (by taking a union bound over the events in
Lemmas~\ref{lem:asxl_norm_unconditional} and~\ref{eq:sanorm} and adjusting $\delta$ by constnat factors)
to sample from the distribution according to $q_i$.

\end{proof}

\begin{lemma}[Concentration of Sampled Row Norms]\label{eq:sanorm}
Let $\bv A'_{[:,S]} \in \mathbb{R}^{n \times |S|}$ be the sampled, zeroed-out, and
scaled matrix constructed in Step~\ref{step:zero} of Algorithm~\ref{alg:eigvec-rowsamp}
with a symmetric $\bv A \in \mathbb{R}^{n \times n}$ as input, and let
$s \geq \frac{c \log^4 n}{\epsilon^4 \delta}\log^3\frac{1}{\epsilon\delta}$ be its
expected sample size for a sufficiently large constant $c$. Then, with probability
at least $1-\delta$, simultaneously for every row $i \in [n]$, we have (for an
absolute constant $C > 0$):
\[
 \|\bv A'_{[i,S]}\|_2^2 \le C\,\|\bv A_{i,:}\|_2^2.
\]
\end{lemma}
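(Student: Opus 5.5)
Fix a row $i$ and write $\|\bv A'_{[i,S]}\|_2^2 = \sum_{q\in[K]} \frac{(\bv A'_{i i_q})^2}{s p_{i_q}}$, where $(\bv A'_{i i_q})$ denotes the entry surviving the zeroing of Step~\ref{step:zero}. By Lemma~\ref{lem:pois} this equals $\sum_{j\in[n]} n_j X_j$ with $X_j = \frac{(\bv A'_{ij})^2}{s p_j}$ and independent counts $n_j\sim\mathrm{Poisson}(sp_j)$. The expectation is exactly $\sum_j (\bv A'_{ij})^2 \le \|\bv A_{i,:}\|_2^2$. The plan is to show concentration of this sum of independent scaled Poisson variables around its mean, with failure probability $\delta/n$ per row, and then union bound over the $n$ rows.

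The key step is a uniform bound on the summands $X_j$, and this is where the zeroing enters. If the entry survives, then $\|\bv A_{i,:}\|_2^2\|\bv A_{j,:}\|_2^2 > \frac{\epsilon^2\|\bv A\|_F^2 \bv A_{ij}^2}{c\log^4 n}$. Using $p_j = \|\bv A_{j,:}\|_2^2/\|\bv A\|_F^2$, this gives
\[
X_j = \frac{\bv A_{ij}^2 \|\bv A\|_F^2}{s\|\bv A_{j,:}\|_2^2} < \frac{c\log^4 n}{s\epsilon^2}\,\|\bv A_{i,:}\|_2^2 =: B.
\]
With $s \ge \frac{c'\log^4 n}{\epsilon^4\delta}$ we get $B \le \frac{\epsilon^2\delta}{\log^0 n}\cdot O(1)\,\|\bv A_{i,:}\|_2^2$, and in particular $B \le \|\bv A_{i,:}\|_2^2/\log n$ once $c'$ is large relative to $c$. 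Diagonal entries $j=i$ satisfy the same bound, since the zeroing condition in Step~\ref{step:zero} applies to every $(j,q)$.

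Next, apply a Bernstein/Chernoff bound for sums of independent Poisson variables weighted by coefficients in $[0,B]$. The moment generating function gives
\[
\log \E e^{\theta \sum_j n_j X_j} = \sum_j sp_j\bigl(e^{\theta X_j}-1\bigr),
\]
and with $\theta = 1/B$ this is at most $(e-1)\,\mu/B$, where $\mu = \E\bigl[\sum_j n_j X_j\bigr]\le \|\bv A_{i,:}\|_2^2$. Markov's inequality then yields
\[
\Pr\Bigl[\sum_j n_j X_j > C\|\bv A_{i,:}\|_2^2\Bigr] \le \exp\bigl((e-1) - C\bigr)\cdot \exp\bigl(-(C-e+1)\,(\|\bv A_{i,:}\|_2^2/B - 1)\bigr),
\]
which is roughly $\exp(-\Omega(C\log n))\le \delta/n$, because $\|\bv A_{i,:}\|_2^2/B \ge \log n \cdot\Omega(1/\delta)$ and we may take $C$ a large constant. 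Rows with $\|\bv A_{i,:}\|_2=0$ are trivial, since all their entries are zero. A union bound over $i\in[n]$ finishes the proof.

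The main obstacle is making sure the zeroing threshold yields a bound $B$ that is smaller than $\|\bv A_{i,:}\|_2^2$ by at least a $\log(n/\delta)$ factor; the $\log^4 n$ in the threshold is compensated by the $\log^4 n/\epsilon^4$ in $s$, so this should go through. No dependence on the row duplication or the $\frac{s p_i}{1}\le 1$ assumption is needed, because the Poisson counts are handled directly.
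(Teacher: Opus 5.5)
\textbf{Gap: the union bound over rows.}

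Your route matches the paper's. You use Poisson splitting into independent counts $n_j\sim\mathrm{Poisson}(sp_j)$. You read the coefficient bound $X_j< B=\frac{c\log^4 n}{\epsilon^2 s}\|\bv A_{i,:}\|_2^2$ off the zeroing rule, including $j=i$. Then you apply an MGF bound for the weighted Poisson sum and union bound over rows. Your fixed choice $\theta=1/B$ with $e^u-1\le (e-1)u$ on $[0,1]$ is a valid, slightly simpler substitute for the paper's sub-gamma bound. It gives per-row failure probability at most $\exp\bigl(-(C-e+1)\|\bv A_{i,:}\|_2^2/B\bigr)$.

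The gap is in the last step. The $\log^4 n$ in the zeroing threshold cancels the $\log^4 n$ in $s$ completely, so
$\|\bv A_{i,:}\|_2^2/B=\frac{\epsilon^2 s}{c\log^4 n}=\Theta\bigl(\frac{1}{\epsilon^2\delta}\log^3\frac{1}{\epsilon\delta}\bigr)$, with no dependence on $n$. Your own ``$\log^0 n$'' computation already shows this.

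Hence two of your claims are false: ``$B\le\|\bv A_{i,:}\|_2^2/\log n$ once $c'$ is large relative to $c$'' and ``$\|\bv A_{i,:}\|_2^2/B\ge\log n\cdot\Omega(1/\delta)$''. Enlarging absolute constants never produces a $\log n$ factor. With $C$ absolute, the per-row failure probability is only $\exp\bigl(-\Omega(\frac{1}{\epsilon^2\delta}\log^3\frac{1}{\epsilon\delta})\bigr)$. That is at most $\delta/n$ only if $\frac{1}{\epsilon^2\delta}\log^3\frac{1}{\epsilon\delta}\gtrsim\log\frac{n}{\delta}$.

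This is not just a loose union bound; the conclusion can genuinely fail. Take $\bv A$ block diagonal with $k$ all-ones blocks, where $k$ is just below $c\log^4 n/\epsilon^2$, so no entry inside a block is zeroed. Then for every row $i$ in block $b$ we have $\|\bv A'_{[i,S]}\|_2^2/\|\bv A_{i,:}\|_2^2=N_b/\lambda$. Here the $N_b\sim\mathrm{Poisson}(\lambda)$ are independent across blocks and $\lambda=s/k$ does not depend on $n$. Since $k\to\infty$ with $n$, some block exceeds $C\lambda$ with probability tending to $1$.

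The paper's proof reaches the same $n$-free exponent. It closes the union bound only by explicitly restricting to the ``regime of interest'' where this exponent dominates $\log(n/\delta)$. You need to state that assumption as well, or pay an extra factor in the sample size by requiring $\epsilon^2 s\gtrsim c\log^4 n\,\log(n/\delta)$.
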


\begin{proof}
Fix a row $i \in [n]$. By the Poisson splitting property (Lemma~\ref{lem:pois}),
the sampling scheme of Algorithm~\ref{alg:eigvec-rowsamp} is equivalent to
including, independently across $j \in [n]$, exactly $n_j$ copies of column $j$,
where $n_j \sim \mathrm{Poisson}(s p_j)$ and $p_j = \frac{\|\bv A_{j,:}\|_2^2}{\|\bv A\|_F^2}$.
After zeroing and scaling (Step~\ref{step:zero}), a sampled copy of column $j$
contributes an entry to row $i$ whose value, before scaling, is either
$\bv A_{ij}$ or $0$, and which is then scaled by $1/\sqrt{s p_j}$. We define
$\bv A'_{ij}$ to be the value obtained by applying \emph{only} the off-diagonal
branch of Step~\ref{step:zero}, i.e.\ $\bv A'_{ij} := \bv A_{ij}$ unless
$\|\bv A_{i,:}\|_2^2\|\bv A_{j,:}\|_2^2 \le \frac{\epsilon^2\|\bv A\|_F^2|\bv A_{ij}|^2}{c\log^4 n}$,
in which case $\bv A'_{ij} := 0$. The zeroing out condition depends only on the original
indices $(i,j)$ and is identical across all copies of column $j$.

The actual matrix formed in Step 5 additionally zeros the single entry sitting at
a diagonal \emph{sampled} position (which, for row $i$, can only be a copy of
column $i$). Since both branches of Step 5 only ever replace an entry by $0$,
zeroing can only decrease the squared row norm; hence
\[
 \|\bv A'_{[i,S]}\|_2^2 \;\le\; Z := \sum_{j=1}^n a_j\, n_j,
 \qquad a_j := \frac{(\bv A'_{ij})^2}{s p_j},
\]
where $n_j \sim \mathrm{Poisson}(s p_j)$ are independent, so $Z$ is a sum of
independent nonnegative terms. It therefore suffices to prove the bound for $Z$.
Each nonzero $\bv A'_{ij}$ equals $\bv A_{ij}$, so $|\bv A'_{ij}| \le |\bv A_{ij}|$
for all $j$.

\paragraph{Mean.} Using $\mathbb{E}[n_j] = s p_j$,
\[
 \mu := \mathbb{E}[Z] = \sum_{j=1}^n a_j\, s p_j = \sum_{j=1}^n (\bv A'_{ij})^2
 \le \sum_{j=1}^n (\bv A_{ij})^2 = \|\bv A_{i,:}\|_2^2,
\]
since each $(\bv A'_{ij})^2$ is either $(\bv A_{ij})^2$ or $0$.

\paragraph{Range of the coefficients.} We claim $a_j \le b$ for every $j$, where
$b := \frac{c \log^4 n}{\epsilon^2 s}\,\|\bv A_{i,:}\|_2^2$. If $\bv A'_{ij} = 0$
this is immediate. Otherwise $(i,j)$ passed the off-diagonal test defining
$\bv A'_{ij}$, so negating that condition gives
$|\bv A_{ij}|^2 < \frac{c \log^4 n}{\epsilon^2 \|\bv A\|_F^2}\,\|\bv A_{i,:}\|_2^2\|\bv A_{j,:}\|_2^2$.
Hence, since $p_j = \|\bv A_{j,:}\|_2^2/\|\bv A\|_F^2$,
\[
 a_j = \frac{(\bv A'_{ij})^2\,\|\bv A\|_F^2}{s\,\|\bv A_{j,:}\|_2^2}
 < \frac{c \log^4 n}{\epsilon^2 s}\,\|\bv A_{i,:}\|_2^2 = b.
\]
This holds uniformly in $j$ (the case $j=i$ included, with
$\|\bv A_{j,:}\|_2^2 = \|\bv A_{i,:}\|_2^2$).

\paragraph{Variance proxy.} Using $\mathrm{Var}(n_j) = s p_j$ and $a_j \le b$,
\[
 \nu := \sum_{j=1}^n a_j^2\, \mathrm{Var}(n_j)
 = \sum_{j=1}^n a_j\,(\bv A'_{ij})^2
 \le b \sum_{j=1}^n (\bv A'_{ij})^2
 = b\,\|\bv A'_{i,:}\|_2^2
 \le b\,\|\bv A_{i,:}\|_2^2.
\]

\paragraph{Bernstein bound for a weighted Poisson sum.} For independent
$n_j \sim \mathrm{Poisson}(s p_j)$ and $Z = \sum_j a_j n_j$ with $a_j \in [0,b]$, the
cumulant generating function is $\log \mathbb{E}[e^{tZ}] = \sum_j s p_j (e^{t a_j} - 1)$.
Using $e^u - 1 \le u + u^2$ for $u \in [0,1]$, which holds whenever $t \le 1/b$
(so that $t a_j \le t b \le 1$), we obtain
$\mathbb{E}[e^{tZ}] \le \exp(t\mu + t^2 \nu)$ for all $0 < t \le 1/b$. Markov's
inequality followed by optimizing $t$ over $(0, 1/b]$ yields the standard
sub-gamma tail bound
\[
 \mathbb{P}\big(Z \ge \mu + \eta\big)
 \le \exp\!\left(-\min\!\left(\frac{\eta^2}{4\nu},\, \frac{\eta}{2b}\right)\right).
\]

\paragraph{Conclusion for a fixed row.} Fix any constant $C \ge 3$ and take the
threshold $C\,\|\bv A_{i,:}\|_2^2$. Since $\mu \le \|\bv A_{i,:}\|_2^2$, the
deviation is $\eta := C\|\bv A_{i,:}\|_2^2 - \mu \ge (C-1)\|\bv A_{i,:}\|_2^2$.
Substituting $\nu \le b\,\|\bv A_{i,:}\|_2^2$ and $b = \frac{c\log^4 n}{\epsilon^2 s}\|\bv A_{i,:}\|_2^2$,
\[
 \frac{\eta^2}{4\nu} \ge \frac{(C-1)^2\,\|\bv A_{i,:}\|_2^4}{4 b\,\|\bv A_{i,:}\|_2^2}
 = \frac{(C-1)^2\,\epsilon^2 s}{4 c \log^4 n},
 \qquad
 \frac{\eta}{2b} \ge \frac{(C-1)\,\epsilon^2 s}{2 c \log^4 n}.
\]
Both exponents are $\Omega\!\left(\frac{\epsilon^2 s}{\log^4 n}\right)$, so
\[
 \mathbb{P}\big(\|\bv A'_{[i,S]}\|_2^2 \ge C\,\|\bv A_{i,:}\|_2^2\big)
 \le \exp\!\left(-\Omega\!\left(\frac{\epsilon^2 s}{\log^4 n}\right)\right).
\]

\paragraph{Union bound over all rows.} Substituting
$s \ge \frac{c \log^4 n}{\epsilon^4 \delta}\log^3\frac{1}{\epsilon\delta}$, the
per-row failure probability is at most
$\exp\!\big(-\Omega(\tfrac{1}{\epsilon^2\delta}\log^3\tfrac{1}{\epsilon\delta})\big)$.
In the sublinear regime of interest this exponent dominates $\log(n/\delta)$, so
taking a union bound over all $n$ rows shows that
$\|\bv A'_{[i,S]}\|_2^2 \le C\,\|\bv A_ {i,:}\|_2^2$ holds simultaneously for every
$i \in [n]$ with probability at least $1 - \delta$. Adjusting the constant $c$ in
$s$ absorbs the union-bound factor.
\end{proof}

\section{Nystr\"om-Based Algorithms}\label{sec:nys}

In this section, we present
our algorithms based on
the Nystr\"om method.

\subsection{Ridge Leverage Score Sampling for PSD Matrices}\label{sec:psd}

\begin{algorithm}[t]
\caption{Eigenvector Approximation Using Nystr\"om For PSD Matrices}
\label{alg:psd_eigvec}
\begin{algorithmic}[1]
\Require Symmetric PSD matrix $\bv A \in \R^{n \times n}$ with $\|\bv A\|_{\infty} \leq 1$,
accuracy $\epsilon \in (0,1)$, failure probability $\delta \in (0,1)$,
eigenvalue threshold $\tau$ (default $\tau = \frac{\epsilon n}{2}$).
\State Sample $\bv S \in \R^{n \times s}$, a column sampling matrix where each index in $[n]$ is
independently included with probability
$p= \min \left( 1,\frac{c}{\epsilon n}\log\!\left(\frac{1}{\epsilon\delta}\right) \right)$
for sufficiently large absolute constant $c$, giving $s = pn$ expected columns.
\State Compute $\bv A \bv S$ by reading the sampled columns of $\bv A$
\State Using $\bv A \bv S$, form the matrices $\bv S^T \bv A \bv S$ and $\bv S^T \bv A^2 \bv S$
\State Find all eigenvectors $\bv x \in \R^s$ corresponding to the eigenvalues $\lambda>0$
by solving the generalized eigenvalue problem:
\[
 \bv S^T \bv A^2 \bv S \, \bv x = \lambda \, \bv S^T \bv A \bv S \, \bv x,
\]
\State For any eigenpair $(\bv x_i, \lambda_i)$ with $\lambda_i \geq \tau$,
compute the approximate eigenvector $\bv v_i =
\dfrac{\bv A \bv S \bv x_i}{\|\bv A \bv S \bv x_i\|_2}$.
\State {\bfseries Return:} All eigenpairs $(\bv v_i, \lambda_i)$ computed in the previous step.
\end{algorithmic}
\end{algorithm}

We now give an improved upper bound for PSD matrices by leveraging a
connection with the Nystr\"{o}m method. In the Nystr\"{o}m method,
we sample $s$ columns at random. The Nystr\"{o}m
approximation to $\bv A$ is given by
$\hat{\bv A}= (\bv A \bv S)(\bv S^T \bv A \bv S)^{\dag} (\bv A \bv S)^T $
where $\bv S$ is the sampling matrix and $(\bv S^T \bv A \bv S)^{\dag}$ is the
pseudoinverse of $\bv S^T \bv A \bv S$. Let $(\bv x, \lambda)$
be a solution to the generalized eigenvalue problem given by
$\bv S^T \bv A^2 \bv S \bv x=\lambda \bv S^T \bv A \bv S \bv x$.
First, we prove that the approximate eigenvector
$\bv v=\frac{\bv A \bv S \bv x}{\|\bv A \bv S \bv x \|_2}$ is
exactly an eigenvector of $\hat{\bv A}$.

\begin{lemma}\label{lem:nyseig}
Let $\bv x \in \R^{s}$ and $\lambda > 0$ be a solution to the eigenvalue
problem
$\bv S^T \bv A^2 \bv S \bv x=\lambda \bv S^T \bv A \bv S \bv x$ for a PSD
matrix $\bv A \in \R^{n \times n}$
and a matrix $\bv S \in \R^{n \times s}$.
Then, $\bv v=\frac{\bv A \bv S \bv x}{\|\bv A \bv S \bv x \|_2}$ is an eigenvector
of $(\bv A \bv S)(\bv S^T \bv A \bv S)^{\dag} (\bv A \bv S)^T$
corresponding to the eigenvalue $\lambda$.
\end{lemma}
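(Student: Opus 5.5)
The plan is to write $\bv C = \bv A\bv S$ and $\bv W = \bv S^T\bv A\bv S$, so that $\bv S^T\bv A^2\bv S = \bv C^T\bv C$. The generalized eigenproblem then reads $\bv C^T\bv C\bv x = \lambda \bv W\bv x$, and we need to show
\[
\hat{\bv A}\,\bv C\bv x = \bv C\bv W^\dagger\bv C^T\bv C\bv x = \lambda\,\bv C\bv x .
\]
Substituting the eigen-equation gives $\bv C\bv W^\dagger\bv C^T\bv C\bv x = \lambda\,\bv C\bv W^\dagger\bv W\bv x$. So it suffices to prove the identity $\bv C\bv W^\dagger\bv W = \bv C$, i.e.\ $\bv C(\bv I - \bv P) = 0$, where $\bv P = \bv W^\dagger\bv W$ is the orthogonal projector onto the range of $\bv W$.

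To prove this, use PSD-ness. Factor $\bv A = \bv B^T\bv B$ (for instance $\bv B = \bv A^{1/2}$), and set $\bv Y = \bv B\bv S$. Then $\bv W = \bv Y^T\bv Y$ and $\bv C = \bv B^T\bv Y$. The null space of $\bv W$ equals the null space of $\bv Y$, since $\bv z^T\bv W\bv z = \|\bv Y\bv z\|_2^2$. Hence $\bv I - \bv P$, the projector onto $\ker \bv W$, maps into $\ker \bv Y$, so $\bv Y(\bv I-\bv P) = 0$ and therefore $\bv C(\bv I-\bv P) = \bv B^T\bv Y(\bv I - \bv P) = 0$. This is the only step that needs PSD-ness, and it is the main point of care. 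For indefinite $\bv A$, $\ker(\bv S^T\bv A\bv S)$ need not lie in $\ker(\bv A\bv S)$, which is exactly why the later section truncates.

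It remains to check that $\bv v$ is well defined, i.e.\ $\bv C\bv x \neq 0$. Suppose $\bv C\bv x = 0$. Then $\bv W\bv x = \bv S^T\bv C\bv x = 0$ as well, and $(\bv x,\lambda)$ would only be a degenerate solution with both sides of the eigen-equation zero. We therefore interpret solutions with $\lambda>0$ as genuine ones, with $\bv C\bv x \neq 0$; equivalently, one restricts the pencil to the range of $\bv W$, where $\bv W$ is invertible. Under that convention, $\hat{\bv A}\bv C\bv x = \lambda\bv C\bv x$ with $\bv C\bv x\neq 0$. Dividing by $\|\bv C\bv x\|_2$ shows $\bv v$ is a unit eigenvector of $\hat{\bv A}$ with eigenvalue $\lambda$.
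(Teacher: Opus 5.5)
Your proof is correct and follows the same route as the paper: you substitute the generalized eigen-equation into $(\bv A\bv S)(\bv S^T\bv A\bv S)^{\dag}(\bv A\bv S)^T\bv A\bv S\bv x$ and then collapse $(\bv A\bv S)(\bv S^T\bv A\bv S)^{\dag}(\bv S^T\bv A\bv S)$ to $\bv A\bv S$. The paper uses that identity without comment, so your PSD-factorization argument (via $\ker(\bv S^T\bv A\bv S)=\ker(\bv A^{1/2}\bv S)$) and your check that $\bv A\bv S\bv x\neq 0$ for a non-degenerate solution fill in steps the paper leaves implicit.
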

\begin{proof}
 $(\bv A \bv S)(\bv S^T \bv A \bv S)^{\dag} (\bv A \bv S)^T\bv v
 = \frac{(\bv A \bv S)(\bv S^T \bv A \bv S)^{\dag} (\bv S^T \bv A^2 \bv S) \bv x}{\|\bv A \bv S \bv x \|_2}
 =\frac{\lambda (\bv A \bv S)(\bv S^T \bv A \bv S)^{\dag}(\bv S^T \bv A \bv S) \bv x}{\|\bv A \bv S \bv x \|_2}= \frac{\lambda\bv A \bv S \bv x}{\|\bv A \bv S \bv x \|_2}=\lambda \bv v $.
 Thus, $\bv v$ is an eigenvector of $(\bv A \bv S)(\bv S^T \bv A \bv S)^{\dag} (\bv A \bv S)^T$
 corresponding to the eigenvalue $\lambda$.
\end{proof}
Hence, by the above lemma, $\bv v$ is an eigenvector of the Nystr\"{o}m approximation
$\hat{\bv A}$ with eigenvalue $\lambda$.
To relate this to an approximate eigenvector of $\bv A$ itself, it suffices to bound
$\|\bv A - \hat{\bv A}\|_2 \leq \epsilon n$.
We do this by bounding the Nystr\"{o}m approximation error using ridge leverage score sampling.
We first recall the relevant definitions.

\begin{definition}[Ridge Leverage Score]\label{def:rls}
For a PSD matrix $\bv A \in \R^{n \times n}$, regularization parameter $\beta > 0$,
and index $i \in [n]$, the $i$\textsuperscript{th} ridge leverage score is defined as
\[
 l_i^{\beta}(\bv A) = \left[\bv A(\bv A + \beta \bv I)^{-1}\right]_{ii}.
\]
\end{definition}

\begin{definition}[Effective Dimension]\label{def:effective_dimension}
For a PSD matrix $\bv A \in \R^{n \times n}$ and regularization parameter $\beta > 0$,
the effective dimension is defined as
\[
 d_{\mathrm{eff}}^{\beta}(\bv A) = \sum_{i=1}^n l_i^{\beta}(\bv A).
\]
\end{definition}
It is known that sampling columns according to their ridge leverage scores (RLS) suffices to achieve
this spectral approximation guarantee of $\beta$~\cite{musco2017recursive}.
Formally, Theorem 3 of~\cite{musco2017recursive} states the following:

\begin{implemma}[Theorem 3 of~\cite{musco2017recursive}]\label{lem:rls}
For any $\beta>0$, let index $i \in [n]$ be sampled independently with probability
$p_i=\min(1, \tilde{l}_i^{\beta} \cdot 16\log (\sum_{j=1}^n \tilde{l}_j^{\beta}/\delta))$
where $\tilde{l}_i^{\beta} \geq l_i^{\beta}$ is an over-approximation of the
$i$\textsuperscript{th} ridge leverage score $l_i^{\beta}$.
Let $\bv S \in \R^{n \times s}$ be the corresponding sampling matrix.
Let $\hat{\bv A}=(\bv A \bv S)(\bv S^T \bv A \bv S)^{\dag} (\bv A \bv S)^T$.
Then, with probability at least $1-\delta$, we have:
\begin{align*}
 \hat{\bv A}\preceq \bv A \preceq \hat{\bv A}+\beta \bv I
\end{align*}

\end{implemma}
While computing the ridge leverage scores exactly can be expensive, Imported Lemma~\ref{lem:rls}
tells us that we only need an over-approximation of the scores for RLS sampling.
We show that for any PSD matrix $\bv A$,
the $i$\textsuperscript{th} ridge leverage score is bounded by $\bv A_{ii}/\beta$ for any $\beta > 0$. So,
sampling each column independently with probability $\tilde{O}(\bv A_{ii}/\beta)$ suffices
to get the spectral approximation guarantee of $\beta$.
\begin{lemma}\label{lem:rls_bound}
Let $\bv A \in \R^{n \times n}$ be a PSD matrix.
Then for any $\beta > 0$ and any $i \in [n]$,
\[
 l_i^{\beta}(\bv A) \leq \frac{\bv A_{ii}}{\beta}.
\]
Consequently, $d_{\mathrm{eff}}^{\beta}(\bv A) \leq \frac{\tr(\bv A)}{\beta}$.
\end{lemma}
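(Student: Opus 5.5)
The plan is to diagonalize $\bv A$ and read off the diagonal entry of $\bv A(\bv A+\beta\bv I)^{-1}$ directly. Write the eigendecomposition $\bv A = \bv U \bv \Lambda \bv U^T$, with $\bv \Lambda = \mathrm{diag}(\lambda_1,\ldots,\lambda_n)$ and all $\lambda_k \ge 0$ since $\bv A$ is PSD. Then $\bv A + \beta \bv I = \bv U(\bv \Lambda + \beta \bv I)\bv U^T$ is invertible for $\beta>0$, and
\[
\bv A(\bv A+\beta\bv I)^{-1} = \bv U\, \mathrm{diag}\!\left(\frac{\lambda_k}{\lambda_k+\beta}\right) \bv U^T .
\]
Taking the $(i,i)$ entry gives $l_i^{\beta}(\bv A) = \sum_{k} \bv U_{ik}^2 \frac{\lambda_k}{\lambda_k+\beta}$.

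The key scalar inequality is $\frac{\lambda_k}{\lambda_k+\beta} \le \frac{\lambda_k}{\beta}$, which holds because $\lambda_k \ge 0$ and $\lambda_k + \beta \ge \beta > 0$. Every term in the sum is nonnegative, so we may bound termwise:
\[
l_i^{\beta}(\bv A) \le \frac{1}{\beta}\sum_k \bv U_{ik}^2 \lambda_k = \frac{(\bv U\bv\Lambda\bv U^T)_{ii}}{\beta} = \frac{\bv A_{ii}}{\beta}.
\]
An equivalent, basis-free way to see this is that $\bv A(\bv A+\beta\bv I)^{-1} \preceq \bv A/\beta$, because the two matrices commute and have eigenvalues $\frac{\lambda_k}{\lambda_k+\beta}\le\frac{\lambda_k}{\beta}$. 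Evaluating both sides of this Loewner inequality at $\bv e_i$ gives the claim.

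For the consequence, sum over $i$: $d_{\mathrm{eff}}^{\beta}(\bv A) = \sum_i l_i^\beta(\bv A) \le \sum_i \bv A_{ii}/\beta = \tr(\bv A)/\beta$.

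There is no real obstacle here. The only point needing care is the PSD hypothesis: it guarantees $\lambda_k \ge 0$, which makes each summand nonnegative and justifies the termwise comparison. It also ensures $\bv A+\beta\bv I$ is invertible.
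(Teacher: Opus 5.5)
Your proof is correct and follows essentially the same route as the paper, which asserts $\bv A(\bv A+\beta \bv I)^{-1} \preceq \frac{1}{\beta}\bv A$ for PSD $\bv A$ and reads off the $i$th diagonal entry before summing over $i$. Your explicit eigenbasis computation simply spells out why that Loewner inequality holds.
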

\begin{proof}
Since $\bv A$ is PSD, we have $\bv A(\bv A + \beta \bv I)^{-1}
\preceq \frac{1}{\beta}\bv A$.
Examining the $i$th diagonal entry gives
$l_i^{\beta}(\bv A) = [\bv A(\bv A+\beta \bv I)^{-1}]_{ii} \leq \frac{1}{\beta}\bv A_{ii}$.
Summing over all $i \in [n]$ yields $d_{\mathrm{eff}}^{\beta}(\bv A) \leq \frac{\tr(\bv A)}{\beta}$.
\end{proof}

Setting $\beta = \epsilon n$ and using the fact that $0 \leq \bv A_{ii} \leq 1$
(since $\| \bv A \|_{\infty} \leq 1$) in Lemma~\ref{lem:rls_bound} gives
$l_i^{\epsilon n}(\bv A) \leq \frac{1}{\epsilon n}$
and $d_{\mathrm{eff}}^{\epsilon n}(\bv A) \leq \frac{1}{\epsilon}$.
Hence, by Imported Lemma~\ref{lem:rls} with
$\tilde{l}_i^{\beta} = \frac{1}{\epsilon n}$,
sampling each index independently with probability $\tilde{O}(1/\epsilon n)$,
i.e., $\tilde{O}(1/\epsilon)$ columns in expectation,
suffices to obtain $\|\bv A - \hat{\bv A}\|_2 \leq \epsilon n$.
We now state our main result below:
\begin{theorem}\label{thm:psd_upper}
Let $\bv A \in \R^{n \times n}$ be a PSD matrix such that
$\|\bv A \|_{\infty} \leq 1$, with eigenvalues
$\lambda_1(\bv A) \geq \ldots \geq \lambda_n(\bv A) $,
and let
$\epsilon, \delta \in (0,1)$.
Then, Algorithm~\ref{alg:psd_eigvec}, run with
sampling probability
$p = \min\!\left(1, \frac{c}{\epsilon n}\log\!\left(\frac{1}{\epsilon\delta}\right)\right)$
for a sufficiently large
absolute constant $c$ and
threshold $\tau = \frac{\epsilon n}{2}$, outputs the
pairs $\{(\tilde \lambda_i, \bv v_i) \}_{i=1}^m$,
where $\tilde \lambda_i \in \R$, $\bv v_i \in \R^n$, $\|\bv v_i \|_2=1$
for all $i \in [m]$,
and
$\tilde\lambda_1\geq \ldots \geq \tilde\lambda_m$, such that, with
probability at least $1-\delta$, for all $i \in [m]$,
\begin{align*}
 |\lambda_i(\bv A) - \tilde\lambda_i| \leq \epsilon n \text{, and } \quad
 \|\bv A \bv v_i - \lambda_i(\bv A) \bv v_i \|_2 \leq \epsilon n.
\end{align*}
Moreover, Algorithm~\ref{alg:psd_eigvec} samples
$pn = O \left(\frac{1}{\epsilon}\log \left(\frac{1}{\epsilon\delta}\right) \right)$
columns from $\bv A$ in expectation.
\end{theorem}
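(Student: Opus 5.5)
The plan is to reduce everything to the spectral bound $\bv A - \epsilon n \bv I \preceq \hat{\bv A} \preceq \bv A$ for the Nystr\"om approximation $\hat{\bv A}=(\bv A\bv S)(\bv S^T\bv A\bv S)^\dagger(\bv A\bv S)^T$, and then read off both guarantees from Weyl's inequality and Lemma~\ref{lem:nyseig}.

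\textbf{Step 1: spectral bound.} Apply Imported Lemma~\ref{lem:rls} with $\beta=\epsilon n/4$ and over-approximations $\tilde l_i^\beta = 4/(\epsilon n)$. These are valid upper bounds by Lemma~\ref{lem:rls_bound}, since $0\le \bv A_{ii}\le 1$. Then $\sum_j \tilde l_j^\beta = 4/\epsilon$, so the required sampling probability is $\min(1, \frac{64}{\epsilon n}\log\frac{4}{\epsilon\delta})$. For large enough $c$ this is dominated by the $p$ used in Algorithm~\ref{alg:psd_eigvec}. Oversampling is harmless because the imported guarantee only needs $p_i$ to be at least the stated value. With probability $1-\delta$ this gives $\hat{\bv A}\preceq\bv A\preceq\hat{\bv A}+\frac{\epsilon n}{4}\bv I$, hence $\|\bv A-\hat{\bv A}\|_2\le \epsilon n/4$. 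The expected number of columns is $pn=O(\frac1\epsilon\log\frac1{\epsilon\delta})$.

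\textbf{Step 2: identify outputs with eigenpairs of $\hat{\bv A}$.} By Lemma~\ref{lem:nyseig}, each generalized eigenpair $(\bv x,\lambda)$ with $\lambda>0$ yields an exact unit eigenvector $\bv v$ of $\hat{\bv A}$ with eigenvalue $\lambda$. I also need the converse, so that the output list is exactly the positive spectrum of $\hat{\bv A}$, including multiplicities, with orthogonal $\bv v_i$. The nonzero eigenvalues of $\hat{\bv A}=\bv C\bv W^\dagger\bv C^T$, where $\bv C=\bv A\bv S$ and $\bv W=\bv S^T\bv A\bv S$, coincide with those of $\bv W^{\dagger/2}\bv C^T\bv C\bv W^{\dagger/2}$. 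On $\mathrm{range}(\bv W)$ this matrix is similar to the pencil $(\bv S^T\bv A^2\bv S,\bv W)$; note $\ker\bv W=\ker\bv C$ because $\bv A$ is PSD. Thus the positive generalized eigenvalues, restricted to $\mathrm{range}(\bv W)$, are exactly the positive eigenvalues $\mu_1\ge\mu_2\ge\cdots$ of $\hat{\bv A}$. I expect this bookkeeping, including the degenerate directions in $\ker\bv W$ and the choice of a $\bv W$-orthogonal eigenbasis, to be the main technical nuisance, though it is routine. Writing $\tilde\lambda_i=\mu_i$, the algorithm outputs precisely the indices with $\mu_i\ge\epsilon n/2$, in sorted order.

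\textbf{Step 3: conclude.} Weyl's inequality together with Step 1 gives $|\lambda_i(\bv A)-\tilde\lambda_i|\le\epsilon n/4$ for every $i$. For the residual, we have $\hat{\bv A}\bv v_i=\tilde\lambda_i\bv v_i$, so
\[
\|\bv A\bv v_i-\lambda_i(\bv A)\bv v_i\|_2\le\|(\bv A-\hat{\bv A})\bv v_i\|_2+|\tilde\lambda_i-\lambda_i(\bv A)|\le \epsilon n/2\le\epsilon n.
\]
Moreover, any $\lambda_i(\bv A)\ge\epsilon n$ has $\tilde\lambda_i\ge 3\epsilon n/4$, so it clears the threshold $\tau$ and is output. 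Hence no large eigenvalue is missed.
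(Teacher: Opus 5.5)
Your proposal is correct and follows essentially the paper's proof. Both arguments use ridge leverage score sampling with the uniform over-approximation from Lemma~\ref{lem:rls_bound} to make $\|\bv A-\hat{\bv A}\|_2$ small, then Lemma~\ref{lem:nyseig} to make the outputs exact eigenvectors of $\hat{\bv A}$, then Weyl's inequality plus the triangle inequality; the paper takes spectral error $\epsilon n/2$ where you take $\epsilon n/4$, which is immaterial. Your Step~2 is also more careful than the paper. The paper simply asserts that the algorithm returns exactly the eigenvectors of $\hat{\bv A}$ with eigenvalue at least $\epsilon n/2$. You justify that converse: you restrict the possibly singular pencil to $\mathrm{range}(\bv W)$ with $\bv W=\bv S^T\bv A\bv S$, use $\ker\bv W=\ker(\bv A\bv S)$, and track multiplicities through a $\bv W$-orthogonal basis. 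This is exactly what the index matching $\tilde\lambda_i=\lambda_i(\hat{\bv A})$ in the statement needs.
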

\begin{proof}
Let $\hat{\bv A} = (\bv A \bv S)(\bv S^T \bv A \bv S)^{\dag}(\bv A \bv S)^T$.
By Imported Lemmas~\ref{lem:rls} and~\ref{lem:rls_bound},
sampling each index in $[n]$ with probability
$p = \min\!\left(1, \frac{c}{\epsilon n}\log\!\left(\frac{1}{\epsilon\delta}\right)\right)$
ensures that, with probability at least $1-\delta$,
\[
 \|\bv A - \hat{\bv A}\|_2 \leq \frac{\epsilon n}{2}.
\]
We condition on this event for the rest of the proof; it fixes $\bv S$ (and hence $\hat{\bv A}$).
Recall that $\hat{\bv A}$ is a Nystr\"om approximation, so $\bv 0 \preceq \hat{\bv A} \preceq \bv A$.
By Lemma~\ref{lem:nyseig}, each solution $(\bv x, \lambda)$ with $\lambda>0$
of $\bv S^T \bv A^2 \bv S \bv x = \lambda \bv S^T \bv A \bv S \bv x$ yields an eigenpair
$(\bv v, \lambda)$ of $\hat{\bv A}$ with $\bv v = \frac{\bv A \bv S \bv x}{\|\bv A \bv S \bv x\|_2}$;
the algorithm returns exactly those with $\lambda \geq \frac{\epsilon n}{2}$, i.e. the eigenvectors
of $\hat{\bv A}$ whose eigenvalue is at least $\frac{\epsilon n}{2}$.

\emph{Eigenvalue bound.} Since $\bv A$ and $\hat{\bv A}$ are symmetric, Weyl's inequality~\cite{weyl1912}
gives $|\lambda_j(\bv A) - \lambda_j(\hat{\bv A})| \leq \|\bv A - \hat{\bv A}\|_2 \leq \frac{\epsilon n}{2}$
for all $j$. Fix an eigenvalue $\lambda_i$ of $\bv A$ with $\lambda_i \geq \epsilon n$ (as $\bv A$ is
PSD, all such eigenvalues are positive), and let $\tilde\lambda_i := \lambda_i(\hat{\bv A})$ be the
corresponding eigenvalue of $\hat{\bv A}$. Then $|\lambda_i - \tilde\lambda_i| \leq \frac{\epsilon n}{2}$,
and $\tilde\lambda_i \geq \lambda_i - \frac{\epsilon n}{2} \geq \frac{\epsilon n}{2}$, so
$\tilde\lambda_i$ clears the algorithm's threshold and is therefore returned,
together with a unit eigenvector $\bv v_i$ of $\hat{\bv A}$ satisfying $\hat{\bv A}\bv v_i = \tilde\lambda_i \bv v_i$.

\emph{Eigenvector residual.} For this pair, by spectral subadditivity,
\[
 \|\bv A \bv v_i - \tilde\lambda_i \bv v_i\|_2 =
 \|(\bv A - \hat{\bv A})\bv v_i\|_2 \leq
 \|\bv A - \hat{\bv A}\|_2 \|\bv v_i\|_2 =
 \|\bv A - \hat{\bv A}\|_2 \leq \frac{\epsilon n}{2},
\]
using that $\bv v_i$ is a unit eigenvector of $\hat{\bv A}$. Combining with the eigenvalue bound via
the triangle inequality,
\[
 \|\bv A \bv v_i - \lambda_i \bv v_i\|_2 \leq
 \|\bv A \bv v_i - \tilde\lambda_i \bv v_i\|_2 + |\tilde\lambda_i - \lambda_i|\,\|\bv v_i\|_2
 \leq \frac{\epsilon n}{2} + \frac{\epsilon n}{2} = \epsilon n.
\]
Both bounds hold for every $i \in [k]$ on the single event above, hence simultaneously with
probability at least $1-\delta$. Finally, since each index is sampled independently with probability
$p$, the expected number of columns sampled by Algorithm~\ref{alg:psd_eigvec} is $pn = \tilde{O}(1/\epsilon)$.
\end{proof}
We now give an improved error guarantee of $\sqrt{\epsilon n}$ for the $l_{\infty}$-norm error of the approximate eigenvectors obtained
by Algorithm~\ref{alg:psd_eigvec}.
\begin{corollary}\label{cor:psd_eiginf}
Let $\bv A $ be a PSD matrix such that $\|\bv A \|_{\infty} \leq 1$, and let $\epsilon, \delta \in (0,1)$.
Run Algorithm~\ref{alg:psd_eigvec} as in Theorem~\ref{thm:psd_upper}. Then, with probability at least
$1-\delta$, every output pair $(\tilde\lambda_i, \bv v_i)$ with $\tilde\lambda_i > 0$ additionally satisfies:
\begin{align*}
 \|\bv A \bv v_i-\tilde\lambda_i \bv v_i \|_{\infty} \leq \sqrt{\epsilon n}.
\end{align*}
\end{corollary}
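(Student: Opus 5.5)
The plan is to reuse the Nyström structure from the proof of Theorem~\ref{thm:psd_upper} and swap the spectral-norm bound on the residual for a Cauchy--Schwarz bound in the inner product defined by the PSD error matrix. We condition on the same event as in Theorem~\ref{thm:psd_upper}, which holds with probability at least $1-\delta$ by Imported Lemmas~\ref{lem:rls} and~\ref{lem:rls_bound}. On this event
\[
\hat{\bv A} \preceq \bv A \preceq \hat{\bv A} + \tfrac{\epsilon n}{2}\bv I, \qquad \hat{\bv A} = (\bv A \bv S)(\bv S^T \bv A \bv S)^{\dag}(\bv A \bv S)^T .
\]
Set $\bv E := \bv A - \hat{\bv A}$. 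Then $\bv E \succeq 0$ and $\|\bv E\|_2 \le \frac{\epsilon n}{2}$. By Lemma~\ref{lem:nyseig}, each output $\bv v_i$ is a unit eigenvector of $\hat{\bv A}$ with eigenvalue $\tilde\lambda_i$. Hence the residual is exactly
\[
\bv A \bv v_i - \tilde\lambda_i \bv v_i = \bv E \bv v_i .
\]

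The key step bounds each coordinate of $\bv E \bv v_i$. Since $\bv E$ is PSD, it induces a semi-inner product $\langle \bv a, \bv b\rangle_{\bv E} = \bv a^T \bv E \bv b$, and Cauchy--Schwarz gives
\[
|(\bv E \bv v_i)_j| = |\bv e_j^T \bv E \bv v_i| \le \sqrt{\bv e_j^T \bv E \bv e_j}\,\sqrt{\bv v_i^T \bv E \bv v_i} .
\]
We bound the two factors separately.
\begin{itemize}
\item Second factor: $\bv v_i^T \bv E \bv v_i \le \|\bv E\|_2 \le \frac{\epsilon n}{2}$, because $\bv v_i$ is a unit vector.
\item First factor: $\bv E_{jj} = \bv A_{jj} - \hat{\bv A}_{jj} \le \bv A_{jj}$, since $\hat{\bv A} \succeq 0$ (it has the form $\bv C \bv M^{\dag}\bv C^T$ with $\bv M^{\dag}$ PSD) and so $\hat{\bv A}_{jj} \ge 0$. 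Also $\bv A_{jj} \le \|\bv A\|_\infty \le 1$.
\end{itemize}
Combining the two, $|(\bv E\bv v_i)_j| \le \sqrt{\epsilon n/2} \le \sqrt{\epsilon n}$ for every $j$. This gives the claimed $\ell_\infty$ bound for every output pair simultaneously on the single conditioning event.

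There is no real obstacle here. The one point to check is that the corollary is stated with $\tilde\lambda_i$ rather than $\lambda_i(\bv A)$, so the residual is exactly $\bv E\bv v_i$ and no Weyl correction term is needed. That correction would cost $\epsilon n$ in $\ell_\infty$ and would not fit the bound. The other inputs, namely PSD-ness of $\bv E$ and of $\hat{\bv A}$, come for free from the two-sided Loewner bound in Imported Lemma~\ref{lem:rls}.
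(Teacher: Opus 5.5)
Your proof is correct and follows essentially the paper's argument: both bound $|\bv e_j^T \bv E \bv v_i|$ by $\sqrt{\|\bv E\|_2\,\bv E_{jj}}$ using only that $\bv E = \bv A - \hat{\bv A}$ is PSD. The paper gets there through $\|\bv E \bv e_j\|_2^2 = (\bv E^2)_{jj} \le \|\bv E\|_2 \bv E_{jj}$, while your Cauchy--Schwarz in the $\bv E$-semi-inner product is the same inequality; you also justify $\bv E_{jj} \le \bv A_{jj}$ via $\hat{\bv A} \succeq 0$, a step the paper leaves implicit.
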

\begin{proof}
Let $\hat{\bv A} = (\bv A \bv S)(\bv S^T \bv A \bv S)^{\dag}(\bv A \bv S)^T$.
Since $\hat{\bv A}$ is a Nystr\"{o}m approximation to $\bv A$, we have
$\hat{\bv A} \preceq \bv A$ or equivalently
$\bv E=\bv A - \hat{\bv A}$ is PSD~\cite{gittens2011spectral}.
We have $\bv E \bv v=\bv A \bv v-\lambda \bv v $ as $\bv v$ is an eigenvector of $\hat{\bv A}$ corresponding to the eigenvalue $\lambda$ (Lemma~\ref{lem:nyseig}).
So, we need to bound $|\bv e_i^T\bv E \bv v|$ for any $i \in [n]$ where $\bv e_i$ is the $i$\textsuperscript{th} standard basis vector.
By spectral submultiplicativity, we have $|\bv e_i^T \bv E \bv v| \leq \|\bv e_i^T\bv E\|_2 \|\bv v\|_2 \leq \|\bv e_i^T\bv E\|_2 $.
Next, observe that $\|\bv e_i^T\bv E\|^2_2= \|\bv E \bv e_i\|^2_2 =\bv e_i^T\bv E^T \bv E \bv e_i=(\bv E^2)_{ii}$.
Since $\bv E$ is PSD, we have $\bv E^2 \preceq \|\bv E\|_2 \bv E$ and thus $(\bv E^2)_{ii} \leq \|\bv E\|_2 \bv E_{ii}$.
Finally, since $\|\bv E\|_2 \leq \epsilon n$ and
$\bv E_{ii} \leq \bv A_{ii} \leq 1$, we have $|\bv e_i^T\bv E \bv v| \leq \sqrt{\epsilon n}$.
Thus, $\|\bv A \bv v-\lambda \bv v\|_{\infty} = \|\bv E \bv v\|_{\infty} \leq \sqrt{\epsilon n}$.

\end{proof}

\subsubsection{Improved Guarantees By Sampling According To Diagonal Entries}

We now state an improved guarantee for approximating the eigenvectors of a PSD matrix $\bv A$ to error $\epsilon \tr(\bv A)$
by sampling indices with probability proportional to the diagonal entry.
The algorithm is the same as Algorithm~\ref{alg:psd_eigvec}, except that instead of sampling columns uniformly,
column $i$ is sampled with probability proportional to the diagonal entry $\bv A_{ii}$.
The proof for the approximation guarantee is similar to that of Theorem~\ref{thm:psd_upper} and is stated below. Note that we
don't need the bounded entry assumption for this guarantee, and the error is in terms of $\tr(\bv A)$ instead of $n$.
For a PSD matrix with $\|\bv A \|_{\infty} \leq 1$, we always have $\tr(\bv A) \leq n$,
so the error guarantee of $\epsilon \tr(\bv A)$ is at least as good as the guarantee of Theorem~\ref{thm:psd_upper}
and can be significantly better for many matrices.
\begin{theorem}\label{thm:diag_samp}
Let $\bv A \in \R^{n \times n}$ be a PSD matrix such that
$\|\bv A \|_{\infty} \leq 1$, with eigenvalues
$\lambda_1(\bv A) \geq \ldots \geq \lambda_n(\bv A) $,
and let
$\epsilon, \delta \in (0,1)$.
Then, Algorithm~\ref{alg:psd_eigvec}, run with
sampling probability
$p_i = \min\!\left(1, \frac{c\, \bv A_{ii}}{\epsilon \tr(\bv A)}\log\!\left(\frac{1}{\epsilon\delta}\right)\right)$
for a sufficiently large
absolute constant $c$ and
threshold $\tau = \frac{\epsilon \tr(\bv A)}{2}$, outputs the
pairs $\{(\tilde \lambda_i, \bv v_i) \}_{i=1}^m$,
where $\tilde \lambda_i \in \R$, $\bv v_i \in \R^n$, $\|\bv v_i \|_2=1$
for all $i \in [m]$,
and
$\tilde\lambda_1\geq \ldots \geq \tilde\lambda_m$, such that, with
probability at least $1-\delta$, for all $i \in [m]$,
\begin{align*}
 |\lambda_i(\bv A) - \tilde\lambda_i| \leq \epsilon \tr(\bv A) \text{, and } \quad
 \|\bv A \bv v_i - \lambda_i(\bv A) \bv v_i \|_2 \leq \epsilon \tr(\bv A).
\end{align*}
Moreover, the expected number of columns sampled is
$O \left(\frac{1}{\epsilon}\log \left(\frac{1}{\epsilon\delta}\right) \right)$.
\end{theorem}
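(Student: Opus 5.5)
The proof mirrors Theorem~\ref{thm:psd_upper}. The only change is the choice of regularization, $\beta = \frac{\epsilon}{2}\tr(\bv A)$ instead of $\epsilon n$. We first bound the ridge leverage scores using Lemma~\ref{lem:rls_bound}: for every $i$, $l_i^{\beta}(\bv A) \le \bv A_{ii}/\beta = \frac{2\bv A_{ii}}{\epsilon \tr(\bv A)}$. We take $\tilde l_i^{\beta} = \frac{2\bv A_{ii}}{\epsilon\tr(\bv A)}$ as the over-approximation required in Imported Lemma~\ref{lem:rls}. These sum to $\sum_j \tilde l_j^\beta = 2/\epsilon$, so the factor $16\log(\sum_j \tilde l_j^\beta/\delta) = O(\log\frac{1}{\epsilon\delta})$. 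The sampling probabilities $p_i = \min(1, \frac{c\bv A_{ii}}{\epsilon\tr(\bv A)}\log\frac{1}{\epsilon\delta})$ dominate those demanded by Imported Lemma~\ref{lem:rls} once $c$ is large. It follows that, with probability $1-\delta$,
\[
\hat{\bv A} \preceq \bv A \preceq \hat{\bv A} + \tfrac{\epsilon}{2}\tr(\bv A)\bv I,
\]
where $\hat{\bv A} = (\bv A\bv S)(\bv S^T\bv A\bv S)^\dagger(\bv A\bv S)^T$. Hence $\|\bv A - \hat{\bv A}\|_2 \le \frac{\epsilon}{2}\tr(\bv A)$.

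One point to check is that the Nystr\"om approximation is invariant to how the sampled columns are rescaled. Importance sampling rescales column $i$ by $1/\sqrt{p_i}$, i.e.\ $\bv S \mapsto \bv S\bv D$ for a positive diagonal $\bv D$. This leaves $\hat{\bv A}$ unchanged, because $(\bv D\bv M\bv D)^\dagger$ equals $\bv D^{-1}\bv M^\dagger\bv D^{-1}$ on the relevant range: both $\bv A\bv S$ and $\bv S^T\bv A\bv S$ share the column space of $\bv S^T\bv A^{1/2}$. The generalized eigenproblem is likewise invariant under the change of variables $\bv x \mapsto \bv D^{-1}\bv x$. So Lemma~\ref{lem:nyseig} applies verbatim: each solution $(\bv x,\lambda)$ with $\lambda>0$ gives an exact unit eigenpair $(\bv v,\lambda)$ of $\hat{\bv A}$.

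Next comes the eigenvalue/eigenvector step, conditioned on the above event. Weyl's inequality gives $|\lambda_j(\bv A) - \lambda_j(\hat{\bv A})| \le \frac{\epsilon}{2}\tr(\bv A)$ for all $j$. Since $\hat{\bv A} \preceq \bv A$ and both are PSD, the returned eigenvalues (those $\ge \tau = \frac{\epsilon}{2}\tr(\bv A)$) are exactly the top $m$ eigenvalues of $\hat{\bv A}$, sorted. Matching them index by index gives $|\lambda_i(\bv A) - \tilde\lambda_i| \le \frac{\epsilon}{2}\tr(\bv A)$. For the residual, $\|\bv A\bv v_i - \tilde\lambda_i\bv v_i\|_2 = \|(\bv A-\hat{\bv A})\bv v_i\|_2 \le \frac{\epsilon}{2}\tr(\bv A)$. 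The triangle inequality then yields $\|\bv A\bv v_i - \lambda_i(\bv A)\bv v_i\|_2 \le \epsilon\tr(\bv A)$. This argument also shows that every eigenvalue of $\bv A$ that is $\ge \epsilon\tr(\bv A)$ is matched by an output.

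Finally, the expected sample count is $\sum_i p_i \le \frac{c}{\epsilon}\log\frac{1}{\epsilon\delta}\sum_i \frac{\bv A_{ii}}{\tr(\bv A)} = O(\frac{1}{\epsilon}\log\frac{1}{\epsilon\delta})$. The bounded-entry hypothesis is never used. The main obstacle is the bookkeeping in the first step: confirming that the constant in Imported Lemma~\ref{lem:rls}, with the $\log(d_{\mathrm{eff}}/\delta)$ factor, is absorbed by $c\log\frac{1}{\epsilon\delta}$, and that the rescaling invariance of $\hat{\bv A}$ holds. Everything after that is identical to the proof of Theorem~\ref{thm:psd_upper}.
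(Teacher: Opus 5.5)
Your proposal is correct and follows the paper's proof essentially step for step: bound the ridge leverage scores by $\bv A_{ii}/\beta$ via Lemma~\ref{lem:rls_bound}, invoke Imported Lemma~\ref{lem:rls} to get $\|\bv A-\hat{\bv A}\|_2 \le \frac{\epsilon}{2}\tr(\bv A)$, then combine Lemma~\ref{lem:nyseig}, Weyl's inequality, and the triangle inequality. Two of your choices tighten details the paper glosses over: you take $\beta = \frac{\epsilon}{2}\tr(\bv A)$ directly, where the paper uses $\beta=\epsilon\tr(\bv A)$ and silently adjusts constants, and you check explicitly that $\hat{\bv A}$ and the generalized eigenproblem are unchanged when the sampled columns are rescaled.
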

\begin{proof}
We set $\beta=\epsilon \tr(\bv A)$; then from Lemma~\ref{lem:rls_bound}, $l_i^{\beta}(\bv A) \leq \frac{\bv A_{ii}}{\epsilon \tr(\bv A)}$.
Let $\hat{\bv A} = (\bv A \bv S)(\bv S^T \bv A \bv S)^{\dag}(\bv A \bv S)^T$, where $\bv S$ is
the sampling matrix obtained by sampling each index $i \in [n]$ with probability
$p_i= \min \left(1, \frac{c \bv A_{ii}}{\epsilon \tr(\bv A)}\log\!\left(\frac{1}{\epsilon\delta}\right)\right)$.
Then, from Imported Lemma~\ref{lem:rls}, with probability at least $1-\delta$,
\[
 \|\bv A - \hat{\bv A}\|_2 \leq \frac{\epsilon \tr(\bv A)}{2}.
\]
We condition on this event, which fixes $\bv S$ (and hence $\hat{\bv A}$). As in the proof of
Theorem~\ref{thm:psd_upper}, $\bv 0 \preceq \hat{\bv A} \preceq \bv A$, and by Lemma~\ref{lem:nyseig}
the algorithm returns exactly the eigenvectors of $\hat{\bv A}$ whose eigenvalue is at least
$\tau = \frac{\epsilon \tr(\bv A)}{2}$.

For the eigenvalue bound, Weyl's inequality~\cite{weyl1912} gives
$|\lambda_j(\bv A) - \lambda_j(\hat{\bv A})| \leq \|\bv A - \hat{\bv A}\|_2 \leq \frac{\epsilon \tr(\bv A)}{2}$
for all $j$. Fix $\lambda_i \geq \epsilon \tr(\bv A)$ and set $\tilde\lambda_i := \lambda_i(\hat{\bv A})$;
then $|\lambda_i - \tilde\lambda_i| \leq \frac{\epsilon \tr(\bv A)}{2}$ and
$\tilde\lambda_i \geq \frac{\epsilon \tr(\bv A)}{2} = \tau$, so $\tilde\lambda_i$ clears the threshold
and is returned with a unit eigenvector $\bv v_i$ of $\hat{\bv A}$. By spectral subadditivity,
$\|\bv A \bv v_i - \tilde\lambda_i \bv v_i\|_2 = \|(\bv A - \hat{\bv A})\bv v_i\|_2 \leq \|\bv A - \hat{\bv A}\|_2 \leq \frac{\epsilon \tr(\bv A)}{2}$,
and hence by the triangle inequality
$\|\bv A \bv v_i - \lambda_i \bv v_i\|_2 \leq \frac{\epsilon \tr(\bv A)}{2} + \frac{\epsilon \tr(\bv A)}{2} = \epsilon \tr(\bv A)$.
Both bounds hold for every $i \in [k]$ on the single event above. Finally, since index $i$ is sampled
independently with probability $p_i$, the expected number of columns sampled is
$\sum_{i=1}^n p_i \leq O((1/\epsilon) \log(1/\epsilon \delta))$.
\end{proof}
We can also get a similar $l_{\infty}$-norm error guarantee for the approximate eigenvectors
as in Corollary~\ref{cor:psd_eiginf}.
\begin{corollary}
Let $\bv A $ be a PSD matrix, and let $\epsilon, \delta \in (0,1)$. Run Algorithm~\ref{alg:psd_eigvec}
as in Theorem~\ref{thm:diag_samp}, with column $i \in [n]$ sampled independently with probability
$p_i= \min \left(1, \frac{c \bv A_{ii}}{\epsilon \tr(\bv A)}\log\!\left(\frac{1}{\epsilon\delta}\right)\right)$
for a sufficiently large absolute constant $c$. Then, with probability at least $1-\delta$, every output
pair $(\tilde\lambda, \bv v)$ with $\tilde\lambda > 0$ additionally satisfies, for every $j \in [n]$,
\begin{align*}
 |\bv A_{j} \bv v-\tilde\lambda \bv v_j | \leq \sqrt{\epsilon \tr(\bv A)\, \bv A_{jj}}.
\end{align*}
Moreover, for any $j \in[n]$ such that index $j$ is in the sampled set of columns, we have
$\bv A_j \bv v=\tilde\lambda \bv v_j$.
\end{corollary}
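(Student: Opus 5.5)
The plan is to mirror the proof of Corollary~\ref{cor:psd_eiginf}, replacing the crude bound $\bv E_{jj}\le 1$ by $\bv E_{jj}\le \bv A_{jj}$. We condition on the single event from the proof of Theorem~\ref{thm:diag_samp}. On that event Imported Lemma~\ref{lem:rls} (with $\beta=\epsilon\tr(\bv A)$ and over-approximations $\tilde l_i^\beta=\bv A_{ii}/(\epsilon\tr(\bv A))$ from Lemma~\ref{lem:rls_bound}) gives $\hat{\bv A}\preceq \bv A\preceq \hat{\bv A}+\beta\bv I$. Hence $\bv E:=\bv A-\hat{\bv A}$ is PSD and $\|\bv E\|_2\le \epsilon\tr(\bv A)$.

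For any output pair, Lemma~\ref{lem:nyseig} says $\hat{\bv A}\bv v=\tilde\lambda\bv v$. Therefore $\bv A_j\bv v-\tilde\lambda\bv v_j=\bv e_j^T\bv E\bv v$, and Cauchy--Schwarz gives $|\bv e_j^T\bv E\bv v|\le \|\bv E\bv e_j\|_2=\sqrt{(\bv E^2)_{jj}}$. Since $\bv E$ is PSD, $\bv E^2\preceq\|\bv E\|_2\bv E$, so $(\bv E^2)_{jj}\le \|\bv E\|_2\,\bv E_{jj}$. Because $\hat{\bv A}\succeq 0$, we have $\bv E_{jj}=\bv A_{jj}-\hat{\bv A}_{jj}\le \bv A_{jj}$. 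Combining these gives $|\bv A_j\bv v-\tilde\lambda\bv v_j|\le\sqrt{\epsilon\tr(\bv A)\bv A_{jj}}$, up to the constant already absorbed in $c$.

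For the second claim, we show that $\bv E\bv e_j=0$ whenever $j$ is sampled. Write $\bv A=\bv B^T\bv B$ (PSD factorization) and let $\bv P$ be the orthogonal projection onto $\mathrm{range}(\bv B\bv S)$. Then $\hat{\bv A}=\bv B^T\bv P\bv B$, so $\bv E=\bv B^T(\bv I-\bv P)\bv B$. If $j$ is among the sampled indices, then $\bv B\bv e_j$ is a column of $\bv B\bv S$ up to a positive scaling. Hence $(\bv I-\bv P)\bv B\bv e_j=0$, giving $\bv E\bv e_j=0$. By symmetry $\bv e_j^T\bv E\bv v=0$, i.e.\ $\bv A_j\bv v=\tilde\lambda\bv v_j$. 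Equivalently, $\bv E_{jj}=0$ together with PSD-ness forces the whole $j$-th row of $\bv E$ to vanish.

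The step needing the most care is this last identity. We need to justify that $\hat{\bv A}$ interpolates the sampled columns exactly even when $\bv S$ carries rescaling, and when $\bv S^T\bv A\bv S$ is singular so that the pseudoinverse is involved. The factorization via $\bv B$ handles both issues cleanly, since the projection $\bv P$ depends only on the column span of $\bv B\bv S$.
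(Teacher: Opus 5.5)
Your proof of the entrywise bound is the same as the paper's. You write $\bv A_j\bv v-\tilde\lambda\bv v_j=\bv e_j^T\bv E\bv v$ using Lemma~\ref{lem:nyseig}, bound it by $\sqrt{(\bv E^2)_{jj}}$, use $\bv E^2\preceq\|\bv E\|_2\bv E$ for PSD $\bv E$, and finish with $\|\bv E\|_2\le\epsilon\tr(\bv A)$ and $\bv E_{jj}\le\bv A_{jj}$. You also justify the last inequality via $\hat{\bv A}\succeq 0$, which the paper only asserts.

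You go beyond the paper on the ``Moreover'' clause. The paper's proof stops after the inequality and never justifies that sampled indices have zero residual. Your factorization argument fills this in correctly. With $\bv A=\bv B^T\bv B$ and $\bv M=\bv B\bv S$, one has $\bv M(\bv M^T\bv M)^\dagger\bv M^T=\bv P$, the projection onto the column span of $\bv M$. Hence $\bv E=\bv B^T(\bv I-\bv P)\bv B$, and $\bv E\bv e_j=0$ whenever $\bv B\bv e_j$ lies in that span. This holds regardless of positive rescaling of the sampled columns or rank deficiency of $\bv S^T\bv A\bv S$.

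This part is deterministic: it needs neither the conditioning event nor the ridge leverage score guarantee of Imported Lemma~\ref{lem:rls}. Your alternative phrasing ($\bv E_{jj}=0$ plus $\bv E\succeq 0$) rests on the same fact. The diagonal interpolation $\hat{\bv A}_{jj}=\bv A_{jj}$ for sampled $j$ is exactly what the projection identity supplies.
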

\begin{proof}
The proof is similar to that of Corollary~\ref{cor:psd_eiginf}, applied to each output pair
$(\tilde\lambda, \bv v)$. Let $\bv E = \bv A - \hat{\bv A}$. Then, following the proof of
Corollary~\ref{cor:psd_eiginf},
we get $|\bv A_j \bv v - \tilde\lambda \bv v_j| = |\bv e_j^T \bv E \bv v| \leq \sqrt{(\bv E^2)_{jj}}$.
We also have $(\bv E^2)_{jj} \leq \| \bv E \|_2 \bv E_{jj}$.
Since $\|\bv E\|_2 \leq \epsilon \tr(\bv A)$ and $\bv E_{jj} \leq \bv A_{jj}$, we have $|\bv A_j \bv v - \tilde\lambda \bv v_j| \leq \sqrt{\epsilon \tr(\bv A)\, \bv A_{jj}}$.
\end{proof}

\subsubsection{Eigenvalue Approximation of Bounded-Entry PSD Matrices}
By combining Imported Lemmas~\ref{lem:rls} and~\ref{lem:rls_bound} with known
randomized sampling algorithms (in~\cite{Bhattacharjee:2021wl} and~\cite{swartworth2025tight})
, we can approximate the eigenvalues of a bounded entry PSD matrix $\bv A$ to error $\epsilon n$ by sampling
just $\tilde{O}(1/\epsilon^3)$ entries of $\bv A$.
In fact, the observation that one can combine RLS sampling with
randomized sampling algorithms to obtain improved eigenvalue approximation guarantees for bounded entry PSD matrices
was first made in~\cite{swartworth2025tight}. But
the authors of~\cite{swartworth2025tight} claim
that the algorithm needs to be nonadaptive to achieve this guarantee,
since the RLS sampling probabilities are not known a priori and need to be estimated.
However, as we have shown in Lemma~\ref{lem:rls_bound}, for bounded entry PSD matrices, the RLS sampling probabilities are uniformly bounded by $1/\epsilon n$, so
uniform column sampling suffices.
Thus, the algorithm suggested by~\cite{swartworth2025tight} (Theorem 8.3) can be made non-adaptive.
We state this result formally below for completeness, and give a proof sketch.
\begin{theorem}\label{thm:psd_eigs}
Let $\bv A$ be a PSD matrix such that $\|\bv A \|_{\infty} \leq 1$. Then, there is a non-adaptive algorithm
that reads $\tilde{O}\left(\frac{1}{\epsilon^3 } \right)$ entries of $\bv A$ and
with probability $2/3$, approximates all the eigenvalues of $\bv A$ to additive error $\epsilon n$.
\end{theorem}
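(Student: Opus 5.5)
The plan is to combine the uniform column sampling guarantee for bounded-entry PSD matrices (Imported Lemma~\ref{lem:rls} together with Lemma~\ref{lem:rls_bound}) with the random principal submatrix eigenvalue estimator of~\cite{Bhattacharjee:2021wl,swartworth2025tight}, applied not to $\bv A$ but to a small surrogate matrix.

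\textbf{Step 1: Nystr\"om surrogate.} Sample each index independently with probability $p = \min(1, \frac{c}{\epsilon n}\log\frac{1}{\epsilon\delta})$, giving a set $T$ of $t = \tilde O(1/\epsilon)$ indices. Take $\beta = \epsilon n/2$. Lemma~\ref{lem:rls_bound} gives $l_i^\beta(\bv A) \le 1/\beta$, so Imported Lemma~\ref{lem:rls} (with over-approximations $\tilde l_i^\beta = 2/(\epsilon n)$, which is non-adaptive) yields, with constant probability,
\[
\bv 0 \preceq \hat{\bv A} \preceq \bv A \preceq \hat{\bv A} + \tfrac{\epsilon n}{2}\bv I, \qquad \hat{\bv A} = \bv C \bv W^{\dagger}\bv C^T,
\]
where $\bv C = \bv A_{:,T}$ and $\bv W = \bv A_{T,T}$. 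By Weyl's inequality, every eigenvalue of $\hat{\bv A}$ is within $\epsilon n/2$ of the corresponding eigenvalue of $\bv A$. So it suffices to estimate the spectrum of $\hat{\bv A}$ to additive error $\epsilon n/2$.

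\textbf{Step 2: reduce to a small matrix.} The nonzero eigenvalues of $\hat{\bv A}$ coincide with those of the $t\times t$ matrix $\bv B = (\bv W^{\dagger})^{1/2}\bv C^T\bv C(\bv W^{\dagger})^{1/2}$. Computing $\bv C^T\bv C$ exactly would read $nt$ entries, which is too many. Instead, estimate it by row sampling: choose a uniformly random set $R$ of rows of expected size $r$, rescaled by $\sqrt{n/r}$, and form $\bv C^T\bv S_R\bv S_R^T\bv C$. This reads $t\cdot r + t^2$ entries, all chosen non-adaptively.

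\textbf{Step 3: approximation of $\bv B$.} Write $\hat{\bv A} = \bv Z\bv Z^T$ with $\bv Z = \bv C(\bv W^{\dagger})^{1/2}$, so that $\bv B = \bv Z^T\bv Z$. We need
\[
\|\bv Z^T\bv S_R\bv S_R^T\bv Z - \bv Z^T\bv Z\|_2 \le \epsilon n.
\]
The rows of $\bv Z$ satisfy $\|\bv Z_{i,:}\|_2^2 = \hat{\bv A}_{ii} \le \bv A_{ii} \le 1$, so the row norms are uniformly bounded. Uniform row sampling is then approximate matrix multiplication with Frobenius error $\|\bv Z\|_F^2/\sqrt{r}\cdot \tilde O(1)$. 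Using a matrix Bernstein/Chernoff bound with row-norm bound $1$ and $\|\bv Z\|_2^2 \le n$, the spectral error is about $\sqrt{n\cdot(n/r)\log t} + (n/r)\log t$. Making this at most $\epsilon n$ requires $r = \tilde O(1/\epsilon^2)$, for a total of $t r = \tilde O(1/\epsilon^3)$ entries. Weyl's inequality then transfers the error to the eigenvalues of $\bv B$, and hence to those of $\hat{\bv A}$ and $\bv A$. The remaining $n - t$ eigenvalues are set to $0$; this is valid because $\hat{\bv A}$ has rank at most $t$ and all its other eigenvalues are $0$, within $\epsilon n/2$ of $\bv A$'s.

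\textbf{Main obstacle.} The delicate step is Step 3: the error must be relative to $\epsilon n$, not to $\|\bv Z\|_2^2$. The approach is to use the intrinsic-dimension matrix Chernoff bound with variance $\preceq (n/r)\bv Z^T\bv Z$. If conditioning of $\bv W^{\dagger}$ makes $\bv Z$ unwieldy, the fallback is to truncate $\bv W$'s tiny eigenvalues at level $\approx \epsilon$ (eigenvalues of $\bv W$, which corresponds to $\epsilon n$ after rescaling), at additional cost $O(\epsilon n)$ in $\hat{\bv A}$. The alternative route is to cite Theorem~8.3 of~\cite{swartworth2025tight} with RLS probabilities replaced by the uniform bound.
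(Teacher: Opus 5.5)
Your argument is correct. It uses the same two ingredients as the paper, composed in the opposite order.

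\textbf{The paper's order.} It first applies Imported Theorem~\ref{thm:main-eigval} to pass to a random $s\times s$ principal submatrix, with $s=\tilde O(1/\epsilon^2)$, whose rescaled spectrum is within $\epsilon n$ of that of $\bv A$. It then notes that $\bv S^T\bv A\bv S$ is itself a bounded-entry PSD matrix. So Lemma~\ref{lem:rls_bound} with $\beta=\epsilon s$, together with Imported Lemma~\ref{lem:rls}, gives a Nystr\"om approximation of the submatrix to error $\epsilon s$ from $\tilde O(1/\epsilon)$ of its columns. That reads $s\cdot\tilde O(1/\epsilon)$ entries.

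\textbf{Your order.} You do the Nystr\"om step on all of $\bv A$, and replace the principal-submatrix theorem by an approximate-matrix-multiplication estimate of the Gram matrix $\bv Z^T\bv Z$.
\begin{itemize}
\item The fact that makes this work is $\|\bv Z_{i,:}\|_2^2=\hat{\bv A}_{ii}\le\bv A_{ii}\le 1$. It gives summands of norm at most $n/r$ and variance $\preceq (n/r)\bv Z^T\bv Z$, whose norm is at most $n^2/r$.
\item Matrix Bernstein then yields absolute error $\epsilon n$ at $r=O(\epsilon^{-2}\log(t/\delta))$. Even the Frobenius bound $\E\|\bv Z^T\bv S_R\bv S_R^T\bv Z-\bv Z^T\bv Z\|_F^2\le (n/r)\|\bv Z\|_F^2\le n^2/r$, plus Markov, suffices for constant probability.
\item The ``obstacle'' you flag therefore does not arise. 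The error is automatically absolute rather than relative to $\|\bv Z\|_2^2$. The row-norm bound does not depend on the conditioning of $\bv W$, so no truncation is needed.
\end{itemize}

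\textbf{Comparison.} Both routes read essentially the same pattern of entries: an $\tilde O(1/\epsilon^2)\times\tilde O(1/\epsilon)$ block $\bv A_{R,T}$. In the paper $T\subseteq R$; in yours $R$ and $T$ are independent and you also read $\bv A_{T,T}$. Your route is self-contained given the RLS lemma and has only $\log(1/\delta)$ dependence. The paper's route is a two-line composition of results already in hand, but it inherits the polynomial $1/\delta$ dependence of Imported Theorem~\ref{thm:main-eigval}.
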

\begin{proof}
We first observe that by Imported Theorem~\ref{thm:main-eigval}, we can produce
a rescaled sampling matrix
$\frac{n}{s}\bv S^T \bv A \bv S$ of size $s \times s$, where $s=\tilde{O}(1/\epsilon^2)$,
which approximates
all eigenvalues of $\bv A$ to additive error $\epsilon n$ with good (constant) probability.
Thus, it suffices to approximate the eigenvalues $\bv S^T \bv A \bv S$
to error $\epsilon s$. Note that we do not actually produce the matrix $\bv S^T \bv A \bv S$ explicitly,
but we we have query access to its entries simply by querying entries of $\bv A$.
Also note that $\bv S^T \bv A \bv S$ is also a bounded entry PSD matrix.
Setting $\beta = \epsilon s$ in Lemma~\ref{lem:rls_bound}
gives that the RLS sampling probabilities of $\bv S^T \bv A \bv S$ are
bounded by $1/\epsilon s$. Thus, by Imported Lemma~\ref{lem:rls}, by sampling
$\tilde{O}(1/\epsilon)$ columns from $\bv S^T \bv A \bv S$ in total,
we get an $\hat{\bv A}$ such that $\|\bv S^T \bv A \bv S- \hat{\bv A}\|_2\leq \epsilon s$
with good probability. So, the total number of entries we read are
$\tilde{O}(1/\epsilon^3)$ ($\bv S^T \bv A \bv S$ has $s=\tilde{O}(1/\epsilon^2)$ rows and we read
only $\tilde{O}(1/\epsilon)$ columns from it).
We finally use Weyls' inequality~\cite{weyl1912} and
adjust $\epsilon$ by constant factors
to get the desired guarantee.
\end{proof}

\subsubsection{Lower Bound for PSD Matrices}

We now show that sampling $\Omega(1/\epsilon)$ columns are necessary to approximate even the top
eigenvector of a bounded entry PSD matrix to error $\epsilon n$. To see this,
suppose there is an $n \times n$ matrix with a set of indices $s \in [n]$ of size $|s|=\epsilon n$
such that $\bv A_{ij}=1$ if $i,j \in s$ or $i=j$ and $\bv A_{ij}=0$ otherwise, i.e. $\bv A$ is a matrix
with ones on the diagonal and a block of ones of size
$\epsilon n \times \epsilon n$. Note that $\bv A$ has one eigenvalue equal to $\epsilon n$,
and its other eigenvalues are either 0 or 1.
Any algorithm that samples fewer than $O(1/\epsilon)$ columns will
miss the block of ones with constant probability, leading to an error of $\Omega(\epsilon n)$ in approximating the top eigenvector.
This simple lower bound shows that the upper bound of $\tilde{O}(1/\epsilon)$ columns of Theorem~\ref{thm:psd_upper} is tight up to logarithmic factors. We state the lower bound formally below:
\begin{theorem}\label{thm:lower_psd}
Let $\mathcal{A}$ be any randomized algorithm that
(possibly adaptively) reads columns of a
symmetric PSD matrix $\bv A \in \R^{n \times n}$ with $\|\bv A\|_{\infty} \leq 1$ and largest magnitude
eigenvalue $\lambda_1(\bv A)$, and
outputs a unit vector $\bv v$ that, for any $\epsilon \in (0,1)$ satisfies:
\begin{align*}
 \|\bv A \bv v-\lambda_1(\bv A) \bv v \|_2 \leq \epsilon n,
\end{align*}
with probability at least $\frac{2}{3}$.
Then, $\mathcal{A}$ must read at least $\Omega(\frac{1}{\epsilon})$ columns of $\bv A$ provided $\epsilon=\Omega(\frac{1}{n})$.
\end{theorem}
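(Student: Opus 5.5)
Since $\mathcal{A}$ may be adaptive and randomized, we apply Yao's minimax principle and exhibit a hard distribution over bounded-entry PSD matrices on which every deterministic column-reading algorithm making fewer than $c_0/\epsilon$ column queries fails with probability more than $1/3$. For a parameter $k = \lceil 4\epsilon n \rceil$ (so that the block eigenvalue exceeds the allowed error by a constant factor), draw a uniformly random set $T \subseteq [n]$ with $|T| = k$ and let $\bv A_T$ have ones on the diagonal and ones on $T \times T$, zeros elsewhere. Equivalently $\bv A_T = \bv I + \bv 1_T \bv 1_T^T - \bv I_T$, which is PSD with $\|\bv A_T\|_\infty \le 1$. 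Its spectrum is $k$ (eigenvector $\bv 1_T/\sqrt{k}$), $0$ with multiplicity $k-1$ (vectors in the span of $\bv e_T$ orthogonal to $\bv 1_T$), and $1$ on the complement coordinates. The requirement $\epsilon = \Omega(1/n)$ ensures $k \ge 1$ is meaningful and $k \le n$.

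\textbf{Key step 1 (residual forces correlation with $\bv 1_T$).} Suppose a unit $\bv v$ satisfies $\|\bv A_T \bv v - k\bv v\|_2 \le \epsilon n \le k/4$. Decompose $\bv v = \alpha \bv 1_T/\sqrt{k} + \bv w$. The residual squared is at least $(k-1)^2\|\bv w\|_2^2$ (all other eigenvalues lie in $\{0,1\}$), so $\|\bv w\|_2 \le 1/3$ for large $k$ and hence $\alpha^2 \ge 8/9$. In particular, $\bv v$ places at least $8/9$ of its squared mass on $T$, and $\bv v$ determines $T$ approximately: the set $\hat T$ of coordinates with $|\bv v_i| \ge 1/(2\sqrt{k})$ has $|\hat T \triangle T| \le k/2$, say. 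Thus a successful output identifies a set having overlap at least $k/2$ with a random $k$-subset.

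\textbf{Key step 2 (few column reads reveal nothing).} When the algorithm reads column $j$, it learns $T$ exactly if $j \in T$, and otherwise sees only $\bv e_j$, which reveals only that $j \notin T$. With $q < c_0/\epsilon$ adaptive reads, the probability that any read index lies in $T$ is at most $q \cdot k/(n-q) \le 8 c_0 \epsilon n/(\epsilon n)\cdot$(constant) $= O(c_0)$. This is a union bound over the conditional hitting probabilities: conditioned on all previous misses, $T$ is uniform over $k$-subsets avoiding the queried indices. On the complementary event, conditioned on the transcript, $T$ is uniform over $k$-subsets of at least $n - q$ remaining indices, while the output $\bv v$ is a fixed function of the transcript. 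The expected overlap $|\hat T \cap T|$ is then at most $|\hat T| k/(n-q)$. Since $\bv v$ is a unit vector, $|\hat T| \le 4k$, which gives expected overlap $O(k^2/n) = O(\epsilon k)$. By Markov's inequality, overlap at least $k/2$ occurs with probability $O(\epsilon)$, which is small. Both bad events therefore have small probability, so the success probability falls below $2/3$ when $c_0$ is a small constant and $\epsilon$ is below a small constant. For $\epsilon$ of constant size the bound $\Omega(1/\epsilon) = \Omega(1)$ is trivial.

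\textbf{Main obstacle.} The delicate part is handling adaptivity cleanly in step 2, that is, showing that conditioned on only reading non-$T$ columns, the posterior on $T$ remains uniform on the remaining indices, together with the rounding in step 1 from residual error to set overlap. I will handle adaptivity by a deferred-decision argument: reveal the membership of each queried index only when it is queried, and couple the process to the all-identity matrix, on which the transcript is deterministic until the first hit.
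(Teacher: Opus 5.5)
Your proposal is correct and is essentially the paper's argument: it uses the same hard instance (identity plus a hidden all-ones block of size $\Theta(\epsilon n)$) and the same reasoning that fewer than $c_0/\epsilon$ column reads miss the block with constant probability. You make this rigorous through Yao's principle, the fixed all-miss query sequence, and a posterior-uniformity plus Markov bound showing the output cannot correlate with $\bv 1_T$ without a hit, all of which the paper only sketches in two sentences. Your choice $k=\lceil 4\epsilon n\rceil$ in place of the paper's block size $\epsilon n$ is in fact necessary: with block size exactly $\epsilon n$, the trivial output $\bv e_j$ for any $j\notin T$ already has residual $\epsilon n-1\le \epsilon n$.
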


\subsection{Improved Algorithms Using Rank-Truncated Nystr\"{o}m}\label{sec:gennys}

In this section, we present algorithms based on a rank-truncated Nystr\"{o}m method for
general symmetric (and potentially indefinite) matrices.
The algorithms we present here achieve \emph{optimal sample complexity}
(up to logarithmic factors) for approximating all outlying eigenvectors,
requiring only $\widetilde{O}(1/\epsilon^2)$ sampled columns.
As proved in Theorem~\ref{thm:lower_main}, approximating even the top eigenvector
of a bounded-entry matrix to $\epsilon n$ residual error requires sampling
$\Omega(1/\epsilon^2)$ columns.

For a general symmetric matrix $\bv A$, the scaled principal submatrix $\Sbb^T \bv A \Sbb$ is no longer positive semidefinite and may contain small eigenvalues that cause the standard Nystr\"{o}m approximation $\bv A \Sbb(\Sbb^T \bv A \Sbb)^\dagger \Sbb^T \bv A$ to have large spectral error. To address this, we construct a rank-truncated Nystr\"{o}m approximation $\hat{\bv A} = \bv C [\Sbb^T \bv A \Sbb]_\tau^\dagger \bv C^T$, where $\bv C = \bv A \bar{\bv S}$ and $[\Sbb^T \bv A \Sbb]_\tau^\dagger$ is a \emph{truncated} pseudoinverse restricted to eigenvalues of magnitude at least $\tau$. By proving that $\|\bv A - \hat{\bv A}\|_2$ is small, we show that we can extract high-quality approximate eigenvectors for $\bv A$ by solving a generalized eigenvalue problem involving the smaller $s \times s$ sampled matrices.

\subsubsection{Uniform Sampling For Bounded-Entry Matrices.}

We first describe the approach for bounded-entry matrices using uniform sampling, and then extend it to general matrices using squared column-norm sampling.
We construct the approximation as follows.
Let $\bar{\bv S} \in \mathbb{R}^{n \times s}$
be the scaled uniform sampling matrix,
which samples $s = \widetilde{O}(1/\epsilon^2)$
columns uniformly at random and rescales them by $\sqrt{n/s}$.
Let $\bv C = \bv A \bar{\bv S}$
and let $\hat{\bv A} = \bv C \bv U \bv C^T$
be the rank-truncated Nystr\"om approximation~\cite{nakatsukasa2020fast},
where $\bv U = [\Sbb^T \bv A \Sbb]_{\tau}^\dagger$ is the
truncated pseudoinverse of $\Sbb^T \bv A \Sbb$,
restricted to eigenvalues with magnitude at least
$\tau \geq \frac{\epsilon n}{2}$.
For the full pseudocode, see Algorithm~\ref{alg:genNystr\"{o}m}.

Throughout this section, we will set $\bv A_o, \bv A_m$ as in Definition~\ref{def:ao-am} with $L+\epsilon \sqrt{\delta}n$. We start with a definition regarding the eigendecomposition of $\bv W=\Sbb^T \bv A \Sbb$ we will use throughout our proofs.
\begin{definition}\label{def:w}
Let $\bv W = \bv V \bv \Sigma \bv V^T$ be the eigendecomposition
of $\bv W = \Sbb^T \bv A \Sbb$. Then $\bv W = \bv V_1 \bv \Sigma_1 \bv V_1^T + \bv V_2 \bv \Sigma_2 \bv V_2^T$, where $\bv V_1$ has the eigenvectors of $\bv W$ corresponding to its eigenvalues $\Sigma_1$ with at least $\tau=\frac{\epsilon n}{2}$ in magnitude. $\bv V_2$ contains the eigenvectors corresponding to $\Sigma_2$ which are $<\tau=\frac{\epsilon n}{2}$ in magnitude.
\end{definition}
Throughout our proofs, the only two properties of the sampling matrix $\Sbb$ we will require are those from Imported Lemma~\ref{lem:subspace-embedding-bounded} with $R=\epsilon n$, Imported Lemma~\ref{lem:am}, and Lemma~\ref{lem:as}. We formally state this as a property of $\Sbb$:
\begin{property}\label{prop:s}
For $\bv A_o=\bv U_o \bv \Lambda_o \bv U_o, \bv A_m= \bv U_m \bv \Lambda_m \bv U_m$ with $L=\epsilon \sqrt{\delta} n$ in Definition~\ref{def:ao-am}, let $\Sbb$ be a matrix which satisfies the following:
\begin{enumerate}
 \item $\Sbb$
is a constant-factor subspace embedding for $\bv U_o$, which gives:
\begin{align*}
 \sigma_{\min}(\Sbb^T \bv U_o) = \Omega(1) \quad \text{and thus,}
 \quad \|(\Sbb^T \bv U_o)^\dagger\|_2 = O(1).
\end{align*}
 \item $\|\Sbb^T \bv A_m \Sbb \|_2 \leq \epsilon n$.
 \item $\|\bv A_m \Sbb \|_2 \leq \epsilon n$.
\end{enumerate}
\end{property}

\begin{algorithm}[t]
\caption{Rank-Truncated Nystr\"om Using Uniform Sampling}
\label{alg:genNystr\"{o}m}
\begin{algorithmic}[1]
\Require Symmetric matrix $\bv A \in \R^{n \times n}$ with
$\|\bv A\|_{\infty} \leq 1$,
accuracy $\epsilon \in (0,1)$, expected sample size $s$.
\State Let $\bv{\bar S} \in \R^{n \times |S|}$ be the scaled sampling matrix
which samples each column of $\bv A$ independently with probability $\frac{s}{n}$
and then rescales each sampled column by $\sqrt{\frac{n}{s}}$.
\State Find all eigenvectors $\bv x \in \R^s$ corresponding to the
eigenvalues $\lambda$
(with magnitude at least $ \epsilon n$)
by solving the generalized eigenvalue problem:
\[
 \Sbb^T \bv A^2 \Sbb \, \bv x =
 \lambda \, [\Sbb^T \bv A \Sbb]_{\tau} \, \bv x,
\]
where $[\Sbb^T \bv A \Sbb]_{\tau}$ is the truncated eigendecomposition
of the matrix $\Sbb^T \bv A \Sbb $ restricted to the eigenvalues
with magnitude at least $\tau \geq \frac{\epsilon n}{2}$.
\State Compute the approximate eigenvectors $\bv v$
corresponding to the eigenvectors $\bv x$ found in the previous step as:
\[
 \bv v = \dfrac{\bv A \Sbb \bv x}{\|\bv A \Sbb \bv x\|_2}.
\]
\State \Return all approximate eigenvector, eigenvalue pairs
$(\bv v, \lambda)$, found in the previous step.
\end{algorithmic}
\end{algorithm}

We start by showing that for any such $\Sbb$, the spectral norm of $\|\bv A_o \Sbb \bv U \|_2$ and $\|\bv A_o \Sbb \bv V_2\|_2$ is at most a constant.
\begin{lemma}\label{lem:asu}
Let $\bv U = [\Sbb^T \bv A \Sbb]_{\tau}^\dagger$
be the truncated pseudoinverse of $\Sbb^T \bv A \Sbb$,
restricted to eigenvalues with magnitude at least $\tau \geq \frac{\epsilon n}{2}$. Let $\Sbb$ satisfy the conditions in Property~\ref{prop:s}. Let $\bv W = \bv V_1 \bv \Sigma_1 \bv V_1^T + \bv V_2 \bv \Sigma_2 \bv V_2^T$ be as in Definition~\ref{def:w}. Then, for constants $C_1, C_2>0$:
\begin{align*}
 \|\bv A_o \Sbb \bv U \|_2 = C_1 \quad \text{ and, } \quad \|\bv A_o \Sbb \bv V_2\|_2 =C_2
 \quad
\end{align*}
\end{lemma}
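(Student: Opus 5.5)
The plan is to route both products through the identity that ties $\bv A_o\Sbb$ to the sampled submatrix $\bv W=\Sbb^T\bv A\Sbb$, and then use the spectral structure of $\bv W$ from Definition~\ref{def:w}.

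\textbf{Step 1: a factorization of $\bv A_o\Sbb$.} Write $\bv A_o\Sbb=\bv U_o\bv B$ with $\bv B:=\bv\Lambda_o\bv U_o^T\Sbb$. Since $\bv U_o$ has orthonormal columns, $\|\bv A_o\Sbb\bv X\|_2=\|\bv B\bv X\|_2$ for any $\bv X$. By item 1 of Property~\ref{prop:s}, $\Sbb^T\bv U_o$ has full column rank. Hence $(\Sbb^T\bv U_o)^\dagger\Sbb^T\bv U_o=\bv I$, and
\[
\bv B=(\Sbb^T\bv U_o)^\dagger\,\Sbb^T\bv U_o\bv\Lambda_o\bv U_o^T\Sbb=(\Sbb^T\bv U_o)^\dagger\,\Sbb^T\bv A_o\Sbb=(\Sbb^T\bv U_o)^\dagger\big(\bv W-\Sbb^T\bv A_m\Sbb\big).
\]
Because $\|(\Sbb^T\bv U_o)^\dagger\|_2=O(1)$, for any $\bv X$ we get
\[
\|\bv A_o\Sbb\bv X\|_2\le O(1)\big(\|\bv W\bv X\|_2+\|\Sbb^T\bv A_m\Sbb\|_2\|\bv X\|_2\big).
\]
Item 2 of Property~\ref{prop:s} bounds the last factor by $\epsilon n$.

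\textbf{Step 2: $\bv X=\bv U$.} Here $\bv W\bv U=\bv V_1\bv\Sigma_1\bv V_1^T\bv V_1\bv\Sigma_1^{-1}\bv V_1^T=\bv V_1\bv V_1^T$, which has spectral norm at most $1$. Also $\|\bv U\|_2\le 1/\tau\le 2/(\epsilon n)$. So the bracket is at most $1+\epsilon n\cdot 2/(\epsilon n)=3$, and $\|\bv A_o\Sbb\bv U\|_2\le C_1$ for an absolute constant $C_1$. This is exactly the first claim. The factors of $n$ cancel because $\bv U$ inverts the large part of $\bv W$, and the truncation at $\tau\ge\epsilon n/2$ controls the $\bv A_m$ contribution.

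\textbf{Step 3: $\bv X=\bv V_2$.} Here $\bv W\bv V_2=\bv V_2\bv\Sigma_2$ and $\|\bv V_2\|_2=1$. The bracket is therefore $\|\bv\Sigma_2\|_2+\|\Sbb^T\bv A_m\Sbb\|_2$, and we get $\|\bv A_o\Sbb\bv V_2\|_2\le C_2'(\|\bv\Sigma_2\|_2+\|\Sbb^T\bv A_m\Sbb\|_2)$ for an absolute constant $C_2'$.

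\textbf{Main obstacle.} Step 3 is where I expect difficulty in matching the statement's "$=C_2$" literally. The route above only gives $\|\bv A_o\Sbb\bv V_2\|_2\le C_2'(\|\bv\Sigma_2\|_2+\|\Sbb^T\bv A_m\Sbb\|_2)$, and both terms are of order $\epsilon n$, not $O(1)$. I would first try to sharpen this. One option is to use the subspace-embedding structure of the level sets $\bv A_{o,i}$, as in Lemma~\ref{Lem:aosum}, to show that $\bv V_2$ is nearly orthogonal to $\Sbb^T\bv U_o$, so that $\bv B\bv V_2$ is small in absolute terms. I doubt this yields a dimension-free constant, since the estimate is homogeneous of degree one in $\bv A$. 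Consequently, I expect the only provable version of the second claim to be the bound $C_2'(\|\bv\Sigma_2\|_2+\|\Sbb^T\bv A_m\Sbb\|_2)$, which is $O(\epsilon n)$ rather than $O(1)$. I would check how $C_2$ is used downstream, namely in bounding $\|\bv A-\hat{\bv A}\|_2$, to confirm that this weaker form suffices. I would not claim an $O(1)$ bound that the argument does not deliver.
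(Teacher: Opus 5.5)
Your proposal is correct and matches the paper's proof in substance. The paper left-multiplies the relations $\Sbb^T\bv A_o\Sbb\bv V_j=\bv V_j\bv\Sigma_j-\Sbb^T\bv A_m\Sbb\bv V_j$ ($j=1,2$) by $\bv U_o(\Sbb^T\bv U_o)^\dagger$. That is your factorization $\bv A_o\Sbb=\bv U_o(\Sbb^T\bv U_o)^\dagger(\bv W-\Sbb^T\bv A_m\Sbb)$ applied to $\bv V_1\bv\Sigma_1^{-1}$ and to $\bv V_2$, and it gives $O(1)$ and $O(\epsilon n)$ respectively.

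Your reading of the second claim is also right. The paper's own proof ends with $\|\bv A_o\Sbb\bv V_2\|_2=O(\epsilon n)$, and this is all that is used downstream, to get $\|\bv E_1\|_2=O(\epsilon n)$ in the spectral-error bound for $\hat{\bv A}$. So the ``$=C_2$'' in the statement is a misstatement, and an $O(1)$ bound genuinely fails: for $\bv A=\frac{\mu}{n}\bv 1\bv 1^T$ with $L\le\mu<\tau/2$, $\bv V_1$ is empty and $\|\bv A_o\Sbb\bv V_2\|_2\approx\mu$.
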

\begin{proof}

 The truncated pseudoinverse
$\bv U = [\Sbb^T \bv A \Sbb]_\tau^\dagger$ keeps exactly the eigenvalues of $\bv W$ with magnitude at least $\tau = \frac{\epsilon n}{2}$. Then, $\bv U = [\Sbb^T \bv A \Sbb]_{\tau}^{\dagger}=\bv V_1 \bv \Sigma_1^{-1} \bv V_1^T$,
and $[\Sbb^T \bv A \Sbb]_\tau = \bv V_1 \bv \Sigma_1 \bv V_1^T$. Note that we have
\begin{align*}
 \|\bv \Sigma_1^{-1}\|_2 \le \frac{2}{\epsilon n} = O\!\left(\frac{1}{\epsilon n}\right)
 \quad \text{and }
 \|\bv \Sigma_2\|_2 < \frac{\epsilon n}{2} = O(\epsilon n).
\end{align*}
Let $\bv W=\bv W_o+\bv W_m$ where $\bv W_o=\Sbb^T \bv A_o \Sbb$ and $\bv W_m=\Sbb^T \bv A_m \Sbb$. We have $\bv W \bv V_1 = \bv V_1 \bv \Sigma_1$.
Substituting $\bv W = \Sbb^T \bv A_o \Sbb + \bv W_m$ yields:
\begin{align*}
 (\Sbb^T \bv A_o\Sbb + \bv W_m) \bv V_1 &= \bv V_1 \bv \Sigma_1
 \implies \Sbb^T \bv A_o\Sbb \bv V_1 = \bv V_1 \bv \Sigma_1 - \bv W_m \bv V_1.
\end{align*}
Since, we have $\|(\Sbb^T \bv U_o)^\dagger \|_2=O(1)$, $(\Sbb^T \bv U_o)^\dagger \Sbb^T \bv U_o = \bv I$.So, left-multiplying both sides above by $\bv U_o(\Sbb^T \bv U_o)^\dagger$, we get:
\begin{align}\label{eq:id1}
 \bv A_o\Sbb \bv V_1 &= \bv U_o(\Sbb^T \bv U_o)^\dagger \bv V_1 \bv \Sigma_1 -
 \bv U_o(\Sbb^T \bv U_o)^\dagger \bv W_m \bv V_1.
\end{align}
Similarly, starting from the fact $\bv W \bv V_2 = \bv V_2 \bv \Sigma_2$ and following the steps above, we have:
\begin{align} \label{eq:id3}
 \bv A_o \Sbb \bv V_2 =
 \bv U_o(\Sbb^T \bv U_o)^\dagger \bv V_2 \bv \Sigma_2 -
 \bv U_o(\Sbb^T \bv U_o)^\dagger \bv W_m \bv V_2.
 \end{align}
Right-multiplying~\eqref{eq:id1} by $\bv \Sigma_1^{-1}$ gives
$\bv A_o \Sbb \bv V_1 \bv \Sigma_1^{-1} = \bv U_o(\Sbb^T \bv U_o)^\dagger \bv V_1
- \bv U_o(\Sbb^T \bv U_o)^\dagger \bv W_m \bv V_1 \bv \Sigma_1^{-1}$, so
\begin{align}\label{eq:v1}
 \|\bv A_o \Sbb \bv V_1 \bv \Sigma_1^{-1}\|_2
 &= \|\bv U_o(\Sbb^T \bv U_o)^\dagger \bv V_1
 - \bv U_o(\Sbb^T \bv U_o)^\dagger \bv W_m \bv V_1 \bv \Sigma_1^{-1}\|_2 \nonumber \\
 &= \|(\Sbb^T \bv U_o)^\dagger \bv V_1
 - (\Sbb^T \bv U_o)^\dagger \bv W_m \bv V_1 \bv \Sigma_1^{-1}\|_2 \nonumber \\
 &\le \|(\Sbb^T \bv U_o)^\dagger \|_2 (1
 + \|\bv W_m\|_2 \|\bv \Sigma_1^{-1}\|_2) \nonumber \\
 &\le O(1) \cdot \left(1 + O(\epsilon n) \cdot O\left(\frac{1}{\epsilon n}\right) \right)
 = O(1).
\end{align}
This proves the first result. Now we prove the second bound. From~\eqref{eq:id3} and the triangle inequality:
\begin{align*}
 \|\bv A_o \Sbb \bv V_2\|_2
 &= \|\bv U_o(\Sbb^T \bv U_o)^\dagger \bv V_2 \bv \Sigma_2 - \bv U_o(\Sbb^T \bv U_o)^\dagger \bv W_m \bv V_2 \|_2\\
 &= \|(\Sbb^T \bv U_o)^\dagger \bv V_2 \bv \Sigma_2 - (\Sbb^T \bv U_o)^\dagger \bv W_m \bv V_2\|_2 \\
 &\le \|(\Sbb^T \bv U_o)^\dagger \|_2 (\|\bv \Sigma_2\|_2 + \|\bv W_m\|_2) \\
 &\le O(1) \cdot O(\epsilon n) = O(\epsilon n).
\end{align*}
The last step follows from the fact that $\|(\Sbb^T \bv U_o)^\dagger \|_2 = O(1)$
and $\|\bv \Sigma_2\|_2$ and $\|\bv W_m\|_2$ are both bounded by $O(\epsilon n)$.
\end{proof}

In the next lemma, we bound
the spectral norm error of
the $\|\bv A-\hat{\bv A} \|_2$.
\begin{lemma} \label{lem:Nystr\"{o}m_spectral}
Let $\bv A \in \mathbb{R}^{n \times n}$ be symmetric
with $\|\bv A\|_\infty \le 1$.
Let $\bv{\bar S} \in \R^{n \times |S|}$ be the scaled sampling matrix
which samples each column of $\bv A$ independently with probability $\frac{s}{n}$
and then rescales each sampled column by $\sqrt{\frac{n}{s}}$.
Let $\bv C = \bv A \bar{\bv S}$ and
let $\bv U = [\Sbb^T \bv A \Sbb]_{\tau}^\dagger$
be the truncated pseudoinverse of $\Sbb^T \bv A \Sbb$,
restricted to eigenvalues with magnitude at least $\tau \geq \frac{\epsilon n}{2}$.
Let $\hat{\bv A}=\bv C \bv U \bv C^T$.
For $s \ge \frac{c \log n}{\epsilon^2 \delta}\log \frac{1}{\epsilon\delta}$,
with probability at least $1-\delta$:
\begin{align*}
 \|\bv A - \hat{\bv A}\|_2 \le \epsilon n.
\end{align*}
\end{lemma}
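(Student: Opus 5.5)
We plan to split $\bv A=\bv A_o+\bv A_m$ as in Definition~\ref{def:ao-am} with $L=\epsilon\sqrt{\delta}n$, so that $\|\bv A_m\|_2\le \epsilon n$. We then show that $\hat{\bv A}$ approximates $\bv A_o$ up to $O(\epsilon n)$ in spectral norm, and finally rescale $\epsilon$ by a constant.

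\emph{Probability and sample size.} First we verify that, for $s\ge \frac{c\log n}{\epsilon^2\delta}\log\frac1{\epsilon\delta}$, the matrix $\Sbb$ satisfies Property~\ref{prop:s} with probability $1-\delta$:
\begin{itemize}
\item Item 1 follows from Imported Lemma~\ref{lem:subspace-embedding-bounded} with $R=L$, since the distortion is at most $1/10$ on all of $\bv U_o$.
\item Item 2 is Imported Lemma~\ref{lem:am}.
\item Item 3 is Lemma~\ref{lem:as}.
\end{itemize}
A union bound with $\delta/3$ for each event gives the claim. Everything after this step is deterministic given Property~\ref{prop:s}, and will use Lemma~\ref{lem:asu} (in particular $\|\bv A_o\Sbb\bv U\|_2=O(1)$).

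\emph{Expansion of $\hat{\bv A}$.} With $\bv C=\bv A_o\Sbb+\bv A_m\Sbb$, we expand
\[
\hat{\bv A}=\bv A_o\Sbb\bv U\Sbb^T\bv A_o+\bv A_o\Sbb\bv U\Sbb^T\bv A_m+\bv A_m\Sbb\bv U\Sbb^T\bv A_o+\bv A_m\Sbb\bv U\Sbb^T\bv A_m .
\]
Each cross term is bounded by $\|\bv A_o\Sbb\bv U\|_2\|\bv A_m\Sbb\|_2=O(1)\cdot\epsilon n$. The last term is bounded by $\|\bv A_m\Sbb\|_2^2\|\bv U\|_2\le \epsilon^2n^2\cdot\frac{2}{\epsilon n}=O(\epsilon n)$. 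It therefore remains to show $\|\bv A_o-\bv A_o\Sbb\bv U\Sbb^T\bv A_o\|_2=O(\epsilon n)$.

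\emph{Main term.} This is the step we expect to be the crux: finding the right algebraic representation of $\bv A_o$ through the sample. Let $\bv P=(\Sbb^T\bv U_o)^\dagger$, so that $\bv P\Sbb^T\bv U_o=\bv I$ and $\|\bv P\|_2=O(1)$ by item 1. Let $\bv W_o=\Sbb^T\bv A_o\Sbb$. We then have the identities
\[
\bv P\bv W_o=\bv\Lambda_o\bv U_o^T\Sbb,\qquad \bv A_o\Sbb=\bv U_o\bv P\bv W_o,\qquad \bv A_o=\bv U_o\bv P\bv W_o\bv P^T\bv U_o^T .
\]
The last identity holds because $\bv P\bv W_o\bv P^T=\bv\Lambda_o(\bv U_o^T\Sbb\bv P^T)=\bv\Lambda_o$. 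Consequently
\[
\bv A_o-\bv A_o\Sbb\bv U\Sbb^T\bv A_o=\bv U_o\bv P\,(\bv W_o-\bv W_o\bv U\bv W_o)\,\bv P^T\bv U_o^T,
\]
and it suffices to show $\|\bv W_o-\bv W_o\bv U\bv W_o\|_2=O(\epsilon n)$.

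\emph{Bounding $\bv W_o-\bv W_o\bv U\bv W_o$.} Substitute $\bv W_o=\bv W-\bv W_m$ with $\bv W=\bv V_1\bv\Sigma_1\bv V_1^T+\bv V_2\bv\Sigma_2\bv V_2^T$ from Definition~\ref{def:w}. Since $\bv U=\bv V_1\bv\Sigma_1^{-1}\bv V_1^T$, we get $\bv W\bv U\bv W=\bv V_1\bv\Sigma_1\bv V_1^T$ and $\bv W\bv U=\bv V_1\bv V_1^T$. Hence
\[
\bv W_o-\bv W_o\bv U\bv W_o=\bv V_2\bv\Sigma_2\bv V_2^T-\bv W_m+\bv V_1\bv V_1^T\bv W_m+\bv W_m\bv V_1\bv V_1^T-\bv W_m\bv U\bv W_m .
\]
The terms are bounded as follows:
\begin{itemize}
\item $\|\bv\Sigma_2\|_2<\epsilon n/2$;
\item $\|\bv W_m\|_2\le \epsilon n$ by item 2, which also bounds the two terms containing $\bv V_1\bv V_1^T$;
\item $\|\bv W_m\bv U\bv W_m\|_2\le \epsilon^2n^2\cdot \frac{2}{\epsilon n}$.
\end{itemize}
All terms are therefore $O(\epsilon n)$. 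Combining this with the bounds on the cross terms and with $\|\bv A_m\|_2\le\epsilon n$, and adjusting $\epsilon$ by a constant, yields $\|\bv A-\hat{\bv A}\|_2\le\epsilon n$.
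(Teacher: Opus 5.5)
Your proof is correct and is essentially the paper's proof. It uses the same split at $L=\epsilon\sqrt{\delta}n$, the same verification of Property~\ref{prop:s}, the same four-term expansion of $\hat{\bv A}$, and the same bounds on the cross terms and $\bv T_4$ via Lemma~\ref{lem:asu}. Your factorization $\bv A_o-\bv T_1=\bv U_o\bv P(\bv W_o-\bv W_o\bv U\bv W_o)\bv P^T\bv U_o^T$ is a symmetric repackaging of the paper's $\bv E_1+\bv E_2$, since $\bv W_o-\bv W_o\bv U\bv W_o=\bv W_o\bv V_2\bv V_2^T+\bv W_o\bv U\bv W_m$. It has the small advantage of never invoking the $\|\bv A_o\Sbb\bv V_2\|_2$ bound, which Lemma~\ref{lem:asu} misstates as $O(1)$ rather than $O(\epsilon n)$.
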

\begin{proof}
Let $\bv A = \bv A_o + \bv A_m$ as in
Definition~\ref{def:ao-am} with threshold $L = \epsilon \sqrt{\delta} n$. Let
\begin{align*}
 \bv C &= \bv A \Sbb = \bv C_o + \bv C_m,
 \quad \text{where }
 \bv C_o = \bv A_o \Sbb \text{ and } \bv C_m =
 \bv A_m \Sbb. \\
 \bv W &= \Sbb^T \bv A \Sbb =
 \bv W_o + \bv W_m, \quad \text{where }
 \bv W_o = \Sbb^T \bv A_o \Sbb \text{ and }
 \bv W_m = \Sbb^T \bv A_m \Sbb.
\end{align*}
Note that by Imported Lemma~\ref{lem:subspace-embedding-bounded} (with $R=\epsilon \sqrt{\delta} n$), Imported Lemma~\ref{lem:am} and Lemma~\ref{lem:as}, for $s \ge \frac{c \log n}{\epsilon^2 \delta}\log \frac{1}{\epsilon\delta}$, $\Sbb$ satisfies the conditionsin Property~\ref{prop:s} with probability at least $1-\delta$, with a union bound and adjusting $\delta$ by constants. Hence, $\Sbb$ also satisfies the ocnditions of Lemma~\ref{lem:asu}.

We substitute $\bv C = \bv C_o + \bv C_m$ into $\hat{\bv A}$:
\begin{align}\label{eq:hata}
 \hat{\bv A} &= (\bv C_o + \bv C_m) \bv U (\bv C_o^T + \bv C_m^T) \nonumber\\
 &= \underbrace{\bv C_o \bv U \bv C_o^T}_{\bv T_1}
 + \underbrace{\bv C_o \bv U \bv C_m^T}_{\bv T_2}
 + \underbrace{\bv C_m \bv U \bv C_o^T}_{\bv T_3}
 + \underbrace{\bv C_m \bv U \bv C_m^T}_{\bv T_4}.
\end{align}
We will bound each of the terms
$\bv T_1$, $\bv T_2$, $\bv T_3$, and $\bv T_4$ individually.

\smallskip
\noindent\textbf{ Bounding $\bv T_1$:} Substituting
$\bv U =\bv V_1 \bv \Sigma_1^{-1} \bv V_1^T$ and $\bv C_o = \bv A_o\Sbb$, we get:
\begin{align}\label{eq:t1}
 \bv T_1 &= \bv A_o\Sbb \bv V_1 \bv \Sigma_1^{-1} \bv V_1^T \Sbb^T \bv A_o \nonumber \\
 &=\bv A_o \Sbb \bv V_1 \bv \Sigma_1^{-1} [\bv U_o(\Sbb^T \bv U_o)^\dagger \bv V_1 \bv \Sigma_1 -
 \bv U_o(\Sbb^T \bv U_o)^\dagger \bv W_m \bv V_1]^T \nonumber \\
 &= \bv A_o \Sbb (\bv V_1 \bv V_1^T) ((\Sbb^T \bv U_o)^\dagger )^T \bv U_o^T
 - \bv A_o \Sbb \bv V_1 \bv \Sigma_1^{-1} \bv V_1^T \bv W_m ((\Sbb^T \bv U_o)^\dagger )^T \bv U_o^T.
\end{align}
For the
second step, we substitute $(\bv A_o \Sbb \bv V_1)^T$ from~\eqref{eq:id1} in Lemma~\ref{lem:asu}. Using the fact that $\bv V_1 \bv V_1^T = \bv I - \bv V_2 \bv V_2^T$, we get:
\begin{align*}
 \bv A_o \Sbb (\bv I -
 \bv V_2 \bv V_2^T) ((\Sbb^T \bv U_o)^\dagger )^T \bv U_o^T
 = \bv A_o \Sbb ((\Sbb^T \bv U_o)^\dagger )^T \bv U_o^T -
 \bv A_o\Sbb \bv V_2 \bv V_2^T ((\Sbb^T \bv U_o)^\dagger )^T \bv U_o^T.
\end{align*}
Since $\bv U_o^T \Sbb ((\Sbb^T \bv U_o)^\dagger )^T
= ((\Sbb^T \bv U_o)^\dagger \Sbb^T \bv U_o)^T
= \bv I$,
for the first term above, we get $\bv A_o \Sbb ((\Sbb^T \bv U_o)^\dagger )^T \bv U_o^T=\bv U_o \bv \Sigma_o (\bv U_o\Sbb) ((\Sbb^T \bv U_o)^\dagger )^T \bv U_o^T=\bv U_o \bv \Lambda_o \bv U_o^T = \bv A_o$.
Thus, from~\eqref{eq:t1} we have:
\begin{align}\label{eq:t1err}
 \bv T_1 = \bv A_o - \bv E_1 - \bv E_2,
\end{align}
where the error terms are:
\begin{align*}
 \bv E_1 &= \bv A_o \Sbb \bv V_2 \bv V_2^T ((\Sbb^T \bv U_o)^\dagger )^T \bv U_o^T, \\
 \bv E_2 &= \bv A_o \Sbb \bv V_1 \bv \Sigma_1^{-1} \bv V_1^T \bv W_m ((\Sbb^T \bv U_o)^\dagger )^T \bv U_o^T.
\end{align*}
We bound $\|\bv E_1\|_2$.
Since $\|\bv U_o\|_2 = \|\bv V_2\|_2 = 1$, $\|((\Sbb^T \bv U_o)^\dagger )^T\|_2 = O(1)$, and $\|\bv A_o \Sbb \bv V_2 \|_2=O(1)$ from Lemma~\ref{lem:asu}, by spectral submultiplicativity, we have:
\begin{align*}
 \|\bv E_1\|_2 \le O(\epsilon n).
\end{align*}
We next bound $\|\bv E_2\|_2$. From Lemma~\ref{lem:asu} and pectral submultiplicativity, we have:
\begin{align*}
 \|\bv E_2\|_2 &\le \|\bv A_o \Sbb \bv V_1 \bv \Sigma_1^{-1} \|_2
 \cdot \|\bv V_1^T\|_2 \cdot \|\bv W_m \|_2
 \|((\Sbb^T \bv U_o)^\dagger )^T \|_2\|\bv U_o^T\|_2 \\
 &\le 1 \cdot O(1) \cdot 1 \cdot O(\epsilon n) \cdot O(1) \cdot 1 = O(\epsilon n).
\end{align*}
Therefore, from~\eqref{eq:t1err}, we have:
\begin{align}\label{eq:aot1}
 \|\bv A_o - \bv T_1\|_2 &\le \|\bv E_1\|_2 + \|\bv E_2\|_2 \le O(\epsilon n).
\end{align}

\noindent \textbf{Bounding the Cross Terms $\bv T_2, \bv T_3$:}
We first bound $\|\bv C_o \bv U\|_2$. From Lemma~\ref{lem:asu}, we have:
\begin{align*}
 \|\bv C_o \bv U\|_2
 &= \|\bv A_o \Sbb \bv V_1 \bv \Sigma_1^{-1} \bv V_1^T\|_2\\
 &= \|\bv A_o \Sbb \bv V_1 \bv \Sigma_1^{-1}\|_2 \|\bv V_1^T\|_2 \le O(1).
\end{align*}
Also, we have:
\begin{align*}
 \|\bv C_m\|_2 &= \|\bv A_m \Sbb \|_2 \le O(\epsilon n).
\end{align*}
Thus, we can bound $\bv T_2$ and $\bv T_3$ as:
\begin{align*}
 \|\bv T_2\|_2 &= \|\bv C_o \bv U \bv C_m^T\|_2
 \le \|\bv C_o \bv U\|_2 \|\bv C_m^T\|_2 \le O(1)
 \cdot O(\epsilon n) = O(\epsilon n).
\end{align*}

\noindent \textbf{Bounding $\bv T_4$:} Finally,
we can bound $\bv T_4$ as follows:
\begin{align*}
 \|\bv T_4\|_2
 &= \|\bv C_m \bv U \bv C_m^T\|_2
 \le \|\bv C_m\|_2^2 \|\bv U\|_2
 \le O(\epsilon n)^2 \cdot O\left(\frac{1}{\epsilon n}\right)
 = O(\epsilon n).
\end{align*}
Finally, from~\eqref{eq:hata}, we have:
\begin{align*}
 \bv A - \hat{\bv A} &= (\bv A_o + \bv A_m)
 - (\bv T_1 + \bv T_2 + \bv T_3 + \bv T_4) \\
 &= \bv A_m + (\bv A_o - \bv T_1) - \bv T_2 - \bv T_3 - \bv T_4.
\end{align*}
Applying the triangle inequality and
the bounds on $\bv A_o - \bv T_1$ from~\eqref{eq:aot1} and
the bounds on $\bv T_2$, $\bv T_3$, and $\bv T_4$
from above, we have:
\begin{align*}
 \|\bv A - \hat{\bv A}\|_2 &\le \|\bv A_m\|_2
 + \|\bv A_o - \bv T_1\|_2 + \|\bv T_2\|_2
 + \|\bv T_3\|_2 + \|\bv T_4\|_2 \\
 &\le \epsilon n + O(\epsilon n)
 + O(\epsilon n) + O(\epsilon n) + O(\epsilon n)
 = O(\epsilon n).
\end{align*}
We can adjust $\epsilon$ by constant factors. This concludes the proof.
\end{proof}

We now state the final theorem about eigenvector approximation.
\begin{theorem}\label{thm:gen-nystrom-eigvec}
Let $\bv A \in \R^{n \times n}$ be a symmetric matrix such that
$\|\Ab \|_{\infty} \leq 1$, with eigenvalues $\lambda_1(\bv A) \geq \ldots \geq \lambda_n(\bv A)$,
and let $\epsilon, \delta \in (0,1)$.
Let Algorithm~\ref{alg:genNystr\"{o}m} output the pairs
$\{(\tilde \lambda_j, \bv v_j)\}_{j=1}^{m}$, with $\bv A$ and $s=\frac{c \log n}{\epsilon^2 \delta}\log \frac{1}{\epsilon\delta}$ as inputs (for some sufficiently large constant $c$), where $\tilde \lambda_j \in \R$, $\bv v_j \in \R^n$,
$\|\bv v_j \|_2 = 1$, and let $m_+$ (resp.\ $m_-$) be the number of output pairs with
$\tilde\lambda_j > 0$ (resp.\ $\tilde\lambda_j < 0$), so that $m = m_+ + m_-$. Then,
with probability at least $1-\delta$:
\begin{enumerate}
\item There is a bijection $\pi$ between the output pairs and the $m_+$ largest together with the
$m_-$ smallest eigenvalues of $\bv A$, i.e.\ the set
$\{\lambda_1(\bv A), \ldots, \lambda_{m_+}(\bv A)\} \cup \{\lambda_{n-m_-+1}(\bv A), \ldots, \lambda_n(\bv A)\}$,
such that every output pair $j \in [m]$ satisfies
\begin{align*}
 |\lambda_{\pi(j)}(\bv A)-\tilde \lambda_{j}| \leq \epsilon n \text{, and } \quad
 \|\bv A \bv v_{j}-\lambda_{\pi(j)}(\bv A) \bv v_{j} \|_2 \leq \epsilon n.
\end{align*}
\item Every eigenvalue of $\bv A$ outside this matched multiset has magnitude less than $\epsilon n$.
\end{enumerate}
Moreover, Algorithm~\ref{alg:genNystr\"{o}m} samples
$\frac{c \log n}{\epsilon^2 \delta}\log \frac{1}{\epsilon\delta}$ columns
from $\bv A$ in expectation.
\end{theorem}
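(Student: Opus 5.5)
The proof reduces to the spectral bound of Lemma~\ref{lem:Nystr\"{o}m_spectral}, exactly as the PSD case (Theorem~\ref{thm:psd_upper}) reduced to the ridge leverage score bound. First, condition on two events, each with probability at least $1-\delta/2$: the event of Lemma~\ref{lem:Nystr\"{o}m_spectral} applied with error parameter $\epsilon/4$, so that $\|\bv A-\hat{\bv A}\|_2\le \frac{\epsilon n}{4}$, and the eigenvalue event of Imported Theorem~\ref{thm:main-eigval}. Both require only $s=\tilde O(\log n/\epsilon^2\delta)$, which matches the stated sample size after adjusting constants.

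Next, I would establish the analogue of Lemma~\ref{lem:nyseig} for the truncated pseudoinverse. Write $\bv U=[\Sbb^T\bv A\Sbb]_\tau^\dagger=\bv V_1\bv\Sigma_1^{-1}\bv V_1^T$ and $[\Sbb^T\bv A\Sbb]_\tau=\bv V_1\bv \Sigma_1\bv V_1^T$. Suppose $\Sbb^T\bv A^2\Sbb\bv x=\lambda [\Sbb^T\bv A\Sbb]_\tau\bv x$ with $\lambda\neq 0$. Then $\bv x$ can be taken in the range of $\bv V_1$: the component in the null space of the right-hand side would force $\bv C\bv x$ to vanish, so that component can be discarded without changing $\bv v$. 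Hence
\[
\hat{\bv A}\,\bv C\bv x=\bv C\bv U\,\Sbb^T\bv A^2\Sbb\bv x=\lambda\,\bv C\bv V_1\bv V_1^T\bv x=\lambda\,\bv C\bv x,
\]
so $\bv v=\bv C\bv x/\|\bv C\bv x\|_2$ is an exact eigenvector of $\hat{\bv A}$ with eigenvalue $\lambda$. Conversely, every nonzero eigenpair of $\hat{\bv A}$ arises this way, since $\hat{\bv A}$ has range in $\mathrm{range}(\bv C)$. Consequently the algorithm's output consists precisely of the eigenpairs of $\hat{\bv A}$ whose eigenvalues have magnitude at least the threshold.

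With this in hand, Weyl's inequality gives $|\lambda_i(\bv A)-\lambda_i(\hat{\bv A})|\le\frac{\epsilon n}{4}$ for all $i$. For every output pair, $\|\bv A\bv v-\tilde\lambda\bv v\|_2=\|(\bv A-\hat{\bv A})\bv v\|_2\le\frac{\epsilon n}{4}$. Define the bijection $\pi$ by the signed sorted matching: the $i$-th largest positive output corresponds to $\lambda_i(\hat{\bv A})$ and is matched to $\lambda_i(\bv A)$, and symmetrically for negative outputs. Because $\hat{\bv A}$ has rank at most $s$, its nonzero eigenvalues sit at the two ends of the sorted order, so this matching is consistent with Weyl's ordering. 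Positive outputs are matched to eigenvalues at least $\frac{\epsilon n}{2}-\frac{\epsilon n}{4}>0$, which keeps the positive and negative matched sets disjoint. The triangle inequality then gives the residual bound $\frac{\epsilon n}{4}+\frac{\epsilon n}{4}\le\epsilon n$. For item 2, any $\lambda_i(\bv A)$ with $|\lambda_i(\bv A)|\ge\epsilon n$ gives $|\lambda_i(\hat{\bv A})|\ge\frac{3\epsilon n}{4}$, which clears the threshold, so that eigenvalue is output and matched.

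The main obstacle is reconciling the output threshold with the truncation level $\tau$. Algorithm~\ref{alg:genNystr\"{o}m} states "magnitude at least $\epsilon n$", while the argument above needs a threshold of about $\frac{\epsilon n}{2}$ to avoid missing eigenvalues near $\epsilon n$. I would handle this by rescaling $\epsilon$ by a constant in the invocation, so that the algorithm's threshold sits strictly between the perturbation size and $\epsilon n$. A second, more delicate point is checking that the generalized eigenproblem with a singular right-hand side introduces no spurious solutions, which is settled by restricting to $\mathrm{range}(\bv V_1)$ as above.
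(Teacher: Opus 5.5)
Your overall route matches the paper's proof: condition on the spectral bound for the rank-truncated Nystr\"om approximation, show the outputs are exact eigenpairs of $\hat{\bv A}$, then apply Weyl's inequality, the signed sorted matching and the triangle inequality. Your extra conditioning on Imported Theorem~\ref{thm:main-eigval} is never used and can be dropped. On the threshold mismatch, the paper simply reads the output threshold as $\tau=\frac{\epsilon n}{2}$, which agrees with your rescaling.

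\textbf{The gap.} The step you call settled does not hold. You claim that "the component in the null space of the right-hand side would force $\bv C\bv x$ to vanish", so it can be discarded without changing $\bv v$. Write $\bv B=\Sbb^T\bv A^2\Sbb=\bv C^T\bv C$, $\bv B_{ij}=\bv V_i^T\bv B\bv V_j$, and $\bv x=\bv V_1\bv y+\bv V_2\bv z$. The pencil $\bv B\bv x=\lambda\bv V_1\bv\Sigma_1\bv V_1^T\bv x$ splits into two equations:
\[
\bv V_1^T\bv B\bv x=\lambda\bv\Sigma_1\bv y, \qquad \bv V_2^T\bv C^T(\bv C\bv x)=\bv 0.
\]
The second equation only says $\bv C\bv x\perp\mathrm{range}(\bv C\bv V_2)$. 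It does not give $\bv C\bv V_2\bv z=\bv 0$; you get $\bv C\bv x=\bv 0$ only when $\bv y=\bv 0$.

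Eliminating $\bv z$ yields the Schur-complement problem $(\bv B_{11}-\bv B_{12}\bv B_{22}^{-1}\bv B_{21})\bv y=\lambda\bv\Sigma_1\bv y$. By contrast, the nonzero eigenpairs of $\hat{\bv A}=\bv C\bv V_1\bv\Sigma_1^{-1}\bv V_1^T\bv C^T$ correspond to $\bv B_{11}\bv y=\mu\bv\Sigma_1\bv y$.

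A concrete case: take $s=2$, $\bv W=\mathrm{diag}(\sigma_1,\sigma_2)$ with $|\sigma_2|<\tau\le|\sigma_1|$, and $\bv C=[\bv c_1,\bv c_2]$.
\begin{itemize}
\item The pencil returns $\lambda=(b_{11}-b_{12}^2/b_{22})/\sigma_1$ and $\bv C\bv x\propto \bv c_1-\frac{b_{12}}{b_{22}}\bv c_2$.
\item Meanwhile $\hat{\bv A}=\bv c_1\bv c_1^T/\sigma_1$ has eigenpair $(b_{11}/\sigma_1,\bv c_1)$.
\end{itemize}
These differ whenever $b_{12}=\bv c_1^T\bv c_2\neq 0$. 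Nothing forces $\Sbb^T\bv A^2\Sbb$ to be block diagonal with respect to $\mathrm{range}(\bv V_1)\oplus\mathrm{range}(\bv V_2)$. So for the literal singular pencil, both your identity $\bv C\bv V_1\bv V_1^T\bv x=\bv C\bv x$ and your converse fail.

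\textbf{The fix.} Define the step as the pencil compressed to the retained subspace: solve $\bv B_{11}\bv y=\lambda\bv\Sigma_1\bv y$ and set $\bv x=\bv V_1\bv y$. Equivalently, take eigenvectors of $[\Sbb^T\bv A\Sbb]_\tau^\dagger\,\Sbb^T\bv A^2\Sbb$ with nonzero eigenvalue; these automatically lie in $\mathrm{range}(\bv V_1)$. This is exactly what the paper's proof tacitly assumes when it uses $[\Sbb^T\bv A\Sbb]_\tau^\dagger[\Sbb^T\bv A\Sbb]_\tau\bv x=\bv x$ ``for $\bv x$ in the large subspace.''

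With that reading:
\begin{itemize}
\item $\hat{\bv A}\bv C\bv x=\lambda\bv C\bv x$ is immediate.
\item $\bv C\bv V_1$ is injective because $\Sbb^T\bv C\bv V_1=\bv W\bv V_1=\bv V_1\bv\Sigma_1$. Hence multiplicities are preserved and every nonzero eigenpair of $\hat{\bv A}$ is produced.
\item The rest of your argument (Weyl, signed matching, triangle inequality, no missed large eigenvalue) coincides with the paper's and goes through.
\end{itemize}
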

\begin{proof}
The proof parallels that of Theorem~\ref{thm:main_bounded}, using the rank-truncated Nystr\"om spectral
bound of Lemma~\ref{lem:Nystr\"{o}m_spectral} in place of the principal submatrix bounds.
Let $\hat{\bv A} = (\bv A \Sbb)[\Sbb^T \bv A \Sbb]_{\tau}^{\dagger}(\bv A \Sbb)^T$ with truncation
threshold $\tau = \frac{\epsilon n}{2}$. By Lemma~\ref{lem:Nystr\"{o}m_spectral}, for
$s \geq \frac{c \log n}{\epsilon^2 \delta}\log \frac{1}{\epsilon\delta}$, we have
\[
 \|\bv A - \hat{\bv A}\|_2 \leq \frac{\epsilon n}{2}
\]
with probability at least $1-\delta$. We condition on this event for the rest of the proof; it fixes
$\Sbb$ and hence $\hat{\bv A}$.

\emph{The output vectors are exact eigenvectors of $\hat{\bv A}$.} For any solution $(\bv x, \lambda)$
of the generalized eigenvalue problem
\[
 \Sbb^T \bv A^2 \Sbb \, \bv x = \lambda \, [\Sbb^T \bv A \Sbb]_{\tau} \, \bv x
\]
with $|\lambda| \geq \frac{\epsilon n}{2}$ (which captures all non-zero eigenvalues retained by the truncation), the associated output vector $\bv v = \frac{\bv A \Sbb \bv x}{\|\bv A \Sbb \bv x\|_2}$
is an eigenvector of $\hat{\bv A}$ with eigenvalue $\lambda$: exactly as in Lemma~\ref{lem:nyseig}
(using $\Sbb^T \bv A^2 \Sbb = (\bv A\Sbb)^T (\bv A\Sbb)$ since $\bv A$ is symmetric, and
$[\Sbb^T \bv A \Sbb]_{\tau}^{\dagger}[\Sbb^T \bv A \Sbb]_{\tau}\bv x = \bv x$ for $\bv x$ in the large
subspace),
\[
 \hat{\bv A}\bv v = \frac{(\bv A\Sbb)[\Sbb^T \bv A \Sbb]_{\tau}^{\dagger}(\Sbb^T \bv A^2 \Sbb)\bv x}{\|\bv A\Sbb\bv x\|_2}
 = \frac{\lambda\,(\bv A\Sbb)[\Sbb^T \bv A \Sbb]_{\tau}^{\dagger}[\Sbb^T \bv A \Sbb]_{\tau}\bv x}{\|\bv A\Sbb\bv x\|_2}
 = \frac{\lambda\,\bv A\Sbb\bv x}{\|\bv A\Sbb\bv x\|_2} = \lambda\bv v.
\]

\emph{The two bounds.} Since $\bv A$ and $\hat{\bv A}$ are symmetric, Weyl's
inequality~\cite{weyl1912} gives $|\lambda_i(\bv A) - \lambda_i(\hat{\bv A})| \leq \|\bv A - \hat{\bv A}\|_2 \leq \frac{\epsilon n}{2}$
for all $i \in [n]$.

Let $\tilde \lambda_1 \ge \ldots \ge \tilde \lambda_n$ be the eigenvalues of $\hat{\bv A}$.
Algorithm~\ref{alg:genNystr\"{o}m} outputs the pairs corresponding to the non-zero eigenvalues of $\hat{\bv A}$
clearing the magnitude threshold (which is $\ge \frac{\epsilon n}{2}$ as governed by the truncation).
Define the bijection $\pi$ by the natural index matching from Weyl's inequality: the output pair
carrying the $j$-th largest positive eigenvalue of $\hat{\bv A}$ is mapped to $\lambda_j(\bv A)$ for $j \in [m_+]$,
and the pair carrying the $j$-th smallest negative eigenvalue is mapped to $\lambda_{n-j+1}(\bv A)$ for $j \in [m_-]$.
For each positive output pair, $\lambda_{\pi(j)}(\bv A) \ge \tilde \lambda_j - \frac{\epsilon n}{2} \ge \frac{\epsilon n}{2} - \frac{\epsilon n}{2} = 0$,
and symmetrically each negative output pair is matched to an eigenvalue $\le 0$; hence the two matched sets
are disjoint and $\pi$ is a bijection onto the multiset $\{\lambda_1(\bv A), \ldots, \lambda_{m_+}(\bv A)\} \cup \{\lambda_{n-m_-+1}(\bv A), \ldots, \lambda_n(\bv A)\}$.

For every output pair $j \in [m]$, this matching directly gives
$|\lambda_{\pi(j)}(\bv A) - \tilde \lambda_j| \le \frac{\epsilon n}{2} \le \epsilon n$.
By the exact-eigenvector property, $\hat{\bv A}\bv v_j = \tilde \lambda_j \bv v_j$, so spectral subadditivity yields:
\[
 \|\bv A\bv v_j - \tilde \lambda_j \bv v_j\|_2 = \|(\bv A - \hat{\bv A})\bv v_j\|_2 \le \|\bv A - \hat{\bv A}\|_2 \|\bv v_j\|_2 \le \frac{\epsilon n}{2}.
\]
Combining this with the eigenvalue bound via the triangle inequality gives:
\[
 \|\bv A\bv v_j - \lambda_{\pi(j)}(\bv A)\bv v_j\|_2 \le \|\bv A\bv v_j - \tilde \lambda_j \bv v_j\|_2 + |\tilde \lambda_j - \lambda_{\pi(j)}(\bv A)| \|\bv v_j\|_2 \le \frac{\epsilon n}{2} + \frac{\epsilon n}{2} = \epsilon n.
\]

\emph{No large eigenvalue is missed.} Suppose $\lambda_i(\bv A) \ge \epsilon n$ for some $i > m_+$.
Its matched value under Weyl's inequality satisfies
$\tilde \lambda \ge \lambda_i(\bv A) - \frac{\epsilon n}{2} \ge \epsilon n - \frac{\epsilon n}{2} = \frac{\epsilon n}{2} > 0$;
in particular $\tilde \lambda > 0$, so it is the $i$-th largest positive eigenvalue of $\hat{\bv A}$
and clears the magnitude threshold $\tau = \frac{\epsilon n}{2}$, meaning its corresponding generalized eigenpair
is always output by the algorithm. Symmetrically, every eigenvalue $\lambda_{n-i+1}(\bv A) \le -\epsilon n$
satisfies $i \le m_-$. Hence every eigenvalue of $\bv A$ outside the matched set has magnitude less than $\epsilon n$.

Both bounds hold for every output pair simultaneously on the single conditioned event. The sample complexity is that of Lemma~\ref{lem:Nystr\"{o}m_spectral}.
\end{proof}

\subsubsection{Squared Column-Norm Sampling.}

We now describe how we can get an
improved error bound of $\epsilon \|\bv A \|_F$
for our approximate eigenvectors
by using squared column-norm sampling
to form the rank-truncated Nystr\"{o}m appproximation.
\begin{lemma}\label{lem:sqcur}
Let $\bv A \in \mathbb{R}^{n \times n}$ be a symmetric matrix.
Let $\bv A'$ be the matrix with zeroed out entries as defined in
Definition~\ref{def:zeroing}.
For $s \ge \frac{c \log^4 n}{\epsilon^2 \delta}\log \frac{1}{\epsilon\delta}$, if
$\frac{s\|\bv A_{i,:} \|^2_2}{\|\bv A \|^2_F} \leq 1$ for all $i \in [n]$, let
$\Sbb$ be the scaled sampling matrix
which samples column $i$ of $\bv A$ with
probability $p_i=\frac{s\|\bv A_{i,:} \|^2_2}{\|\bv A \|^2_F}$,
and rescales it by $\frac{1}{\sqrt{p_i}}$.

Let $\bv C = \bv A' \Sbb$ and
let $\bv U = [\Sbb^T \bv A' \Sbb]_{\tau}^\dagger$
be the truncated pseudoinverse of $\Sbb^T \bv A' \Sbb$,
restricted to eigenvalues with magnitude at least $\tau \geq \frac{\epsilon \|\bv A\|_F}{2}$.
Let $\hat{\bv A}=\bv C \bv U \bv C^T$.
Then, with probability at least $1-\delta$:
\begin{align*}
 \|\bv A - \hat{\bv A}\|_2 \le \epsilon \|\bv A\|_F.
\end{align*}
\end{lemma}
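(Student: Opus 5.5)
The plan is to reduce to the argument of Lemma~\ref{lem:Nystr\"{o}m_spectral}, applied to the zeroed-out matrix $\bv A'$, and then pass back to $\bv A$ with Imported Lemma~\ref{lem:aprime}. Since $\|\bv A-\bv A'\|_2\le\epsilon\|\bv A\|_F$, it suffices to show $\|\bv A'-\hat{\bv A}\|_2=O(\epsilon\|\bv A\|_F)$ and then adjust $\epsilon$ by constants.

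Write $\bv A'=\bv A'_o+\bv A'_m$ as in Definition~\ref{def:ao-am}, with threshold $L=\epsilon\sqrt{\delta}\|\bv A\|_F$. Note $\|\bv A'\|_F\le\|\bv A\|_F$, because zeroing only removes entries. The key observation is that the proof of Lemma~\ref{lem:Nystr\"{o}m_spectral} (and of Lemma~\ref{lem:asu}) uses the sampling matrix only through Property~\ref{prop:s}. There $\epsilon n$ appears purely as the scale of $L$, of $\tau$, and of the middle-part bounds. So the plan is to verify Property~\ref{prop:s} with $\epsilon n$ replaced by $\epsilon\|\bv A\|_F$, for $\bv A'$ and the squared-norm sampling matrix $\Sbb$:

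\begin{enumerate}
\item[(i)] $\Sbb$ is a constant-factor subspace embedding for $\bv U'_o$. This follows from Imported Lemma~\ref{lem:subspace-embedding-norm} with $R=\epsilon\sqrt{\delta}\|\bv A\|_F$, giving $\|(\Sbb^T\bv U'_o)^\dagger\|_2=O(1)$.
\item[(ii)] $\|\Sbb^T\bv A'_m\Sbb\|_2\le\epsilon\|\bv A\|_F$. This is Imported Lemma~\ref{lem:ao-sampling}, and it is where the $\log^4 n$ in $s$ enters.
\item[(iii)] $\|\bv A'_m\Sbb\|_2\le\epsilon\|\bv A\|_F$. This is Lemma~\ref{lem:as_norm}.
\end{enumerate}

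A union bound over the three events, with $\delta$ rescaled by constants, gives all three with probability $1-\delta$.

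With Property~\ref{prop:s} in hand, the proof copies Lemma~\ref{lem:Nystr\"{o}m_spectral} with $\tau\ge\epsilon\|\bv A\|_F/2$. First, Lemma~\ref{lem:asu} gives $\|\bv A'_o\Sbb\bv V_1\bv\Sigma_1^{-1}\|_2=O(1)$ and $\|\bv A'_o\Sbb\bv V_2\|_2=O(\epsilon\|\bv A\|_F)$. Here $\|\bv\Sigma_1^{-1}\|_2\le 2/(\epsilon\|\bv A\|_F)$ and $\|\bv\Sigma_2\|_2,\|\bv W_m\|_2=O(\epsilon\|\bv A\|_F)$; this step relies on $\bv U'_o(\Sbb^T\bv U'_o)^\dagger\Sbb^T\bv U'_o=\bv U'_o$, which follows from (i). Next, split $\hat{\bv A}=\bv T_1+\bv T_2+\bv T_3+\bv T_4$ exactly as in \eqref{eq:hata}. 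The identity $\bv T_1=\bv A'_o-\bv E_1-\bv E_2$ holds, with $\|\bv E_1\|_2,\|\bv E_2\|_2=O(\epsilon\|\bv A\|_F)$. The cross terms are bounded by $\|\bv C_o\bv U\|_2\|\bv C_m\|_2=O(\epsilon\|\bv A\|_F)$, and $\|\bv T_4\|_2\le\|\bv C_m\|_2^2\|\bv U\|_2=O(\epsilon\|\bv A\|_F)$. Adding $\|\bv A'_m\|_2\le L\le\epsilon\|\bv A\|_F$ by the triangle inequality then gives $\|\bv A'-\hat{\bv A}\|_2=O(\epsilon\|\bv A\|_F)$.

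The main obstacle is checking that the imported results apply to $\bv A'$ with the normalization $\|\bv A\|_F$ rather than $\|\bv A'\|_F$. Imported Lemma~\ref{lem:subspace-embedding-norm} is stated for sampling proportional to the row norms of the matrix being embedded, but here we sample by the row norms of $\bv A$, not $\bv A'$. I would handle this as in~\cite{swartworth2025tight}: zeroing only shrinks row norms, so $\|\bv A'_{i,:}\|_2^2/\|\bv A\|_F^2\le p_i/s$. The probabilities are therefore over-sampling relative to the $\bv A'$-leverage bound $\|\bv A'_{i,:}\|_2^2/\lambda^2$ at the scale $R$ measured against $\|\bv A\|_F$, and the matrix Chernoff argument behind the lemma goes through unchanged. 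The remaining steps are mechanical substitutions of $\epsilon\|\bv A\|_F$ for $\epsilon n$.
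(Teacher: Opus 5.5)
Your proposal is correct and takes essentially the same route as the paper. Both pass to $\bv A'$ via Imported Lemma~\ref{lem:aprime}, verify the three items of Property~\ref{prop:s} at scale $\epsilon\|\bv A\|_F$ using Lemma~\ref{lem:as_norm}, Imported Lemma~\ref{lem:ao-sampling} and Imported Lemma~\ref{lem:subspace-embedding-norm}, and then rerun the $\bv T_1,\dots,\bv T_4$ bounds of the uniform-sampling rank-truncated Nystr\"om argument. Your remark that sampling by the row norms of $\bv A$ over-samples the leverage scores of $\bv U'_o$, because zeroing only shrinks rows, makes explicit a step the paper leaves implicit. One caveat, which the paper's argument shares: your bound $\|\bv \Sigma_2\|_2 = O(\epsilon\|\bv A\|_F)$ tacitly needs $\tau = \Theta(\epsilon\|\bv A\|_F)$, not merely $\tau \ge \epsilon\|\bv A\|_F/2$.
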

\begin{proof}
By Imported Lemma~\ref{lem:aprime}, $\|\bv A - \bv A'\|_2 \le \epsilon \|\bv A\|_F$, so by the
triangle inequality it suffices to show $\|\bv A' - \hat{\bv A}\|_2 \le \epsilon \|\bv A\|_F$; the
claim then follows after adjusting $\epsilon$ by a constant factor. We prove this by verifying the
steps of the proof
of Lemma~\ref{lem:Nystr\"{o}m_spectral} with $\bv A$ replaced by $\bv A'$, the threshold
$L = \epsilon n$ replaced by $L = \epsilon \|\bv A\|_F$ (so $\tau = \frac{\epsilon\|\bv A\|_F}{2}$), and
each uniform-sampling spectral estimate replaced by its squared column-norm counterpart. Concretely,
split $\bv A' = \bv A'_o + \bv A'_m$ as in Definition~\ref{def:ao-am} with $L = \epsilon\|\bv A\|_F$,
where $\bv A'_o = \bv U'_o \bv \Lambda'_o (\bv U'_o)^T$, and set
$\bv C = \bv C'_o + \bv C'_m$ with $\bv C'_o = \bv A'_o \Sbb$, $\bv C'_m = \bv A'_m \Sbb$, and
$\bv W = \Sbb^T \bv A' \Sbb = \bv W'_o + \bv W'_m$ analogously. The proof of
Lemma~\ref{lem:Nystr\"{o}m_spectral} uses only the following three facts fro Property~\ref{prop:s} about the sampling matrix $\Sbb$, each of which has a squared
column-norm analogue at scale $\epsilon\|\bv A\|_F$:
\begin{enumerate}
 \item \emph{Middle column bound} $\|\bv C'_m\|_2 = \|\bv A'_m \Sbb\|_2 \le \epsilon\|\bv A\|_F$, which
 is Lemma~\ref{lem:as_norm} (replacing Lemma~\ref{lem:as});
 \item \emph{Middle submatrix bound} $\|\bv W'_m\|_2 = \|\Sbb^T \bv A'_m \Sbb\|_2 \le \epsilon\|\bv A\|_F$,
 which is Imported Lemma~\ref{lem:ao-sampling} (replacing Imported Lemma~\ref{lem:am});
 \item \emph{Subspace embedding} for $\bv U'_o$: $\Sbb$ is a constant-factor subspace embedding, so
 $\sigma_{\min}(\Sbb^T \bv U'_o) = \Omega(1)$ and $\|(\Sbb^T \bv U'_o)^\dagger\|_2 = O(1)$, which is
 Imported Lemma~\ref{lem:subspace-embedding-norm} (replacing Imported Lemma~\ref{lem:subspace-embedding-bounded});
\end{enumerate}
With these substitutions, the algebraic identities and the term-by-term bounds on
$\bv T_1, \bv T_2, \bv T_3, \bv T_4$ in the proof of Lemma~\ref{lem:Nystr\"{o}m_spectral} go through
unchanged (every occurrence of $O(\epsilon n)$ becomes $O(\epsilon\|\bv A\|_F)$), yielding
\[
 \|\bv A' - \hat{\bv A}\|_2 \le \|\bv A'_m\|_2 + \|\bv A'_o - \bv T_1\|_2 + \|\bv T_2\|_2 + \|\bv T_3\|_2 + \|\bv T_4\|_2 \le O(\epsilon\|\bv A\|_F).
\]
Adjusting $\epsilon$ by constant factors and combining with $\|\bv A - \bv A'\|_2 \le \epsilon\|\bv A\|_F$
gives $\|\bv A - \hat{\bv A}\|_2 \le \epsilon\|\bv A\|_F$.
\end{proof}
 The final theorem using squared column-norm sampling is stated below.
\begin{theorem}\label{thm:gen-nystrom-eigvec-norm}
Let $\bv A \in \R^{n \times n}$ be a symmetric matrix,
with eigenvalues $\lambda_1(\bv A) \geq \ldots \geq \lambda_n(\bv A)$,
and let $\epsilon, \delta \in (0,1)$.
Let Algorithm~\ref{alg:nys_squared} output the pairs
$\{(\tilde \lambda_j, \bv v_j)\}_{j=1}^{m}$, with $\bv A$ and $s=frac{c \log^4 n}{\epsilon^2 \delta}\log \frac{1}{\epsilon\delta}$ as inputs (for some sufficiently large constant $c$), where $\tilde \lambda_j \in \R$, $\bv v_j \in \R^n$,
$\|\bv v_j \|_2 = 1$, and let $m_+$ (resp.\ $m_-$) be the number of output pairs with
$\tilde\lambda_j > 0$ (resp.\ $\tilde\lambda_j < 0$), so that $m = m_+ + m_-$. Then,
with probability at least $1-\delta$:
\begin{enumerate}
\item There is a bijection $\pi$ between the output pairs and the $m_+$ largest together with the
$m_-$ smallest eigenvalues of $\bv A$, i.e.\ the set
$\{\lambda_1(\bv A), \ldots, \lambda_{m_+}(\bv A)\} \cup \{\lambda_{n-m_-+1}(\bv A), \ldots, \lambda_n(\bv A)\}$,
such that every output pair $j \in [m]$ satisfies
\begin{align*}
 |\lambda_{\pi(j)}(\bv A)-\tilde \lambda_{j}| \leq \epsilon \| \bv A\|_F \text{, and } \quad
 \|\bv A \bv v_{j}-\lambda_{\pi(j)}(\bv A) \bv v_{j} \|_2 \leq \epsilon \| \bv A\|_F.
\end{align*}
\item Every eigenvalue of $\bv A$ outside this matched multiset has magnitude less than $\epsilon \|\bv A \|_F$.
\end{enumerate}
Moreover, Algorithm~\ref{alg:nys_squared} samples
$s \geq \frac{c \log^4 n}{\epsilon^2 \delta}\log \frac{1}{\epsilon\delta}$ columns
from $\bv A$ in expectation.
\end{theorem}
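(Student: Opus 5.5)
The plan follows the proof of Theorem~\ref{thm:gen-nystrom-eigvec}, swapping in the spectral bound of Lemma~\ref{lem:sqcur}. As in Theorem~\ref{thm:sqnorm1}, I first work under the assumption $\frac{s\|\bv A_{i,:}\|_2^2}{\|\bv A\|_F^2}\le 1$ for all $i$. I will assume Algorithm~\ref{alg:nys_squared} does the following: it samples columns by squared norm and applies the zeroing of Definition~\ref{def:zeroing}, giving $\bv A'$. It then forms $\bv C=\bv A'\Sbb$ and $[\Sbb^T\bv A'\Sbb]_\tau$ with $\tau=\frac{\epsilon\|\bv A\|_F}{2}$. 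It solves $\bv C^T\bv C\,\bv x=\lambda[\Sbb^T\bv A'\Sbb]_\tau\bv x$ and outputs $\bv v=\bv C\bv x/\|\bv C\bv x\|_2$.

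Lemma~\ref{lem:sqcur}, applied with error parameter $\epsilon/2$, gives $\|\bv A-\hat{\bv A}\|_2\le \frac{\epsilon}{2}\|\bv A\|_F$ with probability $1-\delta$, where $\hat{\bv A}=\bv C[\Sbb^T\bv A'\Sbb]_\tau^\dagger\bv C^T$. I condition on this event.

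The argument then has three steps.
\begin{enumerate}
\item The computation in the proof of Theorem~\ref{thm:gen-nystrom-eigvec} shows that each output $\bv v$ is an exact eigenvector of $\hat{\bv A}$ with eigenvalue $\lambda$. It only uses $\bv C^T\bv C$ in place of $\Sbb^T\bv A^2\Sbb$ and the identity $[\cdot]_\tau^\dagger[\cdot]_\tau\bv x=\bv x$ on the retained subspace.
\item Weyl's inequality gives $|\lambda_i(\bv A)-\lambda_i(\hat{\bv A})|\le\frac{\epsilon}{2}\|\bv A\|_F$. I define the signed sorted bijection $\pi$ exactly as before; positive outputs map to nonnegative eigenvalues of $\bv A$ and negative outputs to nonpositive ones, so $\pi$ is a bijection.
\item The residual bound is $\|\bv A\bv v-\lambda_{\pi(j)}\bv v\|_2\le\|(\bv A-\hat{\bv A})\bv v\|_2+|\tilde\lambda_j-\lambda_{\pi(j)}|\le\epsilon\|\bv A\|_F$. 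For the no-miss claim: any $|\lambda_i(\bv A)|\ge\epsilon\|\bv A\|_F$ has a Weyl partner of magnitude at least $\frac{\epsilon}{2}\|\bv A\|_F\ge\tau$. That partner is a nonzero eigenvalue of $\hat{\bv A}$, hence arises from a generalized eigenpair and is output.
\end{enumerate}

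Removing the bounded-probability assumption is the main obstacle. I will use the inflation and Poisson transfer argument of Theorem~\ref{thm:sqnorm1}:
\begin{itemize}
\item Replace $\bv A$ by $\bv Q_M\bv A\bv Q_M^T$ with $M=s^2/\delta$. This leaves the nonzero spectrum and $\|\cdot\|_F$ unchanged, and every sampling probability becomes at most one.
\item The $\log^4(nM)$ factor is absorbed into $\log^4 n$ by adjusting $c$.
\item Sampling the inflated matrix corresponds to Binomial counts on $\bv A$. Lemma~\ref{lem:TV} and Lemma~\ref{lem:pois} then transfer the good event to the Poisson scheme at total-variation cost $s^2/M=\delta$.
\end{itemize}

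The delicate points are two. First, the good event ($\|\bv A-\hat{\bv A}\|_2$ small for the rank-truncated Nystr\"om built from the counts) must be a function of the counts alone, in the spirit of Lemma~\ref{lem:asxl_norm_unconditional}. Second, the diagonal-copy discrepancy $\bv D$ between $\bv A'_{[:,S]}$ and the inflated $\hat{\bv A}'\Sbb'$ has $\|\bv D\|_F=O(\sqrt{\delta}\|\bv A\|_F/s)$. I must check that this perturbation of $\bv C$ moves $\hat{\bv A}$ by only $O(\epsilon\|\bv A\|_F)$. Since $\|\bv C_o\bv U\|_2=O(1)$ and $\|\bv U\|_2=O(1/(\epsilon\|\bv A\|_F))$ (Lemma~\ref{lem:asu}), the perturbation contributes $O(\|\bv D\|_2)+O(\|\bv D\|_2\|\bv C\|_2/(\epsilon\|\bv A\|_F))$. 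With $\|\bv C\|_2=O(\|\bv A\|_F)$ and $s\gtrsim 1/\epsilon^2$, this is $O(\epsilon\|\bv A\|_F)$.

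Finally, a Poisson tail bound shows $K=O(s)$ columns are sampled, and adjusting constants in $\epsilon,\delta$ completes the proof.
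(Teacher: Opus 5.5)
Your core argument matches the paper's: applying Lemma~\ref{lem:sqcur} at accuracy $\epsilon/2$, showing each output is an exact eigenvector of $\hat{\bv A}$, the signed Weyl matching, the residual bound, and the no-miss step. The gap is in how you remove the assumption $s\|\bv A_{i,:}\|_2^2/\|\bv A\|_F^2\le 1$.

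Algorithm~\ref{alg:nys_squared} does not use the Poisson scheme. It draws $n_i\sim\mathrm{Binomial}(s,p_i)$ copies of column $i$, where $p_i=\|\bv A_{i,:}\|_2^2/\|\bv A\|_F^2$. Your chain (inflate with $M=s^2/\delta$, then apply Lemma~\ref{lem:TV}) ends at independent $\mathrm{Poisson}(sp_i)$ counts, which is the scheme of Algorithm~\ref{alg:eigvec-rowsamp}. There is no total-variation bridge from there back to $\mathrm{Binomial}(s,p_i)$ counts. For instance, if $p_1=1/2$, the marginals $\mathrm{Binomial}(s,1/2)$ and $\mathrm{Poisson}(s/2)$ have variances $s/4$ and $s/2$, and they stay a constant apart in total variation as $s$ grows. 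Heavy columns like this are exactly the case the inflation is meant to handle. As written, you have proved the guarantee for the Poisson variant that the paper mentions only parenthetically, not for Algorithm~\ref{alg:nys_squared}.

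The paper needs no transfer. $\mathrm{Binomial}(s,p_i)$ is \emph{exactly} the count produced by sampling the $M=s$ inflation $\bv Q_s\bv A\bv Q_s^T$, because each of the $s$ copies of column $i$ has probability $s\cdot p_i/s=p_i\le 1$.

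If you switch to $M=s$, your check of the diagonal-copy discrepancy $\bv D$ must be sharpened; the paper's one-line inflation remark does not spell this out either. The Frobenius estimate now gives only $\|\bv D\|_F^2\le K\|\bv A\|_F^2/(Ms)=O(\|\bv A\|_F^2/s)$. Your bound $O(\|\bv D\|_2\|\bv C\|_2/(\epsilon\|\bv A\|_F))$ then becomes $O(\|\bv A\|_F/(\epsilon\sqrt{s}))$, which is not $O(\epsilon\|\bv A\|_F)$. There are two ways to fix this:
\begin{itemize}
\item Note that $\bv D$ has at most one nonzero entry in each row and each column, namely the diagonal copy of each sampled slot, so distinct slots occupy distinct inflated rows. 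Each such entry has magnitude at most $|\bv A_{i_qi_q}|/\sqrt{M s p_{i_q}}\le\|\bv A\|_F/\sqrt{Ms}$. Hence $\|\bv D\|_2\le\|\bv A\|_F/s$, and your perturbation estimate gives $O(\|\bv A\|_F/(\epsilon s))=O(\epsilon\|\bv A\|_F)$ once $s\gtrsim 1/\epsilon^2$.
\item Alternatively, keep the Frobenius bound and expand the change as $\bv D\bv U\bv C^T+\bv C\bv U\bv D^T+\bv D\bv U\bv D^T$. Use $\|\bv C\bv U\|_2\le\|\bv C_o\bv U\|_2+\|\bv C_m\|_2\|\bv U\|_2=O(1)$ from Lemma~\ref{lem:asu}, and $\|\bv D\bv U\bv D^T\|_2\le\|\bv D\|_2^2/\tau$.
\end{itemize}

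The exact match of sampled principal submatrices still holds for $M=s$: since $s\epsilon^2/4>1$, every inflated diagonal entry is zeroed. With that, either fix yields the stated guarantee under the algorithm's actual sampling distribution.
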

\begin{proof}
The proof follows the same steps as that of Theorem~\ref{thm:gen-nystrom-eigvec}, using the squared column-norm
spectral bound of Lemma~\ref{lem:sqcur} in place of the uniform one. Let $\bv A'$ be the zeroed matrix
of Definition~\ref{def:zeroing}, let $\bv C = \bv A' \Sbb$ (the matrix $\bv A'_{[:,S]}$ formed by
Algorithm~\ref{alg:nys_squared}), and let
$\hat{\bv A} = \bv C [\Sbb^T \bv A' \Sbb]_{\tau}^{\dagger}\bv C^T$ with truncation threshold
$\tau = \frac{\epsilon\|\bv A\|_F}{2}$. By Lemma~\ref{lem:sqcur} (adjusting the constant $c$ in the sample complexity to achieve error $\frac{\epsilon}{2}$), for
$s \geq \frac{c \log^4 n}{\epsilon^2 \delta}\log \frac{1}{\epsilon\delta}$, we have
$\|\bv A - \hat{\bv A}\|_2 \le \frac{\epsilon\|\bv A\|_F}{2}$ with probability at least $1-\delta$; we condition on
this event, which fixes $\Sbb$ and hence $\hat{\bv A}$.

\emph{The output vectors are exact eigenvectors of $\hat{\bv A}$.} For any solution $(\bv x, \lambda)$ of the generalized eigenvalue problem
$\Sbb^T (\bv A')^2 \Sbb \, \bv x = \lambda \, [\Sbb^T \bv A' \Sbb]_{\tau} \, \bv x$ with
$|\lambda| \geq \frac{\epsilon\|\bv A\|_F}{2}$ (which captures all non-zero eigenvalues retained by the truncation), the output vector $\bv v = \frac{\bv A' \Sbb \bv x}{\|\bv A' \Sbb \bv x\|_2}$
is an exact eigenvector of $\hat{\bv A}$ with eigenvalue $\lambda$, exactly as in
Lemma~\ref{lem:nyseig} (using $\Sbb^T (\bv A')^2 \Sbb = (\bv A'\Sbb)^T(\bv A'\Sbb)$ since $\bv A'$ is
symmetric, and $[\Sbb^T \bv A' \Sbb]_{\tau}^{\dagger}[\Sbb^T \bv A' \Sbb]_{\tau}\bv x = \bv x$ on the large
subspace):
\[
 \hat{\bv A}\bv v = \frac{\bv C[\Sbb^T \bv A' \Sbb]_{\tau}^{\dagger}(\Sbb^T (\bv A')^2 \Sbb)\bv x}{\|\bv A'\Sbb\bv x\|_2}
 = \frac{\lambda\,\bv C[\Sbb^T \bv A' \Sbb]_{\tau}^{\dagger}[\Sbb^T \bv A' \Sbb]_{\tau}\bv x}{\|\bv A'\Sbb\bv x\|_2}
 = \frac{\lambda\,\bv A'\Sbb\bv x}{\|\bv A'\Sbb\bv x\|_2}
 = \lambda\bv v.
\]

\emph{The two bounds.} Since $\bv A$ and $\hat{\bv A}$ are symmetric, Weyl's
inequality~\cite{weyl1912} gives $|\lambda_i(\bv A) - \lambda_i(\hat{\bv A})| \leq \|\bv A - \hat{\bv A}\|_2 \leq \frac{\epsilon\|\bv A\|_F}{2}$
for all $i \in [n]$.

Let $\tilde \lambda_1 \ge \ldots \ge \tilde \lambda_n$ be the eigenvalues of $\hat{\bv A}$.
Algorithm~\ref{alg:nys_squared} outputs the pairs corresponding to the non-zero eigenvalues of $\hat{\bv A}$
clearing the magnitude threshold (which is $\ge \frac{\epsilon\|\bv A\|_F}{2}$ as governed by the truncation).
Define the bijection $\pi$ by the natural index matching from Weyl's inequality: the output pair
carrying the $j$-th largest positive eigenvalue of $\hat{\bv A}$ is mapped to $\lambda_j(\bv A)$ for $j \in [m_+]$,
and the pair carrying the $j$-th smallest negative eigenvalue is mapped to $\lambda_{n-j+1}(\bv A)$ for $j \in [m_-]$.
For each positive output pair, $\lambda_{\pi(j)}(\bv A) \ge \tilde \lambda_j - \frac{\epsilon\|\bv A\|_F}{2} \ge \frac{\epsilon\|\bv A\|_F}{2} - \frac{\epsilon\|\bv A\|_F}{2} = 0$,
and symmetrically each negative output pair is matched to an eigenvalue $\le 0$; hence the two matched sets
are disjoint and $\pi$ is a bijection onto the multiset $\{\lambda_1(\bv A), \ldots, \lambda_{m_+}(\bv A)\} \cup \{\lambda_{n-m_-+1}(\bv A), \ldots, \lambda_n(\bv A)\}$.

For every output pair $j \in [m]$, this matching directly gives
$|\lambda_{\pi(j)}(\bv A) - \tilde \lambda_j| \le \frac{\epsilon\|\bv A\|_F}{2} \le \epsilon\|\bv A\|_F$.
By the exact-eigenvector property, $\hat{\bv A}\bv v_j = \tilde \lambda_j \bv v_j$, so spectral subadditivity yields:
\[
 \|\bv A\bv v_j - \tilde \lambda_j \bv v_j\|_2 = \|(\bv A - \hat{\bv A})\bv v_j\|_2 \le \|\bv A - \hat{\bv A}\|_2 \|\bv v_j\|_2 \le \frac{\epsilon\|\bv A\|_F}{2}.
\]
Combining this with the eigenvalue bound via the triangle inequality gives:
\[
 \|\bv A\bv v_j - \lambda_{\pi(j)}(\bv A)\bv v_j\|_2 \le \|\bv A\bv v_j - \tilde \lambda_j \bv v_j\|_2 + |\tilde \lambda_j - \lambda_{\pi(j)}(\bv A)| \|\bv v_j\|_2 \le \frac{\epsilon\|\bv A\|_F}{2} + \frac{\epsilon\|\bv A\|_F}{2} = \epsilon\|\bv A\|_F.
\]

\emph{No large eigenvalue is missed.} Suppose $\lambda_i(\bv A) \ge \epsilon\|\bv A\|_F$ for some $i > m_+$.
Its matched value under Weyl's inequality satisfies
$\tilde \lambda \ge \lambda_i(\bv A) - \frac{\epsilon\|\bv A\|_F}{2} \ge \epsilon\|\bv A\|_F - \frac{\epsilon\|\bv A\|_F}{2} = \frac{\epsilon\|\bv A\|_F}{2} > 0$;
in particular $\tilde \lambda > 0$, so it is the $i$-th largest positive eigenvalue of $\hat{\bv A}$
and clears the magnitude threshold $\tau = \frac{\epsilon\|\bv A\|_F}{2}$, meaning its corresponding generalized eigenpair
is always output by the algorithm. Symmetrically, every eigenvalue $\lambda_{n-i+1}(\bv A) \le -\epsilon\|\bv A\|_F$
satisfies $i \le m_-$. Hence every eigenvalue of $\bv A$ outside the matched set has magnitude less than $\epsilon\|\bv A\|_F$.

Both bounds hold for every output pair simultaneously on the single conditioned event. Finally, Lemma~\ref{lem:sqcur} and the analysis above assume the sampling probabilities satisfy
$p_i = \frac{s\|\bv A_{i,:}\|_2^2}{\|\bv A\|_F^2} \le 1$ for all $i$. This assumption is removed by the
row-duplication (matrix inflation) argument of Theorem~\ref{thm:sqnorm1}: replacing each row and
column of $\bv A'$ by $s$ copies scaled by $\frac{1}{\sqrt{s}}$ produces an inflated matrix with the
same nonzero eigenvalues and per-row sampling probability at most $1$, to which the above applies.
(Equivalently, since Algorithm~\ref{alg:nys_squared} draws
$n_i \sim \mathrm{Binomial}(s, \|\bv A_{i,:}\|_2^2/\|\bv A\|_F^2)$ copies of each column, one may also
use the Poisson sampling scheme of Algorithm~\ref{alg:eigvec-rowsamp} in its place, as in
Section~\ref{sec:qram}.) The sample complexity is given by Lemma~\ref{lem:sqcur}.
\end{proof}

\begin{algorithm}[H]
\caption{Rank-Truncated Nystr\"om Using Squared Column-Norm Sampling}
\label{alg:nys_squared}
\begin{algorithmic}[1]
\State {\bfseries Input:} Symmetric $\bv A \in \mathbb{R}^{n\times n}$,
row norms $\{\|\bv A_{i,:}\|_2\}_{i=1}^n$,
accuracy $\epsilon \in (0,1)$, expected sample size $s$.
\State \textbf{(Sampling)} For each $i \in [n]$,
sample $n_i \sim \mathrm{Binomial}\!\left(s,\,
\frac{\|\bv A_{i,:}\|_2^2}{\|\bv A\|_F^2}\right)$
copies of column $i$ and rescale those columns by $\frac{1}{\sqrt{p_i}}$
where $p_i=\frac{s\| \bv A_{i,:}\|^2_2}{\| \bv A\|^2_F}$. Let $S $ be the ordered list of sampled indices.

\State \textbf{(Scaling and Zeroing out)} Form the matrices $\bv A'_{:,S} \in \R^{n \times |S|}$ and $\bv A'_{S,S} \in \R^{|S| \times |S|}$ exactly as in steps~\ref{step:zero} and~\ref{step:zero1} of Algorithm~\ref{alg:eigvec-rowsamp} by scaling and zeroing out the entries of $\bv A$ (for appropriate constant $c>0$).

\State Find all eigenvectors $\bv x \in \R^s$
corresponding to the eigenvalues $\lambda$
(with magnitude at least $ \epsilon \| \bv A\|_F$)
by solving the generalized eigenvalue problem:
\[
 (\bv A'_{:,S})^T\bv A'_{:,S} \, \bv x =
 \lambda \, [\bv A'_{S,S}]_{\tau} \, \bv x,
\]
\textit{(equivalent to solving $\Sbb^T (\bv A')^2 \Sbb \bv x = \lambda [\Sbb^T \bv A' \Sbb]_{\tau} \bv x$,
for any $\bv A$ with $\frac{s\|\bv A_{i,:} \|^2_2}{\| \bv A \|_F^2}$ and
$\bv A'$ is the zeroed out version of $\bv A$ as defined in Definition~\ref{def:zeroing})},
where $[\bv A'_{S,S}]_{\tau}$ is the truncated eigendecomposition
of the matrix $\bv A'_{S,S} $ restricted to the eigenvalues
with magnitude at least $\tau \geq \frac{\epsilon \|\bv A \|_F}{2}$.
\State Compute the approximate eigenvectors $\bv v_i$
corresponding to the eigenvectors $\bv x_i$ found in the previous step as:
\[
 \bv v_i = \frac{\left(\bv A'_{:,S} \right)\bv x_i}{\left\|\left(\bv A'_{:,S} \right) \bv x_i \right\|_2}.
\]
\State \Return all approximate eigenvector, eigenvalue pairs
$(\bv v_i, \lambda_i)$, found in the previous steps.
\end{algorithmic}
\end{algorithm}

\paragraph{Acknowledgement} CM and RB were partially supported by NSF grants AF-2427362 and AF-2046235. RB thanks Andrew Horning and David Woodruff for illuminating discussions on the techniques used in the paper.

\paragraph{Statement on AI Use.} LLMs were used to polish the writing of the paper and correct mistakes and typos. The Poissonization approach was suggested by an LLM.

\newpage
\bibliography{refs}

\end{document}